\documentclass[10pt,journal,twocolumn,final,twoside]{IEEEtran}

\usepackage{ifpdf}
\usepackage{cite}
\usepackage{graphicx}
\usepackage[cmex10]{amsmath}
\interdisplaylinepenalty=2500
\usepackage{algorithmic}
\usepackage{array}
\usepackage{mdwmath}
\usepackage{mdwtab}
\usepackage{eqparbox}
\usepackage[tight,footnotesize]{subfigure}
\usepackage{stfloats}
\usepackage{amssymb}
\usepackage{bm}
\usepackage{overpic}
\usepackage{color}
\usepackage{dsfont}
\usepackage{cases}
\usepackage{xr}

\newtheorem{theorem}{Theorem}
\newtheorem{lemma}{Lemma}

\newtheorem{definition}{Definition}

\newtheorem{assumption}{Assumption}
\newtheorem{example}{Example}

\def\nn{\nonumber}

\def\Expt{\mathbb{E}}
\def\E{\mathbb{E}}
\def\mb{\mathbb}
\def\defeq{\triangleq}
\def\mc{\mathcal}
\def\col{\mathrm{col}}

\def\diag{\mathrm{diag}}
\def\cent{\mathrm{cent}}
\def\Tr{\mathrm{Tr}}
\def\bP{\bar{P}}
\def\Pm{P^{(4)}}
\def\one{\mathds{1}}

\def\bd{\bm{d}}

\def\bn{\bm{n}}
\def\bs{\bm{s}}
\def\bu{\bm{u}}
\def\bv{\bm{v}}
\def\bw{\bm{w}}
\def\bx{\bm{x}}
\def\by{\bm{y}}
\def\bz{\bm{z}}

\def\mA{\mc{A}}
\def\mD{\mc{D}}
\def\mF{\mc{F}}
\def\mM{\mc{M}}
\def\mN{\mc{N}}
\def\mR{\mc{R}}
\def\mS{\mc{S}}

\def\mU{\mc{U}}
\def\mW{\mc{W}}

\def\s{\bm{s}}

\def\u{\bm{u}}
\def\w{\bm{w}}
\def\x{\bm{x}}

\def\z{\bm{z}}

\allowdisplaybreaks

\externaldocument[P1-]{TIT_LimitBehaviorPart1_twocolumn}

\title{On the Learning Behavior of Adaptive Networks --- Part II: Performance Analysis}

%
%
\begin{document}

\author{Jianshu~Chen,~\IEEEmembership{Member,~IEEE,}%
        ~and~Ali~H.~Sayed,~\IEEEmembership{Fellow,~IEEE}
\thanks{
Manuscript received December 28, 2013; revised November 21, 2014; accepted
March 29, 2015. This work was supported in part by NSF grants CCF-1011918 and ECCS- 1407712.
A short early version of limited parts of this work appears in the conference publication%
\cite{chen2013EUSIPCObenefits} without proofs and under more restrictive conditions.}
\thanks{J. Chen was with Department of Electrical Engineering, University of California, Los Angeles, and is currently with Microsoft Research, Redmond, WA 98052. This work was performed while he was a PhD student at UCLA. Email: cjs09@ucla.edu.
}
\thanks{A. H. Sayed is with Department of Electrical Engineering,
University of California, Los Angeles, CA 90095. Email: sayed@ee.ucla.edu.
}
\thanks{Communicated by Prof. Nicol\`{o} Cesa-Bianchi,  Associate Editor for Pattern Recognition, Statistical Learning, and Inference.}
\thanks{Copyright (c) 2014 IEEE.
Personal use of this material is permitted. However, permission to use this material for any other
purposes must be obtained from the IEEE by sending a request to pubs-permissions@ieee.org.
}
}

\markboth{IEEE Transactions on Information Theory, VOL.~XX, NO.~XX, MONTH~2015}{Chen and Sayed: 
On the Learning Behavior of Adaptive Networks --- Part II: Performance Analysis}

\maketitle

\begin{abstract}
Part I\cite{chen2013learningPart1} of this work examined the mean-square stability and convergence of the learning process of distributed strategies over graphs. The results identified conditions on the network topology, utilities, and data in order to ensure stability; the results also identified three distinct stages in the learning behavior of multi-agent networks related to transient phases I and II and the steady-state phase. This Part II examines the steady-state phase of distributed learning by networked agents. Apart from characterizing the performance of the individual agents, it is shown that the network induces a useful equalization effect across all agents. In this way, the performance of noisier agents is enhanced to the same level as the performance of agents with less noisy data. It is further shown that in the small step-size regime, each agent in the network is able to achieve the same performance level as that of a centralized strategy corresponding to a fully connected network. The results in this part reveal explicitly which aspects of the network topology and operation influence performance and provide important insights into the design of effective mechanisms for the processing and diffusion of information over networks.
\end{abstract}
\begin{keywords}
Multi-agent learning, diffusion of information, steady-state performance, centralized solution, stochastic approximation, mean-square-error.  
\end{keywords}
%

\section{INTRODUCTION}
\label{Sec:Intro}

In Part I of this work\cite{chen2013learningPart1}, we carried out a detailed transient analysis of the global learning behavior of multi-agent networks.
The analysis revealed interesting results about the learning abilities of distributed strategies when {\em constant} step-sizes are used to ensure continuous tracking of drifts in the data. It was noted that when constant step-sizes are employed to drive the learning process, the dynamics of the distributed strategies is modified in a critical manner. Specifically, components that relate to gradient noise are not annihilated any longer, as happens when diminishing step-sizes are used.  These noise components remain persistently active throughout the adaptation process and it becomes necessary to examine their impact on network performance, such as examining questions of the following nature: (a) can these persistent noise components drive the network unstable? (b) can the degradation in performance be controlled and minimized? (c) what is the size of the degradation?  Motivated by these questions, we provided in Part I \cite{chen2013learningPart1} detailed answers to the following three inquiries: (i) where  does the distributed strategy converge to? (ii) under what conditions on the data and network topology does it converge? (iii) and what are the rates of convergence of the learning process? In particular, we showed in Part I \cite{chen2013learningPart1} that  there always exist sufficiently small constant step-sizes that ensure the mean-square convergence of the learning process to a well-defined limit point even in the presence of persistent gradient noise.

We characterized this limit point as the {\em unique} fixed point solution of a nonlinear algebraic equation consisting of the weighted sum of individual update vectors. The scaling weights were shown to be given by the entries of the right-eigenvector of the network combination policy corresponding to the eigenvalue at one (also called the Perron eigenvector; its entries are normalized to add up to one and are all strictly positive for strongly-connected networks). The analysis from Part I \cite{chen2013learningPart1} further revealed that  the learning curve of the multi-agent network exhibits \emph{three} distinct phases. In the first phase (Transient Phase I), the convergence rate of the network is determined by the second largest eigenvalue of the combination policy in magnitude, which is related to  the degree of network connectivity. In the second phase (Transient Phase II), the convergence rate is determined by the Perron eigenvector. And, in the third phase (the steady-state phase) the mean-square error (MSE) performance attains a bound on the order of step-size parameters.

In this Part II of the work,  we address in some detail two additional questions related to network performance, namely,  iv) how close do the individual agents get to { the limit point of the distributed strategies over the network}? and v) can the system of networked agents be made to match the learning performance of a centralized solution where all information is collected and processed centrally by a fusion center? In the process of answering these questions, we shall derive a closed-form expression for the steady-state MSE of each agent. This closed-form expression turns out to be a revealing result; it amounts to a non-trivial extension of a classical result for stand-alone adaptive agents \cite{widrow1976stationary,jones1982analysis,gardner1984learning,feuer1985convergence} to the more demanding context of networked agents and for cost functions that are not necessarily quadratic or of the mean-square-error type. As we are going to explain in the sequel, the closed-form expression of the steady-state MSE captures the effect of the network topology (through the Perron vector of the combination matrix), gradient noise, and data characteristics in an integrated manner and shows  how these various factors influence performance. The derived results in this paper applies to connected networks under fairly general conditions and for fairly general aggregate cost functions.

We shall also explain later in Sections \ref{Sec:LearnBehav:GlobalBehav} and \ref{Sec:Benefits} of this part that, as long as the network is strongly connected, a left-stochastic combination matrix can always be constructed to have any desired Perron-eigenvector. This observation has an important ramification for the following reason. Starting from any collection of $N$ agents, there exists a finite number  of topologies that can link these agents together. And for each possible topology, there are infinitely many combination policies that can be used to train the network. Since the performance of the network is dependent on the Perron-eigenvector of its combination policy, one of the important conclusions that will follow is that regardless of the network topology, there will always exist choices for the respective combination policies such that the steady-state performance of all topologies can be made identical to each other to first-order in $\mu_{\max}$, which is the largest step-size across agents. In other words, no matter how the agents are connected to each other, there is always a way to select the combination weights such that the performance of the network is invariant to the topology. This will also mean that, for any connected topology, there is always a way to select the combination weights such that the performance of the network matches that of the centralized stochastic-approximation (since a centralized solution can be viewed as corresponding to a fully-connected network).

\noindent
{\bf Notation}. We adopt the same notation from Part I\cite{chen2013learningPart1}.
All vectors are column vectors. We use boldface letters to denote random quantities (such as $\bm{u}_{k,i}$) and regular font to denote their realizations or deterministic variables (such as $u_{k,i}$). We use $\mathrm{diag}\{x_1,\ldots,x_N\}$ to denote a (block) diagonal matrix consisting of diagonal entries (blocks) $x_1,\ldots,x_N$, and use $\mathrm{col}\{x_1,\ldots,x_N\}$ to denote a column vector formed by stacking $x_1,\ldots,x_N$ on top of each other. The notation $x \preceq y$ means each entry of the vector $x$ is less than or equal to the corresponding entry of the vector $y$, and the notation $X \preceq Y$ means each entry of the matrix $X$ is less than or equal to the corresponding entry of the matrix $Y$. The notation $x=\mathrm{vec}(X)$ denotes the vectorization operation that stacks the columns of a matrix $X$ on top of each other to form a vector $x$, and $X=\mathrm{vec}^{-1}(x)$  is the inverse operation. The operators $\nabla_w$ and $\nabla_{w^T}$ denote the column and row gradient vectors with respect to $w$. When $\nabla_{w^T}$ is applied to a  column vector $s$, it generates a 
matrix. The notation $a(\mu)\sim b(\mu)$ means that $\lim_{\mu\rightarrow 0} a(\mu)/b(\mu)=1$, $a(\mu) = o( b(\mu) )$ means that $\lim_{\mu\rightarrow 0} a(\mu)/b(\mu)=0$, and $a(\mu) = O( b(\mu) )$ means that there exists a constant $C>0$ such
that $a(\mu) \le  C \cdot b(\mu)$. The notation $a = \Theta( b )$ means there exist constants $C_1$ and $C_2$  independent of $a$ and $b$ such that $C_1 \cdot b \le a \le C_2 \cdot b$.

\section{Family of Distributed Strategies }
\label{Sec:ProblemFormulation}

\subsection{Distributed Strategies: Consensus and Diffusion}
\label{Sec:ProblemFormulation:Dist}

	\begin{figure}[t]
		\centering
		\includegraphics[width=0.28\textwidth]{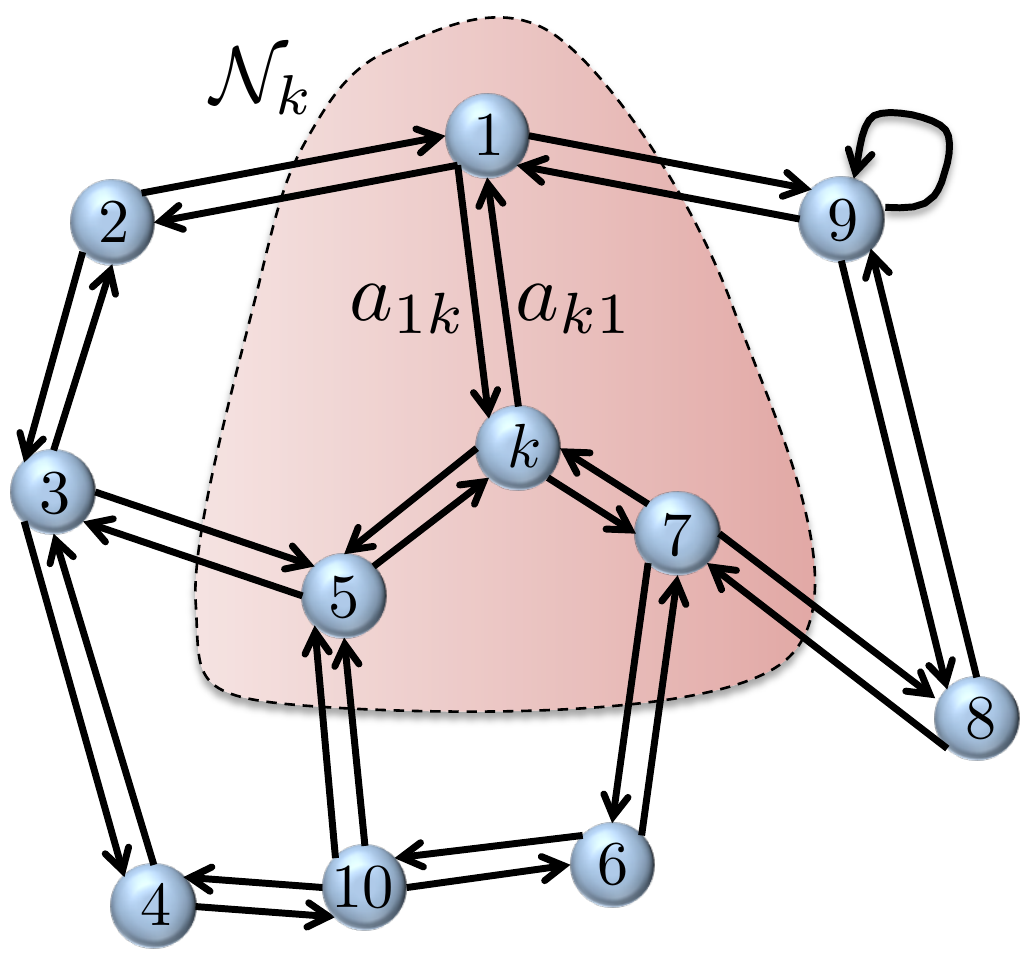}
		\caption{A network representing a multi-agent system. The set of all agents that
		can communicate with node $k$ is 
		denoted by $\mc{N}_k$.The edge linking any two agents is 
		represented by two directed arrows to emphasize that 
		information can flow in both directions.}
		\label{Fig:Fig_Network}
	\end{figure}

We consider a connected network of $N$ agents that are linked together through a topology --- see Fig. \ref{Fig:Fig_Network}. Each agent $k$ implements a distributed algorithm of the following form to update its state vector from $\bw_{k,i-1}$ to $\bw_{k,i}$:
	\begin{align}
		\label{Equ:ProblemFormulation:DisAlg_Comb1}
		\bm{\phi}_{k,i-1}	&=	\sum_{l=1}^N a_{1,lk} \bm{w}_{l,i-1}		\\
		\label{Equ:ProblemFormulation:DisAlg_Adapt}
		\bm{\psi}_{k,i}		&=	\sum_{l=1}^N a_{0,lk} \bm{\phi}_{l,i-1}
								-
								\mu_k \hat{\bm{s}}_{k,i}(\bm{\phi}_{k,i-1})	\\
		\label{Equ:ProblemFormulation:DisAlg_Comb2}
		\bm{w}_{k,i}		&=	\sum_{l=1}^N a_{2,lk} \bm{\psi}_{l,i}
	\end{align}
where $\bm{w}_{k,i} \in \mb{R}^M$ is the state of agent $k$ at time $i$, usually an estimate for the solution of some optimization problem, $\bm{\phi}_{k,i-1} \in \mb{R}^M$ and $\bm{\psi}_{k,i} \in \mb{R}^M$ are intermediate variables generated at node $k$ before updating to $\bm{w}_{k,i}$, $\mu_k$ is a non-negative constant step-size parameter used by node $k$, and $\hat{\bm{s}}_{k,i}(\cdot)$ is an $M \times 1$ update vector function at node $k$. We explained in Part I \cite{chen2013learningPart1} that in deterministic optimization problems, the update vectors $\hat{\bm{s}}_{k,i}(\cdot)$  can be selected as the gradient or Newton steps associated with the individual utility functions at the agents\cite{nedic2009distributed}. On the other hand, in stocastic approximation problems, such as adaptation, learning and estimation problems \cite{tsitsiklis1986distributed,ram2010distributed,srivastava2011distributed,kar2011converegence,kar2008distributed,kar2013distributed,dimakis2010gossip,theodoridis2011adaptive,dini2012cooperative,lopes2008diffusion,Cattivelli10,zhao2012performance,chen2011TSPdiffopt,chen2013JSTSPpareto,sayed2012diffbookchapter,chouvardas2011adaptive,gharenhshiran2013jstsp,sayed2014adaptation,sayed2014proc},
the update vectors $\hat{\s}_{k,i}(\cdot)$ are usually computed from realizations of data samples that arrive sequentially at the nodes. In the stochastic setting, the quantities appearing in \eqref{Equ:ProblemFormulation:DisAlg_Comb1}--\eqref{Equ:ProblemFormulation:DisAlg_Comb2}  become random variables and we shall use boldface letters to highlight their stochastic nature. In Example \ref{P1-Ex:UpdateVector} of Part I\cite{chen2013learningPart1}, we illustrated various choices for $\hat{\bm{s}}_{k,i}(w)$ in different contexts.

The combination coefficients $a_{1,lk}$, $a_{0,lk}$ and $a_{2,lk}$ in \eqref{Equ:ProblemFormulation:DisAlg_Comb1}--\eqref{Equ:ProblemFormulation:DisAlg_Comb2} are nonnegative convex-combination weights that each node $k$ assigns to the information arriving from node $l$ and will be zero if agent $l$ is not in the neighborhood of agent $k$. Therefore, each summation in \eqref{Equ:ProblemFormulation:DisAlg_Comb1}--\eqref{Equ:ProblemFormulation:DisAlg_Comb2} is actually confined to the neighborhood of node $k$. We let $A_1$, $A_0$ and $A_2$ denote the $N \times N$ matrices that collect the coefficients $\{a_{1,lk}\}$, $\{a_{0,lk}\}$ and $\{a_{2,lk}\}$. Then, the matrices $A_1$, $A_0$ and $A_2$ satisfy
	\begin{align}
		\label{Equ:ProblemFormulation:a_lk:convex_matrixform}
		A_1^T \mathds{1} = \mathds{1},	\quad
		A_0^T \mathds{1} = \mathds{1},	\quad
		A_2^T \mathds{1} = \mathds{1}
	\end{align}
where $\one$ is the $N\times 1$ vector with all its entries equal to one. Condition \eqref{Equ:ProblemFormulation:a_lk:convex_matrixform} means that the matrices $\{A_0,A_1,A_2\}$ are left-stochastic (i.e., the entries on each of their columns add up to one). We also explained in Part I\cite{chen2013learningPart1} that different choices for $A_1$, $A_0$ and $A_2$ correspond to different distributed strategies, such as the such as the traditional consensus\cite{nedic2009distributed,nedic2010cooperative,tsitsiklis1986distributed,kar2011converegence,kar2008distributed,kar2013distributed,dimakis2010gossip} and diffusion (ATC and CTA) \cite{lopes2008diffusion,sayed2012diffbookchapter, Cattivelli10,zhao2012performance,chen2011TSPdiffopt,chen2013JSTSPpareto,sayed2014adaptation,sayed2014proc} algorithms --- see Table \ref{Tab:ChoiceOfMatrixA}.
In our analysis, we will proceed with the general form \eqref{Equ:ProblemFormulation:DisAlg_Comb1}--\eqref{Equ:ProblemFormulation:DisAlg_Comb2} to study all three schemes, and other
possibilities, within a unifying framework.

\begin{table}
	\centering
	\caption{Different choices for $A_1$, $A_0$ and $A_2$ correspond to different
	distributed strategies.}
	\label{Tab:ChoiceOfMatrixA}
	\begin{tabular}{c|ccc|c}
		\hline\hline
		Distributed Strategeis	&	$A_1$	&	$A_0$	&	$A_2$	&	$A_1A_0A_2$	\\
		\hline
		Consensus				&	$I$		&	$A$		&	$I$		&	$A$			\\
		ATC diffusion			&	$I$		&	$I$		&	$A$		&	$A$			\\
		CTA	diffusion			&	$A$		&	$I$		&	$I$		&	$A$			\\
		\hline
	\end{tabular}
	\end{table}

\subsection{Review of the Main Results from Part I\cite{chen2013learningPart1}}
\label{Sec:FamilyDistStrategy:ReviewPartI}

Due the coupled nature of the social and self-learning steps in \eqref{Equ:ProblemFormulation:DisAlg_Comb1}--\eqref{Equ:ProblemFormulation:DisAlg_Comb2}, information derived from local data at agent $k$ will be propagated to its neighbors and from there to their neighbors in a diffusive learning process. It is expected that some global performance pattern will emerge from these localized interactions in the multi-agent system. As mentioned in the introductory remarks, in Part I \cite{chen2013learningPart1} and in this Part II, we examine the following five questions:
	\begin{itemize}
		\item
			\underline{Limit point}: where does each state $\bm{w}_{k,i}$ converge to?
		\item
			\underline{Stability}: under which condition does convergence occur?
		\item	
			\underline{Learning rate}: how fast does convergence occur?
		\item
			\underline{Performance}: how close does $\bm{w}_{k,i}$ get to the limit point?
		\item
			\underline{Generalization}: can $\w_{k,i}$ match the performance of a centralized solution?
	\end{itemize}
In Part I\cite{chen2013learningPart1}, we  addressed the first three questions in detail and derived expressions that fully characterize the answer in each case. One of the main conclusions established in Part I\cite{chen2013learningPart1} is that for general \emph{left-stochastic} matrices $\{A_1,A_0,A_2\}$, the agents in the network will have their iterates $\bw_{k,i}$ converge, in the mean-square-error sense, to the \emph{same} limit vector $w^o$ that corresponds to the unique solution of the following
algebraic equation:
	\begin{align}
		\sum_{k=1}^N p_k s_k(w)	=	0
		\label{Equ:LearnBehav:FixedPointEqu}
	\end{align}
where the update functions $s_k(\cdot)$ are defined further ahead in \eqref{Equ:Assumption:Randomness:MDS} as the conditional means of the update directions $\hat{\s}_{k,i}(\cdot)$ used in \eqref{Equ:ProblemFormulation:DisAlg_Comb1}--\eqref{Equ:ProblemFormulation:DisAlg_Comb2}, and each positive coefficient $p_k$ is the $k$th entry of the following vector:
	\begin{align}
		p	&=		
					\col\left\{ 
						\frac{\mu_1}{\mu_{\max}} \pi_1, \ldots, \frac{\mu_N}{\mu_{\max}} \pi_N
					\right\}
		\label{Equ:ProblemForm:p_def}
	\end{align}
Here, $\mu_{\max}$ is the largest step-size among all agents, $\pi_k$ is the $k$th entry of the vector $\pi \defeq A_2 \theta$, and $\theta$ is the right eigenvector of $A \defeq A_1A_0A_2$ corresponding to the eigenvalue at one with its entries normalized to add up to one, i.e., 
	\begin{align}
		A \theta = \theta, \qquad \one^T \theta = 1
		\label{Equ:ProblemForm:PerronVector_A}
	\end{align}
We refer  to $\theta$ as the Perron eigenvector of $A$. The unique solution $w^o$ of \eqref{Equ:LearnBehav:FixedPointEqu} has the interpretation of a Pareto optimal solution corresponding to the weights $\{p_k\}$ \cite{chen2013learningPart1,chen2013JSTSPpareto,boyd2004convex}. By  selecting different combination policies $A$, or even different topologies, the entries $\{p_k\}$ can be made to 
change (since $\theta$ will change) and the limit point $w^o$ resulting from \eqref{Equ:LearnBehav:FixedPointEqu} can  be steered towards different Pareto optimal solutions.

\begin{figure}[t]
		\centering
		\includegraphics[width=0.45\textwidth]{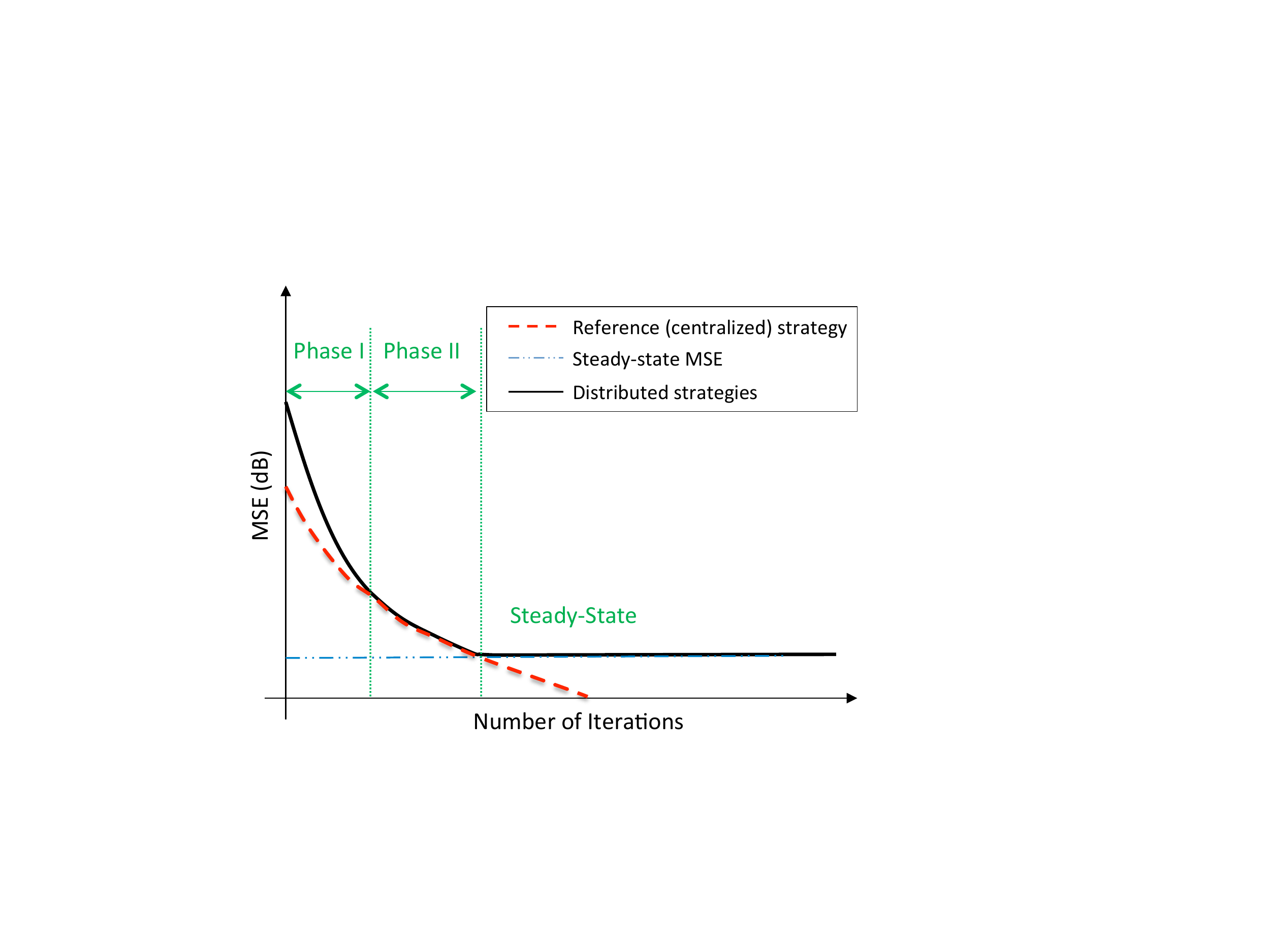}
		\caption{A typical mean-square-error (MSE) 
		learning curve includes a transient stage that consists of
		two phases and a steady-state phase. The plot shows 
		how the learning curve of a network of agents compares 
		to the learning curve of a centralized reference solution. 
		The analysis in this work, and in the accompanying Part I \cite{chen2013learningPart1} characterizes in detail the 
		parameters that determine the behavior of the network 
		(rate, stability, and performance) during each phase of the 
		learning process.}
		\label{Fig:TypicalLearningCurve}
		\vspace{-1\baselineskip}
	\end{figure}

The second major conclusion from Part I \cite{chen2013learningPart1} is that, during the convergence process towards the limit point $w^o$, the learning
curve at each agent exhibits \emph{three} distinct phases (see Fig. \ref{Fig:TypicalLearningCurve}): Transient Phase I, Transient Phase II, and Steady-State Phase. These phases were shown in Part I \cite{chen2013learningPart1} to have the following features:
	\begin{itemize}
		\item
		{\bf Transient Phase I:}\\
		If the agents are initialized at different values, then the iterates at
		the various agents will initially evolve in such a way to make
		each $\bw_{k,i}$ get closer to the following
		\emph{reference} (centralized) recursion $\bar{w}_{c,i}$:
			\begin{align}
				\label{Equ:LearnBehav:RefRec}
				\bar{w}_{c,i}	=	\bar{w}_{c,i-1} - \mu_{\max}\sum_{k=1}^N p_k s_k(\bar{w}_{c,i-1})
			\end{align}
		which is initialized at 
			\begin{align}
				\bar{w}_{c,0}	=	\sum_{k=1}^N \theta_k w_{k,0}
				\label{Equ:LearnBehav:wc0bar_initRefRecur}
			\end{align}
		where $w_{k,0}$ is the initial value of the distributed
		strategy at agent $k$. The rate at which the agents approach
		$\bar{w}_{c,i}$ is { geometric (linear) and is} determined by $|\lambda_2(A)|$, 
		the second largest eigenvalue of $A$ in magnitude. 
		If the agents are initialized at the
		same value, say, e.g., $\bw_{k,0}=0$, then the learning curves
		start at Transient Phase II directly.

		\item
		{\bf Transient Phase II:}\\
		In this phase, the trajectories of all agents are uniformly 
		close to the trajectory of the reference recursion; they
		converge in a coordinated manner to steady-state { in geometric (linear) rate}. The
		learning curves at this phase are well modeled by the 
		same reference recursion \eqref{Equ:LearnBehav:RefRec} since we showed in  
		\eqref{P1-Equ:MSStability:MSE_wki_PhaseII} from Part I
		\cite{chen2013learningPart1} that:
			\begin{align}
				\E\|\tilde{\w}_{k,i}\|^2	=	\|\tilde{w}_{c,i}\|^2  
									+ 
									O(\mu_{\max}^{1/2}) \cdot \gamma_c^i 
									+ 
									O(\mu_{\max})
				\label{Equ:MSStability:MSE_wki_PhaseII}
			\end{align}
		where the error vectors are defined by $\tilde{\w}_{k,i}=w^o-\w_{k,i}$ and $\tilde{w}_{c,i}=w^o-\bar{w}_{c,i}$. 
		Furthermore,  for small step-sizes and during the later stages of this phase, $\bar{w}_{c,i}$ will
		be close enough to $w^o$ and the convergence rate $r$ was shown in expression 
		\eqref{P1-Equ:DistProc:r_RefRec} from Part I\cite{chen2013learningPart1} to be given by
			\begin{align}
				r		&=		\big[
							\rho(I_M - \mu_{\max} H_c)
						\big]^2
						+ 
						O\big( (\mu_{\max} \epsilon )^{\frac{1}{2(M-1)}} \big)
				\label{Equ:LearnBehav:r}
			\end{align}
		where $\rho(\cdot)$ denotes the spectral radius of its matrix argument, $\epsilon$ is
		an arbitrarily small positive number, and $H_c$ is defined as the aggregate (Hessian-type) sum:
			\begin{align}
				H_c	\defeq	\sum_{k=1}^N p_k \nabla_{w^T} s_k(w^o)
				\label{Equ:LearnBehav:Hc_def}
			\end{align}
		
		\item
		{\bf Steady-State Phase:}\\
		The reference recursion \eqref{Equ:LearnBehav:RefRec}
		continues converging towards $w^o$ so that $\|\tilde{w}_{c,i}\|^2$ will converge to zero ($-\infty$ dB 
		in Fig. \ref{Fig:TypicalLearningCurve}).	
		However, for the distributed strategy \eqref{Equ:ProblemFormulation:DisAlg_Comb1}--%
		\eqref{Equ:ProblemFormulation:DisAlg_Comb2}, the mean-square-error
		$\E\|\tilde{\bw}_{k,i}\|^2$ at each
		agent $k$ will converge to a \emph{finite} steady-state value that is on the order of $O(\mu_{\max})$:
			\begin{align}
				\limsup_{i \rightarrow \infty} \E \| \tilde{\w}_{k,i} \|^2 	\le 	O(\mu_{\max})
				\label{Equ:LearnBehav:limsup_MSE_UB}
			\end{align}				
	\end{itemize}

Note that the bound \eqref{Equ:LearnBehav:limsup_MSE_UB} provides a partial answer to the fourth question we are interested in, namely, how close the $\w_{k,i}$ get to the network limit point $w^o$. Expression \eqref{Equ:LearnBehav:limsup_MSE_UB} indicates that the mean-square error is on the order of $\mu_{\max}$. However, in this Part II, we will examine this mean-square error more closely and provide a more accurate characterization of the steady-state MSE value by deriving a closed-form expression for it. In particular, we will be able to characterize this MSE value in terms of the vector $p$ as follows\footnote{The interpretation of the limit in \eqref{Equ:LearnBehav:MSE} is explained in more detail in Sec. \ref{Sec:LearnBehav:SteadyStateAnal}.}:
			\begin{align}
				\lim_{i\rightarrow\infty}\!
					\E\|\tilde{\bw}_{k,i}\|^2	
						&=	
							\mu_{\max}\!\cdot\!
							\Tr\left\{
								X
								(p^T \!\! \otimes\! I_M) \cdot \mc{R}_v \cdot (p\!\otimes\! I_M)
							\right\} 
							\nn\\
							&\quad
							+ 
							o(\mu_{\max})
				\label{Equ:LearnBehav:MSE}
			\end{align}
		where $X$ is the solution to a certain Lyapunov equation described
		later in \eqref{Equ:SteadyState:ContinuousLyapunovEqu_final}
		(when $\Sigma=I$), $\mR_v$ is a gradient noise covariance matrix defined below in 
		\eqref{Equ:Assumption:Rv:R_v_limit}, and $o(\mu_{\max})$ denotes a strictly higher order term
		of $\mu_{\max}$. 
		Expression \eqref{Equ:LearnBehav:MSE} is a most 
		revealing result; it captures the effect of the network topology through the eigenvector $p$, 
		and it captures the effects of gradient noise and data characteristics
		through the matrices $\mc{R}_v$ and $X$, respectively. Expression \eqref{Equ:LearnBehav:MSE}
		is a non-trivial extension of a classical 
		and famous result pertaining to the mean-square-error performance 
		of stand-alone adaptive agents \cite{widrow1976stationary,jones1982analysis,
		gardner1984learning,feuer1985convergence} to the more demanding 
		context of networked agents. In particular, it can be easily verified that 
		\eqref{Equ:LearnBehav:MSE} reduces to the well-known $\mu M\sigma_v^2/2$ 
		expression for the mean-square deviation of single LMS learners when the network 
		size is set to $N=1$ and the topology is removed \cite{widrow1976stationary,
		 jones1982analysis,gardner1984learning,feuer1985convergence}. However, expression 
		 \eqref{Equ:LearnBehav:MSE} is not limited to single agents or to mean-square-error costs. 
		 It applies to rather general connected networks and to fairly general cost functions.

\subsection{Relation to Prior Work}
\label{Sec:ProblemFormulation:PriorWork}

As pointed out in Part I\cite{chen2013learningPart1} (see Sec. \ref{P1-Sec:ProblemFormulation:PriorWork}), most prior works in the literature\cite{tsitsiklis1986distributed,
kar2011converegence,kar2008distributed,kar2013distributed,dimakis2010gossip,ram2010distributed,
srivastava2011distributed,nedic2009distributed,lee2013distributed,bianchi2012performance,johansson2008subgradient,braca2008running,stankovic2011decentralized} focus on studying the performance and convergence of their respective distributed strategies under {\em diminishing} step-size conditions and for {\em doubly-stochastic} combination policies. In contrast, we focus on \emph{constant} step-sizes in order to enable continuous adaptation and learning under drifting conditions. We also focus on \emph{left-stochastic combination matrices} in order to induce flexibility about the network limit point; this is because doubly-stochastic policies force the network to converge to the {\em same} limit point, while left-stochastic policies enable the networks to converge to any of infinitely many Pareto optimal solutions. Moreover, the value of the limit point can be controlled through the selection of the Perron eigenvector.

Furthermore, the performance of distributed strategies has usually been characterized in terms of bounds on their steady-state mean-square-error performance --- see, e.g.,  \cite{nedic2010cooperative,tsitsiklis1986distributed,ram2010distributed,srivastava2011distributed,nedic2009distributed,lee2013distributed,johansson2008subgradient,stankovic2011decentralized}. In Part I \cite{chen2013learningPart1} of the work, as a byproduct of our study of the three stages of the learning process, we were able to derive performance bounds for the steady-state MSE of a fairly general class of distributed strategies under broader (weaker) conditions than normally considered in the literature. In this Part II, we push the analysis noticeably further and derive a closed-form expression for the steady-state MSE in the slow adaptation regime, such as expression \eqref{Equ:LearnBehav:MSE}, which captures in an integrated manner how various network parameters (topology, combination policy, utilities) influence performance. 

Other useful and related works in the literature appear in \cite{kar2011converegence,kar2008distributed,kar2013distributed, bianchi2012performance}. These works, however, study the distribution of the error vector in steady-state under {\em diminishing} step-size conditions and using central limit theorem (CLT) arguments. They established a Gaussian distribution for the error quantities in steady-state and derived an expression for the error variance but the expression tends to zero as $i\rightarrow\infty$ since, under the conditions assumed in these works, the error vector $\tilde{\w}_{k,i}$ approaches zero almost surely. Such results are possible because, in the diminishing step-size case, the influence of gradient noise is annihilated by the decaying step-size. However, in the \emph{constant} step-size regime, the influence of gradient noise is always present and seeps into the operation of the algorithm. In this case, the error vector does {\em not} approach zero any longer and its variance  approaches instead a steady-state {\em positive-definite} value. Our objective is to characterize this steady-state value and to examine how it is influenced by the network topology, by the persistent gradient noise conditions, and by the data characteristics and utility functions.  In the constant step-size regime, CLT arguments cannot be employed anymore because the Gaussianity result does not hold any longer. Indeed, reference \cite{zhao2011probability} illustrates this situation clearly; it derived an expression for the characteristic function of the limiting error distribution in the case of mean-square-error estimation and it was shown that the distribution is not Gaussian. For these reasons, the analysis in this work is based on alternative techniques that do not pursue any specific form for the steady-state distribution and that rely instead on the use of energy conservation arguments \cite{chen2011TSPdiffopt,Sayed08,sayed2012diffbookchapter}.   As the analysis and detailed derivations in the appendices show, this is a formidable task to pursue due to the coupling among the agents and the persistent noise conditions. Nevertheless, under certain conditions that are generally weaker than similar conditions used in related contexts in the literature, we will be able to derive accurate expressions for the network MSE performance and its convergence rate in small constant step-size regime.

We finally remark that the analysis in this paper and its accompanying Part I\cite{chen2013learningPart1} is \emph{not} focused on the solution of \emph{deterministic} distributed optimization problems, although algorithm \eqref{Equ:ProblemFormulation:DisAlg_Comb1}--\eqref{Equ:ProblemFormulation:DisAlg_Comb2} can still be applied for that purpose (see future Sec. \ref{Sec:Benefits:ParetoOpt}). Instead, we consider a stochastic setting where each individual cost $J_k(w)$ is generally expressed as the expectation of some loss function, say, as
	\begin{align}
		J_k(w)	=	\Expt Q_k(w; \bm{x}_{k,i})
		\label{Equ:FamilyDistStrategy:J_k}
	\end{align}
and the objective is to minimize the aggregate stochastic cost:
	\begin{align}
		J^{\mathrm{glob}}(w)		=	\sum_{k=1}^N J_k(w)
		\label{Equ:FamilyDistStrategy:J_glob}
	\end{align}
In such problems, we usually do not know the exact form of the cost function because we do not have prior knowledge about the exact statistical distribution of the data $\bm{x}_{k,i}$. What is generally available to each agent $k$ is a stream of data points $\bm{x}_{k,0}, \bm{x}_{k,1}, \ldots$ that arrives at agent $k$ sequentially over time. The agents in the network then use stochastic gradients constructed as $\nabla Q(w; \bm{x}_{k,i})$ (or from variations thereof), in place of the the actual gradients, $\nabla J_k(w)$, to learn from the streaming data.  Because of the stochastic nature of the learning algorithms, they will exhibit different convergence behavior than deterministic optimization algorithms. For example, even with a constant step-size, stochastic gradient distributed strategies can still converge at a geometric rate towards a small MSE in steady-state,  whereas diminishing step-sizes of the form $\mu(i)=\mu_o/i$, ensure a slower  almost sure convergence rate of $O(1/i)$.

\section{MODELING ASSUMPTIONS}
\label{Sec:Assump}

In this section, we first recall the assumptions used in Part I \cite{chen2013learningPart1} and then introduce two conditions that are required to carry out the MSE analysis in this part. We already explained in Sec. \ref{Sec:Assump} of Part I \cite{chen2013learningPart1} how the assumptions listed below relate to, and extend, similar conditions used in the literature. 
\vspace{0.5em}
	\begin{assumption}[Strongly-connected network]
		\label{Assumption:Network}
		The $N \times N$ matrix product $A \defeq A_1A_0A_2$ is assumed to be a primitive 
		left-stochastic matrix, i.e., $A^T\mathds{1}=\mathds{1}$ 
		and there exists a finite integer $j_o$ such that 
		all entries of $A^{j_o}$ are strictly positive. 
		\hfill\QED
	\end{assumption}
\vspace{0.5em}	

	\vspace{0.5em}
	\begin{assumption}[Update vector: Randomness]
		\label{Assumption:UpdateVectorRandomness}
		There exists an $M \times 1$ deterministic vector function 
		$s_k(w)$ such that, for all $M \times 1$ vectors $\bm{w}$ in the filtration $\mc{F}_{i-1}$
		generated by  the past history of iterates $\{\bm{w}_{k,j}\}$ for $j \le i-1$
		and all $k$, it holds that
			\begin{align}
				\label{Equ:Assumption:Randomness:MDS}
				&\E\left\{
					\hat{\bm{s}}_{k,i}(\bm{w}) | \mc{F}_{i-1}
				\right\}
						=	s_k(\bm{w})						
			\end{align}
		for all $i,k$. Furthermore, there exist
		$\alpha \ge 0$ and $\sigma_{v}^2 \ge 0$ such that
		for all $i,k$ and $\bm{w} \in \mc{F}_{i-1}$:
			\begin{align}
				\label{Equ:Assumption:Randomness:RelAbsNoise}
				&\E\left\{\!
					\left\| 
						\hat{\bm{s}}_{k,i}(\bm{w})\!-\!s_{k}(\bm{w})
					\right\|^2
					\big|
					\mF_{i-1}
				\right\}
						\le \alpha
							\!\cdot\!
							\| \w \|^2 \!+\! \sigma_{v}^2							
			\end{align}
		holds with probability one.
		\hfill\QED		
	\end{assumption}
	\vspace{0.5em}
	
	\vspace{0.5em}
	\begin{assumption}[Update vector: Lipschitz]
		\label{Assumption:UpdateVectorLipschitz}
		There exists a nonnegative $\lambda_{U}$ such that for 
		all $x,y \in \mb{R}^M$ and all $k$:
			\begin{align}
				\label{Equ:Assumption:Lipschitz}
				\|s_k(x)-s_k(y)\|	\le 	\lambda_{U} \cdot \|x-y\|
			\end{align}
		where the subscript ``$U$'' in $\lambda_{U}$ means
		``upper bound''.
		\hfill\QED
	\end{assumption}	
	\vspace{0.5em}

	\vspace{0.5em}
	\begin{assumption}[Update vector: Strong monotonicity]
		\label{Assumption:UpdateVectorMonot}
		Let $p_k$ denote the $k$th entry of the vector $p$ 
		defined in \eqref{Equ:ProblemForm:p_def}.
		There exists $\lambda_L > 0$ such that for
		all $x,y \in \mb{R}^M$:
			\begin{align}
				\label{Equ:Assumption:StrongMonotone}
				(x-y)^T \cdot
						\sum_{k=1}^N
						p_k 
						\Big[
								s_k(x)-s_k(y)
						\Big]
						\ge 	\lambda_L \cdot \|x-y\|^2
			\end{align}
		where the subscript ``$L$'' in $\lambda_L$ means ``lower bound'',
		{ 
		and $\lambda_L$ may depend on $\{p_k\}$.
		}
		\hfill\QED
	\end{assumption}
	\vspace{0.5em}

		\begin{assumption}[Jacobian matrix: Lipschitz]
		\label{Assumption:JacobianUpdatVectorLipschitz}
		Let $w^o$ denote the limit point of the distributed strategy
		\eqref{Equ:ProblemFormulation:DisAlg_Comb1}--\eqref{Equ:ProblemFormulation:DisAlg_Comb2},
		which was defined earlier as 
		the unique solution to \eqref{Equ:LearnBehav:FixedPointEqu} and was characterized in Theorem \ref{P1-Thm:LimitPoint} 
		of Part I \cite{chen2013learningPart1}. Then, in a small 
		neighborhood around $w^o$, we assume that $s_k(w)$ is differentiable with 
		respect to $w$ and satisfies
			\begin{align}
				\label{Equ:Assumption:LipschitzJacobian}
				\|\nabla_{w^T} s_k(w^o+\delta w)-\nabla_{w^T} s_k(w^o)\|\leq\;\lambda_H\cdot \|\delta w\|	
			\end{align}
		for all $\|\delta w\|\leq r_H$ for some small $r_H$, and where $\lambda_H$ is a 
		nonnegative number independent of $\delta w$. 
		
		\hfill\QED
	\end{assumption}
	\vspace{0.5em}
The following lemma gives the equivalent forms of Assumptions \ref{Assumption:UpdateVectorLipschitz}--\ref{Assumption:UpdateVectorMonot} when the $\{s_k(w)\}$ happen to be differentiable.
	\begin{lemma}[Equivalent conditions on update vectors]
		\label{Lemma:EquivCond_updateVec}
		Suppose $\{s_k(w)\}$ are differentiable in an open set $\mS \subseteq \mb{R}^M$. Then,
		having conditions \eqref{Equ:Assumption:Lipschitz} and \eqref{Equ:Assumption:StrongMonotone} hold on $\mS$ is 
		equivalent to the following conditions, respectively,
			\begin{align}
				\| \nabla_{w^T} s_k(w) \|		&\le 		\lambda_U				
				\label{Equ:Lemma:EquivCondUpdateVec:LipschitzUpdate_HessianUB}
															\\
				\frac{1}{2} [ H_c(w) + H_c^T(w) ] 
												&\ge 		\lambda_L \cdot I_M
				\label{Equ:Lemma:EquivCondUpdateVec:StrongMono_HessianLB}
			\end{align}
		for any $w \in \mS$, where $\|\cdot\|$ denotes the $2$-induced norm (largest singular value) of 
		its matrix argument and 
			\begin{align}
				H_c(w)		\defeq		\sum_{k=1}^n p_k \nabla_{w^T} s_k(w)
				\label{Equ:Lemma:EquivCondUpdateVec:Hc_def}
			\end{align}
	\end{lemma}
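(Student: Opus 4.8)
The plan is to prove each of the two claimed equivalences by the same pair of elementary devices: an ``integrate the Jacobian along a segment'' argument for the direction (gradient bound $\Rightarrow$ finite-difference bound), and a ``perturb, divide, and let $\epsilon\to 0$'' argument for the converse. Throughout I take $\mS$ so that the segment joining any two points at which I evaluate the conditions lies in $\mS$ (automatic if $\mS$ is convex, which is the case of interest near $w^o$); differentiability of the $s_k$ on $\mS$ then legitimizes the fundamental theorem of calculus in the forms $s_k(x)-s_k(y)=\big(\int_0^1 \nabla_{w^T}s_k(y+t(x-y))\,dt\big)(x-y)$.

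For the Lipschitz pair \eqref{Equ:Assumption:Lipschitz} $\Leftrightarrow$ \eqref{Equ:Lemma:EquivCondUpdateVec:LipschitzUpdate_HessianUB}: assuming \eqref{Equ:Lemma:EquivCondUpdateVec:LipschitzUpdate_HessianUB}, I take norms in the integral identity above, pull the norm inside the integral, and use $\|\nabla_{w^T}s_k\|\le\lambda_U$ together with the submultiplicativity of the $2$-induced norm to get $\|s_k(x)-s_k(y)\|\le\lambda_U\|x-y\|$, which is \eqref{Equ:Assumption:Lipschitz}. Conversely, given \eqref{Equ:Assumption:Lipschitz}, I fix $w\in\mS$ and a unit vector $u$, apply the bound at $x=w+\epsilon u$, $y=w$, divide by $\epsilon>0$, and let $\epsilon\to 0$; by differentiability the difference quotient converges to $\nabla_{w^T}s_k(w)\,u$, so $\|\nabla_{w^T}s_k(w)\,u\|\le\lambda_U$ for every unit $u$, which is exactly $\|\nabla_{w^T}s_k(w)\|\le\lambda_U$.

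For the monotonicity pair \eqref{Equ:Assumption:StrongMonotone} $\Leftrightarrow$ \eqref{Equ:Lemma:EquivCondUpdateVec:StrongMono_HessianLB}: set $g(w)\defeq\sum_{k=1}^N p_k s_k(w)$, so that $\nabla_{w^T}g(w)=H_c(w)$. If \eqref{Equ:Lemma:EquivCondUpdateVec:StrongMono_HessianLB} holds, then $z^T H_c(w) z = z^T\tfrac12[H_c(w)+H_c^T(w)]z \ge \lambda_L\|z\|^2$ for all $z$ and $w\in\mS$; integrating $g(x)-g(y)=\big(\int_0^1 H_c(y+t(x-y))\,dt\big)(x-y)$ against $(x-y)^T$ yields $(x-y)^T[g(x)-g(y)]\ge\lambda_L\|x-y\|^2$, i.e. \eqref{Equ:Assumption:StrongMonotone}. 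Conversely, given \eqref{Equ:Assumption:StrongMonotone}, I fix $w\in\mS$ and a unit $u$, take $x=w+\epsilon u$, $y=w$, divide by $\epsilon^2$, and let $\epsilon\to0$ to obtain $u^T H_c(w) u\ge\lambda_L$; since $u^T H_c(w) u = u^T\tfrac12[H_c(w)+H_c^T(w)]u$ and $\|u\|=1$, and any vector is a scalar multiple of a unit vector, this gives $\tfrac12[H_c(w)+H_c^T(w)]\ge\lambda_L I_M$.

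The only real care needed is bookkeeping rather than mathematics: keeping the row-gradient convention consistent so that $\nabla_{w^T}s_k$ acts on $(x-y)$ from the left, ensuring the segment $[y,x]$ stays in $\mS$ so the integral representations are valid, and justifying the passage $\epsilon\to 0$ by the assumed differentiability. I expect no genuine obstacle; each implication reduces to a one-line estimate once the integral identity and the difference-quotient limit are in place.
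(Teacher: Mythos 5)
Your proposal is correct, and it follows what is essentially the same route the paper takes (the proof is deferred to an appendix of Part I, but the integral/mean-value representation $s_k(x)-s_k(y)=\big(\int_0^1 \nabla_{w^T}s_k(y+t(x-y))\,dt\big)(x-y)$ combined with difference-quotient limits is exactly the machinery this paper uses elsewhere, e.g.\ in its equation \eqref{Equ:SteadyState:MeanValThm}). The one point worth keeping explicit, which you already flag, is that the integral representation requires the segment $[y,x]$ to lie in $\mS$, so the equivalence as stated implicitly assumes $\mS$ is convex (or the statement is applied on a convex neighborhood); with that caveat your argument is complete.
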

	\begin{proof}
		See Appendix \ref{P1-Appendix:Proof_Lemma_EquivCondUpdateVec} in Part I\cite{chen2013learningPart1}.
	\end{proof}

Next, we introduce two new assumptions on $\hat{\bs}_{k,i}(\bw)$, which are needed for the MSE analysis of this Part II.  Assumption \ref{Assumption:Rv} below has been used before in the stochastic approximation literature --- see, for example, \cite{sacks1958asymptotic} and  Eq. (6.2) in Theorem 6.1 of \cite[p.147]{nevelson1972stochastic}. 
{
Before we state the assumptions, we first introduce some useful quantities. Let $\bm{v}_i(x)$ denote the $MN \times 1$ global vector that collects the statistical fluctuations in the stochastic update vectors across all agents:
				\begin{align}
					{\bm{v}}_{i}(x) \defeq 	
									\col\{ 
												\hat{\bm{s}}_{1,i}(x_1)-s_{1}(x_1), \;
												\ldots, \;
												\hat{\bm{s}}_{N,i}(x_N)-s_{N}(x_N)
										\}
					\label{Equ:Assumption:Rv:vi_def}
				\end{align}
		where we are using the vector $x$ to denote a block vector consisting of entries $x_k$ of 
		size $M\times 1$ each, i.e., $x \defeq \col\{ x_1, \ldots, x_N\}$. 
		For any $\x_{k} \in \mc{F}_{i-1}$, $1 \le k \le N$, we introduce the covariance matrix:
				\begin{align}
					\mc{R}_{v,i}(\bm{x})	\defeq	\E\left\{ {\bm{v}}_{i}(\bm{x}) 
												 {\bm{v}}_{i}^T(\bm{x})
												 \big|
												 \mc{F}_{i-1}
											\right\}
					\label{Equ:Assumption:Rv:Rvi_def}
				\end{align}
		where, again, we are using the notation $\x$ to refer to the block vector 
		$\x=\mbox{\rm col}\{\x_1,\ldots,\x_N\}$ with stochastic entries of size $M\times 1$ each. 
		Note that $\mR_{v,i}(\bm{x})$ generally depends on time $i$.
		This is because the distribution of $\hat{\bm{s}}_{k,i}(\cdot)$
		given $\mc{F}_{i-1}$ usually varies with time. The following assumption requires that,
		in the limit, this second-order moment of the distribution tends to a constant value.
		
}
	\begin{assumption}[Second-order moment of gradient noise]
		\label{Assumption:Rv}
		We assume that, 
		in the limit, $\mR_{v,i}(\bm{x})$ becomes
		invariant and tends to a deterministic constant value when evaluated at $\x=\one\otimes w^o$ 
		with probability one (almost surely):
			\begin{align}
				\lim_{i\rightarrow\infty} \mc{R}_{v,i}(\one \otimes w^o)		\defeq	\mc{R}_v
				\label{Equ:Assumption:Rv:R_v_limit}
			\end{align}		
		Furthermore, in a small 
		neighborhood around $\one \otimes w^o$, we assume that there exists deterministic
		constants $\lambda_v \ge 0$, 
		$r_V > 0$, and $\kappa \in (0, 4]$ such that for all $i \ge 0$:
			\begin{align}
				\label{Equ:Assumption:RvLipschitz}
				\left\|\mc{R}_{v,i}(\one \otimes w^o + \delta x) - \mc{R}_{v,i}(\one \otimes w^o)\right\|
						&\le 		\lambda_v \cdot \| \delta x\|^\kappa
			\end{align}
		for all $\|\delta x\| \le r_V$ with probability one. \hfill\QED
	\end{assumption}

	\begin{example}
		\label{Ex:Rv_LMS}
		We illustrate how Assumption \ref{Assumption:Rv} holds automatically
		in the context of distributed least-mean-squares estimation.
		Suppose each agent $k$ receives a stream
		of data samples $\{\bu_{k,i},\bd_{k}(i)\}$
		that are generated by the following linear model:
			\begin{align}
				\bd_{k}(i)	=	\bu_{k,i} w^o + \bn_{k}(i)
				\label{Equ:Example:LinearModel}
			\end{align}
		where the $1 \times M$ regressors $\{\bm{u}_{k,i}\}$ are zero mean and independent over time
		and space  with
	    covariance matrix $R_{u,k}=\E \{\bm{u}_{k,i}^T\bm{u}_{k,i}\} \ge 0$ and
	    the noise sequence $\{\bm{n}_l(j)\}$ is also zero mean, white, with variance 
	    $\sigma_{n,l}^2$, and independent of the regressors $\{\bm{u}_{k,i}\}$
	    for all $l,k,i,j$. 
	    The objective is to estimate the $M\times 1$ 
    		parameter vector $w^o$ by minimizing the following global cost function
			\begin{align}
				J^{\mathrm{glob}}(w)		=	\sum_{k=1}^N J_k(w)
				\label{Equ:ProblemForm:J_glob_def}
			\end{align}
    		where
    			\begin{align}
    				\label{Equ:Example:J_k_LMS}
    				J_k(w)	=	\E|\bd_k(i)-\bu_{k,i} w|^2
    			\end{align}
    		In this case, the actual gradient vector when evaluated at an $M\times 1$ vector $x_k$ 
		is given by
    			\begin{align}
				s_{k}(x_k)=\nabla_w \E|\bd_k(i)-\bu_{k,i}x_k|^2
			\end{align}
		and it can be
    		replaced by the instantaneous approximation
    			\begin{align}
    				\hat{\bs}_{k,i}(x_k)
    						=   -2\bm{u}_{l,i}^T[\bm{d}_{l}(i)-\bm{u}_{l,i} x_k]
    				\label{Equ:Example:ski_LMS}
    			\end{align}
		(Recall from \eqref{Equ:ProblemFormulation:DisAlg_Adapt} that the stochastic gradient at each agent
		$k$ is evaluated at $\phi_{k,i-1}$ and in this case $x_k = \phi_{k,i-1}$.)
		It follows that the gradient noise vector $\bv_{k,i}(x_k)$ evaluated at $x_k$, at
		each agent $k$ is given by
	        \begin{align}
	            \label{Equ:PerformanceAnalysis:GradientNoise_LMS}
	            \bm{v}_{k,i}(x_k) =   2(R_{u,k}-\bm{u}_{k,i}^T\bm{u}_{k,i})
	            						(w^o-x_k) - 2\bm{u}_{k,i}^T \bm{n}_k(i)
	        \end{align}
	    and it is straightforward to verify that
	    		\begin{align}
	    			\mc{R}_{v,i}(\one \otimes w^o)=	\diag\{
	    										4 \sigma_{n,1}^2 R_{u,1},
	    										\cdots,
	    										4 \sigma_{n,N}^2 R_{u,N}
	    									\}
	    			\label{Equ:Example:Rv_LMS:Rv}
	    		\end{align}
	    	which is independent of $i$ and, therefore, condition
	    	\eqref{Equ:Assumption:Rv:R_v_limit} holds with $\mc{R}_v$ given by
	    	\eqref{Equ:Example:Rv_LMS:Rv}. 
	    	 Furthermore, condition \eqref{Equ:Assumption:RvLipschitz} is also satisfied. 
		 Indeed, let $x = \col\{x_1,\ldots, x_N\} \in \mb{R}^{MN}$, and from 
		 \eqref{Equ:PerformanceAnalysis:GradientNoise_LMS} we find that
	    	 	\begin{align}
	    	 		\mc{R}_{v,i}(x)
	    	 			&=		\diag\{ G_1,\ldots,G_N\}
    							+
    							\mc{R}_{v,i}(\one \otimes w^o)
				\nn
	    	 	\end{align}
	    	 where each $G_k$ is a function of $w^o-x_k$ and is given by
	    	 	\begin{align}
	    	 		G_k			&\defeq	4 \cdot
	    	 							\E\big\{ 
	    	 								(R_{u,k}-\bm{u}_{k,i}^T\bm{u}_{k,i})
	            								(w^o-x_k)
										\nn\\
										&\qquad\quad
										(w^o - x_k)^T
	            								(R_{u,k}-\bm{u}_{k,i}^T\bm{u}_{k,i})^T
	            							\big\}
								\nn
	    	 	\end{align}
	    	 Note that
	    	 	\begin{align}
	    	 		\|G_k\|			&\le 	4 \cdot							
	    	 								\E\left\| 
	    	 									R_{u,k}-\bm{u}_{k,i}^T\bm{u}_{k,i}
	    	 								\right\|^2
	    	 								\cdot
	    	 								\| w^o-x_k\|^2
										\nn
	    	 	\end{align}
	    	 so that
	    	 	\begin{align}
	    	 		\big\| &\mc{R}_{v,i}(x) - \mc{R}_{v,i}(\one \otimes w^o)\big\|
							\nn\\
	    	 			&=	 	\max_{1\le k \le N}\|G_k\|
	    	 					\nonumber\\
	    	 			&\le 	\max_{1\le k \le N}
	    	 					\left\{
	    	 							4 \cdot						
    	 								\E\| 
    	 									R_{u,k}-\bm{u}_{k,i}^T\bm{u}_{k,i}
    	 								\|^2
    	 								\cdot
    	 								\| w^o-x_k\|^2
    	 						\right\}
    	 						\nonumber\\
    	 				&\le 	\max_{1\le k \le N}
	    	 					\left\{
	    	 							4 \cdot						
    	 								\E\| 
    	 									R_{u,k}-\bm{u}_{k,i}^T\bm{u}_{k,i}
    	 								\|^2
    	 						\right\}
    	 						\cdot
    	 						\max_{1\le k \le N}
    	 						\| w^o-x_k\|^2
    	 						\nonumber\\
    	 				&\le 	\max_{1\le k \le N}
	    	 					\left\{
	    	 							4 \cdot						
    	 								\E\| 
    	 									R_{u,k}-\bm{u}_{k,i}^T\bm{u}_{k,i}
    	 								\|^2
    	 						\right\}
    	 						\cdot
    	 						\sum_{k=1}^N
    	 						\| w^o-x_k\|^2
    	 						\nonumber\\
    	 				&=		\max_{1\le k \le N}
	    	 					\left\{
	    	 							4 \cdot						
    	 								\E\| 
    	 									R_{u,k} \!-\! \bm{u}_{k,i}^T\bm{u}_{k,i}
    	 								\|^2
    	 						\right\}
    	 						\cdot
    	 						\| \one \otimes w^o \!-\! x\|^2
				\label{Equ:Example:Rv_LMS:Rv_Lipschitzbound}
	    	 	\end{align}
	    	 In other words,
	    	 condition \eqref{Equ:Assumption:RvLipschitz} holds for the least-mean-squares estimation case with $\kappa = 2$. 
		\hfill\QED
	\end{example}
	\vspace{0.5em}
	\begin{assumption}[Fourth-order moment of gradient noise]
		\label{Assumption:GradientNoise4thOrderMoment}
		There exist nonnegative numbers $\alpha_4$ and $\sigma_{v4}^2$ such that for any $M \times 1$
		random vector $\w \in \mF_{i-1}$,
			\begin{align}
				\E \left\{
					\|\bv_{k,i}(\w)\|^4
				\big| \mF_{i-1}
				\right\}
						 	&\le 	\alpha_4 \cdot \|\w\|^4 + \sigma_{v4}^4
				\label{Equ:Assumption:GradientNoise4thOrderMoment}
			\end{align}
		holds with probability one.
		\hfill \QED
	\end{assumption}
This assumption will be used in the analysis for constant step-size adaptation to arrive at accurate expressions for the steady-state MSE of the agents. By assuming that the fourth-order moment of the gradient noise is bounded as in \eqref{Equ:Assumption:GradientNoise4thOrderMoment}, it becomes possible to derive MSE expressions that can be shown to be at most $O\big(\mu_{\max}^{\min(3/2, 1+\kappa/2)}\big)$ away from the actual MSE performance. When the step-sizes are sufficiently small, the size of the term $O(\mu_{\max}^{\min(3/2, 1+\kappa/2)})$ is even smaller and, for all practical purposes, this term is negligible --- see expressions \eqref{Equ:SteadyState:WMSE_limsup_finalfinal}--\eqref{Equ:SteadyState:WMSE_liminf_finalfinal} in Theorem \ref{Thm:SteadyStatePerformance} (and also \eqref{Equ:SteadyState:WMSE_final}).

\begin{example}
\label{Ex:LMS_4thOrder}

It turns out that condition \eqref{Equ:Assumption:GradientNoise4thOrderMoment} is automatically satisfied in the context of distributed least-mean-squares estimation. We continue with the setting of Example \ref{Ex:Rv_LMS}.  From expression \eqref{Equ:PerformanceAnalysis:GradientNoise_LMS}, we have that for any $M \times 1$ random vector $\w \in \mF_{i-1}$,
	\begin{align}
		\| \bv_{k,i}( \w ) \|^4		
							&=	
									16
									\left\|
										(R_{u,k}-\bm{u}_{k,i}^T\bm{u}_{k,i})
	            								(w^o-\w) 
										- 
										\bm{u}_{k,i}^T \bm{n}_k(i)
									\right\|^4
									\nn\\
							&\overset{(a)}{\le}
									16 \times 8
									\Big(
										\left\|
											R_{u,k}-\bm{u}_{k,i}^T\bm{u}_{k,i}
										\right\|^4
										\cdot
										\left\|
		            								w^o-\w
										\right\|^4
										\nn\\
										&\qquad\qquad
										+
										\left\|
											\bm{u}_{k,i}
										\right\|^4
										\cdot
										\left\|
											\bm{n}_k(i)
										\right\|^4
									\Big)
									\nn\\
							&\overset{(b)}{\le}
									128
									\Big(
										8
										\left\|
											R_{u,k}-\bm{u}_{k,i}^T\bm{u}_{k,i}
										\right\|^4
										\cdot
										\| \w \|^4
										\nn\\
										&\qquad\qquad
										+
										8
										\left\|
											R_{u,k}-\bm{u}_{k,i}^T\bm{u}_{k,i}
										\right\|^4
										\cdot
										\| w^o \|^4
										\nn\\
										&\qquad\qquad
										+
										\left\|
											\bm{u}_{k,i}
										\right\|^4
										\cdot
										\left\|
											\bm{n}_k(i)
										\right\|^4
									\Big)
		\label{Equ:Ex:gradNoise_4thOrder_interm1}
	\end{align}
where steps (a) and (b) use the inequality $\| x + y \|^4 \le 8 \| x \|^4 + 8 \| y\|^4$, which can be obtained by applying Jensen's inequality to the convex function $\| \cdot \|^4$. Applying the expectation operator conditioned on $\mF_{i-1}$, we obtain
	\begin{align}
		\E &\left\{ \| \bv_{k,i}( \w ) \|^4 | \mF_{i-1} \right\}
									\nn\\
							&\overset{(a)}{\le}
									1024
									\cdot
									\E
									\left\{
										\left\|
											R_{u,k}-\bm{u}_{k,i}^T\bm{u}_{k,i}
										\right\|^4
										|
										\mF_{i-1}
									\right\}
									\cdot
									\| \w \|^4
									+
									\nn\\
									&\quad\;\;
									1024
									\cdot
									\E
									\left\{
										\left\|
											R_{u,k}-\bm{u}_{k,i}^T\bm{u}_{k,i}
										\right\|^4
										|
										\mF_{i-1}
									\right\}
									\cdot
									\| w^o \|^4
									+
									\nn\\
									&\quad\;\;
									128
									\cdot
									\E 
									\left\{
										\left\|
											\bm{u}_{k,i}
										\right\|^4
										|
										\mF_{i-1}
									\right\}
									\cdot
									\E
									\left\{
										\left\|
											\bm{n}_k(i)
										\right\|^4
										|
										\mF_{i-1}
									\right\}
									\nn\\
						&\overset{(b)}{=} 
									1024
									\cdot
									\E
									\left\{
										\left\|
											R_{u,k}-\bm{u}_{k,i}^T\bm{u}_{k,i}
										\right\|^4
									\right\}
									\cdot
									\| \w \|^4
									+
									\nn\\
									&\quad\;\;
									1024
									\cdot
									\E
									\left\{
										\left\|
											R_{u,k}-\bm{u}_{k,i}^T\bm{u}_{k,i}
										\right\|^4
									\right\}
									\cdot
									\| w^o \|^4
									+
									\nn\\
									&\quad\;\;
									128
									\cdot
									\E 
									\left\|
										\bm{u}_{k,i}
									\right\|^4
									\cdot
									\E
									\left\|
										\bm{n}_k(i)
									\right\|^4
									\nn\\
						&\defeq
									\alpha_4 \cdot \|\w\|^4 + \sigma_{v4}^4
									\nn
	\end{align}
where step (a) uses the fact that $\w \in \mF_{i-1}$ and is thus determined given $\mF_{i-1}$, and step (b) uses the fact that $\u_{k,i}$ and $\bv_{k,i}(i)$ are independent of $\mF_{i-1}$.
\hfill \QED
\end{example}

\section{Performance of Multi-Agent Learning Strategy}
\label{Sec:LearnBehav:SteadyStateAnal}

\subsection{Main Results}
\label{Sec:LearnBehav:SteadyStateAnal:MainResults}

In this section, we are interested in evaluating 
$\E\|\tilde{\bw}_{k,i}\|_{\Sigma}^2$ as $i\rightarrow\infty$
for arbitrary positive semi-definite weighting matrices $\Sigma$.
The main result is summarized in the following theorem.

\begin{figure}
	\centering
	\includegraphics[width=0.45\textwidth]{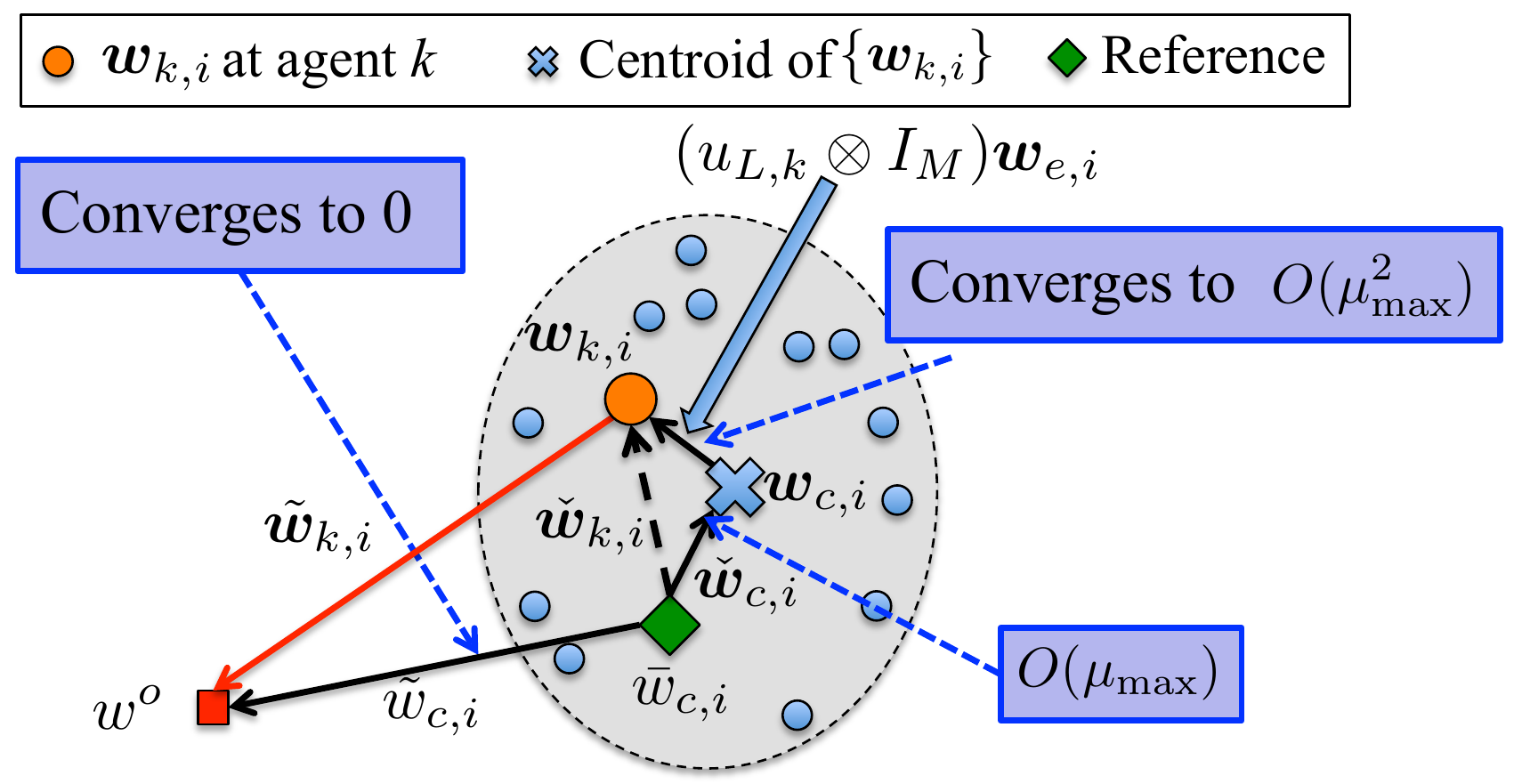}
	\caption{
	Decomposition of the error quantity $\tilde{\w}_{k,i}$ (best viewed in color). The  error
	quantity $\tilde{\w}_{k,i}$ (solid red line) can be decomposed into three terms (the solid back
	lines): (i) the offset of $\w_{k,i}$ from the centroid of $\{\w_{k,i}\}$, 
	denoted by $(u_{L,k} \otimes I_M) \bm{w}_{e,i}$, (ii) the offset of the centroid
	from the reference recursion \eqref{Equ:LearnBehav:RefRec}, denoted by $\check{\w}_{c,i}$, 
	and (iii) the error between the reference recursion and the optimal solution $w^o$, denoted
	by $\tilde{w}_{c,i}$.
	}
	\label{Fig:ErrorDecomposition}
\end{figure}

	\begin{theorem}[Steady-state performance]
		\label{Thm:SteadyStatePerformance}
		When Assumptions \ref{Assumption:Network}--\ref{Assumption:GradientNoise4thOrderMoment} 
		hold and the step-sizes are sufficiently small so that the distributed strategy
		\eqref{Equ:ProblemFormulation:DisAlg_Comb1}--\eqref{Equ:ProblemFormulation:DisAlg_Comb2} 
		is mean-square stable\footnote{The explicit condition for mean-square stability is given by 
		\eqref{P1-Equ:Thm_NonAsympBound:StepSize}  in Part I \cite{chen2013learningPart1}.}, the weighted mean-square-error of
		\eqref{Equ:ProblemFormulation:DisAlg_Comb1}--\eqref{Equ:ProblemFormulation:DisAlg_Comb2}
		(which includes diffusion and consensus algorithms as special cases)
		satisfies
			\begin{align}
				\limsup_{i\rightarrow\infty} \E\|\tilde{\bw}_{k,i}\|_{\Sigma}^2
						&\le		
								\mu_{\max}\cdot
								\Tr\left\{
										X (p^T\otimes I_M)\cdot \mc{R}_v \cdot (p \otimes I_M)
								\right\}
								\nn\\
								&\quad
								+
								O\big(\mu_{\max}^{\min(3/2, 1+\kappa/2)}\big)
				\label{Equ:SteadyState:WMSE_limsup_finalfinal}
								\\
					\liminf_{i\rightarrow\infty} \E\|\tilde{\bw}_{k,i}\|_{\Sigma}^2
						&\ge		
								\mu_{\max}\cdot
								\Tr\left\{
										X (p^T\otimes I_M)\cdot \mc{R}_v \cdot (p \otimes I_M)
								\right\}
								\nn\\
								&\quad
								-
								O\big(\mu_{\max}^{\min(3/2, 1+\kappa/2)}\big)
					\label{Equ:SteadyState:WMSE_liminf_finalfinal}
			\end{align}
		where $\Sigma$ is any positive semi-definite weighting matrix, and
		$X$ is the unique positive semi-definite solution to the following 
		Lyapunov equation:
			\begin{align}
				H_c^T X + X H_c = \Sigma
				\label{Equ:SteadyState:ContinuousLyapunovEqu_final}
			\end{align}
		where $H_c$ was defined earlier in \eqref{Equ:LearnBehav:Hc_def}.
		The unique solution of \eqref{Equ:SteadyState:ContinuousLyapunovEqu_final}
		can be represented by the integral expression\cite[p.769]{kailath2000linear}:
			\begin{align}
				X	=	\int_0^{\infty} 
						e^{-H_c^T t} \cdot \Sigma \cdot e^{-H_c t} 
						dt
				\label{Equ:SteadyState:ContinuousLyapunovEqu_Solution}
			\end{align}
		Moreover, if $\Sigma$ is strictly positive-definite, then $X$ is also
		strictly positive-definite.
	\end{theorem}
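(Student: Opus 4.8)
The plan is to combine the error decomposition of Fig.~\ref{Fig:ErrorDecomposition} with an energy-conservation (Lyapunov) argument applied to the dominant error component. Writing the agent error $\tilde{\w}_{k,i}$ as the sum of three pieces --- the deviation $(u_{L,k}\otimes I_M)\bm{w}_{e,i}$ of agent $k$ from the network centroid, the deviation $\check{\w}_{c,i}$ of that centroid from the reference recursion \eqref{Equ:LearnBehav:RefRec}, and the deterministic residual $\tilde{w}_{c,i}$ of the reference recursion relative to $w^o$ --- the results imported from Part~I\cite{chen2013learningPart1} (cf.\ \eqref{Equ:MSStability:MSE_wki_PhaseII}) tell us that $\tilde{w}_{c,i}\to 0$ geometrically and that $\E\|\bm{w}_{e,i}\|^2 = O(\mu_{\max}^2)$. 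Consequently, the disagreement term contributes only $O(\mu_{\max})$ to the squared norm and, more importantly, its cross term with $\check{\w}_{c,i}$ (which is $O(\mu_{\max}^{1/2})$ in norm) is only $O(\mu_{\max}^{3/2})$ on average; it therefore suffices to evaluate $\lim_i\E\|\check{\w}_{c,i}\|_{\Sigma}^2$ to leading order in $\mu_{\max}$.

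To that end I would first write out the exact recursion satisfied by the network centroid of the iterates, subtract the reference recursion \eqref{Equ:LearnBehav:RefRec} --- which, after accounting for the three combination matrices, governs the centroid dynamics with the weights collapsing to the entries $\{p_k\}$ of \eqref{Equ:ProblemForm:p_def} --- and linearize the aggregate update $\sum_k p_k s_k(\cdot)$ about $w^o$ by the mean-value theorem; Assumption~\ref{Assumption:JacobianUpdatVectorLipschitz} then permits the mean-value Jacobian to be replaced by the constant matrix $H_c$ of \eqref{Equ:LearnBehav:Hc_def} at the cost of an $O(\|\delta w\|)$ perturbation of the coefficient matrix. This yields a driven linear recursion of the schematic form $\check{\w}_{c,i} = (I_M-\mu_{\max}H_c)\check{\w}_{c,i-1} - \mu_{\max}(p^T\otimes I_M)\bm{v}_i(\cdot) + \bm{e}_i$, where $\bm{v}_i(\cdot)$ is the global gradient-noise vector of \eqref{Equ:Assumption:Rv:vi_def} evaluated at the running iterates and $\bm{e}_i$ collects the linearization residual, the feed-through of the disagreement term, and any gradient bias. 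By Assumption~\ref{Assumption:UpdateVectorRandomness} the term $\bm{v}_i(\cdot)$ is conditionally zero-mean (a martingale difference with respect to $\mc{F}_{i-1}$), and by Assumption~\ref{Assumption:Rv} its conditional covariance, evaluated at the running iterate, equals $(p^T\otimes I_M)\mc{R}_v(p\otimes I_M)$ up to an $O(\|\delta\bm{x}_i\|^\kappa)$ perturbation.

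Next, for an arbitrary weighting $\Gamma\ge 0$, I would take the $\Gamma$-weighted squared norm on both sides, annihilate the noise cross term via the martingale-difference property, and take expectations to obtain $\E\|\check{\w}_{c,i}\|_{\Gamma}^2 = \E\|\check{\w}_{c,i-1}\|_{\Gamma'}^2 + \mu_{\max}^2\Tr\{\Gamma(p^T\otimes I_M)\mc{R}_v(p\otimes I_M)\} + (\text{residuals})$ with $\Gamma' = (I_M-\mu_{\max}H_c)^T\Gamma(I_M-\mu_{\max}H_c) = \Gamma - \mu_{\max}(H_c^T\Gamma+\Gamma H_c)+O(\mu_{\max}^2)$. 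Passing to the $\limsup$/$\liminf$ in $i$ and cancelling the common term gives $\E\|\check{\w}_{c}\|^2_{\mu_{\max}(H_c^T\Gamma+\Gamma H_c)} = \mu_{\max}^2\Tr\{\Gamma(p^T\otimes I_M)\mc{R}_v(p\otimes I_M)\}+(\text{residuals})$; specializing to $\Gamma = X$, where $X$ is the solution of the Lyapunov equation \eqref{Equ:SteadyState:ContinuousLyapunovEqu_final} for the target weighting $\Sigma$, and dividing through by $\mu_{\max}$ produces the claimed leading term $\mu_{\max}\Tr\{X(p^T\otimes I_M)\mc{R}_v(p\otimes I_M)\}$. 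The existence, uniqueness, positive semi-definiteness of $X$ (and its positive definiteness when $\Sigma>0$), together with the integral representation \eqref{Equ:SteadyState:ContinuousLyapunovEqu_Solution}, follow from the stability of $H_c$ --- which holds because $\frac{1}{2}(H_c+H_c^T)\ge\lambda_L I_M$ with $\lambda_L>0$ by Assumption~\ref{Assumption:UpdateVectorMonot} and Lemma~\ref{Lemma:EquivCond_updateVec} --- through the standard Lyapunov theory of \cite[p.769]{kailath2000linear}.

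The main obstacle, where the bulk of the technical work lies, is the bookkeeping needed to show that all residual terms are uniformly (in $i$) $O(\mu_{\max}^{\min(3/2,1+\kappa/2)})$. This requires: (i) the uniform bounds $\E\|\check{\w}_{c,i}\|^2=O(\mu_{\max})$ and $\E\|\bm{w}_{e,i}\|^2=O(\mu_{\max}^2)$ inherited from Part~I\cite{chen2013learningPart1}, which make the $O(\mu_{\max}^2)$ coefficient perturbations and the disagreement cross term land at the $\mu_{\max}^{3/2}$ level once divided by $\mu_{\max}$; (ii) a fourth-order estimate $\E\|\check{\w}_{c,i}\|^4=O(\mu_{\max}^2)$, obtained by propagating Assumption~\ref{Assumption:GradientNoise4thOrderMoment} through the recursion, so that the mean-value linearization residual controlled by $\lambda_H$ in Assumption~\ref{Assumption:JacobianUpdatVectorLipschitz} contributes only $\mu_{\max}\cdot\E\|\check{\w}_{c,i}\|^3=O(\mu_{\max}^{5/2})$; and (iii) the replacement of $\mc{R}_{v,i}(\bm{x}_i)$ by $\mc{R}_v$, whose error is at most $\mu_{\max}\cdot\lambda_v\,\E\|\delta\bm{x}_i\|^\kappa$, i.e.\ $O(\mu_{\max}^{1+\kappa/2})$ for $\kappa\le 2$ by Jensen's inequality and dominated by the $\mu_{\max}^{3/2}$ term otherwise. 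Assembling these estimates, and using that $\tilde{w}_{c,i}$ is asymptotically negligible, delivers the two-sided bounds \eqref{Equ:SteadyState:WMSE_limsup_finalfinal}--\eqref{Equ:SteadyState:WMSE_liminf_finalfinal}.
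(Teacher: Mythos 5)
Your proposal follows the same skeleton as the paper's proof: the identical three-term decomposition \eqref{Equ:DistProc:wki_tilde_decomposition_final}, the same reduction to the centroid deviation $\check{\w}_{c,i}$ after showing the disagreement and reference-residual contributions are $O(\mu_{\max}^{3/2})$, the same linearization about $w^o$ producing the driven system $(I_M-\mu_{\max}H_c)\check{\w}_{c,i-1}-\mu_{\max}(p^T\otimes I_M)\bm{v}_i$, and the same hierarchy of residual estimates (second moments from Part I, a fourth-moment bound $\E\|\check{\w}_{c,i}\|^4=O(\mu_{\max}^2)$ propagated from Assumption \ref{Assumption:GradientNoise4thOrderMoment}, and the $O(\mu_{\max}^{1+\kappa/2})$ covariance-replacement error from Assumption \ref{Assumption:Rv}). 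Where you genuinely diverge is the final evaluation: the paper forms the matrix covariance recursion $\check{\Pi}_{a,i}=B_c\check{\Pi}_{a,i-1}B_c^T+\mu_{\max}^2(p^T\otimes I_M)\mc{R}_v(p\otimes I_M)$, vectorizes the resulting discrete Lyapunov equation, and expands $(I_{M^2}-B_c\otimes B_c)^{-1}$ in $\mu_{\max}$ to land on \eqref{Equ:SteadyState:ContinuousLyapunovEqu_final}; you instead run a scalar weighted energy-conservation argument with a free weighting $\Gamma$ and choose $\Gamma=X$ at the end. The two are dual formulations of the same computation; yours avoids the Kronecker-inverse expansion at the price of having to justify the telescoping step.

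That telescoping step is the one place your sketch has a real gap: writing $\limsup_i\E\|\check{\w}_{c,i}\|_\Gamma^2-\limsup_i\E\|\check{\w}_{c,i-1}\|_{\Gamma'}^2$ and ``cancelling the common term'' to isolate $\mu_{\max}\E\|\check{\w}_{c,i}\|^2_{H_c^T\Gamma+\Gamma H_c}$ is only valid if the sequence $\E\|\check{\w}_{c,i}\|_\Gamma^2$ actually converges (the difference of two $\limsup$'s of distinct sequences is not the $\limsup$ of the difference). The paper avoids this by introducing the exactly linear auxiliary recursion $\check{\w}_{a,i}$ in \eqref{Equ:SteadyState:wacheck_recursion}, whose covariance satisfies a stable linear matrix recursion with a convergent driving term and therefore provably has a limit $\check{\Pi}_{a,\infty}^o$; all perturbations are then measured against that limit via $\limsup$ bounds. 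You would need the same device, or an equivalent sandwich argument. Two smaller points: Assumptions \ref{Assumption:JacobianUpdatVectorLipschitz} and \ref{Assumption:Rv} are only local (valid in balls of radius $r_H$ and $r_V$), so your linearization residual and covariance-perturbation bounds each require a separate crude estimate outside the ball (the paper's two-case arguments in \eqref{Equ:SteadyState:HessianPerturb_2nd_Case2} and \eqref{Equ:SteadyState:NoiseCovPerturb_Case2_final}) before a uniform bound of the form $C(\|\tilde{w}_{c,i-1}\|^2+\|\check{\w}_{c,i-1}\|^2)\|\check{\w}_{c,i-1}\|^2$, resp. $C(\|\tilde{\bm{\phi}}_{i-1}\|^2+\|\tilde{\bm{\phi}}_{i-1}\|^\kappa)$, can be asserted; and the fourth-moment estimate is not a one-line propagation --- it is coupled to the second moments and to $\bm{w}_{e,i}$, which is why the paper devotes Appendices \ref{Appendix_SketchProof_4thOrderMoment}--\ref{Appendix:Proof_UsefulBound} to it. With those repairs your route closes.
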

	\begin{proof}
		The argument is nontrivial and involves several steps. The details are provided in 
		Appendix \ref{Appendix:Proof_Thm_SteadyStatePerf}. We briefly describe the main 
		steps of the proof here: 
			\begin{enumerate}
			
			\item
			By following the network transformation introduced in Part I \cite{chen2013learningPart1},
			we decompose the error vector $\tilde{\w}_{k,i}$ into three terms, as illustrated in Fig. 
			\ref{Fig:ErrorDecomposition}: (i) $(u_{L,k} \otimes I_M) \w_{e,i}$, the offset of $\w_{k,i}$ 
			from the centroid of $\{\w_{k,i}\}$, defined as
				\begin{align}
					\bw_{c,i}	=	\sum_{k=1}^N \theta_k \bw_{k,i}
								\nn
				\end{align}
			where $\theta_k$ is the $k$th entry of the Perron vector defined in 
			\eqref{Equ:ProblemForm:PerronVector_A},
			(ii) $\check{\w}_{c,i}$, the offset of the centroid from the reference recursion 
			\eqref{Equ:LearnBehav:RefRec}, 
			and (iii) $\tilde{w}_{c,i}$, the error between the reference recursion and the optimal solution $w^o$. 
			
			\item
			Only the second term, $\check{\w}_{k,i}$, contributes to the steady-state MSE,
			which we already know from \eqref{Equ:LearnBehav:limsup_MSE_UB}
			(see also \eqref{P1-Equ:MSStability:limsup_MSE_UB} in Part I \cite{chen2013learningPart1})
			that it is $O(\mu_{\max})$. For the other two terms, $\tilde{w}_{c,i}$ converges to zero
			and $(u_{L,k} \otimes I_M) \w_{e,i}$ converges to a higher-order term in $\mu_{\max}$. 
			In Sections A and B of Appendix
			\ref{Appendix:Proof_Thm_SteadyStatePerf}, we make this argument rigorous by
			deriving the gap between
			the error covariance matrices of $\tilde{\w}_{k,i}$ and $\check{\w}_{c,i}$ and showing that
			it is indeed a higher-order term.
			
			\item
			Next, we show that the recursion for $\check{\w}_{c,i}$ can be viewed as a perturbed
			version of a \emph{linear} dynamic system driven by the gradient noise term. 
			In Section C of Appendix 
			\ref{Appendix:Proof_Thm_SteadyStatePerf}, we bound the gap between
			these two recursions and show that it is also a higher-order term. This would require us to 
			bound the fourth-order moments of the error quantity $\tilde{\w}_{k,i}$, which are derived
			in Appendices \ref{Appendix_SketchProof_4thOrderMoment}--\ref{Appendix:Proof_UsefulBound}.
				
			\item	
			Then, in Section D of Appendix \ref{Appendix:Proof_Thm_SteadyStatePerf}, we examine 
			the covariance matrix of the linear dynamic model and find
			a closed-form expression for it. 	 
			
			\item
			Finally, in Section E of Appendix \ref{Appendix:Proof_Thm_SteadyStatePerf}, we combine
			all results together to obtain the closed-form expression for the steady-state MSE of the network.
			
			\end{enumerate}
			
	\end{proof}

Strictly speaking, the limit of $\E\|\tilde{\w}_{k,i}\|_{\Sigma}^2$ may not exist as it requires the $\limsup$ and the $\liminf$ of $\E\|\tilde{\w}_{k,i}\|_{\Sigma}^2$ to be equal to each other. However, note from \eqref{Equ:SteadyState:WMSE_limsup_finalfinal} and \eqref{Equ:SteadyState:WMSE_liminf_finalfinal} that the first-order terms of $\mu_{\max}$ in both $\limsup$ and $\liminf$ expressions are the same. When the step-size $\mu_{\max}$ is small, the $\limsup$ and the $\liminf$ bounds will be dominated by this same first-order term, and the steady-state MSE will be tightly sandwiched between \eqref{Equ:SteadyState:WMSE_limsup_finalfinal} and \eqref{Equ:SteadyState:WMSE_liminf_finalfinal}.\footnote{Recall that we always have $\displaystyle\liminf_{i\rightarrow\infty} \E\|\tilde{\bw}_{k,i}\|_{\Sigma}^2 \le \limsup_{i\rightarrow\infty} \E\|\tilde{\bw}_{k,i}\|_{\Sigma}^2$.} For this reason, with some slight abuse in notation, we will use the traditional limit notation for simplicity of presentation and will write instead:
	\begin{align}
		\lim_{i\rightarrow\infty} \E\|\tilde{\bw}_{k,i}\|_{\Sigma}^2
			&=	\mu_{\max}\!\cdot\!
				\Tr\left\{
					X (p^T\! \otimes\! I_M) \cdot \mc{R}_v \cdot (p\! \otimes\! I_M)
				\right\}
				\nn\\
				&\quad
				+\! 
				O\big(\mu_{\max}^{\min(3/2, 1+\kappa/2)}\big)
		\label{Equ:SteadyState:WMSE_final}
	\end{align}
\emph{Remark:} Note from \eqref{Equ:SteadyState:WMSE_final} that the steady-state MSE consists of two terms: a first-order term, and a higher-order term. We will show in Sec. \ref{Sec:LearnBehav:GlobalBehav} that the first-order term is the same as that of the centralized MSE.

\subsection{Useful Special Cases}
\label{Sec:LearnBehav:SteadyStateAnal:UsefulSpecialCases}

\begin{example} \emph{(Distributed stochastic gradient-descent:  General case)}
	\label{Example:DSGD}
	When stochastic gradients are used to define the update directions 
	$\hat{\s}_{k,i}(\cdot)$ in \eqref{Equ:ProblemFormulation:DisAlg_Comb1}--\eqref{Equ:ProblemFormulation:DisAlg_Comb2}, then we can simplify
	the mean-square-error expression \eqref{Equ:SteadyState:WMSE_final} as follows.
	We first substitute $s_k(w)=\nabla_w J_k(w)$
	into \eqref{Equ:LearnBehav:Hc_def} to obtain
		\begin{align}
			H_c		=		\sum_{k=1}^N p_k \nabla_w^2 J_k(w^o)
							\nn
		\end{align}
	Now the matrix $H_c$ is the weighted sum of the Hessian matrices of
	the individual costs $\{J_k(w)\}$ and is therefore symmetric.
	Then, the Lyapunov equation \eqref{Equ:SteadyState:ContinuousLyapunovEqu_final}
	becomes
		\begin{align}
			H_c X + X H_c = \Sigma
			\label{Equ:Example:DSGD:LyapunovEqu}
		\end{align}
	We have simple solutions to \eqref{Equ:Example:DSGD:LyapunovEqu} for the
	following two choices of $\Sigma$:
		\begin{enumerate}
			\item 
				When $\Sigma = I_M$, we have $X=\frac{1}{2}H_c^{-1}$ and 
					\begin{align}
						\lim_{i \rightarrow \infty}
						&\E\|\tilde{\bm{w}}_{k,i}\|^2
										\nn\\
							&=			\frac{\mu_{\max}}{2} \cdot
										\mathrm{Tr}
										\left\{
											H_c^{-1}
											(p^T \otimes I_M)
											\cdot
											\mc{R}_v
											\cdot
											(p \otimes I_M)
										\right\}
										\nn\\
										&\quad
										+
										O\big(\mu_{\max}^{\min(3/2, 1+\kappa/2)}\big)
						\label{Equ:Performance:MSE_I}
					\end{align}
			\item
				When $\Sigma = \frac{1}{2}H_c$, we have $X=\frac{1}{4} I_M$ and
					\begin{align}
						\lim_{i \rightarrow \infty}
						&\E\|\tilde{\bm{w}}_{k,i}\|_{\frac{H_c}{2}}^2
										\nn\\
							&=			\frac{\mu_{\max}}{4} \cdot
										\mathrm{Tr}
										\left\{
											(p^T \otimes I_M)
											\cdot
											\mc{R}_v
											\cdot
											(p \otimes I_M)
										\right\}
										\nn\\
										&\quad
										+
										O\big(\mu_{\max}^{\min(3/2, 1+\kappa/2)}\big)
						\label{Equ:Performance:MSE_R}
					\end{align}
		\end{enumerate}
	\hfill\QED
\end{example}

\begin{example} \emph{(Distributed stochastic gradient descent: Uncorrelated noise)}
	In the special case that the gradient noises at the different agents are uncorrelated with each other, 
	then $\mR_v$ is block diagonal and we write it as 
		\begin{align}
			\mR_v	=	\diag\{ R_{v,1},\ldots, R_{v,N} \}
						\nn
		\end{align}
	where $R_{v,k}$ is the $M\times M$ covariance matrix of the gradient noise
	at agent $k$. Then, the MSE expression \eqref{Equ:Performance:MSE_I}
	at each agent $k$ can be written as
		\begin{align}
			&\lim_{i \rightarrow \infty}
			\E \|\tilde{\w}_{k,i} \|^2
						\nn\\
					&\quad=
						\frac{\mu_{\max}}{2} \cdot
						\mathrm{Tr}
						\left\{
							\left(
								\sum_{k=1}^N
								p_k \nabla_w^2 J_k(w^o)
							\right)^{-1}
							\cdot
							\left(
								\sum_{k=1}^N
								p_k^2 R_{v,k}
							\right)							
						\right\}
						\nn\\
						&\quad\quad
						+\!
						O\big(\mu_{\max}^{\min(3/2, 1+\kappa/2)}\big)
						\nn
		\end{align}
	and expression \eqref{Equ:Performance:MSE_R} for the weighted MSE
	becomes
		\begin{align}
			\lim_{i \rightarrow \infty}
			\E\|\tilde{\bm{w}}_{k,i}\|_{\frac{H_c}{2}}^2
					&=			\frac{\mu_{\max}}{4} \cdot
								\mathrm{Tr}
								\left\{
									\sum_{k=1}^N
									p_k^2 R_{v,k}
								\right\}
								\nn\\
								&\quad
								+
								O\big(\mu_{\max}^{\min(3/2, 1+\kappa/2)}\big)
								\nn
		\end{align}
	\hfill\QED
\end{example}

\section{Performance of Centralized Stochastic Approximation Solution}
\label{Sec:LearnBehav:GlobalBehav}

We conclude from \eqref{Equ:SteadyState:WMSE_final} that the
weighted mean-square-error at each node $k$ will be the same
across all agents in the network for small step-sizes.
This is an important ``equalization'' effect. 
Moreover, as we now verify, the performance level given by \eqref{Equ:SteadyState:WMSE_final} is close to the performance of a centralized strategy
that collects all the data from the agents and processes them using
the following recursion:
	\begin{align}
		\label{Equ:Performance:CentralizedStrategy}
		\bm{w}_{\mathrm{cent},i}	=	\bm{w}_{\mathrm{cent},i-1}
										-
										\mu_{\max}
										\sum_{k=1}^N 
										p_k 
										\hat{\bm{s}}_{k,i}
										(\bm{w}_{\mathrm{cent},i-1})
	\end{align}
To establish this fact, we first note that the performance of the above centralized strategy can be analyzed
in the same manner as the distributed strategy. Indeed, let
$\check{\bw}_{\mathrm{cent},i} \defeq \bw_{\mathrm{cent},i}-\bar{w}_{c,i}$
denote the discrepancy between the above centralized recursion
and reference recursion \eqref{Equ:LearnBehav:RefRec}. Then,
we obtain from \eqref{Equ:LearnBehav:RefRec} and \eqref{Equ:Performance:CentralizedStrategy} that 
	\begin{align}
		\check{\bw}_{\cent,i}	&=	
									T_c(\bw_{\cent,i-1}) - T_c(\bar{w}_{c,i-1})
									\nn\\
									&\quad
									-
									\mu_{\max}\cdot (p^T \otimes I_M)
									\bv_i(\bw_{\cent,i-1})
		\label{Equ:Centralized:wcentcheck_recursion}
	\end{align}
where the operator $T_c(w)$ is defined as the following mapping from $\mb{R}^M$ to $\mb{R}^M$:
	\begin{align}
		T_c(w)		\defeq
									w - \mu_{\max} \sum_{k=1}^N p_k s_k(w)
									\nn
	\end{align}
Comparing \eqref{Equ:Centralized:wcentcheck_recursion} with 
expression \eqref{P1-Equ:Lemma:ErrorDynamics:JointRec_wc_check} from Part I\cite{chen2013learningPart1} { (repeated below):
	\begin{align}
			\check{\bm{w}}_{c,i}	
							&=	T_c(\bm{w}_{c,i-1}) 
								\!-\!
								T_c(\bar{w}_{c,i-1})
								\!-\!
								\mu_{\max} \!\cdot\! (p^T \!\otimes\! I_M) 
								\left[
										\bm{z}_{i-1}
										\!+\!
										\bm{v}_i
								\right]
			\label{Equ:Lemma:ErrorDynamics:JointRec_wc_check_repeat}
	\end{align}
}%
we note that these two
recursions take similar forms except for an additional perturbation
term $\bz_{i-1}$ in \eqref{Equ:Lemma:ErrorDynamics:JointRec_wc_check_repeat}. Therefore, following the same line of transient analysis as in Part I\cite{chen2013learningPart1} and steady-state analysis as in the proof of Theorem \ref{Thm:SteadyStatePerformance} stated earlier, we can conclude that, in the small step-size regime, the transient behavior of the centralized strategy \eqref{Equ:Performance:CentralizedStrategy} is close to the reference recursion \eqref{Equ:LearnBehav:RefRec}, and the steady-state performance is again given by \eqref{Equ:SteadyState:WMSE_final}.
	\begin{theorem}[Centralized performance]
		\label{Thm:CentralizedPerformance}
		Suppose Assumptions \ref{Assumption:UpdateVectorRandomness}--\ref{Assumption:GradientNoise4thOrderMoment}  
		hold and suppose the step-size parameter $\mu_{\max}$ in the centralized 
		recursion \eqref{Equ:Performance:CentralizedStrategy} satisfies
		the following condition
			\begin{align}
				0	<	\mu_{\max}	<	\frac{
												\lambda_L
											}
											{
												\|p\|_1^2
												\cdot
												\left(
													\frac{\lambda_U^2}{2}
													+ 
													2\alpha
												\right)
											}
				\label{Equ:Thm:Centralized:stepsize}
			\end{align}
		Then, the MSE term $\E \|\tilde{\w}_{\cent,i}\|^2$  converges at the rate of
			\begin{align}
				r	
						=		
							\big[
								\rho(I_M - \mu_{\max} H_c)
							\big]^2
							+ 
							O\big( 
								(\mu_{\max} \epsilon )^{\frac{1}{2(M-1)}} 
							\big)	
				\label{Equ:Thm:Centralized:r}
			\end{align}
		where $\epsilon$ is an arbitrarily small positive number. Furthermore,
		in the small step-size regime, the steady-state MSE of 
		\eqref{Equ:Performance:CentralizedStrategy} satisfies
			\begin{align}
				\limsup_{i\rightarrow\infty} 
				\E\|\tilde{\bw}_{\cent,i}\|_{\Sigma}^2
					&\le	\mu_{\max}\!\cdot\!
						\Tr\left\{
							X 
							(p^T\! \otimes\! I_M) 
							\cdot 
							\mc{R}_v 
							\cdot 
							(p\! \otimes\! I_M)
						\right\}
						\nn\\
						&\quad
						+\! 
						O\big(\mu_{\max}^{\min(3/2, 1+\kappa/2)}\big)
				\label{Equ:Thm:Centralized:limsup_WMSE}
						\\
				\liminf_{i\rightarrow\infty} 
				\E\|\tilde{\bw}_{\cent,i}\|_{\Sigma}^2
					&\ge	\mu_{\max}\!\cdot\!
						\Tr\left\{
							X 
							(p^T\! \otimes\! I_M) 
							\cdot 
							\mc{R}_v 
							\cdot 
							(p\! \otimes\! I_M)
						\right\}
						\nn\\
						&\quad
						-\! 
						O\big(\mu_{\max}^{\min(3/2, 1+\kappa/2)}\big)
				\label{Equ:Thm:Centralized:liminf_WMSE}
			\end{align}
		\hfill\QED
	\end{theorem}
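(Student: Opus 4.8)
The plan is to exploit the structural similarity, already made explicit in \eqref{Equ:Centralized:wcentcheck_recursion}, between the centralized error recursion and the centroid recursion \eqref{Equ:Lemma:ErrorDynamics:JointRec_wc_check_repeat} analyzed in Part I and in Theorem \ref{Thm:SteadyStatePerformance}: the two differ only by the extra perturbation $\bm{z}_{i-1}$, which was shown in Part I to feed in only through higher-order corrections. Accordingly, the entire argument of Theorem \ref{Thm:SteadyStatePerformance} applies here with $\bm{z}_{i-1}\equiv 0$, and in fact \emph{simplifies}: the three-term decomposition of Fig.\ \ref{Fig:ErrorDecomposition} collapses to a two-term decomposition $\tilde{\bw}_{\cent,i}=\tilde{w}_{c,i}-\check{\bw}_{\cent,i}$, with no network-disagreement mode $\bw_{e,i}$ to track, and Assumption \ref{Assumption:Network} is not needed since no combination matrix appears. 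As $i\to\infty$ the term $\tilde{w}_{c,i}$ vanishes (Part I), so $\E\|\tilde{\bw}_{\cent,i}\|_\Sigma^2$ and $\E\|\check{\bw}_{\cent,i}\|_\Sigma^2$ share the same limit, and the task reduces to analyzing the recursion \eqref{Equ:Centralized:wcentcheck_recursion} for $\check{\bw}_{\cent,i}$.

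First I would establish the transient claims. Under the step-size bound \eqref{Equ:Thm:Centralized:stepsize}, combining the strong monotonicity of $\sum_k p_k s_k(\cdot)$ (Assumption \ref{Assumption:UpdateVectorMonot}, constant $\lambda_L$), the Lipschitz property (Assumption \ref{Assumption:UpdateVectorLipschitz}, constant $\lambda_U$), and the relative/absolute gradient-noise bound (Assumption \ref{Assumption:UpdateVectorRandomness}) gives a recursion $\E\|\tilde{\bw}_{\cent,i}\|^2\le\gamma\,\E\|\tilde{\bw}_{\cent,i-1}\|^2+O(\mu_{\max}^2)$ with $\gamma=1-2\mu_{\max}\lambda_L+\mu_{\max}^2\|p\|_1^2(\lambda_U^2/2+2\alpha)<1$ precisely when \eqref{Equ:Thm:Centralized:stepsize} holds; this yields mean-square stability together with the crude $O(\mu_{\max})$ bound on $\E\|\check{\bw}_{\cent,i}\|^2$. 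To sharpen the rate to \eqref{Equ:Thm:Centralized:r}, I would linearize $T_c$ about $w^o$, writing $T_c(\bw_{\cent,i-1})-T_c(\bar w_{c,i-1})=(I_M-\mu_{\max}H_c)\check{\bw}_{\cent,i-1}+(\text{remainder})$ with the remainder controlled by the Jacobian-Lipschitz Assumption \ref{Assumption:JacobianUpdatVectorLipschitz}, and then invoke the same eigenvalue-perturbation / Jordan-form argument used to derive \eqref{Equ:LearnBehav:r} in Part I, which produces $[\rho(I_M-\mu_{\max}H_c)]^2$ plus the $O((\mu_{\max}\epsilon)^{1/(2(M-1))})$ term arising from the possible non-normality of $H_c$.

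Next I would carry out the steady-state analysis following Sections C--E of Appendix \ref{Appendix:Proof_Thm_SteadyStatePerf}. I would approximate \eqref{Equ:Centralized:wcentcheck_recursion} by the linear model $\check{\bw}'_{\cent,i}=(I_M-\mu_{\max}H_c)\check{\bw}'_{\cent,i-1}-\mu_{\max}(p^T\otimes I_M)\bv_i(\one\otimes w^o)$ and bound the discrepancy between the error covariances of $\check{\bw}_{\cent,i}$ and $\check{\bw}'_{\cent,i}$. Two sources contribute: (i) the linearization remainder $T_c(\bw_{\cent,i-1})-T_c(\bar w_{c,i-1})-(I_M-\mu_{\max}H_c)\check{\bw}_{\cent,i-1}$, which by Assumption \ref{Assumption:JacobianUpdatVectorLipschitz} is quadratic in the deviations from $w^o$ and, using the $O(\mu_{\max})$ mean-square bound, contributes an $O(\mu_{\max}^{3/2})$ correction to the MSE; and (ii) the variation $\mc{R}_{v,i}(\bw_{\cent,i-1})-\mc{R}_{v,i}(\one\otimes w^o)$, which by \eqref{Equ:Assumption:RvLipschitz} is $O(\|\tilde{\bw}_{\cent,i-1}\|^\kappa)$ and contributes an $O(\mu_{\max}^{1+\kappa/2})$ correction; controlling the mean-square effect of both requires the fourth-order (and $\kappa$-th order) moments of $\tilde{\bw}_{\cent,i}$, obtained by replicating the fourth-order moment recursion of Appendices \ref{Appendix_SketchProof_4thOrderMoment}--\ref{Appendix:Proof_UsefulBound} under Assumption \ref{Assumption:GradientNoise4thOrderMoment}. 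For the linear model, the steady-state covariance $\Pi\defeq\lim_{i\to\infty}\E\{\check{\bw}'_{\cent,i}(\check{\bw}'_{\cent,i})^{T}\}$ solves the discrete Lyapunov equation $\Pi=(I_M-\mu_{\max}H_c)\Pi(I_M-\mu_{\max}H_c)^T+\mu_{\max}^2(p^T\otimes I_M)\mc{R}_v(p\otimes I_M)$, which, since $\Pi=O(\mu_{\max})$, reduces to $H_c\Pi+\Pi H_c^T=\mu_{\max}(p^T\otimes I_M)\mc{R}_v(p\otimes I_M)+O(\mu_{\max}^2)$; using the identity $\Tr(\Sigma\Pi)=\Tr(X(H_c\Pi+\Pi H_c^T))$ that follows from \eqref{Equ:SteadyState:ContinuousLyapunovEqu_final} then yields $\E\|\tilde{\bw}_{\cent,i}\|_\Sigma^2=\mu_{\max}\Tr\{X(p^T\otimes I_M)\mc{R}_v(p\otimes I_M)\}+O(\mu_{\max}^{\min(3/2,1+\kappa/2)})$, giving the $\limsup$ and $\liminf$ bounds \eqref{Equ:Thm:Centralized:limsup_WMSE}--\eqref{Equ:Thm:Centralized:liminf_WMSE}; strict positive-definiteness of $X$ for $\Sigma>0$ is read off the integral representation \eqref{Equ:SteadyState:ContinuousLyapunovEqu_Solution}.

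I expect the main obstacle to be the same one faced in Theorem \ref{Thm:SteadyStatePerformance}: rigorously showing that replacing the nonlinear recursion \eqref{Equ:Centralized:wcentcheck_recursion} by its linearization, and replacing the state-dependent noise covariance $\mc{R}_{v,i}(\cdot)$ by its limiting value $\mc{R}_v$, perturbs the steady-state covariance only at order $O(\mu_{\max}^{\min(3/2,1+\kappa/2)})$. This is where the fourth-order-moment bookkeeping and the $\kappa$-dependence enter, and it is delicate because the error is non-Gaussian with no available closed-form distribution, so all bounds must be propagated directly through the mean-square recursions rather than through a limiting law. Everything else — the contraction/stability estimate, the eigenvalue-perturbation argument for the rate, and the Lyapunov-equation computation — is routine given the corresponding steps of Part I and of the proof of Theorem \ref{Thm:SteadyStatePerformance}, the only genuine simplification relative to that proof being the absence of the disagreement mode $\bw_{e,i}$ and of the coupling term $\bm{z}_{i-1}$.
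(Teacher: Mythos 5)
Your proposal is correct and follows essentially the same route as the paper: the paper's own justification of Theorem~\ref{Thm:CentralizedPerformance} consists precisely of observing that the recursion \eqref{Equ:Centralized:wcentcheck_recursion} for $\check{\bw}_{\cent,i}$ coincides with the centroid recursion \eqref{Equ:Lemma:ErrorDynamics:JointRec_wc_check_repeat} minus the perturbation $\bm{z}_{i-1}$, and then invoking the transient analysis of Part I and the steady-state argument of Theorem~\ref{Thm:SteadyStatePerformance} (Sections C--E of Appendix~\ref{Appendix:Proof_Thm_SteadyStatePerf}) verbatim. Your write-up actually supplies more detail than the paper does for this theorem --- in particular the explicit two-term decomposition, the contraction constant, and the identification of the two perturbation sources (linearization remainder and $\mc{R}_{v,i}$ variation) with their respective $O(\mu_{\max}^{3/2})$ and $O(\mu_{\max}^{1+\kappa/2})$ contributions --- all of which is consistent with the appendix machinery being reused.
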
	
\indent\emph{Remark:} Similar to our explanation following  \eqref{Equ:SteadyState:WMSE_limsup_finalfinal}--\eqref{Equ:SteadyState:WMSE_liminf_finalfinal}, expressions \eqref{Equ:Thm:Centralized:limsup_WMSE}--\eqref{Equ:Thm:Centralized:liminf_WMSE} also mean that, for small step-sizes, the steady-state MSE of the centralized strategy will be tightly sandwiched between two almost identical bounds. Therefore, we will again use the traditional limit notation for the centralized steady-state MSE for simplicity, and will write instead:
	\begin{align}
		\lim_{i\rightarrow\infty} 
		\E\|\tilde{\bw}_{\cent,i}\|_{\Sigma}^2
			&=	\mu_{\max}\!\cdot\!
				\Tr\left\{
					X 
					(p^T\! \otimes\! I_M) 
					\cdot 
					\mc{R}_v 
					\cdot 
					(p\! \otimes\! I_M)
				\right\}
				\nn\\
				&\quad
				+\! 
				O\big(\mu_{\max}^{\min(3/2, 1+\kappa/2)}\big)
		\label{Equ:Thm:Centralized:WMSE_final}
	\end{align}
which is the same as \eqref{Equ:SteadyState:WMSE_final} up to the first-order of $\mu_{\max}$.


\section{Benefits of Cooperation}
\label{Sec:Benefits}

In this section, we illustrate the implications of the main results of this work in the context  of distributed learning and  distributed optimization. Consider a network of $N$ connected  agents, where each agent $k$ receives a stream of data $\{\bm{x}_{k,i}\}$ arising from some underlying distribution. The networked multi-agent system would like to extract from the distributed data some useful information about the underlying process. To measure the quality of the inference task, an individual cost function $J_k(w)$ is associated with each agent $k$, where $w$ denotes an $M \times 1$ parameter vector. The agents are generally interested in minimizing some aggregate cost function of the form \eqref{Equ:FamilyDistStrategy:J_glob}:
	\begin{align}
		J^{\mathrm{glob}}(w)		=	\sum_{k=1}^N J_k(w)
		\label{Equ:Benefits:J_glob_def}
	\end{align}
Based  on whether the individual costs $\{J_k(w)\}$ share a common minimizer or not, we can classify problems of the form \eqref{Equ:Benefits:J_glob_def} into two broad categories.

\subsection{Category I: Distributed Learning}
\label{Sec:Benefits:DistLearn}

In this case, the data streams $\{\bm{x}_{k,i}\}$ are assumed to be
generated by (possibly different) distributions that nevertheless depend on the same parameter
vector $w^o \in \mathbb{R}^M$.
The objective is then to estimate this common parameter $w^o$ in a distributed manner. 
To do so, we first need to associate with each agent $k$
a cost function $J_k(w)$ that measures how well some arbitrary parameter $w$
approximates $w^o$.
The cost $J_k(w)$ should be such that $w^o$ is one of
its minimizers. More formally,
let $\mc{W}_k^o$ denote the set of vectors that minimize
the selected $J_k(w)$, then it is expected that
	\begin{align}
		\label{Equ:ProblemFormulation:W_k_o_mc}
		w^o	\in \mc{W}_k^o	\defeq	\left\{w: \arg\min_{w} J_k(w)\right\}
	\end{align}
for $k=1,\ldots,N$.
Since $J^{\rm glob}(w)$ is assumed to be strongly convex, 
then the intersection of the sets ${\cal W}_k^o$ should contain the 
single element $w^o$:
	\begin{align}
		\label{Equ:ProblemFormulation:W_o_mc}
		w^o		\in 	\mc{W}^o		=	\bigcap_{k=1}^N \mc{W}_k^o
	\end{align}
The main motivation
for cooperation in this case is that the data collected at 
each agent $k$ may not be sufficient to uniquely identify
$w^o$ since $w^o$ is not necessarily the unique element in $\mc{W}_k^o$; 
this happens, for example, when the individual costs $J_k(w)$ 
are not \emph{strictly} convex. 
However, once the individual costs are aggregated into \eqref{Equ:Benefits:J_glob_def}
and the aggregate function is strongly convex, then 
$w^o$ is the unique element in $\mc{W}^o$.  In this way, the 
cooperative minimization of $J^{\mathrm{glob}}(w)$ allows the agents
to estimate $w^o$.

\subsubsection{Working under Partial Observation}
Under the scenario described by \eqref{Equ:ProblemFormulation:W_o_mc}, 
the solution of \eqref{Equ:LearnBehav:FixedPointEqu} agrees with the 
unique minimizer $w^o$ for $J^{\rm glob}(w)$ given by \eqref{Equ:Benefits:J_glob_def} 
regardless of the $\{p_k\}$ and, therefore, regardless of the combination policy $A$. 
Therefore, the results from Sec. \ref{Sec:FamilyDistStrategy:ReviewPartI} (see also Sec. \ref{P1-Sec:LearningBehavior} of Part I \cite{chen2013learningPart1}) show that the iterate $\bm{w}_{k,i}$ at
each agent $k$ converges to this unique $w^o$ at a centralized rate and the results from Sec. \ref{Sec:LearnBehav:SteadyStateAnal} of this Part II show that this iterate achieves the centralized steady-state MSE performance. Note that Assumption \ref{Assumption:UpdateVectorMonot} can be
satisfied without requiring each $J_k(w)$ to be strongly convex. Instead, we only
require $J^{\mathrm{glob}}(w)$ to be strongly convex. In other words, we
do not need each agent to have complete information about $w^o$; we only need 
the network to have enough information to determine $w^o$ uniquely. 
Although the individual agents in this case have partial information 
about $w^o$, the distributed strategies \eqref{Equ:ProblemFormulation:DisAlg_Comb1}--\eqref{Equ:ProblemFormulation:DisAlg_Comb2} enable them to attain the same performance 
level as a centralized solution. The following example illustrates
the idea in the context of distributed least-mean-squares estimation over networks.
	\begin{example}
		Consider Example \ref{Ex:Rv_LMS} again.
		When the covariance matrix $R_{u,k} \defeq \E[\bm{u}_{k,i}^T\bm{u}_{k,i}]$
		is rank deficient, then
		$J_k(w)$ in \eqref{Equ:Example:J_k_LMS} would not be strongly convex and there would be 
		infinitely many minimizers to $J_k(w)$. In this case, the information
		provided to agent $k$ via \eqref{Equ:Example:LinearModel} is not
		sufficient to determine $w^o$ uniquely. However, if the global
		cost function is strongly convex, which can be verified to be
		equivalent to requiring:
			\begin{align}
				\label{Equ:Example:NetworkObservation}
				\sum_{k=1}^N p_k R_{u,k} >  \lambda_L I_M > 0
			\end{align}
		then the information collected over the entire network is rich
		enough to learn the unique $w^o$. As long as
		\eqref{Equ:Example:NetworkObservation} holds for one set of 
		positive $\{p_k\}$, it will hold for all other $\{p_k\}$. 
		A ``network	observability'' condition similar to
		\eqref{Equ:Example:NetworkObservation}
		was used in \cite{kar2011converegence} to
		characterize the sufficiency of information over the network 
		in the context of distributed estimation over linear models
		albeit with diminishing step-sizes.
		\hfill\QED
	\end{example}

\subsubsection{Optimizing the MSE Performance}
Since the distributed strategies \eqref{Equ:ProblemFormulation:DisAlg_Comb1}--\eqref{Equ:ProblemFormulation:DisAlg_Comb2} converge to the same unique minimizer $w^o$ of \eqref{Equ:Benefits:J_glob_def} for any set of $\{p_k\}$, we can then consider selecting the $\{p_k\}$ to optimize the MSE performance. Consider the case where $H_k \equiv H$ and $\mu_k \equiv \mu$ and assume the gradient noises $\bv_{k,i}(w)$ are asymptotically uncorrelated across the agents so that $\mc{R}_v$ from \eqref{Equ:Assumption:Rv:R_v_limit} is block diagonal with entries denoted by: 
	\begin{align}
		\label{Equ:Benefits:R_v_blockdiagonal}
		\mc{R}_v		=	\diag\{R_{v,1},\ldots,R_{v,N}\}
	\end{align}
Then, we have $\beta_k=1$, $p_k = \theta_k$ and 
	\begin{align}
		H_c=H = \nabla_{w}^2 J_1(w^o) = \cdots = \nabla_{w}^2 J_N(w^o)
		\label{Equ:Benefits:H_c_IdenticalHessian}
	\end{align}
in which case 
expression \eqref{Equ:Performance:MSE_I} becomes
	\begin{align}
		\label{Equ:Benefits:MSE}
		\lim_{i\rightarrow\infty}
		\E\|\tilde{\bm{w}}_{k,i}\|^2
				&=			
							\frac{\mu_{\max}}{2}
							\cdot
							\sum_{k=1}^N
							\theta_k^2
							\Tr\left(
								H^{-1} R_{v,k} 
							\right)
							\nn\\
							&\quad
							+
							O\big(\mu_{\max}^{\min(3/2, 1+\kappa/2)}\big)
	\end{align}
The optimal positive coefficients $\{\theta_k\}$ that minimize \eqref{Equ:Benefits:MSE} subject to $\sum_{k=1}^N \theta_k = 1$ are given by
	\begin{align}
		\theta_{k}^o	&=	\frac{\displaystyle[{\Tr(H^{-1} R_{v,k})}]^{-1}}
							{\displaystyle\sum_{\ell=1}^N
							[{\Tr(H^{-1} R_{v,\ell})}]^{-1}},
						\quad
						k=1,\ldots, N
		\label{Equ:Benefits:theta_k_o}
	\end{align}
and, substituting into \eqref{Equ:Benefits:MSE}, the optimal MSE is then given by 
	\begin{align}
		\mathrm{MSE}^{\mathrm{opt}}
					&=			\frac{\mu_{\max} }{2}
								\! \cdot \!
								\left[
									\sum_{\ell=1}^N
									\frac{1}{\Tr( H^{-1} R_{v,\ell})}
								\right]^{-1}
								\!\!\!\!\!\!\!\!
								+
								O\big(\mu_{\max}^{\min(3/2, 1+\kappa/2)}\big)
								\nn
	\end{align}
The optimal Perron-eigenvector $\theta^o = \col\{\theta_1^o,\ldots,\theta_N^o\}$ 
can be implemented
by selecting the combination policy $A$ as the following Hasting's rule
\cite{boyd2004fastest,hastings1970monte,zhao2012performance}:
	\begin{align}
		a_{\ell k}^o		&=	\left\{
								\begin{array}{cc}
									\displaystyle\!\!\!\!
									\frac{(\theta_k^o)^{-1}}
									{
										\max
										\left\{ 
												|\mc{N}_k| \!\cdot\! (\theta_k^o)^{-1}, 
												|\mc{N}_\ell| \!\cdot\! (\theta_\ell^o)^{-1}
										\right\}
									},				
											&	\!\!\! \ell \!\in\! \mc{N}_k \backslash \{k\}	\\
									&\\
									\displaystyle
									1-\sum_{m \in \mc{N}_k \backslash \{k\}} a_{mk}^o,		
											&	\!\!\! \ell = k
								\end{array}
							\right.
							\nn\\
						&\overset{(a)}{=}
							\left\{
								\begin{array}{cc}
									\displaystyle\!\!\!\!
									\frac{{\Tr(H^{-1} R_{v,k})}}
									{
										\max
										\big\{ \!
												|\mc{N}_k| \!\cdot\! {\Tr(H^{-1} \! R_{v,k})}, \;
												|\mc{N}_\ell| \!\cdot\! {\Tr(H^{-1} \! R_{v,\ell})}
										\!\big\}
									},			&\\
												&\\
									\qquad\qquad\qquad\qquad \qquad\qquad
									\ell \!\in\! \mc{N}_k \backslash \{k\}&\\
									\displaystyle
									1-\sum_{m \in \mc{N}_k \backslash \{k\}} a_{mk}^o,		
												&\\
												\qquad\qquad\qquad \qquad \qquad\qquad
												\ell = k \\
								\end{array}
							\right.
		\label{Equ:Benefits:HastingRule}
	\end{align}
where $|\mc{N}_k|$ denotes the cardinality of the set $\mc{N}_k$, and step (a) substitutes \eqref{Equ:Benefits:theta_k_o}.
From \eqref{Equ:Benefits:HastingRule}, we note that the above combination matrix can be constructed
in a decentralized manner, where each node only requires information
from its own neighbors. In practice, the noise covariance matrices $\{R_{v,\ell}\}$ need to be estimated from the local data and an adaptive estimation scheme is proposed in \cite{zhao2012performance} to address this issue.

\subsubsection{Matching Performance across Topologies}
	
Note that the steady-state mean-square error depends on the vector $p$, which is determined by the Perron eigenvector  $\theta$ of the matrix $A$. The above result implies that, as long as the network is strongly connected, i.e., Assumption \ref{Assumption:Network} holds, a left-stochastic matrix $A$ can always be constructed to have any desired Perron eigenvector $\theta$ with positive entries according to
\eqref{Equ:Benefits:HastingRule}. Now, starting from any collection of $N$ agents, there exists a 
finite number  of topologies that can link these agents together. For each possible topology, there are
infinitely many combination policies that can be used to train the network. One important conclusion that follows from the above results is that regardless of the topology, there always exists a choice for $A$ such that the performance of all topologies are identical to each other to first-order in $\mu_{\max}$.
In other words, no matter how the agents are connected to each other, there is always a way to select the combination weights such that the performance of the network is invariant to the topology. This also means that,
for any connected topology, there is always a way to select the combination weights such that the performance of the network matches that of the centralized solution.

\begin{figure*}
		\centering
		\centerline{
			\subfigure[]{
				\includegraphics[width=0.45\textwidth]
				{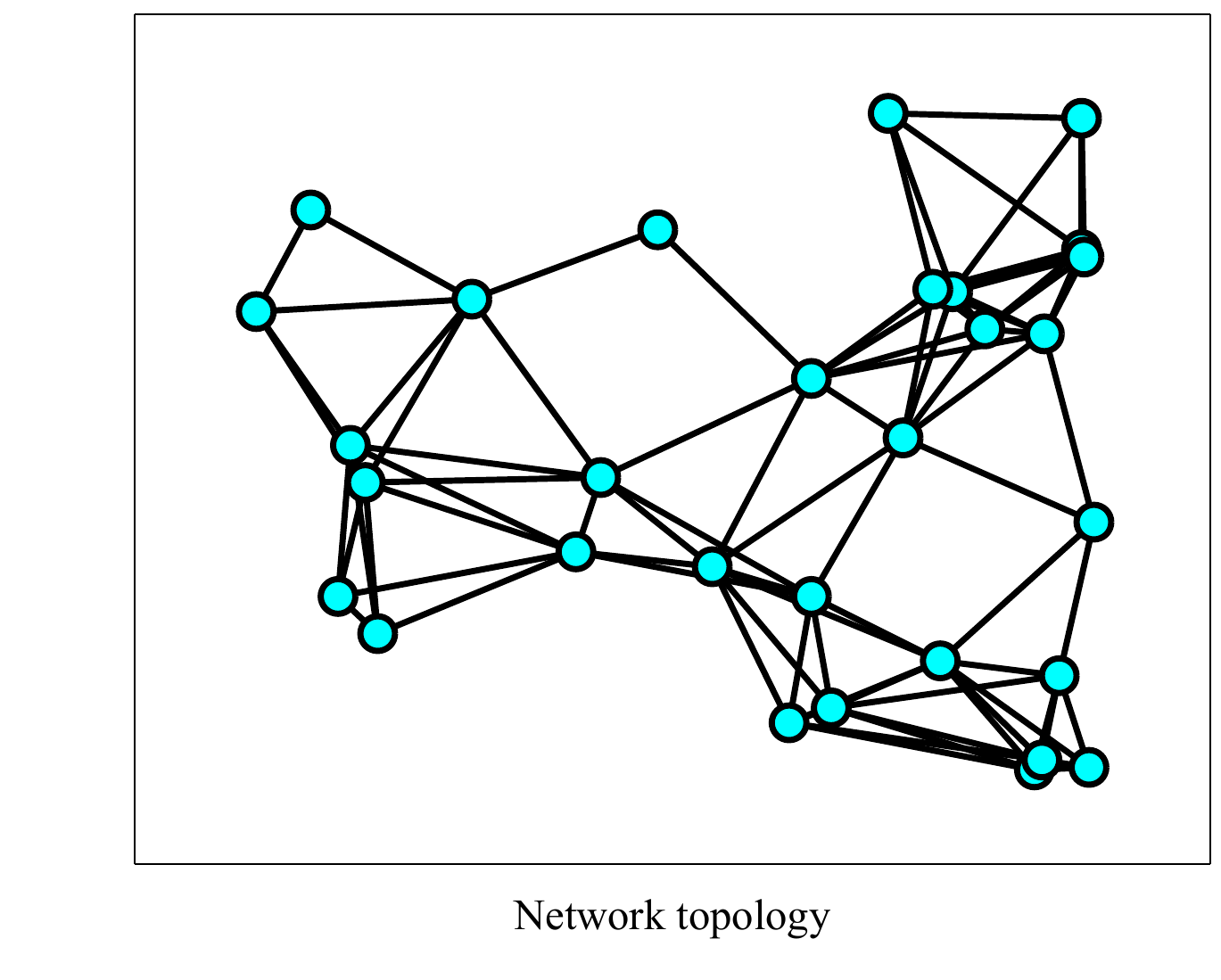}
			}	
			\subfigure[]{
				\includegraphics[width=0.45\textwidth]
				{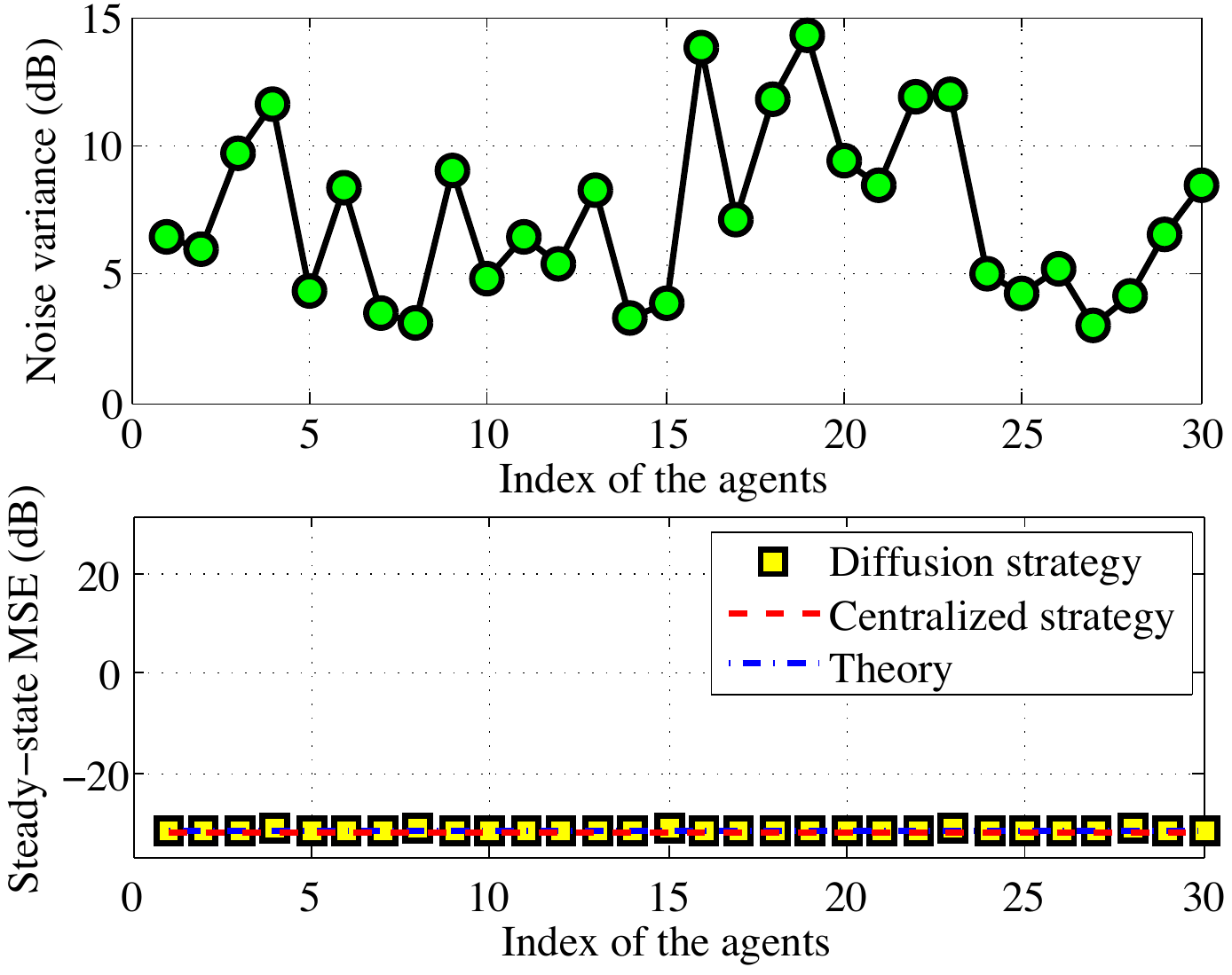}
			}
		}
		\centerline{
			\subfigure[]{
				\includegraphics[width=0.45\textwidth]
				{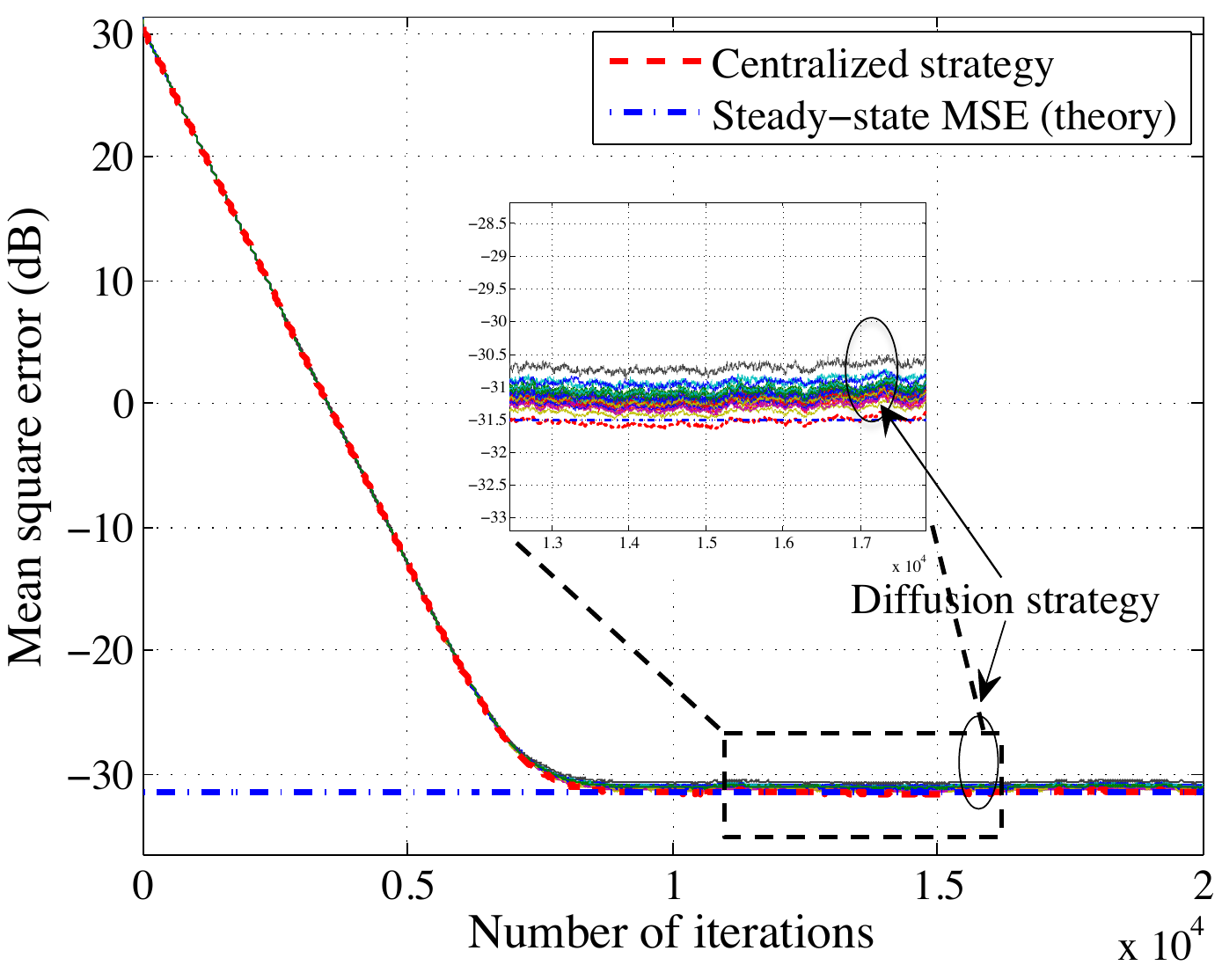}
			}	
			\subfigure[]{
				\includegraphics[width=0.45\textwidth]
				{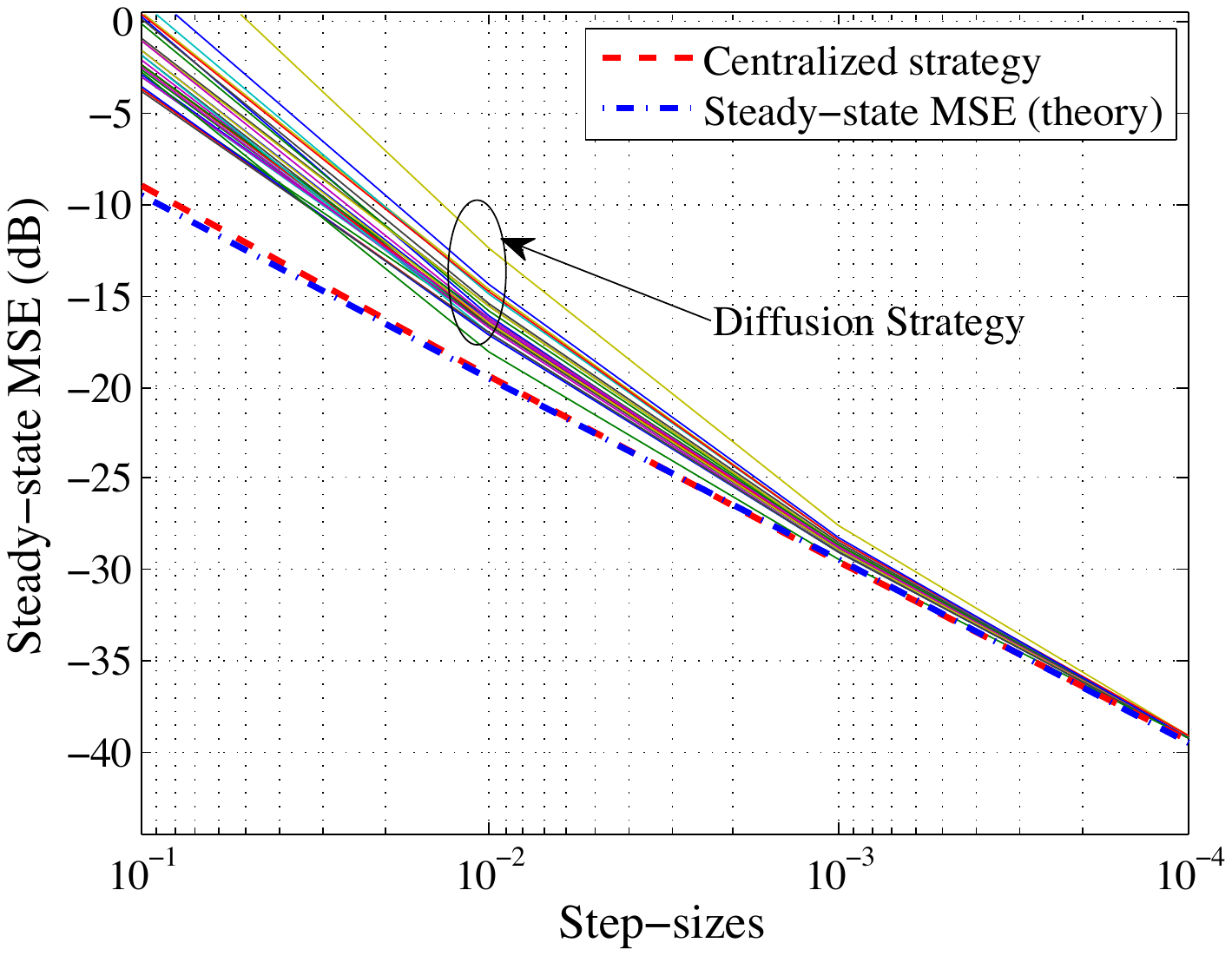}
			}
		}
		\caption{Comparing the performance of a $30$-node diffusion LMS network with that of the
		centralized strategy \eqref{Equ:Performance:CentralizedStrategy}, 
		where $M=10$, $\mu_k = 0.0005$ for all agents, and Hasting's rule 
		\eqref{Equ:Benefits:HastingRule} is used as the combination policy.
		The result is obtained by averaging over $1000$ Monte Carlo experiments.
		(a) A randomly generated topology. (b) The noise profile across the 
		network, and the steady-state MSE of diffusion LMS, centralized strategy, 
		and the theoretical value (the first-order term in \eqref{Equ:SteadyState:WMSE_final}). 
		(c) The learning curves for different agents in the diffusion
		LMS network, the centralized strategy, and the theoretical steady-state
		MSE (the first-order term in \eqref{Equ:SteadyState:WMSE_final}). 
		(d) The centralized strategy, the steady-state MSE of
		diffusion strategy at all agents,  and 
		the theoretical value (the first-order term in \eqref{Equ:SteadyState:WMSE_final})
		for different values of step-sizes.
		}
		\label{Fig:SimulatedPerformance_GeoRandGraph}
	\end{figure*}

\begin{example}

We illustrate the result using the diffusion least-mean-square estimation context discussed earlier in Example \ref{Ex:Rv_LMS}. Consider a network of $30$ agents ($N=30$), where each agent has access to a stream of data samples $\{\bu_{k,i},\bd_{k}(i)\}$ that are generated by the linear model \eqref{Equ:Example:LinearModel}. As assumed in Example \ref{Ex:Rv_LMS}, the $1 \times M$ regressors $\{\u_{k,i}\}$ are zero mean and independent over time and space with covariance matrix $R_{u,k}$, and the noise sequence $\{\bm{n}_l(j)\}$ is also zero mean, white, with variance $\sigma_{n,l}^2$, and independent of the regressors $\{\bm{u}_{k,i}\}$ for all $l,k,i,j$. In the simulation here, we consider the case where $M=10$, $R_{u,k} =  I_M$. In diffusion LMS estimation, each agent $k$ uses \eqref{Equ:Example:J_k_LMS} as its cost function $J_k(w)$ and \eqref{Equ:Example:ski_LMS} as the stochastic gradient vector $\hat{\s}_{k,i}(\cdot)$. Therefore, each agent $k$ adopts the following recursion to adaptively estimate the model parameter $w^o$, which is the minimizer of the global cost function \eqref{Equ:FamilyDistStrategy:J_glob}:
	\begin{align}
		\bm{\psi}_{k,i}		&=		
								\w_{k,i} + 2 \mu_k \bm{u}_{k,i}^T[\bm{d}_{k}(i)-\bm{u}_{k,i} \w_{k,i-1}]
								\nn\\
		\w_{k,i}			&=
								\sum_{ l \in \mN_{k} }
								a_{lk}
								\bm{\psi}_{l, i}
								\nn
	\end{align}
We randomly generate a topology as shown in Fig. \ref{Fig:SimulatedPerformance_GeoRandGraph} (a) and noise variance profile across agents as shown in Fig. \ref{Fig:SimulatedPerformance_GeoRandGraph} (b). We choose $\mu_k \equiv \mu =  0.0005$ to be the step-size for all agents and Hasting's rule \eqref{Equ:Benefits:HastingRule} as the combination policy. In the simulation, we assume the noise variances are known to the agents. Alternatively, they can also be estimated in an adaptive manner using approaches proposed in \cite{zhao2012performance}. The results are obtained by averaging over $1000$ Monte Carlo experiments. In Fig \ref{Fig:SimulatedPerformance_GeoRandGraph}(b), we also show the steady-state MSE of all agents, respectively, and compare them to the theoretical value (the first-order term in \eqref{Equ:SteadyState:WMSE_final}) and to the following centralized LMS strategy:
	\begin{align}
		\w_{\cent,i}		&=
								\w_{\cent, i-1}
								+
								2 \mu \sum_{k=1}^N 
								p_k
								\cdot
								\bm{u}_{k,i}^T[\bm{d}_{k}(i)-\bm{u}_{k,i} \w_{\cent,i-1}]
								\nn
	\end{align}
where $p_k = \theta_k^o$ is given by \eqref{Equ:Benefits:theta_k_o}. Fig. \ref{Fig:SimulatedPerformance_GeoRandGraph}(b) illustrates the equalization effect over the network; each agent in the network achieves almost the same steady-state MSE that is close to the centralized strategy although the noise variances in the data are different across the agents. Furthermore, In Fig. \ref{Fig:SimulatedPerformance_GeoRandGraph} (c), we illustrate the learning curves of all agents, and compare them to the theoretical value and the centralized LMS strategy. We observe from Fig. \ref{Fig:SimulatedPerformance_GeoRandGraph} (c) that the learning curves of all agents are close to each other and to the centralized strategy. 
{
Finally, we show in Fig. \ref{Fig:SimulatedPerformance_GeoRandGraph} the steady-state MSE of the diffusion strategy at all agents for different values of the step-sizes, and compare them to the MSE of the centralized strategy and against the theoretical values (the first-order term in \eqref{Equ:SteadyState:WMSE_final}). It is seen in the figure that, for small step-sizes, the steady-state MSE values at different agents approach that of the centralized strategy and the first-order term in \eqref{Equ:SteadyState:WMSE_final} since the higher-order term in \eqref{Equ:SteadyState:WMSE_final} decays faster than the first-order term.
}
\hfill \QED
\end{example}


\subsection{Category II: Distributed Optimization}
\label{Sec:Benefits:ParetoOpt}

In this case, we include situations where the individual costs $J_k(w)$ do not have a common minimizer, i.e., $\mc{W}^o = \emptyset$. The optimization problem should then be viewed as one of solving a multi-objective minimization problem
	\begin{align}
				\label{Equ:ProblemFormulation:Multiobjective}
				\min_w\left\{ J_1(w),\ldots,J_N(w) \right\}
			\end{align}
where $J_k(w)$ is an individual convex cost associated with each agent $k$. A vector $w^o$ is said to be a Pareto optimal solution to \eqref{Equ:ProblemFormulation:Multiobjective} if there does not exist another vector $w$ that is able to improve (i.e., reduce) any individual cost without degrading (increasing) some of the other costs. Pareto optimal solutions are not unique. The question we would like to address  now is the following. Given individual costs $\{J_k(w)\}$ and a combination policy $A$, what is the limit point of the distributed strategies \eqref{Equ:ProblemFormulation:DisAlg_Comb1}--\eqref{Equ:ProblemFormulation:DisAlg_Comb2}? From Sec. \ref{Sec:FamilyDistStrategy:ReviewPartI} (see also Theorem \ref{P1-Thm:NonAsymptoticBound} in Part I \cite{chen2013learningPart1}), the distributed strategy \eqref{Equ:ProblemFormulation:DisAlg_Comb1}--\eqref{Equ:ProblemFormulation:DisAlg_Comb2} converges to the limit point $w^o$ defined as the unique solution to  \eqref{Equ:LearnBehav:FixedPointEqu}. Substituting $s_k(w)=\nabla_w J_k(w)$ into \eqref{Equ:LearnBehav:FixedPointEqu}, we obtain
	\begin{align}
		\sum_{k=1}^N p_k \nabla_w J_k(w^o)	=	0
		\nn
	\end{align}
In other words, $w^o$ is the minimizer of the following global cost function:
	\begin{align}
		\label{Equ:ApplicationOptimization:J_glob_Pareto}
		J^{\mathrm{glob}}(w)	=	\sum_{k=1}^N p_k J_k(w)
	\end{align}
It is shown in \cite[pp.178--180]{boyd2004convex} that the minimizer of \eqref{Equ:ApplicationOptimization:J_glob_Pareto} is a Pareto-optimal solution for the multi-objective optimization problem \eqref{Equ:ProblemFormulation:Multiobjective}. And different choices for the vector $p$ lead to different 
Pareto-optimal points on the tradeoff curve.

{

Now, a useful question to consider is the reverse direction. Suppose, we are given a set of $\{p_k\}$ (instead of $A$) and we want the distributed strategy \eqref{Equ:ProblemFormulation:DisAlg_Comb1}--\eqref{Equ:ProblemFormulation:DisAlg_Comb2} to converge to the limit point $w^o$ that is the solution of:
	\begin{align}
		\sum_{k=1}^N p_k \nabla_w J_k(w^o)	=	0
		\label{Equ:Benefits:FixedPoint2}
	\end{align}
Note that once the topology of the network is given, the positions of the nonzero entries in the matrix $A$ are known and we are free to select the values of these nonzero entries. One possibility is to choose the same step-size for all agents (i.e., $\mu_k \equiv \mu$), and to select the nonzero entries of $A$ such that its Perron vector $\theta$ equals this desired $p$. This construction can be achieved by using the following Hasting's rule\cite{hastings1970monte,zhao2012performance}:
	\begin{align}
		a_{\ell k}	=	\begin{cases}
						\displaystyle
						\frac{p_k^{-1}}
						{
							\max
							\left\{ 
									|\mc{N}_k|\cdot p_k^{-1}, 
									|\mc{N}_l| \cdot p_l^{-1}
							\right\}
						},				
								&	l \in \mc{N}_k \backslash \{k\}	\\
						\displaystyle
						1-\sum_{m \in \mc{N}_k \backslash \{k\}} a_{mk},		
								&	l = k
					\end{cases}
		\label{Equ:Benefits:HastingRule2}
	\end{align}
That is, as long as we substitute the desired set of $\{p_k\}$ into \eqref{Equ:Benefits:HastingRule2} and use the obtained $\{a_{\ell k}\}$ together with $\mu_k \equiv \mu$, the distributed strategy will converge to the $w^o$ in \eqref{Equ:Benefits:FixedPoint2} with the desired $\{p_k\}$.

}

%
%
%
%
%
%
%
%
%
%
%
%
%
%
%
%
%
%
%

\section{Conclusion}
Along with Part I \cite{chen2013learningPart1}, this work examined in some detail the mean-square performance, convergence, and stability of distributed strategies for adaptation and learning over graphs under {\em constant} step-size update rules. Keeping the step-size fixed allows the network to track drifts in the underlying data models,  their statistical distributions, and even drifts in the utility functions. Earlier work \cite{zhao2012diffusion} has shown that constant adaptation regimes endow networks with tracking abilities and derived results that quantify how the performance of adaptive networks is affected by the level of non-stationarity in the data. Similar conclusions extend to the general scenario studied in Parts I and II of the current work, which is the reason why step-sizes have been set to a constant value throughout our treatment. When this is done, the dynamics of the learning process is enriched in a nontrivial manner. This is because the effect of gradient noise does not die out anymore with time (in comparison, when diminishing step-sizes are used, gradient noise is annihilated by the decaying step-sizes). And since agents are coupled through their interactions over the network, it follows that their gradient noises will continually influence the performance of their neighbors. As a result, the network mean-square performance does not tend to zero anymore. Instead, it approaches a steady-state level. One of the main objectives of this Part II has been to quantify this level and to show explicitly how its value is affected by three parameters: the network topology, the gradient noise, and the data characteristics.  As the analysis and the detailed derivations in the appendices of the current manuscript show, this is a formidable task to pursue due to the coupling among the agents. Nevertheless, under certain conditions that are generally weaker than similar conditions used in related contexts in the literature, we were able to derive accurate expressions for the network MSE performance and its convergence rate. For example, the MSE expression we derived is accurate in the first order term of $\mu_{\max}$.  Once an MSE expression has been derived, we were then able to optimize it over the network topology (for the important case of uniform Hessian matrices across the network, as is common for example in machine learning\cite{theodoridis2008PatternRbook} and mean-square-error estimation problems\cite{Sayed08}). We were able to show that arbitrary connected topologies for the same set of agents can always be made to perform similarly. We were also able to show that arbitrary connected topologies for the same set of agents can be made to match the performance of a fully connected network. These are useful insights and they follow from the analytical results derived in this work.

\appendices

\section{Proof of Theorem \ref{Thm:SteadyStatePerformance}}
\label{Appendix:Proof_Thm_SteadyStatePerf}

The argument involves several steps, labeled steps A through E, and relies also on intermediate results that are proven in this appendix. We start with step A.

\subsection{Relating the weighted MSE to the steady-state error covariance matrix $\Pi_{i}$}
\label{Appendix:Proof_Thm_SteadyStatePerf:RelatingMSEtoErrorCov}
Let $\Pi_i \defeq \E\left\{ \tilde{\bw}_{i}\tilde{\bw}_{i}^T \right\}$ denote the error covariance matrix of the global error vector
	\begin{align}
		\tilde{\bw}_i	\defeq	\col\{ \tilde{\w}_{1,i}, \ldots, \tilde{\w}_{N,i}\}
		\nn
	\end{align}
where $\tilde{\w}_{k,i} \defeq w^o - \tilde{\w}_{k,i}$. Note that if we are able to evaluate $\Pi_{i}$ as $i\rightarrow\infty$, then we can obtain the steady-state weighted mean-square-error for any individual agent via the following relation:\footnote{
More formally, the limit of \eqref{Equ:SteadyState:WMSE_Tr} may not exist. However, as we proceed to show, the $\limsup$ and the $\liminf$ of $\E\|\tilde{\bw}_{k,i}\|_{\Sigma}^2$ are equal to each other up to first-order in $\mu_{\max}$.} 
	\begin{align}
		\E\|\tilde{\bw}_{k,i}\|_{\Sigma}^2
				&=
					\E \Big\{
						\col\{\tilde{\w}_{1,i}, \ldots, \tilde{\w}_{N,i}\}^T
						\nn\\
						&\qquad
						\cdot
						\diag\{0, \ldots, \Sigma, \ldots, 0\}
						\nn\\
						&\qquad
						\cdot
						\col\{\tilde{\w}_{1,i}, \ldots, \tilde{\w}_{N,i}\}
					\Big\}
					\nn\\
				&=
					\E \left\{
						\tilde{\w}_{i}^T (E_{kk} \otimes \Sigma) \tilde{\w}_{i}
					\right\}
					\nn\\
				&=
					\E \left( \Tr\left\{
						\tilde{\w}_{i} \tilde{\w}_{i}^T (E_{kk} \otimes \Sigma) 
					\right\}\right)
					\nn\\
				&=
					\Tr\left\{
						\E\left[ \tilde{\w}_{i} \tilde{\w}_{i}^T \right]
						(E_{kk} \otimes \Sigma) 
					\right\}
					\nn\\
				&=
					\Tr\left\{
						\Pi_i
						(E_{kk} \otimes \Sigma) 
					\right\}
		\label{Equ:SteadyState:WMSE_Tr}
	\end{align}
where $E_{kk}$ is an $M \times M$ matrix with $(k,k)$-entry equal to one and all other entries equal to zero. 
We could proceed with the analysis by deriving a recursion of $\tilde{\w}_i$ from \eqref{Equ:ProblemFormulation:DisAlg_Comb1}--\eqref{Equ:ProblemFormulation:DisAlg_Comb2} and examining the corresponding error covariance matrix, $\Pi_i$. However, we will take an alternative approach here by calling upon the following decomposition of the error quantity $\tilde{\w}_{k,i}$ from Part I\cite{chen2013learningPart1} (see Eq. \eqref{P1-Equ:DistProc:wki_tilde_decomposition_final} therein):
	\begin{align}
		\tilde{\w}_{k,i}		&=
							\tilde{w}_{c,i} - \check{\w}_{c,i} - (u_{L,k} \otimes I_M) \w_{e,i}
		\label{Equ:DistProc:wki_tilde_decomposition_final}
	\end{align}	
where $\tilde{w}_{c,i} \defeq w^o - w_{c,i}$ denotes the error of the reference recursion \eqref{Equ:LearnBehav:RefRec} relative to $w^o$, the vectors $\check{\w}_{c,i}$ and $\w_{e,i}$ are the two transformed quantities introduced in Eqs. \eqref{P1-Equ:DistProc:wicheck_blockstructure} and \eqref{P1-Equ:DistProc:w_i_prime_w_ci_w_ei} in Part I\cite{chen2013learningPart1}, and $u_{L,k}$ is the $k$th row of the matrix $U_L$ which is a sub-matrix of the transform matrix introduced in Eq. \eqref{P1-Equ:DistProc:D_U_Uinv} in Part I \cite{chen2013learningPart1}. In particular, $\check{\w}_{c,i}$ represents the error of the centroid of the iterates $\{\w_{k,i}\}$ relative to the reference recursion:
	\begin{align}
		\check{\w}_{c,i}	&\defeq		\w_{c,i} - \bar{w}_{c,i}
								\nn
	\end{align}
where the centroid $\w_{c,i}$ is defined as
	\begin{align}
		\w_{c,i}				&\defeq		\sum_{k=1}^N \theta_k \w_{k,i}
										\nn
	\end{align}
and $(u_{L,k} \otimes I_M) \w_{e,i}$ represents the error of the iterate $\w_{k,i}$ at agent $k$ relative to the centroid $\w_{c,i}$. The details and derivation of the decomposition \eqref{Equ:DistProc:wki_tilde_decomposition_final} appear in Sec. \ref{P1-Sec:Performance:ErrorRecursion} of Part I \cite{chen2013learningPart1}. Relation \eqref{Equ:DistProc:wki_tilde_decomposition_final} can also be written in the following equivalent global form:
	\begin{align}
		\tilde{\w}_{i}		&=
							\one \otimes \tilde{w}_{c,i} 
							- 
							\one \otimes \check{\w}_{c,i} 
							- 
							(U_L \otimes I_M) \w_{e,i}
		\label{Equ:DistProc:wi_tilde_decomposition_globalform_final}
	\end{align}
The major motivation to use \eqref{Equ:DistProc:wi_tilde_decomposition_globalform_final} in our steady-state analysis is that the convergence results and non-asymptotic MSE bounds already derived in Part I\cite{chen2013learningPart1} for each term in \eqref{Equ:DistProc:wi_tilde_decomposition_globalform_final} will reveal that some quantities will either disappear or become higher order terms in steady-state for small step-sizes. In particular, we are going to show that the mean-square-error of $\tilde{\w}_{i}$ is dominated by the mean-square-error of $\check{\w}_{c,i}$. Therefore, it will suffice to examine the mean-square-error of $\check{\w}_{c,i}$. We start by recalling the related non-asymptotic and asymptotic bounds from Part I\cite{chen2013learningPart1}. We derived in expression \eqref{P1-Equ:Lemma:ErrorDynamics:JointRec_wc_check} from Part I \cite{chen2013learningPart1} the following relation for $\check{\w}_{c,i}$: 
	\begin{align}
		\check{\bm{w}}_{c,i}	
							&=	T_c(\bm{w}_{c,i-1}) 
								-
								T_c(\bar{w}_{c,i-1})
								\nn\\
								&\quad
								- 
								\mu_{\max}\cdot (p^T \otimes I_M) 
								\left[
										\bm{z}_{i-1}
										+
										\bm{v}_i
								\right]
			\label{Equ:Lemma:ErrorDynamics:JointRec_wc_check}
	\end{align}
where
	\begin{align}
		T_c(x)	
				&\defeq		x - \mu_{\max}\sum_{k=1}^N p_k s_k(x)
		\label{Equ:Def:Tc2}
						\\
		\bv_i
				&\defeq	
						\hat{\bs}_{i}
						\left(
							 \bm{\phi}_{i-1}
						\right)
						\!-\!
						s\left(
							\bm{\phi}_{i-1}
						\right)
		\label{Equ:Def:v_i}
						\\
		\bz_{i-1}
				&\defeq
						s\left(
							\bm{\phi}_{i-1}
						\right)
						\!-\!
						s(\one \otimes \bw_{c,i-1})
		\label{Equ:Def:z_i1}
	\end{align}
The two perturbation terms $\bv_i(\bm{\phi}_{i-1})$ and $\bz_{i-1}$ were further shown to satisfy the following bounds in Appendix \ref{P1-Appendix:Proof_BoundsPerturbation} in Part I\cite{chen2013learningPart1}.
\begin{align}
	\label{Equ:Lemma:BoundsPerturbation:P_z}
	P[\bm{z}_{i\!-\!1}]	&\preceq		\lambda_U^2
								\cdot
								\left\|
									\bP_1[A_1^T U_L]
								\right\|_{\infty}^2
								\!\cdot\!
								\mathds{1}\mathds{1}^T
								\!\cdot\!
								P[\bm{w}_{e,i\!-\!1}]
								\\
	\label{Equ:Lemma:BoundsPerturbaton:P_s}
	P[s(\mathds{1}\!\otimes\! \bm{w}_{c,i\!-\!1})]
					&\preceq		3\lambda_U^2
								\!\cdot\!
								P[\check{\bm{w}}_{c,i - 1}]
								\!\cdot\!
								\mathds{1}
								\!+\!
								3\lambda_U^2 \|\tilde{w}_{c,0}\|^2 \cdot \mathds{1}
								\!+\!
								3g^o
								\\
	\E \{P[\bm{v}_i] | \mF_{i-1} \}	
					&\preceq		
								4\alpha \cdot \one
								\cdot
								P[ \check{\w}_{c,i-1} ]
								\nn\\
								&\quad
								+
								4 \alpha
								\cdot
								\| \bP[ \mA_1^T \mU_L ] \|_{\infty}^2
								\cdot
								\one \one^T
								P[ \w_{e,i-1} ]
								\nn\\
								&\quad
								+\!
								\left[
									4\alpha
									\cdot
									( \|\tilde{w}_{c,0}\|^2 \!+\! \|w^o\|^2 )
									\!+\!
									\sigma_v^2
								\right]
								\cdot
								\one
	\label{Equ:Lemma:BoundsPerturbation:P_v_E_Fiminus1}
								\\
	\E P[\bm{v}_i]
					&\preceq		
								4\alpha \cdot \one
								\cdot
								\Expt P[ \check{\w}_{c,i-1} ]
								\nn\\
								&\quad
								+
								4 \alpha
								\cdot
								\| \bP[ \mA_1^T \mU_L ] \|_{\infty}^2
								\cdot
								\one \one^T
								\Expt P[ \w_{e,i-1} ]
								\nn\\
								&\quad
								+\!
								\left[
									4\alpha
									\cdot
									( \|\tilde{w}_{c,0}\|^2 \!+\! \|w^o\|^2 )
									\!+\!
									\sigma_v^2
								\right]
								\cdot
								\one
	\label{Equ:Lemma:BoundsPerturbation:P_v}				
\end{align}
where $P[\check{\w}_{c,i-1}]=\|\check{\w}_{c,i-1}\|^2$, and $g^o \defeq P[s(\mathds{1} \otimes w^o)]$.
We further showed in Eqs. \eqref{P1-Equ:Cor:EPwcicheck_asymptotic_bound} and \eqref{P1-Equ:Cor:EPwei_asymptotic_bound} from Part I \cite{chen2013learningPart1} that 
		\begin{align}
			\limsup_{i\rightarrow \infty} \E \|\check{\bw}_{c,i}\|^2
								&\le		O(\mu_{\max})
			\label{Equ:Cor:EPwcicheck_asymptotic_bound}
											\\
			\limsup_{i\rightarrow \infty} \E \|\bw_{e,i}\|^2
								&\le 
											O(\mu_{\max}^2)
			\label{Equ:Cor:EPwei_asymptotic_bound}
		\end{align}

\subsection{Approximation of $\Pi_{i}$ by $\one\one^T \otimes \check{\Pi}_{c,i}$}
\label{Appendix:Proof_Thm_SteadyStatePerf:ApproxPiByPic}

In order to examine $\Pi_i$, which is needed for the limiting value of \eqref{Equ:SteadyState:WMSE_Tr}, we first establish the result \eqref{Equ:SteadyState:limsup_Pii_Picicheck} further ahead using \eqref{Equ:DistProc:wi_tilde_decomposition_globalform_final}: in steady-state, the error covariance matrix of $\tilde{\bw}_{i}$ (i.e., $\Pi_{i}$) is equal to the error covaraince matrix of the component $\one \otimes \check{\w}_{c,i}$ to the first order in $\mu_{\max}$. Indeed, let $\check{\Pi}_{c,i}$ denote the covariance matrix of 
$\check{\bw}_{c,i}$, i.e., $\check{\Pi}_{c,i} \defeq \E\{\check{\bw}_{c,i}\check{\bw}_{c,i}^T\}$. By \eqref{Equ:DistProc:wi_tilde_decomposition_globalform_final}, we have
	\begin{align}
		\Pi_i			&=
							\E \left\{
								\tilde{\w}_i \tilde{\w}_i^T
							\right\}
							\nn\\
					&=
							\one\one^T 
							\otimes
							[ \tilde{w}_{c,i} \tilde{w}_{c,i}^T ]
							+
							\one\one^T 
							\otimes
							\check{\Pi}_{c,i}
							\nn\\
							&\quad
							+
							\E
							\left\{
								[ (U_L \otimes I_M) \w_{e,i} ]
								[ (U_L \otimes I_M) \w_{e,i} ]^T
							\right\}
							\nn\\
							&\quad
							-
							(\one \otimes \tilde{w}_{c,i})
							\big(
								\one \otimes \E \check{\w}_{c,i} + (U_L \otimes I_M) \E \w_{e,i}
							\big)^T
							\nn\\
							&\quad
							-
							\big(
								\one \otimes \E \check{\w}_{c,i} + (U_L \otimes I_M) \E \w_{e,i}
							\big)
							(\one \otimes \tilde{w}_{c,i})^T
							\nn\\
							&\quad
							+
							\E \left\{
								(\one\otimes \check{\w}_{c,i})
								[ (U_L \otimes I_M) \w_{e,i} ]^T
							\right\}
							\nn\\
							&\quad
							+
							\E \left\{
								[ (U_L \otimes I_M) \w_{e,i} ]
								(\one\otimes \check{\w}_{c,i})^T
							\right\}
							\nn
	\end{align}
so that
	\begin{align}
		&\left\| 
			\Pi_i - \one\one^T \otimes \check{\Pi}_{c,i}
		\right\|				
							\nn\\
				&\overset{(a)}{\le }
							\left\|
								\one\one^T 
								\otimes
								[ \tilde{w}_{c,i} \tilde{w}_{c,i}^T ]
							\right\|
							\nn\\
							&\quad
							+
							\left\|
								\E
								\left\{
									[ (U_L \otimes I_M) \w_{e,i} ]
									[ (U_L \otimes I_M) \w_{e,i} ]^T
								\right\}
							\right\|
							\nn\\
							&\quad
							+
							2\left\|\one \otimes \tilde{w}_{c,i}\right\|
							\cdot
							\big\|
								\one \otimes \E \check{\w}_{c,i} + (U_L \otimes I_M) \E \w_{e,i}
							\big\|
							\nn\\
							&\quad
							+
							2\left\|
							\E \left\{
								(\one\otimes \check{\w}_{c,i})
								[ (U_L \otimes I_M) \w_{e,i} ]^T
							\right\}
							\right\|
							\nn\\
				&\overset{(b)}{\le }
							\left\|
								\one\one^T 
								\otimes
								[ \tilde{w}_{c,i} \tilde{w}_{c,i}^T ]
							\right\|
							+
							\E
							\left\|
									(U_L \otimes I_M) \w_{e,i} ]
							\right\|^2
							\nn\\
							&\quad
							+
							2\left\|\one \otimes \tilde{w}_{c,i}\right\|
							\cdot
							\big\|
								\one \otimes \E \check{\w}_{c,i} + (U_L \otimes I_M) \E \w_{e,i}
							\big\|
							\nn\\
							&\quad
							+
							2\left\|
							\E \left\{
								(\one\otimes \check{\w}_{c,i})
								[ (U_L \otimes I_M) \w_{e,i} ]^T
							\right\}
							\right\|
							\nn
	\end{align}
where step (a) uses triangular inequality, and step (b) applies Jensen's inequality $\| \E [ \cdot ] \| \le \E \| \cdot \|$ to the convex function $\| \cdot \|$ and the inequality $\| x y^T \| \le \| x \| \cdot \| y\|$. Taking $\limsup$ of both sides as $i \rightarrow \infty$, we obtain
	\begin{align}
		&\limsup_{i\rightarrow\infty}
		\left\| 
			\Pi_i - \one\one^T \otimes \check{\Pi}_{c,i}
		\right\|
							\nn\\
				&\le 
							\limsup_{i\rightarrow\infty}
							\E
							\left\|
									(U_L \otimes I_M) \w_{e,i} ]
							\right\|^2
							\nn\\
							&\quad
							+
							\limsup_{i\rightarrow\infty}
							\left\{
								2\left\|
								\E \left\{
									(\one\otimes \check{\w}_{c,i})
									[ (U_L \otimes I_M) \w_{e,i} ]^T
								\right\}
								\right\|
							\right\}
		\label{Equ:Appendix:Pi_Piccheck_gapbound_interm1}
	\end{align}
since $\tilde{w}_{c,i} \rightarrow 0$  as $i\rightarrow\infty$ according to Theorem \ref{P1-Thm:ConvergenceRefRec:DeterministcCent} in Part I\cite{chen2013learningPart1}. We now bound the two terms on the right-hand side of \eqref{Equ:Appendix:Pi_Piccheck_gapbound_interm1} using \eqref{Equ:Cor:EPwcicheck_asymptotic_bound}--\eqref{Equ:Cor:EPwei_asymptotic_bound} and show that they are higher order terms of $\mu_{\max}$. By \eqref{Equ:Cor:EPwei_asymptotic_bound}, the first term on the right-hand side of \eqref{Equ:Appendix:Pi_Piccheck_gapbound_interm1} is $O(\mu_{\max}^2)$ because
	\begin{align}
		&\limsup_{i\rightarrow\infty}
		\E
		\left\|
				(U_L \otimes I_M) \w_{e,i} ]
		\right\|^2
						\nn\\
				&\quad\le 		
						\limsup_{i\rightarrow\infty}
						\left\|
							U_L \otimes I_M
						\right\|^2
						\cdot
						\E\|\bw_{e,i}\|^2
				\le		O(\mu_{\max}^2)
		\label{Equ:Appendix:Pi_Piccheck_gapbound_interm1_term1}
	\end{align} 
Moreover, for any random variables $\bx$ and $\by$, we have 
$|\E\{\bx\by\}|^2 \le \E\{\bx^2\} \cdot \E\{\by^2\}$. Applying this result to the last term in \eqref{Equ:Appendix:Pi_Piccheck_gapbound_interm1} we have 
	\begin{align}
		\big\|&
			\E \left[
				(\one \otimes \check{\bw}_{c,i})
				[(U_L \otimes I_M) \bw_{e,i}]^T
			\right]
		\big\|
					\nn\\
				&\le 	\sqrt{
							\E\|\one\otimes\check{\bw}_{c,i}\|^2
							\cdot
							\E\|(U_L \otimes I_M) \bw_{e,i}\|^2
						}
		\label{Equ:Appendix:Pi_Piccheck_gapbound_interm1_term2}
	\end{align}
Using \eqref{Equ:Cor:EPwcicheck_asymptotic_bound} and \eqref{Equ:Cor:EPwei_asymptotic_bound}, we conclude that
	\begin{align}
		\limsup_{i \rightarrow \infty}
		\big\| 
			\E \left[ \!
				(\one \otimes \check{\bw}_{c,i})
				[(U_L \otimes I_M) \bw_{e,i}]^T
				\!
			\right] \!
		\big\|
				\leq O(\mu_{\max}^{3/2})  \!\!
		\label{Equ:Appendix:Piinf_Pici_2ndTerm_limsupBound}
	\end{align}
Therefore, substituting \eqref{Equ:Appendix:Pi_Piccheck_gapbound_interm1_term1} and \eqref{Equ:Appendix:Piinf_Pici_2ndTerm_limsupBound} into \eqref{Equ:Appendix:Pi_Piccheck_gapbound_interm1}, we conclude that
	\begin{align}
		\limsup_{i\rightarrow\infty}
		\left\| 
			\Pi_i - \one\one^T \otimes \check{\Pi}_{c,i}
		\right\|
				&\le 
						O(\mu_{\max}^{3/2})
		\label{Equ:SteadyState:limsup_Pii_Picicheck}
	\end{align}

\subsection{Approximation of $\check{\Pi}_{c,i}$ by $\check{\Pi}_{a,i}$}
\label{Appendix:Proof_Thm_SteadyStatePerf:ApproxPicByPia}
Now we examine the expression for $\check{\Pi}_{c,i}$ at steady-state ($i \rightarrow \infty$) to arrive at further expression \eqref{Equ:SteadyState:Pici_Pai_gap_limsupBound}. To do this, we rewrite expressions \eqref{Equ:Lemma:ErrorDynamics:JointRec_wc_check}--\eqref{Equ:Def:z_i1} for $\check{\bw}_{c,i}$ as
	\begin{align}
		\check{\bm{w}}_{c,i}	
						&=	\check{\bm{w}}_{c,i-1}
							\!-\!
							\mu_{\max}
							\sum_{k=1}^N p_k 
							\big[
								s_k(\bw_{c,i-1})
								\!-\!
								s_k(\bar{w}_{c,i-1})
							\big]
							\nn\\
							&\quad
							- 
							\mu_{\max}\cdot (p^T \otimes I_M) 
							\left[
									\bm{z}_{i-1}
									+
									\bm{v}_i
							\right]
							\nn\\
						&=	
							\left[ 
									I_M 
									- 
									\mu_{\max}
									H_c
							\right]
							\check{\bw}_{c,i-1}
							-
							\mu_{\max}\cdot (p^T \otimes I_M) 
							\bm{v}_i	
							\nn\\
							&\quad							
							- 
							\mu_{\max}\cdot (p^T \otimes I_M) \bm{z}_{i-1}
							\nn\\
							&\quad
							- \mu_{\max} 
							\big(
								s_c( \w_{c,i-1} )
								-
								s_c(\bar{w}_{c,i-1})
								-
								H_c 
								\check{\w}_{c,i-1}
							\big)
		\label{Equ:SteadyState:wccheck_recursion_interm1}
	\end{align}
where
	\begin{align}
		H_c					&\defeq	\sum_{k=1}^N p_k \nabla_{w^T} s_k(w^o)
									\label{Equ:SteadyState:H_c_def}
									\\
		s_c(w)				&\defeq	\sum_{k=1}^N p_k  s_k(w)
	\end{align}
Next, we show that the mean-square-error between $\check{\bw}_{c,i}$ generated by \eqref{Equ:SteadyState:wccheck_recursion_interm1} and the $\check{\bw}_{a,i}$ generated by the following auxiliary recursion is small for small step-sizes:
	\begin{align}
		\check{\bw}_{a,i}	=		\left[ 
											I_M 
											- 
											\mu_{\max}
											H_c
									\right]
									\check{\bw}_{a,i-1}
									-
									\mu_{\max}\cdot (p^T \otimes I_M) 
									\bm{v}_i
		\label{Equ:SteadyState:wacheck_recursion}
	\end{align}
Indeed, subtracting \eqref{Equ:SteadyState:wacheck_recursion} from \eqref{Equ:SteadyState:wccheck_recursion_interm1} leads to
	\begin{align}
		\check{\bw}_{c,i} - \check{\bw}_{a,i}
							&=		\left[ 
											I_M 
											- 
											\mu_{\max}
											H_c
									\right]
									(
										\check{\bw}_{c,i-1}
										-
										\check{\bw}_{a,i-1}
									)
									\nn\\
									&\quad
									-
									\mu_{\max} 
									\big(
										s_c( \w_{c,i-1} )
										-
										s_c(\bar{w}_{c,i-1})
										-
										H_c 
										\check{\w}_{c,i-1}
									\big)
									\nn\\
									&\quad
									-
									\mu_{\max}
									\cdot
									(p^T \otimes I_M) \z_{i-1}	
		\label{Equ:Appendix:wcicheck_waicheck_gap_interm1}							
	\end{align}
We recall the definition of the scalar factor $\gamma_c$ from Eq. \eqref{P1-Equ:VarPropt:gamma_c} in Part I \cite{chen2013learningPart1}:
	\begin{align}
           	\label{Equ:VarPropt:gamma_c}
           	\gamma_c		\defeq	1 - \mu_{\max}\lambda_L 
           						+ \frac{1}{2}\mu_{\max}^2\|p\|_1^2 \lambda_U^2
           \end{align}
Now evaluating the squared Euclidean norm of both sides of \eqref{Equ:Appendix:wcicheck_waicheck_gap_interm1}, we get
	\begin{align}
		\|&\check{\bw}_{c,i} - \check{\bw}_{a,i}\|^2
									\nn\\
							&=		
									\big\|
										\gamma_c
										\cdot
										\frac{1}{\gamma_c}
										\left[ 
												I_M 
												- 
												\mu_{\max}
												H_c
										\right]
										(
											\check{\bw}_{c,i-1}
											-
											\check{\bw}_{a,i-1}
										)
										\nn\\
										&\quad
										+
										\frac{1-\gamma_c}{2}
										\cdot
										\frac{-2\mu_{\max}}{1-\gamma_c}
										\cdot	
										\big(
											s_c( \w_{c,i-1} )
											\!-\!
											s_c(\bar{w}_{c,i-1})
											\!-\!
											H_c 
											\check{\w}_{c,i-1}
										\big)
										\nn\\
										&\quad
										+
										\frac{1-\gamma_c}{2}
										\cdot
										\frac{-2\mu_{\max}}{1-\gamma_c}
										\cdot
										(p^T \!\otimes\! I_M) \z_{i-1}
									\big\|^2
									\nn\\
							&\overset{(a)}{\le}
									\gamma_c
									\cdot
									\big\|										
										\frac{1}{\gamma_c}
										\left[ 
												I_M 
												\!-\! 
												\mu_{\max}
												H_c
										\right]
										(
											\check{\bw}_{c,i-1}
											\!-\!
											\check{\bw}_{a,i-1}
										)
									\big\|^2
									\nn\\
									&\quad
									+\!
									\frac{1 \!-\! \gamma_c}{2}
									\!\cdot\!
									\big\|
										\frac{-2\mu_{\max}}{1-\gamma_c}
										\!\cdot\!	
										\big(
											s_c( \w_{c,i-1} )
											\!-\!
											s_c(\bar{w}_{c,i-1})
											\!-\!
											H_c 
											\check{\w}_{c,i-1}
										\big)
									\big\|^2
										\nn\\
										&\quad
										+
									\frac{1-\gamma_c}{2}
									\cdot
									\big\|
										\frac{-2\mu_{\max}}{1-\gamma_c}
										\cdot
										(p^T \!\otimes\! I_M) \z_{i-1}
									\big\|^2
									\nn\\							
							&\le 	\frac{1}{\gamma_c}
									\cdot
									\|	
												I_M 
												- 
												\mu_{\max}
												H_c
									\|^2
									\cdot
									\|
											\check{\bw}_{c,i-1}
											-
											\check{\bw}_{a,i-1}
									\|^2
									\nn\\
									&\quad
										+
									\frac{2\mu_{\max}^2}{1-\gamma_c}
									\cdot
									\big\|
										s_c( \w_{c,i-1} )
										-
										s_c(\bar{w}_{c,i-1})
										-
										H_c 
										\check{\w}_{c,i-1}
									\big\|^2
										\nn\\
										&\quad
										+
									\frac{2\mu_{\max}^2}{1-\gamma_c}
									\cdot
									\|
										(p^T \!\otimes\! I_M)
									\|^2
									\cdot
									\|
										\z_{i-1}
									\|^2
									\nn\\
							&\overset{(b)}{=}
									\frac{1}{\gamma_c}
									\cdot
									\|	
												B_c
									\|^2
									\cdot
									\|
											\check{\bw}_{c,i-1}
											-
											\check{\bw}_{a,i-1}
									\|^2
									\nn\\
									&\quad
										+
									\frac{2\mu_{\max}}
									{\lambda_L \!-\! \frac{1}{2}\mu_{\max} \|p\|_1^2 \lambda_U^2}
									\!\cdot\!
									\big\|
										s_c( \w_{c,i-1} )
										\!-\!
										s_c(\bar{w}_{c,i-1})
										\!-\!
										H_c 
										\check{\w}_{c,i-1}
									\big\|^2
										\nn\\
										&\quad
										+
									\frac{2\mu_{\max}}
									{\lambda_L \!- \!\frac{1}{2}\mu_{\max} \|p\|_1^2 \lambda_U^2}
									\!\cdot\!
									\|
										(p^T \!\otimes\! I_M)
									\|^2
									\!\cdot\!
									\one^T P[\z_{i-1}] 
		\label{Equ:SteadyState:Recursion_wcwa_interm1}
	\end{align}
where in step (a) we used the convexity of the squared norm $\|\cdot\|^2$, and in step (b) we introduced $B_c \defeq I_M-\mu_{\max}H_c$. We now proceed to bound the three terms on the right-hand side of the above inequality. First note that
	\begin{align}
		B_c^T B_c	&=	(I - \mu_{\max} H_c)^T (I - \mu_{\max} H_c)	
						\nonumber\\
					&=	I - \mu_{\max} (H_c + H_c^T) + \mu_{\max}^2 H_c^T H_c
		\label{Equ:SteadyState:Bc_UB}
	\end{align}
Under Assumption \ref{Assumption:JacobianUpdatVectorLipschitz}, conditions \eqref{Equ:Lemma:EquivCondUpdateVec:LipschitzUpdate_HessianUB} and \eqref{Equ:Lemma:EquivCondUpdateVec:StrongMono_HessianLB} hold in the ball $||\delta w\|\leq r_H$ around $w^o$. Recall from \eqref{Equ:SteadyState:H_c_def} that $H_c$ is evaluated at $w^o$. Therefore, from \eqref{Equ:Lemma:EquivCondUpdateVec:StrongMono_HessianLB} we have 
	\begin{align}
		H_c + H_c^T	&\ge	2\lambda_L \cdot I_M
		\label{Equ:SteadyState:Hc_LB}
	\end{align}
and by \eqref{Equ:Lemma:EquivCondUpdateVec:LipschitzUpdate_HessianUB}, we have
	\begin{align}
		\| H_c \|		&=	\left\|
							\sum_{k=1}^N p_k \nabla_{w^T} s_k(w^o)
						\right\|
						\nn\\
					&\le 
						\sum_{k=1}^N p_k \| \nabla_{w^T} s_k(w^o)\|
						\nn\\
					&\le 	
						\sum_{k=1}^N p_k \cdot \lambda_U
					=	\|p\|_1 \cdot \lambda_U	
						\nn
	\end{align}
Note further that $\| H_c \|^2 \equiv \lambda_{\max}(H_c^T H_c)$, where $\lambda_{\max}(\cdot)$ denotes the largest eigenvalue of the matrix argument. This implies that
	\begin{align}
		0	\le 	H_c^T H_c	\le 	\|p\|_1^2  \lambda_U^2 \cdot I_M
		\label{Equ:SteadyState:Hc_UB}
	\end{align}
Substituting \eqref{Equ:SteadyState:Hc_LB} and \eqref{Equ:SteadyState:Hc_UB} into \eqref{Equ:SteadyState:Bc_UB}, we obtain
	\begin{align}
		B_c^T B_c
					&\le 
						\left(
								1 
								- 
								2\mu_{\max} \lambda_L 
								+ 
								\mu_{\max}^2 \|p\|_1^2 \lambda_U^2
						\right) \cdot
						I
						\nn
	\end{align}
so that
	\begin{align}
		\|B_c\|^2	&\le 
						1 
						- 
						2\mu_{\max} \lambda_L 
						+ 
						\mu_{\max}^2 \|p\|_1^2 \lambda_U^2
						\nn\\
					&\le 
						\left(1 
						- 
						\mu_{\max} \lambda_L 
						+ 
						\frac{1}{2}\mu_{\max}^2 \|p\|_1^2 \lambda_U^2 
						\right)^2
					=
						\gamma_c^2
		\label{Equ:Appendix:Bc_norm_UB}
	\end{align}
where in the last inequality we used $(1-x) \le (1-\frac{1}{2}x)^2$. Next, we bound the second term on the right-hand side of \eqref{Equ:SteadyState:Recursion_wcwa_interm1}. To do this, we need to bound it in two separate cases:
	\begin{enumerate}
		\item
		{\bf Case 1:} $\| \tilde{w}_{c,i-1} \| + \| \check{\w}_{c,i-1} \| \le r_H$\\
		This condition implies that, for any $0 \le t \le 1$, the vector $\bar{w}_{c,i-1} + t \check{\w}_{c,i-1}$ is inside
		a ball that is centered at $w^o$ with radius $r_H$ since:
			\begin{align}
				\| (\bar{w}_{c,i-1} + t \check{\w}_{c,i-1}) - w^o \|
							&=		\| -\tilde{w}_{c,i-1} + t\check{\w}_{c,i-1} \|
									\nn\\
							&\le 		\| \tilde{w}_{c,i-1} \| + t \| \check{\w}_{c,i-1} \|
									\nn\\
							&\le 		\| \tilde{w}_{c,i-1} \| +  \| \check{\w}_{c,i-1} \|
									\nn\\
							&\le 		r_H
									\nn
			\end{align}
		By Assumption \ref{Assumption:JacobianUpdatVectorLipschitz}, the function $s_k(w)$ is differentiable at 
		$\bar{w}_{c,i-1} + t \check{\w}_{c,i-1}$ so that using the following mean-value theorem
		\cite[p.6]{poliak1987introduction}:
				\begin{align}
					&s_k(\bw_{c,i-1})
							=	
									s_k(\bar{w}_{c,i-1})
									\nn\\
									&\quad
									+
									\left(
										\int_{0}^{1}
											\nabla_{w^T}
											s_k(\bar{w}_{c,i-1}+t\check{\bw}_{c,i-1})
											dt
									\right)
									\cdot
									\check{\bw}_{c,i-1}	
					\label{Equ:SteadyState:MeanValThm}			
				\end{align}
		Then, we have
			\begin{align}
				\big\|&
					s_c( \w_{c,i-1} )
					-
					s_c(\bar{w}_{c,i-1})
					-
					H_c 
					\check{\w}_{c,i-1}
				\big\|^2				\nn\\
							&=		
									\big\|
										\sum_{k=1}^N
										p_k
										[ s_k( \w_{c,i-1} ) - s_k( \bar{w}_{c,i-1} ) ]
										- 
										H_c 
										\check{\w}_{c,i-1}
									\big\|^2
									\nn\\
							&=		\big\|
										\sum_{k=1}^N
										p_k
										\int_{0}^{1}
											\nabla_{w^T}
											s_k(\bar{w}_{c,i-1}+t\check{\bw}_{c,i-1})
											dt
										\cdot
										\check{\bw}_{c,i-1}	
										\nn\\
										&\quad
										-
										\sum_{k=1}^N 
										p_k
										\nabla_{w^T} s_k(w^o)
										\cdot
										\check{\w}_{c,i-1}
									\big\|^2
									\nn\\
							&=		\Big\|
										\sum_{k=1}^N
										p_k \cdot
											\int_{0}^{1}
											\big[
												\nabla_{w^T}
												s_k(\bar{w}_{c,i-1}+t\check{\bw}_{c,i-1})
												\nn\\
												&\qquad
												-
												\nabla_{w^T} s_k(w^o)
											\big]
											dt
										\cdot
										\check{\w}_{c,i-1}
									\Big\|^2
									\nn\\
							&\le 		
									\Big\{
										\sum_{k=1}^N
										p_k \cdot
											\int_{0}^{1}
											\big\|
												\nabla_{w^T} 
												s_k(\bar{w}_{c,i-1}+t\check{\bw}_{c,i-1})
												\nn\\
												&\qquad
												-
												\nabla_{w^T} 
												s_k(w^o)
											\big\|
											dt
										\cdot
										\|\check{\w}_{c,i-1}\|
									\Big\}^2
									\nn\\
							&\overset{(a)}{\le} 
									\Big\{
										\sum_{k=1}^N
										p_k \cdot
											\int_{0}^{1}
											\lambda_H
											\cdot
											\|
												(\bar{w}_{c,i-1}+t\check{\bw}_{c,i-1})
												-
												w^o
											\|
											dt
										\nn\\
										&\qquad
										\cdot
										\|\check{\w}_{c,i-1}\|
									\Big\}^2
									\nn\\
							&\le		
									\Big\{
										\sum_{k=1}^N
										p_k \cdot
											\int_{0}^{1}
											\lambda_H
											\cdot
											(
												\| \bar{w}_{c,i-1} - w^o \|
												+
												t
												\|\check{\w}_{c,i-1}\|
											)
											dt
										\nn\\
										&\qquad
										\cdot
										\|\check{\w}_{c,i-1}\|
									\Big\}^2
									\nn\\
							&\le 		\Big\{
										\sum_{k=1}^N
										p_k \cdot
											\int_{0}^{1}
											\lambda_H
											\cdot
											(
												\| \bar{w}_{c,i-1} - w^o \|
												+
												\|\check{\w}_{c,i-1}\|
											)
											dt
										\nn\\
										&\qquad
										\cdot
										\|\check{\w}_{c,i-1}\|
									\Big\}^2
									\nn\\
							&= 		\Big\{
										\sum_{k=1}^N
										p_k \cdot
											\lambda_H
											\cdot
											(
												\| \tilde{w}_{c,i-1} \|
												+
												\|\check{\w}_{c,i-1}\|
											)
										\cdot
										\|\check{\w}_{c,i-1}\|
									\Big\}^2
									\nn\\
							&=		\Big\{
										\|p\|_1 
										\cdot
											\lambda_H
											\cdot
											(
												\| \tilde{w}_{c,i-1} \|
												+
												\|\check{\w}_{c,i-1}\|
											)
										\cdot
										\|\check{\w}_{c,i-1}\|
									\Big\}^2
									\nn\\
							&=		\|p\|_1^2
									\cdot
									\lambda_H^2
									\cdot
										(
											\| \tilde{w}_{c,i-1} \|
											+
											\|\check{\w}_{c,i-1}\|
										)^2
									\cdot
									\|\check{\w}_{c,i-1}\|^2
									\nn\\
							&\le 		2\|p\|_1^2
									\cdot
									\lambda_H^2
									\cdot
										(
											\| \tilde{w}_{c,i-1} \|^2
											+
											\|\check{\w}_{c,i-1}\|^2
										)
									\cdot
									\|\check{\w}_{c,i-1}\|^2
				\label{Equ:SteadyState:HessianPerturb_2nd_Case1}
			\end{align}
		where step (a) uses
		Assumption \ref{Assumption:JacobianUpdatVectorLipschitz} and the last inequality uses 
		$(x+y)^2 \le 2x^2 + 2y^2$.

		\item
		{\bf Case 2:} $\| \tilde{w}_{c,i-1} \| + \| \check{\w}_{c,i-1} \| > r_H$\\
		It holds that
			\begin{align}
					\big\|&
						s_c( \w_{c,i-1} )
						-
						s_c(\bar{w}_{c,i-1})
						-
						H_c 
						\check{\w}_{c,i-1}
					\big\|^2				\nn\\
								&=		
										\big\|
											\sum_{k=1}^N
											p_k
											\big[ 
											s_k( \w_{c,i-1} ) - s_k( \bar{w}_{c,i-1} ) 
											\nn\\
											&\qquad
											- 
											\nabla_{w^T} s_k(w^o)
											\cdot
											\check{\w}_{c,i-1}
											\big]
										\big\|^2
										\nn\\
								&\le 
										\Big\{
											\sum_{k=1}^N
											p_k
											\big\|
												s_k( \w_{c,i-1} ) - s_k( \bar{w}_{c,i-1} ) 
												\nn\\
												&\qquad
												- 
												\nabla_{w^T} s_k(w^o)
												\cdot
												\check{\w}_{c,i-1}
											\big\|
										\Big\}^2
										\nn\\
								&\le 	
										\Big\{
											\sum_{k=1}^N
											p_k		
											\big(								
												\|s_k( \w_{c,i-1} ) - s_k( \bar{w}_{c,i-1} ) \|
												\nn\\
												&\qquad
												+
												\|
												\nabla_{w^T} s_k(w^o)
												\|
												\cdot
												\|
												\check{\w}_{c,i-1}
												\|
											\big)
										\Big\}^2
										\nn\\
								&\overset{(a)}{\le}
										\Big\{
											\sum_{k=1}^N
											p_k	
											\cdot
											(
												\lambda_U
												\cdot
												\|\check{\w}_{c,i-1} \|
												+
												\lambda_U
												\cdot
												\|
													\check{\w}_{c,i-1}
												\|
											)
										\Big\}^2
										\nn\\
								&=		
										\Big\{
											2\|p\|_1
											\cdot								
											\lambda_U
											\cdot
											\|\check{\w}_{c,i-1}\|
										\Big\}^2
										\nn\\
								&\le 
										4\|p\|_1^2
										\cdot								
										\lambda_U^2
										\cdot
										\|\check{\w}_{c,i-1}\|^2
										\nn\\
								&\overset{(b)}{\le}
										4\|p\|_1^2
										\cdot								
										\lambda_U^2
										\cdot
										\left(
											\frac{\|\tilde{w}_{c,i-1}\| + \|\check{\w}_{c,i-1}\|}
											{r_H}
										\right)^2
										\cdot
										\|\check{\w}_{c,i-1}\|^2
										\nn\\
								&\overset{(c)}{\le}
										2\|p\|_1^2
										\cdot
										\frac{4\lambda_U^2}{r_H^2}
										\cdot
											(
												\| \tilde{w}_{c,i-1} \|^2
												+
												\|\check{\w}_{c,i-1}\|^2
											)
										\cdot
										\|\check{\w}_{c,i-1}\|^2
				\label{Equ:SteadyState:HessianPerturb_2nd_Case2}
			\end{align}			
		where in step (a) we used \eqref{Equ:Assumption:Lipschitz} and
		\eqref{Equ:Lemma:EquivCondUpdateVec:LipschitzUpdate_HessianUB}, in step (b) we used
		the fact that  $\| \tilde{w}_{c,i-1} \| + \| \check{\w}_{c,i-1} \| > r_H$ in the current
		case, and in step (c) we used the relation $\| x + y \|^2 \le 2 \| x \|^2 + 2 \| y \|^2$.
	\end{enumerate}
Based on \eqref{Equ:SteadyState:HessianPerturb_2nd_Case1} and \eqref{Equ:SteadyState:HessianPerturb_2nd_Case2} from both cases, we have
	\begin{align}
			\big\|&
				s_c( \w_{c,i-1} )
				-
				s_c(\bar{w}_{c,i-1})
				-
				H_c 
				\check{\w}_{c,i-1}
			\big\|^2
										\nn\\
							&\le 		
										2\|p\|_1^2
										\!\cdot\!
										\lambda_{HU}^2
										\!\cdot\!
											(
												\| \tilde{w}_{c,i-1} \|^2
												\!+\!
												\|\check{\w}_{c,i-1}\|^2
											)
										\cdot
										\|\check{\w}_{c,i-1}\|^2 \!\!
			\label{Equ:SteadyState:HessianPerturb_2nd_final}
	\end{align}		
where
	\begin{align}
		\lambda_{HU}^2		\defeq		\max\left\{
											\lambda_H^2, 
											\frac{4\lambda_U^2}{r_H^2}
										\right\}
										\nn
	\end{align}	
The third term on the right-hand side of \eqref{Equ:SteadyState:Recursion_wcwa_interm1} can be bounded by \eqref{Equ:Lemma:BoundsPerturbation:P_z}. Therefore, substituting \eqref{Equ:Appendix:Bc_norm_UB}, \eqref{Equ:SteadyState:HessianPerturb_2nd_final} and \eqref{Equ:Lemma:BoundsPerturbation:P_z} into \eqref{Equ:SteadyState:Recursion_wcwa_interm1} and applying the expectation operator, we get
	\begin{align}
		\E&\|\check{\w}_{c,i} - \check{\w}_{a,i}\|^2
						\nn\\
				&\le 	
						\gamma_c 
						\cdot
						\E\|\check{\w}_{c,i-1} - \check{\w}_{a,i-1}\|^2
						\nn\\
						&\quad
						+
						\frac{4\mu_{\max} \|p\|_1^2 \lambda_{HU}^2}
						{\lambda_L - \frac{1}{2} \mu_{\max} \|p\|_1^2 \lambda_U^2}
						\nn\\
						&\qquad
						\cdot
						\big(
							\E\|\check{\w}_{c,i-1}\|^2 \cdot \|\tilde{w}_{c,i-1}\|^2
							+
							\E\|\check{\w}_{c,i-1}\|^4
						\big)
						\nn\\
						&\quad
						+\!\!
						\frac{2N\mu_{\max} \|p^T \!\otimes\! I_M\|^2 }
						{\lambda_L \!-\! \frac{1}{2} \mu_{\max} \|p\|_1^2 \lambda_U^2}
						\!\cdot\!
						\lambda_U^2
						\!\cdot\!
						\| \bP_1[A_1^T U_1] \|_{\infty}^2
						\!\cdot\!
						\E \|\w_{e,i-1}\|^2
		\label{Equ:SteadyState:Recursion_wcwa_interm2}
	\end{align}
where in the last term on the right-hand side of \eqref{Equ:SteadyState:Recursion_wcwa_interm2} we used  $\one^T P[x] = \|x\|^2$ from property \eqref{P1-Equ:Properties:PX_EuclNorm} in Part I\cite{chen2013learningPart1}. Recall from Theorem \ref{P1-Thm:ConvergenceRefRec:DeterministcCent} in Part I\cite{chen2013learningPart1} that $\tilde{w}_{c,i-1} \rightarrow 0$, and from \eqref{Equ:Cor:EPwcicheck_asymptotic_bound}--\eqref{Equ:Cor:EPwei_asymptotic_bound} that $\E\|\check{\bw}_{c,i-1}\|^2 \le O(\mu_{\max})$ and $\E \|\w_{e,i-1}\|^2 \le O(\mu_{\max}^2)$ in steady-state. Moreover, we also have the following result regarding $\E\|\check{\w}_{c,i-1}\|^4$ in steady-state.
	\begin{lemma}[Asymptotic bound on the 4th order moment]
		\label{Lemma:AsymptoticBound_4thOrderMoments}
		Using Assumptions \ref{Assumption:Network}--%
		\ref{Assumption:GradientNoise4thOrderMoment},
		it holds that
			\begin{align}
				\limsup_{i \rightarrow \infty}
								\E \| \check{\w}_{c,i} \|^4
						&\le 	O(\mu_{\max}^2)
				\label{Equ:Lemma:AsympBound_4thOrderMoments_wc}
								\\
				\limsup_{i \rightarrow \infty} \E \| \w_{e,i} \|^4
						&\le 	O(\mu_{\max}^4)
				\label{Equ:Lemma:AsympBound_4thOrderMoments_we}
			\end{align}
	\end{lemma}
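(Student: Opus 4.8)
The plan is to establish \eqref{Equ:Lemma:AsympBound_4thOrderMoments_wc} and \eqref{Equ:Lemma:AsympBound_4thOrderMoments_we} by propagating a pair of \emph{coupled} recursive inequalities for the fourth-order moments $\E\|\check{\bw}_{c,i}\|^4$ and $\E\|\bw_{e,i}\|^4$, in direct analogy with the second-order variance analysis of Part I \cite{chen2013learningPart1}, but now squaring the relevant energy relations so that the fourth-order gradient-noise condition \eqref{Equ:Assumption:GradientNoise4thOrderMoment} enters. The ingredients I would carry over as inputs are the second-order steady-state bounds \eqref{Equ:Cor:EPwcicheck_asymptotic_bound}--\eqref{Equ:Cor:EPwei_asymptotic_bound}; the fact that the deterministic reference error $\tilde{w}_{c,i}$ vanishes as $i\to\infty$ (Theorem \ref{P1-Thm:ConvergenceRefRec:DeterministcCent} of Part I); the bound \eqref{Equ:Lemma:BoundsPerturbation:P_z} linking $\bm{z}_{i-1}$ to $\bw_{e,i-1}$; and a preliminary \emph{coarse} mean-fourth-power stability estimate $\limsup_{i\to\infty}\E\|\tilde{\bw}_i\|^4<\infty$ (uniform for sufficiently small step-sizes), obtained as in the mean-square stability argument of Part I but invoking \eqref{Equ:Assumption:GradientNoise4thOrderMoment}. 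This coarse estimate, combined with the decomposition \eqref{Equ:DistProc:wki_tilde_decomposition_final}, is what allows the $O(1)$ bound on $\E\|\bm{v}_i\|^4$ obtained by summing the per-agent bounds \eqref{Equ:Assumption:GradientNoise4thOrderMoment} evaluated along $\bm{\phi}_{i-1}$ to be used without circularity.

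For \eqref{Equ:Lemma:AsympBound_4thOrderMoments_wc} I would work with the error recursion in its \emph{contractive} form \eqref{Equ:Lemma:ErrorDynamics:JointRec_wc_check}--\eqref{Equ:Def:z_i1}, rather than its linearized counterpart \eqref{Equ:SteadyState:wccheck_recursion_interm1}; this is the key simplification, because the map $T_c$ in \eqref{Equ:Def:Tc2} is a genuine contraction, $\|T_c(x)-T_c(y)\|\le\gamma_c\|x-y\|$ with $\gamma_c$ as in \eqref{Equ:VarPropt:gamma_c} (a consequence of Assumptions \ref{Assumption:UpdateVectorLipschitz} and \ref{Assumption:UpdateVectorMonot}), so that the awkward Hessian-remainder term plays no role at the level of moment bounds. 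Writing $\check{\bw}_{c,i}=\bm{g}_i+\bm{b}_i$ with $\bm{g}_i\defeq T_c(\bw_{c,i-1})-T_c(\bar{w}_{c,i-1})-\mu_{\max}(p^T\otimes I_M)\bm{z}_{i-1}$ ($\mc{F}_{i-1}$-measurable) and $\bm{b}_i\defeq-\mu_{\max}(p^T\otimes I_M)\bm{v}_i$ (conditionally zero mean by \eqref{Equ:Def:v_i} and Assumption \ref{Assumption:UpdateVectorRandomness}), I would expand $\|\check{\bw}_{c,i}\|^4=(\|\bm{g}_i\|^2+2\bm{g}_i^T\bm{b}_i+\|\bm{b}_i\|^2)^2$ and take $\E\{\cdot|\mc{F}_{i-1}\}$, so that the leading cross term vanishes, $\E\{\|\bm{b}_i\|^4|\mc{F}_{i-1}\}=O(\mu_{\max}^4)$ by \eqref{Equ:Assumption:GradientNoise4thOrderMoment}, the two mixed terms are $O(\mu_{\max}^2)\|\bm{g}_i\|^2$ by \eqref{Equ:Assumption:Randomness:RelAbsNoise} and \eqref{Equ:Assumption:RvLipschitz} (with a Young inequality on the odd cross term), and $\|\bm{g}_i\|^4\le\gamma_c^4\|\check{\bw}_{c,i-1}\|^4$ up to perturbations, the $\bm{z}_{i-1}$ piece being absorbed by a Young split of weight $\Theta(\mu_{\max})$ so that it feeds only off $\E\|\bw_{e,i-1}\|^4$. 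Taking expectations, substituting the $O(\mu_{\max})$ and $O(\mu_{\max}^2)$ second-order bounds and the ($\bw_e$-)bound $O(\mu_{\max}^4)$, and using $\tilde{w}_{c,i-1}\to 0$, I expect to obtain
\begin{align}
	\E\|\check{\bw}_{c,i}\|^4 \;\le\; \big(\gamma_c^4+O(\mu_{\max}^2)\big)\,\E\|\check{\bw}_{c,i-1}\|^4 \;+\; O(\mu_{\max}^3),
	\nn
\end{align}
whose driving term is $O(\mu_{\max}^3)$ while its contraction gap $1-\gamma_c^4-O(\mu_{\max}^2)=\Theta(\mu_{\max})$; hence $\limsup_{i\to\infty}\E\|\check{\bw}_{c,i}\|^4\le O(\mu_{\max}^2)$.

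For \eqref{Equ:Lemma:AsympBound_4thOrderMoments_we} I would repeat the computation starting from the recursion for $\bw_{e,i}$ derived in Part I, whose driving matrix has spectral radius $|\lambda_2(A)|<1$ \emph{independent} of $\mu_{\max}$ (so that, after passing to a suitable weighted / block-maximum norm as in Part I, it is a contraction with gap $\Theta(1)$) and whose forcing carries an explicit factor of $\mu_{\max}$. Since the perturbation $\bm{z}_{i-1}$ entering that recursion is controlled by $\bw_{e,i-1}$ itself through \eqref{Equ:Lemma:BoundsPerturbation:P_z}, the recursion closes on $\E\|\bw_{e,i}\|^4$ alone; squaring and separating the martingale part of the noise as above yields $\E\|\bw_{e,i}\|^4\le(\bar{\rho}+O(\mu_{\max}))\,\E\|\bw_{e,i-1}\|^4+O(\mu_{\max}^4)$ with $\bar{\rho}<1$ fixed, whence $\limsup_{i\to\infty}\E\|\bw_{e,i}\|^4\le O(\mu_{\max}^4)$. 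As the coupling is one-directional (the $\bw_e$-recursion needs only the already-known second-order quantities, while the $\check{\bw}_c$-recursion needs this $O(\mu_{\max}^4)$ bound), I would prove the $\bw_e$ estimate first and feed it into the previous step.

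The main obstacle is not any single estimate but the bookkeeping needed to make the scheme non-circular and to keep every perturbation term at order $O(\mu_{\max}^3)$ or smaller: one must first establish the coarse $\limsup_{i\to\infty}\E\|\tilde{\bw}_i\|^4<\infty$ bound (the role of Appendix \ref{Appendix:Proof_UsefulBound}), propagate it through \eqref{Equ:DistProc:wki_tilde_decomposition_final} to control $\E\|\bm{\phi}_{i-1}\|^4$ so that Assumption \ref{Assumption:GradientNoise4thOrderMoment} becomes usable, and only then run the refined recursions above; moreover, the Young-inequality weights must be tuned to $\Theta(\mu_{\max})$ so that the $\bm{z}_{i-1}$ perturbation degrades the fourth-power contraction factor by at most $O(\mu_{\max}^2)$, thereby preserving the $\Theta(\mu_{\max})$ gap that divides the $O(\mu_{\max}^3)$ forcing down to $O(\mu_{\max}^2)$. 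These details are carried out in Appendices \ref{Appendix_SketchProof_4thOrderMoment}--\ref{Appendix:Proof_UsefulBound}.
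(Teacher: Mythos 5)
Your proposal follows essentially the same route as the paper's proof in Appendices \ref{Appendix_SketchProof_4thOrderMoment}--\ref{Appendix:Proof_UsefulBound}: a coupled pair of fourth-order moment inequality recursions for $\E\|\check{\w}_{c,i}\|^4$ and $\E\|\w_{e,i}\|^4$ obtained by squaring the energy relations, cancelling the leading cross term via $\E\{\bv_i\,|\,\mF_{i-1}\}=0$, invoking Assumption \ref{Assumption:GradientNoise4thOrderMoment} for the conditional fourth moment of the noise, and playing the $\Theta(\mu_{\max})$ contraction gap of $T_c$ against an $O(\mu_{\max}^3)$ forcing (dominated by the mixed term $O(\mu_{\max}^2)\|\bm{g}_i\|^2\,\E\{\|(p^T\otimes I_M)\bv_i\|^2|\mF_{i-1}\}$) for the centroid component, and the $\Theta(1)$ gap of the Jordan block against an $O(\mu_{\max}^4)$ forcing for $\w_{e,i}$. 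The paper organizes this as a single block-triangular recursion in the stacked second- and fourth-order moment vectors (via the operator $\Pm[\cdot]$) and inverts $I-F_4$, whereas you decouple sequentially through a coarse a priori bound; the only imprecision is your claim that the $\w_{e}$-recursion closes on itself --- its forcing in fact involves $\|\check{\w}_{c,i-1}\|^4$ through $\Pm[s(\one\otimes\w_{c,i-1})]$ and $\Pm[\bv_i]$ --- but since those terms enter with a $\mu_{\max}^4$ factor, your coarse fourth-moment boundedness estimate renders them $O(\mu_{\max}^4)$ and the conclusion is unaffected.
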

	\begin{proof}
		See Appendix \ref{Appendix_SketchProof_4thOrderMoment}.
	\end{proof}

Therefore, taking $\limsup$ of both sides of inequality recursion \eqref{Equ:SteadyState:Recursion_wcwa_interm2}, we obtain
	\begin{align}
		\limsup_{i\rightarrow \infty} &\E\|\check{\w}_{c,i} \!-\! \check{\w}_{a,i}\|^2
						\nn\\
				&\le 	\gamma_c 
						\cdot
						\limsup_{i \rightarrow \infty}
						\E\|\check{\w}_{c,i-1} \!-\! \check{\w}_{a,i-1}\|^2
						+
						O(\mu_{\max}^3)
						\nn\\
				&=
					\gamma_c 
						\cdot
						\limsup_{i \rightarrow \infty}
						\E\|\check{\w}_{c,i} \!-\! \check{\w}_{a,i}\|^2
						+
						O(\mu_{\max}^3)
		\label{Equ:SteadyState:Recursion_wcwa_final}
	\end{align}
As long as $\gamma_c<1$, which is guaranteed by the stability condition \eqref{P1-Equ:Thm_NonAsympBound:StepSize} from Part I \cite{chen2013learningPart1}, inequality \eqref{Equ:SteadyState:Recursion_wcwa_final} leads to
	\begin{align}
		&\limsup_{i \rightarrow \infty} \E\|\check{\w}_{c,i} - \check{\w}_{a,i}\|^2
						\nn\\
				&\le 	
						\frac{1}{1-\gamma_c} 
						\cdot
						O(\mu_{\max}^3)
						\nn\\
				&=
						\frac{1}{\mu_{\max} - \frac{1}{2} \mu_{\max}^2\|p\|_1^2 \lambda_U^2}
						\cdot
						O(\mu_{\max}^3)
				=		O(\mu_{\max}^2)
		\label{Equ:SteadyState:limsup_wawc_mu2}
	\end{align}
Based on \eqref{Equ:SteadyState:limsup_wawc_mu2}, we can now show that the steady-state covariance matrix of $\check{\w}_{c,i}$ is equal to the covariance matrix of $\check{\w}_{a,i}$ plus a high order perturbation term. First, we have
	\begin{align}
		\check{\Pi}_{c,i}	
				&=		\E[ \check{\w}_{c,i} \check{\w}_{c,i}^T ]
						\nn\\
				&=		\E[ (\check{\w}_{a,i} + \check{\w}_{c,i} - \check{\w}_{a,i})
							(\check{\w}_{a,i} + \check{\w}_{c,i} - \check{\w}_{a,i})^T ]
						\nn\\
				&=		\E[ \check{\w}_{a,i} \check{\w}_{a,i}^T ]
						+
						\E[ \check{\w}_{a,i} (\check{\w}_{c,i}-\check{\w}_{a,i})^T ]
						\nn\\
						&\quad
						+\!
						\E[ (\check{\w}_{c,i} \!-\! \check{\w}_{a,i}) \check{\w}_{a,i}^T ]
						\!+\!
						\E[ (\check{\w}_{c,i} \!-\! \check{\w}_{a,i})
							(\check{\w}_{c,i} \!-\! \check{\w}_{a,i})^T ]
						\nn\\
				&=		\check{\Pi}_{a,i}
						+
						\E[ \check{\w}_{c,i} (\check{\w}_{c,i}-\check{\w}_{a,i})^T ]
						+
						\E[ (\check{\w}_{c,i}-\check{\w}_{a,i}) \check{\w}_{c,i}^T ]
						\nn\\
						&\quad
						-
						\E[ (\check{\w}_{c,i}-\check{\w}_{a,i})
							(\check{\w}_{c,i}-\check{\w}_{a,i})^T ]
		\label{Equ:SteadyState:Pici_check_Approx_interm1}
	\end{align}
The second to the fourth terms in \eqref{Equ:SteadyState:Pici_check_Approx_interm1} are asymptotically high order terms of $\mu_{\max}$. Indeed, for the second term, we have as $i\rightarrow\infty$:
	\begin{align}
		\limsup_{i \rightarrow \infty}&
		\left\|
			\E[ \check{\w}_{c,i} (\check{\w}_{c,i}-\check{\w}_{a,i})^T ]
		\right\|
						\nn\\
				&\le 	
						\limsup_{i \rightarrow \infty}
						\E\left\|\check{\w}_{c,i} (\check{\w}_{c,i}-\check{\w}_{a,i})^T \right\|
						\nn\\
				&\le 	
						\limsup_{i \rightarrow \infty}
						\E[ \|\check{\w}_{c,i}\| \cdot \|\check{\w}_{c,i}-\check{\w}_{a,i}\| ]
						\nn\\
				&\le 	
						\limsup_{i \rightarrow \infty}
						\sqrt{
							\E\|\check{\w}_{c,i}\|^2 
							\cdot 
							\E\|\check{\w}_{c,i}-\check{\w}_{a,i}\|^2
						}
						\nn\\
				&\le		O(\mu_{\max}^{3/2})
		\label{Equ:SteadyState:Pici_check_2ndPerturb}
	\end{align}
Likewise, the third term in \eqref{Equ:SteadyState:Pici_check_Approx_interm1} is asymptotically $O(\mu_{\max}^{3/2})$. For the fourth term in \eqref{Equ:SteadyState:Pici_check_Approx_interm1}, we have as $i\rightarrow\infty$:
	\begin{align}
		\limsup_{i \rightarrow \infty}&
		\left\|
			\E[ (\check{\w}_{c,i}-\check{\w}_{a,i})
							(\check{\w}_{c,i}-\check{\w}_{a,i})^T ]
		\right\|
						\nn\\
				&\overset{(a)}{\le}
						\limsup_{i \rightarrow \infty}
						\E \left\|
							(\check{\w}_{c,i}-\check{\w}_{a,i})
							(\check{\w}_{c,i}-\check{\w}_{a,i})^T 
						\right\|
						\nn\\
				&\overset{(b)}{\le}
						\limsup_{i \rightarrow \infty}
						\E \left\|
							\check{\w}_{c,i}-\check{\w}_{a,i}
						\right\|^2
						\nn\\
				&\overset{(c)}{\le}
						O(\mu_{\max}^2)
		\label{Equ:SteadyState:Pici_check_4thPerturb}
	\end{align}
where step (a) applies Jensen's inequality to the convex function $\|\cdot\|$, step (b) uses the relation $\| x y^T \| \le \|x\|\cdot \|y\|$, and step (c) uses \eqref{Equ:SteadyState:limsup_wawc_mu2}.
Substituting \eqref{Equ:SteadyState:Pici_check_2ndPerturb}--\eqref{Equ:SteadyState:Pici_check_4thPerturb} into \eqref{Equ:SteadyState:Pici_check_Approx_interm1}, we get, 
	\begin{align}
		\limsup_{i \rightarrow \infty }
		\left\|
			\check{\Pi}_{c,i}	 - \check{\Pi}_{a,i}
		\right\|
				&\le 
						O(\mu_{\max}^{3/2})
		\label{Equ:SteadyState:Pici_Pai_gap_limsupBound}
	\end{align}
Combining \eqref{Equ:SteadyState:Pici_Pai_gap_limsupBound} with \eqref{Equ:SteadyState:limsup_Pii_Picicheck} we therefore find that
	\begin{align}
		&\limsup_{i \rightarrow \infty} \| \Pi_i - \one \one^T \otimes \check{\Pi}_{a,i} \|
					\nn\\
				&=
					\limsup_{i \rightarrow \infty} 
					\| 
						\Pi_i 
						- 
						\one \one^T \otimes \check{\Pi}_{c,i}
						+
						\one \one^T \otimes 
						(\check{\Pi}_{c,i} - \check{\Pi}_{a,i})
					\|	
					\nn\\
				&\le
					\limsup_{i \rightarrow \infty} 
					\| 
						\Pi_i 
						\!-\!
						\one \one^T \otimes \check{\Pi}_{c,i}
					\|
						\!+\!
					\limsup_{i \rightarrow \infty}
					\|
						\one \one^T \otimes
						(\check{\Pi}_{c,i} \!-\! \check{\Pi}_{a,i})
					\|	
					\nn\\
				&\overset{(a)}{=}
					\limsup_{i \rightarrow \infty} 
					\| 
						\Pi_i 
						\!-\!
						\one \one^T \otimes \check{\Pi}_{c,i}
					\|
						\!+\!
					\limsup_{i \rightarrow \infty}
					\|\one \one^T\| 
					\!\cdot\!
					\|
						\check{\Pi}_{c,i} \!-\! \check{\Pi}_{a,i}
					\|	
					\nn\\
				&\le 
					O(\mu_{\max}^{3/2})
		\label{Equ:SteadyState:limsup_Pii_Piaicheck}
	\end{align}
where step (a) uses the fact that the $2-$induced matrix norm is the largest singular value and that the singular values of $X \otimes Y$ are equal to the products of the respective singular values of $X$ and $Y$.

\subsection{Evaluation of $\check{\Pi}_{a,\infty}$}
\label{Appendix:Proof_Thm_SteadyStatePerf:PiaExpr}
We now proceed to evaluate $\check{\Pi}_{a,i}$ from recursion \eqref{Equ:SteadyState:wacheck_recursion}:
	\begin{align}
		\check{\Pi}_{a,i}	&=	B_c \check{\Pi}_{a,i-1} B_c^T
								+
								\mu_{\max}^2 
								(p^T\otimes I_M)\E\mc{R}_{v,i}(\bm{\phi}_{i-1}) (p \otimes I_M)
								\nonumber\\
							&=	B_c \check{\Pi}_{a,i-1} B_c^T
								\!+\!
								\mu_{\max}^2 
								(p^T\otimes I_M)\E\mc{R}_{v,i}(\one \!\otimes\! w^o) (p \!\otimes\! I_M)
								\nonumber\\
								&\quad+\!
								\mu_{\max}^2 (p^T \!\otimes\! I_M)
								\E\big[
									\mc{R}_{v,i}(\bm{\phi}_{i-1}) 
									\!-\!
									\mc{R}_{v,i}(\one \!\otimes\! w^o) 
								\big]
								(p\! \otimes \! I_M \!)
		\label{Equ:Pi_a_LyapunotvDiscreteTime_interm1}
	\end{align}
We will verify that the last perturbation term in \eqref{Equ:Pi_a_LyapunotvDiscreteTime_interm1}
is also a high-order term in $\mu_{\max}$.  First note that 
	\begin{align}
		\big\| &\mu_{\max}^2 (p^T\otimes I_M)
								\E\big[
									\mc{R}_{v,i}(\bm{\phi}_{i-1}) 
									-
									\mc{R}_{v,i}(\one \otimes w^o) 
								\big]
								(p \otimes I_M)
		\big\|
								\nn\\
				&\le 				\mu_{\max}^2 
								\cdot
								\| p \|^2
								\cdot
									\E\big\|
										\mc{R}_{v,i}(\bm{\phi}_{i-1}) 
										-
										\mc{R}_{v,i}(\one \otimes w^o) 
									\big\|\!\!
		\label{Equ:SteadyState:NoiseCovPerturb_interm1}
	\end{align}
Next, we bound the rightmost term inside the expectation of \eqref{Equ:SteadyState:NoiseCovPerturb_interm1}. We also need to bound it in two separate cases before arriving at a universal bound:
	\begin{enumerate}
		\item
		{\bf {Case 1:}}  $\|\tilde{\bm{\phi}}_{i-1}\| \le r_V$ \\
		By \eqref{Equ:Assumption:RvLipschitz} in Assumption \ref{Assumption:Rv}, we have
			\begin{align}
				\big\|&
					\mc{R}_{v,i}(\bm{\phi}_{i-1}) 
					-
					\mc{R}_{v,i}(\one \otimes w^o) 
				\big\|	
								\nn\\			
					&\le 			\lambda_v \cdot
								\|
									\bm{\phi}_{i-1}
									-
									\one \otimes w^o
								\|^\kappa
					=			\lambda_v \cdot
								\| \tilde{\bm{\phi}}_{i-1} \|^\kappa
				\label{Equ:SteadyState:NoiseCovPerturb_Case1_final}
			\end{align}
		
		\item
		{\bf {Case 2:}} $\|\tilde{\bm{\phi}}_{i-1}\| > r_V$ \\
		In this case, we have
			\begin{align}
				\big\|&
					\mR_{v,i}(\bm{\phi}_{i-1})
					-
					\mR_{v,i}(\one \otimes w^o)
				\big\|
							\nn\\
					&\le 		\| \mR_{v,i}(\bm{\phi}_{i-1}) \|
							+
							\| \mR_{v,i}(\one \otimes w^o) \|
				\label{Equ:Appendix:Rvi_diff_bound_interm1}
			\end{align}
		To proceed, we first bound $\| \mR_{v,i}(\w)\|$ as follows, where $\w \defeq \col\{ \w_1, \ldots, \w_N \}$. From the definition of $\mR_{v,i}(\w)$ in \eqref{Equ:Assumption:Rv:Rvi_def}, we have
			\begin{align}
				\| \mR_{v,i}(\w) \|
					&\overset{(a)}{\le}
					 			\Tr[ \mR_{v,i}(\w) ]
								\nn\\
					&=			\Tr\big[
									\E\{
										\bv_i(\w) \bv_i^T(\w)
										|
										\mF_{i-1}
									\}
								\big]
								\nn\\
					&=			\E\{
									\Tr[\bv_i(\w) \bv_i^T(\w)]
									|
									\mF_{i-1}
								\}
								\nn\\
					&=			\E\{
									\|\bv_i(\w)\|^2
									|
									\mF_{i-1}
								\}
								\nn\\
					&\overset{(b)}{=}			
								\sum_{k=1}^N
								\E\{
									\|\bv_{k,i}(\w_k)\|^2
									|
									\mF_{i-1}
								\}
								\nn\\
					&\overset{(c)}{\le}
								\sum_{k=1}^N
								\{
									\alpha 
									\cdot
									\| \w_k\|^2
									+
									\sigma_v^2
									|
									\mF_{i-1}
								\}
								\nn\\
					&=			
								\sum_{k=1}^N
								\{
									\alpha 
									\cdot
									\| \w_k - w^o + w^o \|^2
									+
									\sigma_v^2
									|
									\mF_{i-1}
								\}
								\nn\\
					&\le		
								\sum_{k=1}^N
								\{
									2\alpha \| \w_k \!-\! w^o \|^2 
									\!+ \!
									2\alpha \| w^o \|^2
									\!+\!
									\sigma_v^2
									|
									\mF_{i-1}
								\}
								\nn\\
					&=
								2\alpha
								\cdot
								\| \w \!-\! \one \otimes w^o \|^2
								\!+\!
								2\alpha N
								\| w^o \|^2
								\!+\!
								N \sigma_v^2	
				\label{Equ:SteadyState:NoiseCovPerturb_interm2}						
			\end{align}
		where in step (a) we used $\|X\| \le \Tr(X)$ for any symmetric positive semi-definite matrix $X$,
		in step (b) we used the definition of $\bv_i(\w)$ in \eqref{Equ:Assumption:Rv:vi_def}, and
		in step (c) we used \eqref{Equ:Assumption:Randomness:RelAbsNoise}. Using
		\eqref{Equ:SteadyState:NoiseCovPerturb_interm2} with $\w=\bm{\phi}_{i-1}$ 
		and $\w=\one\otimes w^o$, respectively, for the two terms on the right-hand side of
		\eqref{Equ:Appendix:Rvi_diff_bound_interm1}, we get
			\begin{align}
				\big\|&
					\mR_{v,i}(\bm{\phi}_{i-1})
					-
					\mR_{v,i}(\one \otimes w^o)
				\big\|
							\nn\\
					&\le 		
							2\alpha
							\cdot
							\| \tilde{\bm{\phi}}_{i-1} \|^2
							+
							4\alpha N
							\| w^o \|^2
							+
							2N \sigma_v^2	
							\nn\\
					&\overset{(a)}{\le}
							2\alpha
							\cdot
							\| \tilde{\bm{\phi}}_{i-1} \|^2
							+
							\big(
								4\alpha N
								\| w^o \|^2
								+
								2N \sigma_v^2
							\big)
							\cdot
							\frac{\| \tilde{\bm{\phi}}_{i-1} \|^2 }{r_V^2}
							\nn\\
					&=		\big(
								2\alpha						
								+
								\frac{
									4\alpha N
									\| w^o \|^2
									+
									2N \sigma_v^2
								}{r_V^2}
							\big)	
							\cdot
							\| \tilde{\bm{\phi}}_{i-1} \|^2						
				\label{Equ:SteadyState:NoiseCovPerturb_Case2_final}
			\end{align}
		where in step (a) we used the fact that $\|\tilde{\bm{\phi}}_{i-1}\| > r_V$ in the current case.
	\end{enumerate}
In summary, from \eqref{Equ:SteadyState:NoiseCovPerturb_Case1_final} and \eqref{Equ:SteadyState:NoiseCovPerturb_Case2_final}, we obtain the following bound that holds in general:
	\begin{align}
		\big\|&
			\mR_{v,i}(\bm{\phi}_{i-1})
			-
			\mR_{v,i}(\one \otimes w^o)
		\big\|
								\nn\\
				&\le 
								\max \left\{ \!
									\lambda_v \!\cdot\! \|\tilde{\bm{\phi}}_{i-1} \|^\kappa,
									\left(
										2\alpha						
										\!+\!
										\frac{
											4\alpha N
											\| w^o \|^2
											\!+\!
											2N \sigma_v^2
										}{r_V^2}
									\right)	
									\!\cdot\!
									\| \tilde{\bm{\phi}}_{i-1} \|^2
									\!
								\right\}
								\nn\\
				&\le				\lambda_{VU}
								\cdot
								\max \left\{
									\| \tilde{\bm{\phi}}_{i-1} \|^2,
									\| \tilde{\bm{\phi}}_{i-1} \|^\kappa
								\right\}
								\nn\\
				&\le 
								\lambda_{VU}
								\cdot
								\left\{
									\| \tilde{\bm{\phi}}_{i-1} \|^2
									+
									\| \tilde{\bm{\phi}}_{i-1} \|^\kappa
								\right\}
		\label{Equ:SteadyState:NoiseCovPerturb_universal_final}
	\end{align}
where 
	\begin{align}
		\lambda_{VU}		\defeq	\max\big\{
									\lambda_v,
									\;
									2\alpha								
									+
									\frac{
										4\alpha N
										\| w^o \|^2
										+
										2N \sigma_v^2
									}{r_V^2}
								\big\}
								\nn
	\end{align}
Substituting \eqref{Equ:SteadyState:NoiseCovPerturb_universal_final} into \eqref{Equ:SteadyState:NoiseCovPerturb_interm1}, we arrive at
	\begin{align}
		&\limsup_{i \rightarrow \infty}
		\Big\| \mu_{\max}^2  (p^T\otimes I_M)
								\E\big[
									\mc{R}_{v,i}(\bm{\phi}_{i-1}) 
									\nn\\
									&\quad
									-
									\mc{R}_{v,i}(\one \otimes w^o) 
								\big]
								(p \otimes I_M)
		\Big\|
								\nn\\
				&\le				
								\limsup_{i \rightarrow \infty}
								\mu_{\max}^2 
								\cdot
								\| p \|^2
								\cdot
								\lambda_{VU}
								\cdot
								\left[
									\E \| \tilde{\bm{\phi}}_{i-1} \|^2
									+
									\E \| \tilde{\bm{\phi}}_{i-1} \|^\kappa
								\right]
								\nn\\
				&\overset{(a)}{=}
								\limsup_{i \rightarrow \infty}
								\mu_{\max}^2 
								\cdot
								\| p \|^2
								\cdot
								\lambda_{VU}
								\cdot
								\big[
									\E \| \mA_1^T \tilde{\w}_{i-1} \|^2 
									\nn\\
									&\quad
									+ 
									\E \| \mA_1^T \tilde{\w}_{i-1} \|^\kappa
								\big]
								\nn\\
				&\le 				
								\limsup_{i \rightarrow \infty}
								\mu_{\max}^2 
								\cdot
								\| p \|^2
								\cdot
								\lambda_{VU}
								\cdot
								\big[
									\| \mA_1^T \|^2
									\cdot
									\E \|  \tilde{\w}_{i-1} \|^2
									\nn\\
									&\quad
									+
									\| \mA_1^T \|^\kappa
									\cdot
									\E \|  \tilde{\w}_{i-1} \|^\kappa
								\big]
								\nn\\
				&=
								\limsup_{i \rightarrow \infty}
								\mu_{\max}^2 
								\cdot
								\| p \|^2
								\cdot
								\lambda_{VU}
								\cdot
								\Big[
									\| \mA_1^T \|^2
									\cdot
									\E \|  \tilde{\w}_{i-1} \|^2
									\nn\\
									&\quad
									+
									\| \mA_1^T \|^\kappa
									\cdot
									\E \big\{  (\|  \tilde{\w}_{i-1} \|^4)^{\kappa/4} \big\}
								\Big]
								\nn\\
				&\overset{(b)}{\le}
								\limsup_{i \rightarrow \infty}
								\mu_{\max}^2 
								\cdot
								\| p \|^2
								\cdot
								\lambda_{VU}
								\cdot
								\Big[
									\| \mA_1^T \|^2
									\cdot
									\E \|  \tilde{\w}_{i-1} \|^2
									\nn\\
									&\quad
									+
									\| \mA_1^T \|^\kappa
									\cdot
									( \E \|  \tilde{\w}_{i-1} \|^4)^{\kappa/4}
								\Big]			
								\nn\\		
				&\overset{(c)}{\le}
				 				\mu_{\max}^2 \cdot [ O(\mu_{\max}) + O(\mu_{\max}^{\kappa/2}) ]	
								\nn\\
				&=				O(\mu_{\max}^3) + O(\mu_{\max}^{\kappa/2+2})
		\label{Equ:SteadyState:NoiseCovPerturb_finalfinal}
	\end{align}
where in step (a) we used the relation $\tilde{\bm{\phi}}_{i-1} = \mA_1^T \tilde{\w}_{i-1}$ from \eqref{P1-Equ:DistProc:relation_phi_w_wprime} in Part I\cite{chen2013learningPart1}, in step (b) we applied Jensen's inequality $\E(\bm{x}^{\kappa/4}) \le (\E \bm{x})^{\kappa/4}$ since $x^{\kappa/4}$ is a concave function when $0<\kappa \le 4$, and in step (c) we used the fact that the $\limsup$ of $\E\|\tilde{\bm{w}}_{i-1}\|^2$ is on the order of $O(\mu_{\max})$\footnote{This can be derived by using \eqref{Equ:DistProc:wi_tilde_decomposition_globalform_final}, \eqref{Equ:Cor:EPwcicheck_asymptotic_bound}, \eqref{Equ:Cor:EPwei_asymptotic_bound} along with the fact that $\tilde{w}_{c,i} \rightarrow 0$ (Thm. \ref{P1-Thm:ConvergenceRefRec:DeterministcCent} in Part I\cite{chen2013learningPart1}) and $\|x+y+z\|^2 \le 3(\|x\|^2+\|y\|^2+\|z\|^2)$.} and that the $\limsup$ of $\E\|\tilde{\bm{w}}_{i-1}\|^4$ is on the order of $O(\mu_{\max}^2)$\footnote{This can be derived by using \eqref{Equ:DistProc:wi_tilde_decomposition_globalform_final}, \eqref{Equ:Lemma:AsympBound_4thOrderMoments_wc}, \eqref{Equ:Lemma:AsympBound_4thOrderMoments_we} along with the fact that $\tilde{w}_{c,i} \rightarrow 0$ and $\|x+y+z\|^4 \le 27(\|x\|^4+\|y\|^4+\|z\|^4)$.}. The bound \eqref{Equ:SteadyState:NoiseCovPerturb_finalfinal} implies that recursion \eqref{Equ:Pi_a_LyapunotvDiscreteTime_interm1} is a perturbed version of the following recursion
	\begin{align}
		\check{\Pi}_{a,i}^o	&=	B_c \check{\Pi}_{a,i-1}^o B_c^T
								\!+\!
								\mu_{\max}^2 
								(p^T \!\otimes\! I_M)
								\E\mc{R}_{v,i}(\one \!\otimes\! w^o) 
								(p \!\otimes\! I_M)
		\label{Equ:Pi_a_LyapunotvDiscreteTime}
	\end{align}
We now show that the covariance matrices obtained from these two
recursions are close to each other in the sense that
	\begin{align}
		\limsup_{i \rightarrow \infty} 
		\|\check{\Pi}_{a,i}	-	\check{\Pi}_{a,i}^o\| \le O\big(\mu_{\max}^{\min(2, 1+\kappa/2)} \big)
		\label{Equ:SteadyState:Pi_ai_perturbationRelation}
	\end{align}
Subtracting \eqref{Equ:Pi_a_LyapunotvDiscreteTime} from \eqref{Equ:Pi_a_LyapunotvDiscreteTime_interm1},
we get
	\begin{align}
		\check{\Pi}_{a,i}-&\check{\Pi}_{a,i}^o	
							=	B_c (\check{\Pi}_{a,i-1} - \check{\Pi}_{a,i-1}^o) B_c^T
								\nn\\
								&
								+
								\mu_{\max}^2 (p^T \!\otimes\! I_M)
								\E\big[
									\mc{R}_{v,i}(\bm{\phi}_{i-1}) 
									\!-\!
									\mc{R}_{v,i}(\one \!\otimes\! w^o) 
								\big]
								(p\! \otimes \! I_M \!)
								\nn
	\end{align}
Taking the $2$-induced norm of both sides, we get
	\begin{align}
		\big\|&\check{\Pi}_{a,i}-\check{\Pi}_{a,i}^o	\big\|
									\nn\\
							&\le 	
									\|B_c\|^2 
									\cdot 
									\left\|
										\check{\Pi}_{a,i-1} - \check{\Pi}_{a,i-1}^o
									\right\|
									\nn\\
									&\quad
									+
									\left\|
										\mu_{\max}^2 (p^T \!\otimes\! I_M)
										\E\big[
											\mc{R}_{v,i}(\bm{\phi}_{i-1}) 
											\!-\!
											\mc{R}_{v,i}(\one \!\otimes\! w^o) 
										\big]
										(p\! \otimes \! I_M \!)
									\right\|
									\nn\\
							&\overset{(a)}{\le}
									\gamma_c^2 
									\cdot 
									\left\|
										\check{\Pi}_{a,i-1} - \check{\Pi}_{a,i-1}^o
									\right\|
									\nn\\
									&\quad
									+
									\left\|
										\mu_{\max}^2 (p^T \!\otimes\! I_M)
										\E\big[
											\mc{R}_{v,i}(\bm{\phi}_{i-1}) 
											\!-\!
											\mc{R}_{v,i}(\one \!\otimes\! w^o) 
										\big]
										(p\! \otimes \! I_M \!)
									\right\|
									\nn
	\end{align}
where in step (a) we are using \eqref{Equ:Appendix:Bc_norm_UB}. Taking $\limsup$ of both sides the above inequality, we obtain
	\begin{align}
		&\limsup_{i \rightarrow \infty} \big\|\check{\Pi}_{a,i}-\check{\Pi}_{a,i}^o\big\|
									\nn\\
							&\le
									\gamma_c^2 
									\cdot 
									\limsup_{i \rightarrow \infty}
									\left\|
										\check{\Pi}_{a,i-1} - \check{\Pi}_{a,i-1}^o
									\right\|
									\nn\\
									&\quad
									+
									\limsup_{i \rightarrow \infty}
									\big\|
										\mu_{\max}^2 (p^T \!\otimes\! I_M)
										\E\big[
											\mc{R}_{v,i}(\bm{\phi}_{i-1}) 
											-
											\mc{R}_{v,i}(\one \!\otimes\! w^o) 
										\big]
											\nn\\
											&\qquad\qquad\qquad
										\cdot (p\! \otimes \! I_M \!)
									\big\|
									\nn\\
							&\overset{(a)}{\le}
								 	\gamma_c^2 
									\cdot 
									\limsup_{i \rightarrow \infty}
									\left\|
										\check{\Pi}_{a,i-1} - \check{\Pi}_{a,i-1}^o
									\right\|
									+
									O(\mu_{\max}^3) + O(\mu_{\max}^{\kappa/2+2})
									\nn\\
							&=
									\gamma_c^2 
									\cdot 
									\limsup_{i \rightarrow \infty}
									\left\|
										\check{\Pi}_{a,i} \!-\! \check{\Pi}_{a,i}^o
									\right\|
									\!+\!
									O(\mu_{\max}^3) \!+\! O(\mu_{\max}^{\kappa/2+2})
		\label{Equ:SteadyState:Pi_ai_UB_interm0}
	\end{align}
where step (a) uses \eqref{Equ:SteadyState:NoiseCovPerturb_finalfinal}. Recalling that $\gamma_c < 1$, which is already guaranteed by choosing $\mu_{\max}$ according to the stability condition \eqref{P1-Equ:Thm_NonAsympBound:StepSize} in Part I\cite{chen2013learningPart1}, we can move the first term on the right-hand side of \eqref{Equ:SteadyState:Pi_ai_UB_interm0} to the left, divide both sides by $1-\gamma_c^2$ and get
	\begin{align}
		\limsup_{i \rightarrow \infty} \big\|\check{\Pi}_{a,i}-\check{\Pi}_{a,i}^o\big\|
							&\le 									
									\frac{O(\mu_{\max}^3) + O(\mu_{\max}^{\kappa/2+2})}
									{1-\gamma_c^2}
									\nn\\
							&\le 
									\frac{O(\mu_{\max}^3) + O(\mu_{\max}^{\kappa/2+2})}
									{1-\gamma_c}
									\nn\\
							&\overset{(a)}{=}
									\frac{O(\mu_{\max}^3) + O(\mu_{\max}^{\kappa/2+2})}
									{\mu_{\max} \lambda_L 
									- \frac{1}{2}\mu_{\max}^2 \|p\|_1 \lambda_U^2}
									\nonumber\\
							&=		
									\frac{O(\mu_{\max}^2) + O(\mu_{\max}^{\kappa/2+1})}
									{\lambda_L - \frac{1}{2}\mu_{\max} \|p\|_1 \lambda_U^2}
									\nn\\
							&=
									\frac{O\big(\mu_{\max}^{\min(2, 1+\kappa/2)}\big)}
									{\lambda_L - \frac{1}{2}\mu_{\max} \|p\|_1 \lambda_U^2}
		\label{Equ:SteadyState:Pi_ai_UB_interm1} 
	\end{align}
where in step (a) we are substituting \eqref{Equ:VarPropt:gamma_c}.

\subsection{Final expression for $\Pi_{\infty}$}
\label{Appendix:Proof_Thm_SteadyStatePerf:PiFinalExpr}
Therefore, by \eqref{Equ:SteadyState:limsup_Pii_Piaicheck} and \eqref{Equ:SteadyState:Pi_ai_perturbationRelation}, we have
	\begin{align}
		&\limsup_{i \rightarrow \infty}
		\| \Pi_i - \one\one^T \otimes \check{\Pi}_{a,i}^o \|
					\nn\\
			&=
					\limsup_{i \rightarrow \infty}
					\| 
						\Pi_i - \one\one^T \otimes \check{\Pi}_{a,i} 
						+ 
						\one\one^T
						\otimes
						(\check{\Pi}_{a,i} - \check{\Pi}_{a,i}^o) 
					\|
					\nn\\
			&\le
					\limsup_{i \rightarrow \infty}
					\| \Pi_i - \one\one^T \otimes \check{\Pi}_{a,i}  \|
					\nn\\
					&\quad
					+ 
					\limsup_{i \rightarrow \infty}
					\|
						\one\one^T
						\otimes
						(\check{\Pi}_{a,i} - \check{\Pi}_{a,i}^o) 
					\|
					\nn\\
			&=
					\limsup_{i \rightarrow \infty}
					\| \Pi_i - \one\one^T \otimes \check{\Pi}_{a,i}  \|
					\nn\\
					&\quad
					+ 
					\limsup_{i \rightarrow \infty}
					\|
						\one\one^T
					\|
					\cdot
					\|
						\check{\Pi}_{a,i} - \check{\Pi}_{a,i}^o
					\|
					\nn\\
			&\le
					O(\mu_{\max}^{3/2}) + O(\mu_{\max}^{\min(2, 1+\kappa/2)})
					\nn\\
			&=
					O(\mu_{\max}^{\min(3/2, 1+\kappa/2)})
		\label{Equ:SteadyState:limsup_Picicheck_Phiaiocheck}
	\end{align}
As $i\rightarrow\infty$, the unperturbed recursion \eqref{Equ:Pi_a_LyapunotvDiscreteTime} 
converges to a unique solution $\check{\Pi}_{a,\infty}^o$ that satisfies
the following discrete Lyapunov equation:
	\begin{align}
		\check{\Pi}_{a,\infty}^o	=	B_c \check{\Pi}_{a,\infty}^o B_c^T
									+
									\mu_{\max}^2 (p^T\otimes I_M) \cdot \mc{R}_v \cdot (p \otimes I_M)
		\label{Equ:SteadyState:Pi_c_LyapunovDiscreteTime_final}
	\end{align}	 
where we used \eqref{Equ:Assumption:Rv:R_v_limit} from Assumption \ref{Assumption:Rv}.\footnote{The almost sure convergence in \eqref{Equ:Assumption:Rv:R_v_limit} implies $\E \mc{R}_{v,i}(\one \otimes w^o) \rightarrow \mc{R}_{v}$ in \eqref{Equ:Pi_a_LyapunotvDiscreteTime} and \eqref{Equ:SteadyState:Pi_c_LyapunovDiscreteTime_final} because of the dominated convergence theorem\cite[p.44]{chung2001course}. The condition of dominated convergence theorem can be verified by showing that $\|\mc{R}_{v,i}(\one \otimes w^o)\|$ is upper bounded almost surely by a deterministic constant $2\alpha N \| w^o \|^2 + N \sigma_v^2$, which can be proved by following a similar line of argument in \eqref{Equ:SteadyState:NoiseCovPerturb_interm2} using \eqref{Equ:Assumption:Randomness:RelAbsNoise}, \eqref{Equ:Assumption:Rv:vi_def}, and \eqref{Equ:Assumption:Rv:Rvi_def}.} In other words, as $i \rightarrow \infty$, $\check{\Pi}_{a,i}^o$ converges to $\check{\Pi}_{a,\infty}^o$ so that
	\begin{align}
		&\limsup_{i \rightarrow \infty}
		\| \Pi_i - \one\one^T \otimes \check{\Pi}_{a,\infty}^o \|
					\nn\\
			&=
					\limsup_{i \rightarrow \infty}
					\| 
						\Pi_i - \one\one^T \otimes \check{\Pi}_{a,i}^o
						+ 
						\one\one^T
						\otimes
						(\check{\Pi}_{a,i}^o - \check{\Pi}_{a,\infty}^o) 
					\|
					\nn\\
			&\le
					\limsup_{i \rightarrow \infty}
					\| \Pi_i - \one\one^T \otimes \check{\Pi}_{a,i}^o  \|
					\nn\\
					&\quad
					+ 
					\limsup_{i \rightarrow \infty}
					\|
						\one\one^T
						\otimes
						(\check{\Pi}_{a,i}^o - \check{\Pi}_{a,\infty}^o) 
					\|
					\nn\\
			&\le
					O(\mu_{\max}^{\min(3/2, 1+\kappa/2)})
		\label{Equ:SteadyState:limsup_Picicheck_Phiaiocheck}
	\end{align}
Furthermore, using \eqref{Equ:SteadyState:WMSE_Tr} and \eqref{Equ:SteadyState:limsup_Picicheck_Phiaiocheck}, we also have
	\begin{align}
		&\limsup_{i \rightarrow \infty}
		\left|
			\E \|\tilde{\w}_{k,i}\|_{\Sigma}^2 
			- 
			\Tr\big\{
				(\one\one^T \otimes \check{\Pi}_{a,\infty}^o)
				(E_{kk} \otimes \Sigma)
			\big\}
		\right|
			\nn\\
			&=	
					\limsup_{i \rightarrow \infty}
					\big|
						\Tr\big\{
							\Pi_i
							(E_{kk} \otimes \Sigma)
						\big\} 
						\nn\\
						&\qquad\qquad
						- 
						\Tr\big\{
							(\one\one^T \otimes \check{\Pi}_{a,\infty}^o)
							(E_{kk} \otimes \Sigma)
						\big\}
					\big|
					\nn\\
			&=
					\limsup_{i \rightarrow \infty}
					\big|
						\Tr\big\{
							(\Pi_i - \one\one^T \otimes \check{\Pi}_{a,\infty}^o)
							(E_{kk} \otimes \Sigma)
						\big\}
					\big|
					\nn\\
			&=
					\limsup_{i \rightarrow \infty}
					\big|
						[\mathrm{vec}(\Pi_i - \one\one^T \otimes \check{\Pi}_{a,\infty}^o)]^T
						\mathrm{vec}(E_{kk} \otimes \Sigma)
					\big|
					\nn\\
			&\overset{(a)}{\le}
					\limsup_{i \rightarrow \infty}
					\big\|
						\mathrm{vec}(\Pi_i - \one\one^T \otimes \check{\Pi}_{a,\infty}^o)
					\big\|
					\cdot
					\big\|
						\mathrm{vec}(E_{kk} \otimes \Sigma)
					\big\|
					\nn\\
			&=
					\limsup_{i \rightarrow \infty}
					\big\|
						\Pi_i - \one\one^T \otimes \check{\Pi}_{a,\infty}^o
					\big\|_F
					\cdot
					\big\|
						E_{kk} \otimes \Sigma
					\big\|_F
					\nn\\
			&\overset{(b)}{\le}
					C \cdot 
					\limsup_{i \rightarrow \infty}					
					\big\|
						\Pi_i - \one\one^T \otimes \check{\Pi}_{a,\infty}^o
					\big\|
					\nn\\
			&\le
					O(\mu_{\max}^{\min(3/2, 1+\kappa/2)})
		\label{Equ:SteadyState:limsup_WMSE_Piainfocheck}
	\end{align}
where step (a) uses Cauchy-Schwartz inequality and step (b) uses the equivalence of matrix norms. The bound \eqref{Equ:SteadyState:limsup_WMSE_Piainfocheck} is useful in that it has the following implications about the $\limsup$ and $\liminf$ of the weighted mean-square-error $\E\|\tilde{\w}_{k,i}\|_{\Sigma}^2$:
	\begin{align}
		&\limsup_{i \rightarrow \infty} \E \|\tilde{\w}_{k,i}\|_{\Sigma}^2
					\nn\\
			&\overset{(a)}{=}
					\limsup_{i \rightarrow \infty}
					\Big\{
						\Tr\big\{
							(\one\one^T \otimes \check{\Pi}_{a,\infty}^o)
							(E_{kk} \otimes \Sigma)
						\big\}
						\nn\\
						&\qquad
						+
						\E \|\tilde{\w}_{k,i}\|_{\Sigma}^2 
						- 
						\Tr\big\{
							(\one\one^T \otimes \check{\Pi}_{a,\infty}^o)
							(E_{kk} \otimes \Sigma)
						\big\}
					\Big\}
					\nn\\
			&=
					\Tr\big\{
						(\one\one^T \otimes \check{\Pi}_{a,\infty}^o)
						(E_{kk} \otimes \Sigma)
					\big\}
					\nn\\
					&\quad
					+
					\limsup_{i \rightarrow \infty}
					\Big[
						\E \|\tilde{\w}_{k,i}\|_{\Sigma}^2 
						- 
						\Tr\big\{
							(\one\one^T \otimes \check{\Pi}_{a,\infty}^o)
							(E_{kk} \otimes \Sigma)
						\big\}
					\Big]
					\nn\\
			&\overset{(b)}{\le}
					\Tr\big\{
						(\one\one^T \otimes \check{\Pi}_{a,\infty}^o)
						(E_{kk} \otimes \Sigma)
					\big\}
					\nn\\
					&\quad
					+
					\limsup_{i \rightarrow \infty}
					\Big|
						\E \|\tilde{\w}_{k,i}\|_{\Sigma}^2 
						- 
						\Tr\big\{
							(\one\one^T \otimes \check{\Pi}_{a,\infty}^o)
							(E_{kk} \otimes \Sigma)
						\big\}
					\Big|
					\nn\\
			&\overset{(c)}{\le}
					\Tr\big\{
						(\one\one^T \otimes \check{\Pi}_{a,\infty}^o)
						(E_{kk} \otimes \Sigma)
					\big\}
					+
					O(\mu_{\max}^{\min(3/2, 1+\kappa/2)})
					\nn\\
			&=
					\Tr\big\{
						(\one\one^T E_{kk})
						\otimes 
						(\check{\Pi}_{a,\infty}^o \Sigma)
					\big\}
					+
					O(\mu_{\max}^{\min(3/2, 1+\kappa/2)})
					\nn\\
			&\overset{(d)}{=}
					\Tr(\one\one^T E_{kk})
					\cdot
					\Tr(\check{\Pi}_{a,\infty}^o \Sigma)
					+
					O(\mu_{\max}^{\min(3/2, 1+\kappa/2)})
					\nn\\
			&=
					\Tr(\check{\Pi}_{a,\infty}^o \Sigma)
					+
					O(\mu_{\max}^{\min(3/2, 1+\kappa/2)})
					\nn\\
			&=
					[\mathrm{vec}(\check{\Pi}_{a,\infty}^o)]^T
					\mathrm{vec}(\Sigma)
					+
					O(\mu_{\max}^{\min(3/2, 1+\kappa/2)})
		\label{Equ:SteadyState:limsup_WMSE_intermediate1}
	\end{align}
where step (a) adds and subtracts the same term, step (b) uses $x \le |x|$, step (c) uses \eqref{Equ:SteadyState:limsup_WMSE_Piainfocheck}, and step (d) uses the property $\Tr(X \otimes Y) = \Tr(X) \Tr(Y)$ for Kronecker products \cite[p.142]{laub2005matrix}. Likewise, the $\liminf$ of the weighted MSE can be derived as
	\begin{align}
		&\liminf_{i \rightarrow \infty} \E \|\tilde{\w}_{k,i}\|_{\Sigma}^2
					\nn\\
			&\overset{(a)}{=}
					\liminf_{i \rightarrow \infty}
					\Big\{
						\Tr\big\{
							(\one\one^T \otimes \check{\Pi}_{a,\infty}^o)
							(E_{kk} \otimes \Sigma)
						\big\}
						\nn\\
						&\qquad
						+
						\E \|\tilde{\w}_{k,i}\|_{\Sigma}^2 
						- 
						\Tr\big\{
							(\one\one^T \otimes \check{\Pi}_{a,\infty}^o)
							(E_{kk} \otimes \Sigma)
						\big\}
					\Big\}
					\nn\\
			&=
					\Tr\big\{
						(\one\one^T \otimes \check{\Pi}_{a,\infty}^o)
						(E_{kk} \otimes \Sigma)
					\big\}
					\nn\\
					&\quad
					+
					\liminf_{i \rightarrow \infty}
					\Big[
						\E \|\tilde{\w}_{k,i}\|_{\Sigma}^2 
						- 
						\Tr\big\{
							(\one\one^T \otimes \check{\Pi}_{a,\infty}^o)
							(E_{kk} \otimes \Sigma)
						\big\}
					\Big]
					\nn\\
			&\overset{(b)}{\ge}
					\Tr\big\{
						(\one\one^T \otimes \check{\Pi}_{a,\infty}^o)
						(E_{kk} \otimes \Sigma)
					\big\}
					\nn\\
					&\quad
					+
					\liminf_{i \rightarrow \infty}
					\Big\{
					-
					\Big|
						\E \|\tilde{\w}_{k,i}\|_{\Sigma}^2 
						- 
						\Tr\big\{
							(\one\one^T \otimes \check{\Pi}_{a,\infty}^o)
							(E_{kk} \otimes \Sigma)
						\big\}
					\Big|
					\Big\}
					\nn\\
			&\ge
					\Tr\big\{
						(\one\one^T \otimes \check{\Pi}_{a,\infty}^o)
						(E_{kk} \otimes \Sigma)
					\big\}
					\nn\\
					&\quad
					-
					\limsup_{i \rightarrow \infty}
					\Big|
						\E \|\tilde{\w}_{k,i}\|_{\Sigma}^2 
						- 
						\Tr\big\{
							(\one\one^T \otimes \check{\Pi}_{a,\infty}^o)
							(E_{kk} \otimes \Sigma)
						\big\}
					\Big|
					\nn\\
			&\overset{(c)}{\ge}
					\Tr\big\{
						(\one\one^T \otimes \check{\Pi}_{a,\infty}^o)
						(E_{kk} \otimes \Sigma)
					\big\}
					-
					O(\mu_{\max}^{\min(3/2, 1+\kappa/2)})
					\nn\\
			&=
					\Tr\big\{
						(\one\one^T E_{kk})
						\otimes 
						(\check{\Pi}_{a,\infty}^o \Sigma)
					\big\}
					-
					O(\mu_{\max}^{\min(3/2, 1+\kappa/2)})
					\nn\\
			&\overset{(d)}{=}
					\Tr(\one\one^T E_{kk})
					\cdot
					\Tr(\check{\Pi}_{a,\infty}^o \Sigma)
					-
					O(\mu_{\max}^{\min(3/2, 1+\kappa/2)})
					\nn\\
			&=
					\Tr(\check{\Pi}_{a,\infty}^o \Sigma)
					-
					O(\mu_{\max}^{\min(3/2, 1+\kappa/2)})
					\nn\\
			&=
					[\mathrm{vec}(\check{\Pi}_{a,\infty}^o)]^T
					\mathrm{vec}(\Sigma)
					-
					O(\mu_{\max}^{\min(3/2, 1+\kappa/2)})
		\label{Equ:SteadyState:liminf_WMSE_intermediate1}
	\end{align}
where step (a) adds and subtracts the same term, step (b) uses $x \ge -|x|$, step (c) uses\eqref{Equ:SteadyState:limsup_WMSE_Piainfocheck}, and step (d) uses the property $\Tr(X \otimes Y) = \Tr(X) \Tr(Y)$ for Kronecker products. Note from \eqref{Equ:SteadyState:limsup_WMSE_intermediate1} and \eqref{Equ:SteadyState:liminf_WMSE_intermediate1} that the first terms in the $\limsup$ and $\liminf$ bounds are the same, and the second terms are high-order terms of $\mu_{\max}$. Therefore, once we find the expression for $\check{\Pi}_{a,\infty}^o$, we will have a complete characterization of the steady-state MSE.

Now we proceed to derive the expression for $\check{\Pi}_{a,\infty}^o$. Vectorizing both sides of \eqref{Equ:SteadyState:Pi_c_LyapunovDiscreteTime_final}, we obtain 
	\begin{align}
		\mathrm{vec}&(\check{\Pi}_{a,\infty}^o)
								\nn\\
					&=			\mu_{\max}^2 \cdot(I_{M^2} - B_c \otimes B_c)^{-1}
								\nn\\
								&\quad
								\times
								\mathrm{vec}
								\left\{
									(p^T\otimes I_M)\cdot \mc{R}_v \cdot (p \otimes I_M)
								\right\}
								\nonumber\\
					&=			\mu_{\max} \cdot
								(
									I_M \otimes H_c \!+\! H_c \otimes I_M
									\!-\!
									\mu_{\max} H_c \otimes H_c
								)^{-1}
								\nn\\
								&\quad
								\times
								\mathrm{vec}
								\left\{
									(p^T\otimes I_M) \cdot \mc{R}_v \cdot (p \otimes I_M)
								\right\}
								\nonumber\\
					&\overset{(a)}{=}
								\mu_{\max} \cdot
								(
									I_M \otimes H_c + H_c \otimes I_M
								)^{-1}
								\nn\\
								&\quad
								\times\!\!
								\big[
									I_{M^2}
									\!-\!
									\mu_{\max} 
									(H_c \!\otimes\! H_c)
									(I_M \!\otimes\! H_c \!+\! H_c \!\otimes\! I_M)^{-1}
								\big]^{-1}
								\nonumber\\
								&\quad\times
								\mathrm{vec}
								\left\{
									(p^T\otimes I_M) \cdot \mc{R}_v \cdot (p \otimes I_M)
								\right\}
		\label{Equ:SteadyState:Pi_ci_o_interm111}
	\end{align}
where step (a) uses the fact that $(X + Y)^{-1} = X^{-1} ( I + Y X^{-1} )^{-1}$ given $X$ is invertible. Note that the existence of the inverse of $I_M \otimes H_c + H_c \otimes I_M$ is guaranteed by \eqref{Equ:Lemma:EquivCondUpdateVec:StrongMono_HessianLB} for the following reason. First, condition \eqref{Equ:Lemma:EquivCondUpdateVec:StrongMono_HessianLB} ensures that all the eigenvalues of $H_c$ have positive real parts. To see this, let $\lambda(H_c)$ and $x_0$ ($x_0 \neq 0$) denote an eigenvalue of $H_c$ and the corresponding eigenvector\footnote{Note that the matrix $H_c$ need not be symmetric and hence its eigenvalues and eigenvectors need not be real.}. Then,
	\begin{align}
		H_c x_0 = \lambda(H_c) \cdot x_0
		&\Rightarrow
		x_0^* H_c x_0 		= 	\lambda(H_c) \cdot \| x_0 \|^2
		\label{Equ:Appendix:xHcx_interm1}
								\\
		&\Rightarrow
		(x_0^* H_c x_0)^* 	= 	\lambda^*(H_c) \cdot \| x_0 \|^2
								\nn\\
		&\Rightarrow
		x_0^* H_c^* x_0 	= 	\lambda^*(H_c) \cdot \| x_0 \|^2
								\nn\\
		&\Rightarrow
		x_0^* H_c^T x_0 	= 	\lambda^*(H_c) \cdot \| x_0 \|^2
		\label{Equ:Appendix:xHcx_interm2}
	\end{align}
where $(\cdot)^{*}$ denotes the conjugate transpose operator, and the last step uses the fact that $H_c$ is real so that $H_c^* = H_c^T$. Summing \eqref{Equ:Appendix:xHcx_interm1} and \eqref{Equ:Appendix:xHcx_interm2} leads to
	\begin{align}
		&x_0^* (H_c+H_c)^T x_0 =	2\mathrm{Re}\{\lambda(H_c)\} \cdot \| x_0 \|^2
		\nn\\
		&\quad\Rightarrow\quad
		\mathrm{Re}\{\lambda(H_c)\}
							=	
								\frac{x_0^* (H_c+H_c)^T x_0}{2\|x_0\|^2}
							\ge 
								\lambda_L >0
								\nn
	\end{align}
where the last step uses \eqref{Equ:Lemma:EquivCondUpdateVec:StrongMono_HessianLB}. Furthermore, the $M^2$ eigenvalues of $I_M \otimes H_c + H_c \otimes I_M$ are $\lambda_{m_1}(H_c) + \lambda_{m_2}(H_c)$ for $m_1, m_2 = 1,\ldots,M$, where $\lambda_m(\cdot)$ denotes the $m$th eigenvalue of a matrix\cite[p.143]{laub2005matrix}. Therefore, the real parts of the eigenvalues of $I_M \otimes H_c + H_c \otimes I_M$ are $\mathrm{Re}\left\{\lambda_{m_1}(H_c)\right\} + \mathrm{Re}\left\{\lambda_{m_2}(H_c)\right\} > 0$ so that the matrix $I_M \otimes H_c + H_c \otimes I_M$ is not singular and is invertible. Observing that for any matrix $X$ where the necessary inverse holds, we have 
	\begin{align}
		&(I - \mu_{\max} X)^{-1} (I - \mu_{\max} X) = I 
		\nn\\
		\Leftrightarrow
		&(I - \mu_{\max} X)^{-1}
		- 
		\mu_{\max} 
		(I - \mu_{\max} X)^{-1} X = I
		\nonumber\\
		\Leftrightarrow
		&(I - \mu_{\max} X)^{-1}
		=
		I + 
		\mu_{\max} 
		(I - \mu_{\max} X)^{-1} X
		\nn
	\end{align}
and, hence, 
	\begin{align}
		&\big[
			I_{M^2}
			-
			\mu_{\max} 
			(H_c \otimes H_c)
			(I_M \otimes H_c + H_c \otimes I_M)^{-1}
		\big]^{-1}
		\nonumber\\
		&=
			I_{M^2}
			\!+\!
			\mu_{\max} 
			\big[
				I \!-\! \mu_{\max} (H_c \!\otimes\! H_c)
				(I_M	 \!\otimes\! H_c \!+\! H_c \!\otimes\! I_M)^{-1} 
			\big]^{-1} 
			\nn\\
			&\quad
			\times
			(H_c \otimes H_c)
			(I_M \otimes H_c + H_c \otimes I_M)^{-1}
			\nonumber\\
		&\overset{(a)}{=}
			I_{M^2} + O(\mu_{\max})
		\label{Equ:Appendix:Inverse_Asymp_interm1}
	\end{align}
where step (a) is because 
	\begin{align}
		&\lim_{ \mu_{\max} \rightarrow 0 }
		\frac{1}{\mu_{\max}}
		\nn\\
		&
		\times
		\Big\{
			\mu_{\max} 
			\big[
				I \!-\! \mu_{\max} (H_c \!\otimes\! H_c)
				(I_M	 \!\otimes\! H_c \!+\! H_c \!\otimes\! I_M)^{-1} 
			\big]^{-1} 
			\nn\\
			&\qquad
			(H_c \!\otimes\! H_c)
			(I_M \!\otimes\! H_c \!+\! H_c \!\otimes\! I_M)^{-1}
		\Big\}
								\nn\\
					&=		
								\lim_{ \mu_{\max} \rightarrow 0 }
								\big[
									I - \mu_{\max} (H_c \otimes H_c)
									(I_M	 \otimes H_c + H_c \otimes I_M)^{-1} 
								\big]^{-1} 
								\nn\\
								&\qquad\quad
								\times
								(H_c \otimes H_c)
								(I_M \otimes H_c + H_c \otimes I_M)^{-1}
								\nn\\
					&=
								(H_c \otimes H_c)
								(I_M \otimes H_c + H_c \otimes I_M)^{-1}
								\nn\\
					&=			\mathrm{constant}
								\nn
	\end{align}
Therefore, substituting \eqref{Equ:Appendix:Inverse_Asymp_interm1} into \eqref{Equ:SteadyState:Pi_ci_o_interm111} leads to
	\begin{align}
		\mathrm{vec}(\check{\Pi}_{a,\infty}^o)
					&=			\mu_{\max} \cdot
								\left[
									(
										I_M \otimes H_c + H_c \otimes I_M
									)^{-1}
									+
									O(\mu_{\max})
								\right]
								\nn\\
								&\quad
								\times
								\mathrm{vec}
								\left\{
									(p^T\otimes I_M)\mc{R}_v (p \otimes I_M)
								\right\}
								\nonumber\\
					&=			\mu_{\max} \cdot
								(
									I_M \otimes H_c + H_c \otimes I_M
								)^{-1}
								\nn\\
								&\quad
								\times
								\mathrm{vec}
								\left\{
									(p^T\otimes I_M)\mc{R}_v (p \otimes I_M)
								\right\}
								+ O(\mu_{\max}^2)
		\label{Equ:SteadyState:Pi_ci_o_final}
	\end{align}
Substituting \eqref{Equ:SteadyState:Pi_ci_o_final} into \eqref{Equ:SteadyState:limsup_WMSE_intermediate1} and\eqref{Equ:SteadyState:liminf_WMSE_intermediate1}, we obtain
	\begin{align}
		\limsup_{i\rightarrow\infty}&\; \E\|\tilde{\bw}_{k,i}\|_{\Sigma}^2
					\nonumber\\
			&\le		\mu_{\max}\cdot
					\left(
						\mathrm{vec}
						\left\{
							(p^T\otimes I_M)\cdot \mc{R}_v \cdot (p \otimes I_M)
						\right\}
					\right)^T
					\nn\\
					&\quad
					\times
					(
						I_M \otimes H_c^T \!+\! H_c^T \otimes I_M
					)^{-1}\mathrm{vec}(\Sigma) 
					\nn\\
					&\quad+ 
					O(\mu_{\max}^{2})
					+
					O\big(\mu_{\max}^{\min(3/2, 1+\kappa/2)}\big)
					\nn\\
			&=
					\mu_{\max}\cdot
					\left(
						\mathrm{vec}
						\left\{
							(p^T\otimes I_M)\cdot \mc{R}_v \cdot (p \otimes I_M)
						\right\}
					\right)^T
					\nn\\
					&\quad
					\times
					(
						I_M \otimes H_c^T \!+\! H_c^T \otimes I_M
					)^{-1}\mathrm{vec}(\Sigma) 
					\nn\\
					&\quad+
					O\big(\mu_{\max}^{\min(3/2, 1+\kappa/2)}\big)
	\label{Equ:SteadyState:limsup_WMSE_final}
					\\
		\liminf_{i\rightarrow\infty}& \;\E\|\tilde{\bw}_{k,i}\|_{\Sigma}^2
					\nonumber\\
			&\ge		\mu_{\max}\cdot
					\left(
						\mathrm{vec}
						\left\{
							(p^T\otimes I_M)\cdot \mc{R}_v \cdot (p \otimes I_M)
						\right\}
					\right)^T
					\nn\\
					&\quad
					\times
					(
						I_M \otimes H_c^T \!+\! H_c^T \otimes I_M
					)^{-1}\mathrm{vec}(\Sigma) 
					\nn\\
					&\quad+ 
					O(\mu_{\max}^{2})
					-
					O\big(\mu_{\max}^{\min(3/2, 1+\kappa/2)}\big)
					\nn\\
			&=
					\mu_{\max}\cdot
					\left(
						\mathrm{vec}
						\left\{
							(p^T\otimes I_M)\cdot \mc{R}_v \cdot (p \otimes I_M)
						\right\}
					\right)^T
					\nn\\
					&\quad
					\times
					(
						I_M \otimes H_c^T \!+\! H_c^T \otimes I_M
					)^{-1}\mathrm{vec}(\Sigma) 
					\nn\\
					&\quad-
					O\big(\mu_{\max}^{\min(3/2, 1+\kappa/2)}\big)
		\label{Equ:SteadyState:liminf_WMSE_final}
	\end{align}
Note that  the term 
$(
	I_M \otimes H_c^T \!+\! H_c^T \otimes I_M
)^{-1}\mathrm{vec}(\Sigma)$ in \eqref{Equ:SteadyState:limsup_WMSE_final} and \eqref{Equ:SteadyState:liminf_WMSE_final} is in fact the vectorized
version of the solution matrix $X$ to the Lyapunov equation 
\eqref{Equ:SteadyState:ContinuousLyapunovEqu_final}
for any given positive semi-definite weighting matrix $\Sigma$.
Using again the relation $\Tr(X Y) = \left(\mathrm{vec}(X^T)\right)^T \mathrm{vec}(Y) = \left( \left(\mathrm{vec}(X^T)\right)^T \mathrm{vec}(Y) \right)^T  = \mathrm{vec}(Y)^T \mathrm{vec}(X^T)$,
the $\limsup$ and $\liminf$ expressions \eqref{Equ:SteadyState:limsup_WMSE_final}--\eqref{Equ:SteadyState:liminf_WMSE_final} for the weighted MSE become
	\begin{align}
		\limsup_{i\rightarrow\infty} \E\|\tilde{\bw}_{k,i}\|_{\Sigma}^2
			&\le		
					\mu_{\max}\cdot
					\Tr\left\{
							X (p^T\otimes I_M)\cdot \mc{R}_v \cdot (p \otimes I_M)
					\right\}
					\nn\\
					&\quad
					+
					O\big(\mu_{\max}^{\min(3/2, 1+\kappa/2)}\big)
	\label{Equ:SteadyState:limsup_WMSE_finalfinal}
					\\
		\liminf_{i\rightarrow\infty} \E\|\tilde{\bw}_{k,i}\|_{\Sigma}^2
			&\ge		
					\mu_{\max}\cdot
					\Tr\left\{
							X (p^T\otimes I_M)\cdot \mc{R}_v \cdot (p \otimes I_M)
					\right\}
					\nn\\
					&\quad
					-
					O\big(\mu_{\max}^{\min(3/2, 1+\kappa/2)}\big)
		\label{Equ:SteadyState:liminf_WMSE_finalfinal}
	\end{align}
As a final remark, since condition \eqref{Equ:Lemma:EquivCondUpdateVec:StrongMono_HessianLB} ensures that all the eigenvalues of $H_c$ have positive real parts, i.e., the matrix $-H_c$ is asymptotically stable, the following Lyapunov equation, which is equivalent to \eqref{Equ:SteadyState:ContinuousLyapunovEqu_final}, 
	\begin{align}
		(-H_c^T) X + X (-H_c) = -\Sigma
		\nn
	\end{align}
will have a unique solution given by \eqref{Equ:SteadyState:ContinuousLyapunovEqu_Solution} \cite[pp.145-146]{laub2005matrix} and is positive semi-definite (strictly positive definite) if $\Sigma$ is symmetric and positive semi-definite (strictly positive definite) ( see \cite[p.39]{poliak1987introduction} and \cite[p.769]{kailath2000linear}).

\section{Proof of Lemma \ref{Lemma:AsymptoticBound_4thOrderMoments}}
\label{Appendix_SketchProof_4thOrderMoment}

The arguments in the previous appendix relied on results \eqref{Equ:Lemma:AsympBound_4thOrderMoments_wc} and \eqref{Equ:Lemma:AsympBound_4thOrderMoments_we} from Lemma \ref{Lemma:AsymptoticBound_4thOrderMoments}. To establish these results, we first need to introduce a fourth-order version of the energy operator we dealt with in Appendices \ref{P1-Appendix:Proof_BasicProperties} and \ref{P1-Appendix:Proof_VarianceRelations} in Part I \cite{chen2013learningPart1}, and establish some of its properties.  
	\begin{definition}[Fourth order moment operator]
		Let $x = \col\{ x_1,\ldots, x_N \}$ with sub-vectors of size $M\times 1$ each. We define $\Pm[x]$ to be an operator that maps from $\mb{R}^{MN}$ 
		to $\mb{R}^N$:
			\begin{align}
				\Pm[x]	\defeq	\col\{
									\|x_1\|^4, \|x_2\|^4, \ldots, \|x_N\|^4
								\}
								\nn
			\end{align}
		\hfill\QED
	\end{definition}
By following the same line of reasoning as the one used for the energy operator $P[\cdot]$ in Appendices \ref{P1-Appendix:Proof_BasicProperties} and \ref{P1-Appendix:Proof_VarianceRelations} in Part I \cite{chen2013learningPart1}, we can establish the following properties for $\Pm[\cdot]$.
	
\begin{lemma}[Properties of the 4th order moment operator]
	\label{Lemma:VarMomentPropt}
		The operator $\Pm[\cdot]$ satisfies the following properties:
			\begin{enumerate}
				\item {\bf (\emph{Nonnegativity}): }
					$\Pm[x] \succeq 0$
					
				\item {\bf (\emph{Scaling}): }
					$\Pm[a x] = |a|^4 \cdot \Pm[x]$
					
				\item {\bf (\emph{Convexity}): }
					Suppose $x^{(1)},\ldots, x^{(K)}$ are $N\times 1$ block vectors formed in the same manner 
					as $x$, and let $a_1, \ldots, a_K$ be non-negative real scalars that add up to one. Then,
						\begin{align}
							\Pm\big[ &a_1 x^{(1)} + \cdots + a_K x^{(K)} \big]
												\nn\\
									&\preceq		a_1 \Pm\big[ x^{(1)}\big] + \cdots + a_K \Pm\big[ x^{(K)} \big]
							\label{Equ:VarMomentPropt:Convexity}
						\end{align}
						
				\item {\bf (\emph{Super-additivity}): }
					\begin{align}
						\Pm[x+y] \preceq 8 \cdot \Pm[x] + 8 \cdot \Pm[y]
						\label{Equ:VarMomentPropt:SuperAdd}
					\end{align}

				\item {\bf (\emph{Linear transformation}): }
					\begin{align}
						\Pm[ Q x ]	&\preceq		\| \bP[Q] \|_{\infty}^3 \cdot \bP[Q] \; \Pm[x]
						\label{Equ:VarMomentPropt:Linear}
												\\
									&\preceq		\| \bP[Q] \|_{\infty}^4 \cdot \one \one^T \cdot \Pm[x]
						\label{Equ:VarMomentPropt:Linear_ub}
					\end{align}
					
				\item {\bf (\emph{Update operation}): }
					The global update vector $s(x) \defeq \col\{ s_1(x_1), \ldots, s_N(x_N)\}$ satisfies the
					following relation on $\Pm[\cdot]$:
						\begin{align}
							\Pm[ s(x) - s(y) ]	\preceq	\lambda_U^4 \cdot \Pm[x-y]
							\label{Equ:VarMomentPropt:Update}
						\end{align}
					
				\item {\bf (\emph{Centralizd operation}): }
					\begin{align}
						\Pm[T_c(x) - T_c(y)]		\preceq		\gamma_c^4 \cdot \Pm[x-y]
						\label{Equ:VarMomentPropt:Centralized}
					\end{align}
					with the same factor
						\begin{align}
							\gamma_c		\defeq		1 - \mu_{\max} \lambda_L 
														+ 
														\frac{1}{2}
														\mu_{\max}^2
														\|p\|_1^2
														\lambda_U^2
						\end{align}
					
	           		\item
	           			{\bf (\emph{Stable Kronecker Jordan operator}):}
	           			Suppose $\mc{D}_L = D_L \otimes I_M$, where $D_L$ is 
	           			the $L \times L$ Jordan matrix defined
	           			by \eqref{P1-Equ:Propert:DL_def}--%
	           			\eqref{P1-Equ:Propert:DLn_def} in Part I\cite{chen2013learningPart1}. 
	           			Then, for any
	           			$LM \times 1$ vectors $x_e$ and $y_e$, we have
	           				\begin{align}
	           					\label{Equ:VarMomentPropert:StableKronJordanOperator}
	           					\Pm[\mc{D}_L x_e + y_e]
	           						\preceq	\Gamma_{e,4} \!\cdot \!\Pm[x_e] \!+ \!
	           						\frac{8}{(1 \!-\! |d_2|)^3} \!\cdot\! \Pm[y_e]
	           				\end{align}
					where $\Gamma_{e,4}$ is the $L \times L$ matrix defined as
	           				\begin{align}
	           					\label{Equ:VarMomentPropert:Gamma_e}
	           					\Gamma_{e,4}	&\defeq	\begin{bmatrix}
			           									|d_2| & \frac{8}{(1-|d_2|)^3} & &
			           														  	\\
			           									    & \ddots & \ddots &	\\
			           									    & & \ddots & 
			           									    	\frac{8}{(1-|d_2|)^3}	\\
			           									    & & & |d_2|
			           								\end{bmatrix}
	           				\end{align}
					
			\end{enumerate}
		\hfill\QED
	\end{lemma}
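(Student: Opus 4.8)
The plan is to obtain every one of the eight properties by transcribing, almost verbatim, the proofs of the analogous properties of the second-order energy operator $P[\cdot]$ carried out in Appendices \ref{P1-Appendix:Proof_BasicProperties} and \ref{P1-Appendix:Proof_VarianceRelations} of Part I \cite{chen2013learningPart1}, with each second power replaced by a fourth power. Every claim is a blockwise inequality between $N\times 1$ vectors, so it suffices in each case to prove the corresponding scalar inequality on a single $M\times 1$ block. The only analytical input needed throughout is that $t\mapsto t^4$ is convex and nondecreasing on $[0,\infty)$; hence $\|\cdot\|^4$ is convex and, for nonnegative weights $a_m$ summing to one, $\|\sum_m a_m v_m\|^4 \le \sum_m a_m \|v_m\|^4$. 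I would record this weighted Jensen inequality first, since it drives essentially every part.

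With that in hand: properties 1 and 2 are immediate from $\|av\|^4 = |a|^4\|v\|^4\ge 0$; property 3 is the weighted Jensen inequality applied to the $k$th block $\sum_m a_m x^{(m)}_k$; property 4 follows from $\|v+w\|^4 \le (\|v\|+\|w\|)^4 = 16(\tfrac12\|v\|+\tfrac12\|w\|)^4 \le 8\|v\|^4 + 8\|w\|^4$; and property 6 is just the Lipschitz bound \eqref{Equ:Assumption:Lipschitz} raised to the fourth power on each block. For property 5 I would bound the $k$th block of $Qx$ by $\|\sum_l q_{kl}x_l\|^4 \le (\sum_l |q_{kl}|\,\|x_l\|)^4$, factor out the absolute row sum $\sigma_k\defeq\sum_l|q_{kl}|$, and apply weighted Jensen with weights $|q_{kl}|/\sigma_k$ to arrive at $\sigma_k^3\sum_l |q_{kl}|\,\|x_l\|^4$; bounding $\sigma_k\le\|\bP[Q]\|_\infty$ gives \eqref{Equ:VarMomentPropt:Linear}, and the further crude estimate $\sum_l|q_{kl}|\,\|x_l\|^4 \le \|\bP[Q]\|_\infty\,\one^T\Pm[x]$ gives \eqref{Equ:VarMomentPropt:Linear_ub}. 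For property 7 I would reuse (or re-derive) the Part I scalar fact $\|T_c(u)-T_c(v)\|^2 \le \gamma_c^2\|u-v\|^2$ for $u,v\in\mb{R}^M$ — obtained by expanding $\|(u-v)-\mu_{\max}\sum_l p_l(s_l(u)-s_l(v))\|^2$, using strong monotonicity \eqref{Equ:Assumption:StrongMonotone} and the Lipschitz bound, and then $1-2\mu_{\max}\lambda_L+\mu_{\max}^2\|p\|_1^2\lambda_U^2 \le \gamma_c^2$ with $\gamma_c$ from \eqref{Equ:VarPropt:gamma_c} — and simply square it blockwise to obtain \eqref{Equ:VarMomentPropt:Centralized}.

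The one step that needs genuine care, and which I expect to be the main obstacle, is property 8. Writing $x_e$ and $y_e$ as stacks of $M\times 1$ blocks $x_{e,j},y_{e,j}$, the bidiagonal structure of $D_L$ from \eqref{P1-Equ:Propert:DL_def}--\eqref{P1-Equ:Propert:DLn_def} (diagonal entries $d_2$ with $|d_2|<1$, super-diagonal entries of magnitude at most one) makes the $j$th block of $\mc{D}_L x_e + y_e$ equal to $d_2 x_{e,j} + b_j$, where $b_j$ combines a scaled copy of $x_{e,j+1}$ with $y_{e,j}$ for $j<L$, and is just $d_2 x_{e,L}+y_{e,L}$ for $j=L$. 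The decisive move is the two-point split with weights $|d_2|$ and $1-|d_2|$, namely $d_2 x_{e,j}+b_j = |d_2|\cdot\frac{d_2 x_{e,j}}{|d_2|} + (1-|d_2|)\cdot\frac{b_j}{1-|d_2|}$, for which convexity of $t^4$ yields
\[
\|d_2 x_{e,j} + b_j\|^4 \;\le\; |d_2|\,\|x_{e,j}\|^4 \;+\; \frac{1}{(1-|d_2|)^3}\,\|b_j\|^4 ,
\]
after which $\|b_j\|^4 \le 8\|x_{e,j+1}\|^4 + 8\|y_{e,j}\|^4$ by super-additivity \eqref{Equ:VarMomentPropt:SuperAdd}. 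Assembling these $L$ scalar inequalities into vector form produces precisely $\Gamma_{e,4}\Pm[x_e] + \frac{8}{(1-|d_2|)^3}\Pm[y_e]$ with $\Gamma_{e,4}$ as in \eqref{Equ:VarMomentPropert:Gamma_e}. The subtle point — and the only place a careless argument fails — is that the diagonal coefficient must come out exactly $|d_2|$ so that $\rho(\Gamma_{e,4})=|d_2|<1$ and $\Gamma_{e,4}$ is a contraction; this is exactly what forces the weight $|d_2|$ in the split and is responsible for the $(1-|d_2|)^{-3}$ factor on the off-diagonal and on $\Pm[y_e]$ (a uniform ``$8+8$'' split would instead give a non-contractive matrix, which would be useless downstream in the proof of Lemma \ref{Lemma:AsymptoticBound_4thOrderMoments}). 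Once property 8 is in place, the remaining manipulations are identical bookkeeping to the second-order case in Part I.
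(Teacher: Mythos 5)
Your proposal is correct and follows exactly the route the paper intends: the paper gives no explicit proof of Lemma \ref{Lemma:VarMomentPropt}, deferring instead to "the same line of reasoning" used for the second-order operator $P[\cdot]$ in Part I, and your transcription of those arguments with fourth powers — weighted Jensen for convexity, the $8\,\Pm[x]+8\,\Pm[y]$ super-additivity split, the row-sum factorization for the linear-transformation bound, squaring the second-order contraction for $T_c$, and the $|d_2|/(1-|d_2|)$ two-point split that produces the $(1-|d_2|)^{-3}$ factors in $\Gamma_{e,4}$ — is precisely that reasoning. No gaps.
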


\noindent 
To proceed, we recall the transformed recursions \eqref{P1-Equ:Lemma:ErrorDynamics:JointRec_wc_check}--\eqref{P1-Equ:Lemma:ErrorDynamics:JointRec_TF2_we} from Part I\cite{chen2013learningPart1}, namely, 
	\begin{align}
			\check{\bm{w}}_{c,i}	
							&=	T_c(\bm{w}_{c,i-1}) 
								-
								T_c(\bar{w}_{c,i-1})
								\nn\\
								&\quad
								- 
								\mu_{\max}\cdot (p^T \otimes I_M) 
								\left[
										\bm{z}_{i-1}
										+
										\bm{v}_i
								\right]
				\label{Equ:Appendix:wci_check_recursion_Part2}
							\\
				\bm{w}_{e,i}	&=	\mD_{N-1} \bm{w}_{e,i-1}
								\!-\!
								\mU_R\mA_2^T \mM
								\left[
										s(\mathds{1} \otimes \bm{w}_{c,i-1})
										\!+\!
										\bm{z}_{i-1}
										\!+\!
										\bm{v}_i
								\right]
				\label{Equ:Appendix:wei_check_recursion_Part2}
		\end{align}
If we now apply the operator $\Pm[\cdot]$ to recursions \eqref{Equ:Appendix:wci_check_recursion_Part2}--\eqref{Equ:Appendix:wei_check_recursion_Part2}, and follow arguments similar to the those employed in Appendices \ref{P1-Appendix:Proof_Lemma_W_check_prime_recursion} and \ref{P1-Appendix:Proof_BoundsPerturbation} from Part I \cite{chen2013learningPart1}, we arrive at the following result. The statement extends Lemma \ref{P1-Lemma:IneqRecur_W_check_prime} in Part I \cite{chen2013learningPart1} to  $4$th order moments. 
	\begin{lemma}[Recursion for the 4th order moments]
		\label{Lemma:Recursion4thOrderMoments}
		The fourth order moments satisfy the following inequality recursion
			\begin{align}
				\check{\mW}_{4,i}'	&\preceq	F_4 \; \check{\mW}_{4,i-1}' 
												+ 
												H_4 \check{\mW}_{i-1}' 
												+ 
												\mu_{\max}^4 
												\cdot 
												b_{v,4}
				\label{Equ:StabBound4thOrder:IneqRec_4thOrderMoment_final}
			\end{align}
		where
			\begin{align}
				\check{\mW}_{4,i}'	&\defeq		\col\big\{
													\Expt \Pm [ \check{\w}_{c,i} ], \;
													\Expt \Pm [ \w_{e,i} ]			
												\big\}
				\label{Equ:StabBound4thOrder:mW_4i_prime_def}
												\\
				\check{\mW}_{i}'	&\defeq		\col\big\{
													\Expt P[ \check{\w}_{c,i} ], \;
													\Expt P[ \w_{e,i} ]
												\big\}
												\\
				F_4					&\defeq		\begin{bmatrix}
													f_{cc}(\mu_{\max})		
														&	f_{ce}(\mu_{\max}) \cdot \one^T	
														\\
													f_{ec}(\mu_{\max}) \cdot \one	&	F_{ee}(\mu_{\max})
												\end{bmatrix}
				\label{Equ:StabBound4thOrder:F4_def}
												\\
				H_4					&\defeq		\begin{bmatrix}
													h_{cc}(\mu_{\max})	
														&	h_{ce}(\mu_{\max}) \cdot \one^T		\\
													0	&	0
												\end{bmatrix}
				\label{Equ:StabBound4thOrder:H4_def}
												\\
				b_{v,4}				&\defeq		\col\big\{
													b_{v_4, c}, \; b_{v_4, e} \cdot \one
											\big\}
			\end{align}
		where $\gamma_c$ is defined in \eqref{Equ:VarPropt:gamma_c}, and $\Gamma_{e,4}$ is
		defined in \eqref{Equ:VarMomentPropert:Gamma_e}, Moreover, the entries in \eqref{Equ:StabBound4thOrder:F4_def}--\eqref{Equ:StabBound4thOrder:H4_def} are given by:
			\begin{align}
			f_{cc}(\mu_{\max})	
								&\defeq
											\gamma_c
											+
											\mu_{\max}^4
											\cdot
											432\alpha_4 \|p\|_1^4
											\nn\\
											&\quad
											+
											\mu_{\max}^2
											\cdot
											20
											\alpha \|p\|_1^2
											\nn\\
											&\qquad
											\cdot
											\Big(
												2
												+
												\| \bP[ \mA_1^T \mU_L ] \|_{\infty}^2
												\cdot
												\frac{
													\lambda_L+\frac{1}{2}\mu_{\max} \|p\|_1^2 \lambda_U^2
												}
												{
													\lambda_L-\frac{1}{2}\mu_{\max} \|p\|_1^2 \lambda_U^2
												}
											\Big)
											\nn\\
									&=
											\gamma_c + O(\mu_{\max}^2)
			\label{Equ:StabBound4thOrder:fcc_def}
											\\
			f_{ce}(\mu_{\max})
								&\defeq
											\mu_{\max} \cdot
											\frac{
												\|p\|_1^4 \cdot \lambda_U^4
												\cdot
												\left\|
													\bP[\mA_1^T \mU_L ]
												\right\|_{\infty}^4
											}
											{
												(\lambda_L - \frac{1}{2}\mu_{\max} \|p\|_1^2 \lambda_U^2)^3
											}	
											\nn\\
											&\quad
											+
											432 \mu_{\max}^4
											\alpha_4 \|p\|_1^4
											\cdot
											\| \bP[ \mA_1^T \mU_L ] \|_{\infty}^4
											\nn\\
											&\quad
											+
											20\mu_{\max}^2
											\alpha \|p\|_1^2
											\cdot
											\| \bP[ \mA_1^T \mU_L ] \|_{\infty}^2
											\nn\\
											&\quad
											\cdot
											\Big(
												\mu_{\max} \cdot
												\frac{
														2 \|p\|_1^2 \cdot \lambda_U^2 
														\cdot
														\| \bP[ \mA_1^T \mU_L ] \|_{\infty}^2
												}
												{
													\lambda_L-\frac{1}{2}\mu_{\max} \|p\|_1^2 \lambda_U^2
												}	
												\nn\\
												&\qquad
												+
												\frac{\lambda_L + \frac{1}{2}\mu_{\max} \|p\|_1^2 \lambda_U^2}
												{\lambda_L-\frac{1}{2}\mu_{\max} \|p\|_1^2 \lambda_U^2}
											\Big)
											\nn\\
							&=
											O(\mu_{\max})
			\label{Equ:StabBound4thOrder:fce_def}
											\\
			h_{cc}(\mu_{\max})	
							&\defeq
											10\mu_{\max}^2
											\cdot
											\Big(
													4 \alpha \|p\|_1^2 
													\cdot
													\|\tilde{w}_{c,0}\|^2
													\nn\\
													&\quad
													+
													4 \alpha \|p\|_1^2
													\cdot
													\|w^o\|^2
													+
													\sigma_{v}^2
													\cdot
													\|p\|_1^2
											\Big)
											\nn\\
							&=
											O(\mu_{\max}^2)
			\label{Equ:StabBound4thOrder:hcc_def}
											\\
			h_{ce}(\mu_{\max})
							&\defeq	
											\frac{10\|p\|_1^4 \lambda_U^2 \cdot \mu_{\max}^3}
											{\lambda_L \!-\! \frac{1}{2}\mu_{\max} \|p\|_1^2 \lambda_U^2}	
											\!\cdot\!
											\big\|
												\bP[\mA_1^T \mU_L ]
											\big\|_{\infty}^2
											\nn\\
											&\quad
											\cdot\!
											\Big(\!
												4 \alpha
												\!\cdot\!
												\|\tilde{w}_{c,0}\|^2
												\!+\!
												4 \alpha
												\!\cdot\!
												\|w^o\|^2
												\!+\!
												\sigma_{v}^2
												\!\cdot\!
											\Big)
											\nn\\
							&=
											O(\mu_{\max}^3)
			\label{Equ:StabBound4thOrder:hce_Order}
											\\
			b_{v_4, c}			
							&\defeq
											2
											\|p\|_1^4
											\cdot
											\big(
												27 \alpha_4 
												\cdot 
												(
													\|\tilde{w}_{c,0}\|^4
													+
													\|w^o\|^4
												)
												+
												\sigma_{v4}^4
											\big)
											\nn\\
							&=			
											\mathrm{constant}
			\label{Equ:StabBound4thOrder:xic_def}
											\\
			F_{ee}(\mu_{\max})			
					&\defeq			\Gamma_{e,4}
									+
									\mu_{\max}^4
									\cdot
									\frac{216N \cdot (\lambda_U^4 + 216 \alpha_4)}
									{(1-|\lambda_2(A)|)^3}
									\nn\\
									&\qquad
									\times
									\| \bP[\mA_1^T \mU_L] \|_{\infty}^4
									\cdot
									\| \bP[ \mU_R \mA_2^T ] \|_{\infty}^4
									\cdot
									\one \one^T
									\nn\\
					&=
									\Gamma_{e,4} 
									+
									O(\mu_{\max}^4)
			\label{Equ:StabBound4thOrder:Fee_def}
									\\
			f_{ec}(\mu_{\max})	
						&\defeq
										\mu_{\max}^4
										\cdot
										\frac{5832N \cdot (\lambda_U^4+8\alpha_4) }
										{(1-|\lambda_2(A)|)^3}
										\| \bP[ \mU_R \mA_2^T ] \|_{\infty}^4
										\nn\\
						&=
										O(\mu_{\max}^4)
			\label{Equ:StabBound4thOrder:fec_def}
										\\
			b_{v_4, e}
						&\defeq			\frac{
											216N\cdot
											\| 
												\bP[ \mU_R \mA_2^T ] 
											\|_{\infty}^4
										}
										{
											(1-|\lambda_2(A)|)^3
										}
										\cdot
										\Big\{
											27
											\big[
												(\lambda_U^4 + \alpha_4)
												\cdot
												\|\tilde{w}_{c,0}\|^4
												\nn\\
												&\quad
												+
												\|g_4^o\|_{\infty}
												+
												\alpha_4
												\cdot
												\|w^o\|^4
											\big]
											+
											\sigma_{v4}^4
										\Big\}
										\nn\\
						&=			
										\mathrm{constant}
			\label{Equ:StabBound4thOrder:xie_def}
	\end{align}
	\end{lemma}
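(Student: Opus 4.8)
The plan is to apply the fourth-order moment operator $\Pm[\cdot]$ to the two transformed error recursions \eqref{Equ:Appendix:wci_check_recursion_Part2}--\eqref{Equ:Appendix:wei_check_recursion_Part2}, repeating step by step the second-order derivation that produced Lemma \ref{P1-Lemma:IneqRecur_W_check_prime} in Part I \cite{chen2013learningPart1} (carried out in Appendices \ref{P1-Appendix:Proof_Lemma_W_check_prime_recursion} and \ref{P1-Appendix:Proof_BoundsPerturbation} therein), but now invoking the fourth-order operator inequalities of Lemma \ref{Lemma:VarMomentPropt} in place of their second-order analogues. The output is the coupled inequality recursion \eqref{Equ:StabBound4thOrder:IneqRec_4thOrderMoment_final}; once the two rows are assembled the coefficients $f_{cc},f_{ce},h_{cc},h_{ce},b_{v_4,c}$ and $f_{ec},F_{ee},b_{v_4,e}$ are simply read off, and the claimed orders in $\mu_{\max}$ follow from tracking the prefactors.

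For the centroid row I would write $\check{\w}_{c,i}=u+v$, where $u$ collects the centralized contraction $T_c(\bm{w}_{c,i-1})-T_c(\bar{w}_{c,i-1})$ and the drift term $-\mu_{\max}(p^T\otimes I_M)\bm{z}_{i-1}$ (both $\mc{F}_{i-1}$-measurable) and $v=-\mu_{\max}(p^T\otimes I_M)\bm{v}_i$ is conditionally mean-zero. The key step is the conditional expansion $\E\{\|u+v\|^4\mid\mc{F}_{i-1}\}\le\|u\|^4+8\|u\|^2\E\{\|v\|^2\mid\mc{F}_{i-1}\}+3\E\{\|v\|^4\mid\mc{F}_{i-1}\}$, obtained by expanding the fourth power, dropping the term linear in the conditionally mean-zero $\bm{v}_i$, and bounding the remaining cross terms by Cauchy--Schwarz and AM--GM. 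Then $\|u\|^4$ is handled by convexity \eqref{Equ:VarMomentPropt:Convexity} with the $\gamma_c$-weighted split chosen so that, via the $(1-x)\le(1-\tfrac{1}{2}x)^2$ trick of \eqref{Equ:Appendix:Bc_norm_UB}, the leading coefficient collapses exactly to $\gamma_c$ (using \eqref{Equ:VarMomentPropt:Centralized} on the contraction and the fourth-order analogue of \eqref{Equ:Lemma:BoundsPerturbation:P_z} on $\bm{z}_{i-1}$, which introduces an $O(\mu_{\max})\cdot\Pm[\w_{e,i-1}]$ piece because of the normalization factor $\mu_{\max}^4/(1-\gamma_c)^3=O(\mu_{\max})$); this fixes the $\gamma_c$ part of $f_{cc}$ and $f_{ce}$. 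The term $\E\{\|v\|^4\mid\mc{F}_{i-1}\}$ carries a genuine $\mu_{\max}^4$ prefactor and, bounded by Assumption \ref{Assumption:GradientNoise4thOrderMoment} after writing $\tilde{\bm{\phi}}_{i-1}=\mA_1^T\tilde{\bw}_{i-1}$ and using the decomposition \eqref{Equ:DistProc:wi_tilde_decomposition_globalform_final}, contributes the $O(\mu_{\max}^4)$ parts of $f_{cc}$, $f_{ce}$ and the constant $\mu_{\max}^4 b_{v_4,c}$; whereas $\|u\|^2\E\{\|v\|^2\mid\mc{F}_{i-1}\}$ carries only $\mu_{\max}^2$ and, bounded by Assumption \ref{Assumption:UpdateVectorRandomness}, splits into $O(\mu_{\max}^2)\cdot\Pm[\check{\w}_{c,i-1}]$, $O(\mu_{\max}^2)\cdot P[\check{\w}_{c,i-1}]$, $\w_{e,i-1}$-cross terms and constants --- the $P[\check{\w}_{c,i-1}]$ piece is precisely $h_{cc}$ and is the reason the second-order moment vector $\check{\mW}_{i-1}'$ has to appear in \eqref{Equ:StabBound4thOrder:IneqRec_4thOrderMoment_final} through $H_4$. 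Taking total expectation gives the first row.

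For the disagreement row I would apply $\Pm[\cdot]$ to \eqref{Equ:Appendix:wei_check_recursion_Part2}, split $\w_{e,i}=\mD_{N-1}\w_{e,i-1}+y$ with $y=-\mU_R\mA_2^T\mM[\,s(\one\otimes\w_{c,i-1})+\bm{z}_{i-1}+\bm{v}_i\,]$, and use the stable Kronecker Jordan inequality \eqref{Equ:VarMomentPropert:StableKronJordanOperator} --- which, crucially, separates $\w_{e,i-1}$ from $y$ with no cross term --- to obtain $\Gamma_{e,4}\cdot\Pm[\w_{e,i-1}]+\tfrac{8}{(1-|d_2|)^3}\cdot\Pm[y]$. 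Then $\Pm[y]$ is bounded with linear transformation \eqref{Equ:VarMomentPropt:Linear_ub}, super-additivity \eqref{Equ:VarMomentPropt:SuperAdd}, the fourth-order analogue of \eqref{Equ:Lemma:BoundsPerturbaton:P_s} (bringing in $\Pm[\check{\w}_{c,i-1}]$, the constant $g_4^o\defeq\Pm[s(\one\otimes w^o)]$, and $\|\tilde{w}_{c,0}\|^4$), the fourth-order $\bm{z}_{i-1}$ bound, and Assumption \ref{Assumption:GradientNoise4thOrderMoment} once more; since $\mM$ contributes a factor of size $\mu_{\max}$, every term in $\Pm[y]$ acquires a $\mu_{\max}^4$ prefactor, which is why $f_{ec}$ and $F_{ee}-\Gamma_{e,4}$ are $O(\mu_{\max}^4)$, why $b_{v_4,e}$ is a constant (the $\mu_{\max}^4$ being factored out of $b_{v,4}$ in \eqref{Equ:StabBound4thOrder:IneqRec_4thOrderMoment_final}), and --- because \eqref{Equ:VarMomentPropert:StableKronJordanOperator}, super-additivity and linear transformation all split without producing odd-order cross terms --- why the $\w_e$ row carries no second-order moments, i.e. the second row of $H_4$ vanishes. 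Reading off the coefficients yields \eqref{Equ:StabBound4thOrder:fcc_def}--\eqref{Equ:StabBound4thOrder:xie_def}.

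No idea beyond the second-order analysis of Part I is needed; the work is the bookkeeping of constants. The two genuinely delicate points are: (i) arranging the centroid-row convex split so that the leading coefficient is \emph{exactly} $\gamma_c$ rather than merely $O(1)$, which is what later makes $F_4$ contractive; and (ii) the conditional-expectation expansion $\E\{\|u+v\|^4\mid\mc{F}_{i-1}\}\le\|u\|^4+8\|u\|^2\E\{\|v\|^2\mid\mc{F}_{i-1}\}+3\E\{\|v\|^4\mid\mc{F}_{i-1}\}$ together with the correct sorting of its $\E\{\|\bm{v}_i\|^2\mid\mc{F}_{i-1}\}$ and $\E\{\|\bm{v}_i\|^4\mid\mc{F}_{i-1}\}$ pieces into the $\Pm[\cdot]$-, $P[\cdot]$- and constant-scaled buckets --- this being the only place where the fourth-order recursion couples back to the second-order quantities and where Assumptions \ref{Assumption:UpdateVectorRandomness} and \ref{Assumption:GradientNoise4thOrderMoment} are used jointly. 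Everything else is absorbed by the operator inequalities of Lemma \ref{Lemma:VarMomentPropt}, exactly as their second-order counterparts absorbed the analogous terms in Part I.
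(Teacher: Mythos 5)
Your proposal follows essentially the same route as the paper's proof: the paper likewise squares the centroid recursion twice, drops the conditionally mean-zero linear-in-$\bm{v}_i$ terms, sorts the resulting $\|u\|^2\,\E\{\|v\|^2\mid\mc{F}_{i-1}\}$ product into $\Pm[\cdot]$-, $P[\cdot]$- and constant-scaled buckets (which is exactly where $H_4$ arises), uses the $\gamma_c$-weighted convex split and the fourth-order perturbation bounds of Lemma \ref{Lemma:BoundsPerturbation_4thMoment}, and handles the $\w_{e,i}$ row via \eqref{Equ:VarMomentPropert:StableKronJordanOperator} with the $\mu_{\max}^4$ factor coming from $\mc{M}$. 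The only discrepancy is that your expansion $\E\{\|u+v\|^4\mid\mc{F}_{i-1}\}\le\|u\|^4+8\|u\|^2\E\{\|v\|^2\mid\mc{F}_{i-1}\}+3\E\{\|v\|^4\mid\mc{F}_{i-1}\}$ yields the coefficients $(8,3)$ where the paper obtains $(10,2)$, so your constants in \eqref{Equ:StabBound4thOrder:fcc_def}--\eqref{Equ:StabBound4thOrder:xie_def} would differ slightly from those stated, though all orders in $\mu_{\max}$ are unaffected.
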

	\begin{proof}
		See Appendix \ref{Appendix:Proof_Lemma_IneqRec4thOrderMoments}.
	\end{proof}

Observe from \eqref{Equ:StabBound4thOrder:IneqRec_4thOrderMoment_final} that the recursion of the fourth order moments are coupled with the second order moments contained in $\check{\mW}_{i-1}'$. Therefore, we will augment recursion \eqref{Equ:StabBound4thOrder:IneqRec_4thOrderMoment_final} together with the following recursion for the second-order moment developed in  \eqref{P1-Equ:FirstOrderAnal:W_i_prime_ineq_Rec1} of Part I \cite{chen2013learningPart1}:
	\begin{align}
		\label{Equ:FirstOrderAnal:W_i_prime_ineq_Rec1}
		\check{\mc{W}}_i'	\preceq		\Gamma \check{\mc{W}}_{i-1}' + \mu_{\max}^2 b_v
	\end{align}
to form the following joint recursion:
	\begin{align}
		\begin{bmatrix}
			\check{\mW}_{i}'	\\
			\check{\mW}_{4,i}'
		\end{bmatrix}
							&\preceq	
										\begin{bmatrix}
											\Gamma		&		0	\\
											H_4			&		F_4
										\end{bmatrix}
										\begin{bmatrix}
											\check{\mW}_{i-1}'	\\
											\check{\mW}_{4,i-1}'
										\end{bmatrix}
										+
										\begin{bmatrix}
											\mu_{\max}^2 \cdot b_v		\\
											\mu_{\max}^4 \cdot b_{v,4}
										\end{bmatrix}
		\label{Equ:StabBound4thOrder:IneqRec_Joint4th2ndOrderMoment_final}
	\end{align}
The stability of the above recursion is guaranteed by the stability of the matrices $\Gamma$ and $F_4$, i.e., 
	\begin{align}
		\rho(\Gamma) < 1 
		\quad\mathrm{and}\quad
		\rho(F_4) < 1
		\nn
	\end{align}
The stability of $\Gamma$ has already been established in Appendix \ref{P1-Appendix:Proof_Thm_NonAsymptotiBound} of Part I \cite{chen2013learningPart1}. Now, we discuss the stability of $F_4$. Using \eqref{Equ:StabBound4thOrder:fcc_def}--\eqref{Equ:StabBound4thOrder:Fee_def} and the definition of $\gamma_c$ in \eqref{Equ:Appendix:Bc_norm_UB}, we can express $F_4$ as
	\begin{align}
		F_4				&=				\begin{bmatrix}
											\gamma_c + O(\mu_{\max}^2)
													&	O(\mu_{\max}) \cdot \one^T	\\
											O(\mu_{\max}^4)
													&	\Gamma_{e,4} + O(\mu_{\max}^4)
										\end{bmatrix}
		\label{Equ:StabBound4thOrder:F4_highOrderForm1}
										\\
						&=				\begin{bmatrix}
											1-\mu_{\max} \lambda_L
													&	O(\mu_{\max})	\\
											0		&	\Gamma_{e,4}
										\end{bmatrix}
										+
										O(\mu_{\max}^2)
		\label{Equ:StabBound4thOrder:F4_highOrderForm2}
	\end{align}
which has a similar structure to $\Gamma$ --- see expressions \eqref{P1-Equ:FirstOrderAnal:Gamma_def}--\eqref{P1-Equ:FirstOrderAnal:Gamma0_def} in Part I\cite{chen2013learningPart1}, and where in the last step we absorb the factor $\one^T$ in the $(1,2)$-th block into $O(\mu_{\max})$. Therefore, following the same line of argument from \eqref{P1-Equ:Appendix:epsilon_def} to \eqref{P1-Equ:Appendix:rhoGama_stepsizecond3} in Appendix \ref{P1-Appendix:Proof_Thm_NonAsymptotiBound} of Part I \cite{chen2013learningPart1}, we can show that $F_4$ is also stable when the step-size parameter $\mu_{\max}$ is sufficiently small. Iterating \eqref{Equ:StabBound4thOrder:IneqRec_Joint4th2ndOrderMoment_final}, we get
	\begin{align}
		\begin{bmatrix}
			\check{\mW}_{i}'	\\
			\check{\mW}_{4,i}'
		\end{bmatrix}
							&\preceq	
										\begin{bmatrix}
											\Gamma		&		0	\\
											H_4			&		F_4
										\end{bmatrix}^i
										\!
										\begin{bmatrix}
											\check{\mW}_{0}'	\\
											\check{\mW}_{4,0}'
										\end{bmatrix}
										\!+\!
										\sum_{j=0}^{i-1}
										\begin{bmatrix}
											\Gamma		&		0	\\
											H_4			&		F_4
										\end{bmatrix}^j
										\!\cdot\!
										\begin{bmatrix}
											\mu_{\max}^2 \!\cdot\! b_v		\\
											\mu_{\max}^4 \!\cdot\! b_{v,4}
										\end{bmatrix}
		\label{Equ:StabBound4thOrder:IneqRec_Joint4th2ndOrderMoment_final_expanded}
	\end{align}
When both $\Gamma$ and $F_4$ are stable, we have
	\begin{align}
		&\limsup_{i \rightarrow \infty}
		\begin{bmatrix}
			\check{\mW}_{i}'	\\
			\check{\mW}_{4,i}'
		\end{bmatrix}	
										\nn\\
							&\preceq
										\left(
											I
											-
											\begin{bmatrix}
												\Gamma		&		0	\\
												H_4			&		F_4
											\end{bmatrix}
										\right)^{-1}
										\cdot
										\begin{bmatrix}
											\mu_{\max}^2 \cdot b_v		\\
											\mu_{\max}^4 \cdot b_{v,4}
										\end{bmatrix}
										\nn\\
							&=			\begin{bmatrix}
											\mu_{\max}^2 \!\cdot\! (I \!-\! \Gamma)^{-1} b_v	\\
											(I \!-\! F_4)^{-1}
											H_4 \!\cdot \!
											\mu_{\max}^2 
											\cdot (I \!-\! \Gamma)^{-1} b_v
											\!+\!
											\mu_{\max}^4 \!\cdot\!
											(I \!-\! F_4)^{-1} b_{v,4}			
										\end{bmatrix}
										\nn
	\end{align}
which implies that, for the fourth-order moment, we get
	\begin{align}
		\limsup_{i \rightarrow \infty}
		\check{\mW}_{4,i}'
							&\preceq	(I-F_4)^{-1}
										H_4 \cdot 
										\mu_{\max}^2 
										\cdot (I-\Gamma)^{-1} b_v
										\nn\\
										&\quad
										+
										\mu_{\max}^4 \cdot
										(I-F_4)^{-1} b_{v,4}	
		\label{Equ:StabBound4thOrder:AsymptoticBound_4ndOrder}
	\end{align}
To evaluate the right-hand side of the above expression, we derive expressions for $(I-F_4)^{-1}$ and $(I-\Gamma)^{-1}$ using the following formula for inverting a $2\times 2$ block matrix\cite[p.48]{laub2005matrix},\cite[p.16]{Sayed08}:
	\begin{align}
		\begin{bmatrix}
			A	&	B	\\
			C	&	D
		\end{bmatrix}^{-1}
							&=			\begin{bmatrix}
											A^{-1} + A^{-1} B E C A^{-1} 	& 	-A^{-1} B E	\\
											-E C A^{-1}						&	E				
										\end{bmatrix}
		\label{Equ:Appendix:Inverse_2by2Matrix}
	\end{align}
where $E = (D-CA^{-1}B)^{-1}$. By \eqref{Equ:StabBound4thOrder:F4_highOrderForm2}, we have the following expression for $(I-F_4)^{-1}$:
	\begin{align}
		&(I-F_4)^{-1}					\nn\\
						&=			\left(
											I 
											\!-\!
											\begin{bmatrix}
												1 \! - \! \mu_{\max} \lambda_L
														&	O(\mu_{\max})	\\
												0		&	\Gamma_{e,4}
											\end{bmatrix}
											\!-\!
											O(\mu_{\max}^2) 
										\right)^{-1}
										\nn\\
							&=			\begin{bmatrix}
											\mu_{\max} \lambda_L - O(\mu_{\max}^2)
													&	- O(\mu_{\max}) \!-\! O(\mu_{\max}^2)
														\\
											-O(\mu_{\max}^2)		
													&I\!-\! \Gamma_{e,4} \!-\! O(\mu_{\max}^2)
										\end{bmatrix}^{-1}
										\nn\\
							&=			\begin{bmatrix}
											\mu_{\max} \lambda_L - O(\mu_{\max}^2)
													&	O(\mu_{\max})	\\
											O(\mu_{\max}^2)		
													&	I- \Gamma_{e,4} - O(\mu_{\max}^2)
										\end{bmatrix}^{-1}	
		\label{Equ:StabBound4thOrder:IF4_inverse_interm1}	
	\end{align}
Applying relation \eqref{Equ:Appendix:Inverse_2by2Matrix} to \eqref{Equ:StabBound4thOrder:IF4_inverse_interm1}, we have
	\begin{align}
		E_4		&=			\left(
								I - \Gamma_{e,4} - O(\mu_{\max}^2)
								- 
								\frac{O(\mu_{\max}^2) O(\mu_{\max})}
								{\mu_{\max} \lambda_L - O(\mu_{\max}^2)}
							\right)^{-1}
							\nn\\
				&=			\left(
								I - \Gamma_{e,4} + O(\mu_{\max}^2)
							\right)^{-1}
							\\
		(I&-F_4)^{-1}		
							\nn\\
				&=			
										\begin{bmatrix}
											\frac{1}{\mu_{\max} \lambda_L - O(\mu_{\max}^2)}
											+ 
											O(\mu_{\max})
												&	
												- \frac{O(1) \cdot E_4}
												{\lambda_L - O(\mu_{\max})}
													\\
											-\frac{E_4 \cdot O(\mu_{\max})}
											{\lambda_L - O(\mu_{\max})}
												&	E_4
										\end{bmatrix}
		\label{Equ:Appendix:IF4_inv_final}
	\end{align}
Furthermore, recall from \eqref{P1-Equ:FirstOrderAnal:Gamma_def}--\eqref{P1-Equ:FirstOrderAnal:Gamma0_def} of Part I\cite{chen2013learningPart1} for the expression of $\Gamma$:
	\begin{align}
		\Gamma		&=
							\begin{bmatrix}
								\gamma_c		&	\mu_{\max} h_c(\mu_{\max}) \cdot \mathds{1}^T
												\\
								0			&	\Gamma_e
							\end{bmatrix} + \mu_{\max}^2 \psi_0 \cdot \one \one^T
							\nn\\
					&=
							\begin{bmatrix}
								1-\mu_{\max}\lambda_L
											&	O(\mu_{\max})
												\\
								0			&	\Gamma_e
							\end{bmatrix} + O(\mu_{\max}^2)
							\nn
	\end{align}
Observing that $\Gamma$ and $F_4$ have a similar structure, we can similarly get the expression for $(I-\Gamma)^{-1}$ as
	\begin{align}
		(I-\Gamma)^{-1}
					&=						
							\begin{bmatrix}
								\frac{1}{\mu_{\max} \lambda_L - O(\mu_{\max}^2)}
								+ 
								O(\mu_{\max})
									&	
									- \frac{O(1) \cdot E_2}
									{\lambda_L - O(\mu_{\max})}
										\\
								-\frac{E_2 \cdot O(\mu_{\max})}
								{\lambda_L - O(\mu_{\max})}
									&	E_2
							\end{bmatrix}
		\label{Equ:Appendix:IGamma_inv_final}	
							\\
		E_2		&=			\left[
								I - \Gamma_{e} - O(\mu_{\max}^2)
								- 
								\frac{O(\mu_{\max}^2) O(\mu_{\max})}
								{\mu_{\max} \lambda_L - O(\mu_{\max}^2)}
							\right]^{-1}
							\nn\\
				&=			\left(
								I - \Gamma_{e} + O(\mu_{\max}^2)
							\right)^{-1}
	\end{align}
In addition, by substituting \eqref{Equ:StabBound4thOrder:hcc_def}--\eqref{Equ:StabBound4thOrder:hce_Order} into \eqref{Equ:StabBound4thOrder:H4_def}, we note that
	\begin{align}
		H_4		=			\begin{bmatrix}
								O(\mu_{\max}^2)	&	O(\mu_{\max}^3)	\\
								0				&	0
							\end{bmatrix}		 
		\label{Equ:Appendix:H4_order}
	\end{align}
Substituting \eqref{Equ:Appendix:IF4_inv_final}, \eqref{Equ:Appendix:IGamma_inv_final} and \eqref{Equ:Appendix:H4_order} into the right-hand side of \eqref{Equ:StabBound4thOrder:AsymptoticBound_4ndOrder} and using 
we obtain 
	\begin{align}
		&\limsup_{i \rightarrow \infty}
		\check{\mW}_{4,i}'
									\nn\\
							&\preceq	\begin{bmatrix}
											\frac{1}{\mu_{\max} \lambda_L - O(\mu_{\max}^2)}
											+ 
											O(\mu_{\max})
												&	
												- \frac{O(1) \cdot E_4}
												{\lambda_L - O(\mu_{\max})}
													\\
											-\frac{E_4 \cdot O(\mu_{\max})}
											{\lambda_L - O(\mu_{\max})}
												&	E_4
										\end{bmatrix}	
										\nn\\
										&\quad
										\times\!
										\begin{bmatrix}
											O(\mu_{\max}^2)	&	O(\mu_{\max}^3)	\\
											0				&	0
										\end{bmatrix} 
										\nn\\
										&\quad 
										\times\!
										\mu_{\max}^2 
										\!\cdot \!
										\begin{bmatrix}
											\frac{1}{\mu_{\max} \lambda_L \!- O(\mu_{\max}^2)}
											\!+ \!
											O(\mu_{\max})
												&	
												-\! \frac{O(1) \cdot E_2}
												{\lambda_L \!- O(\mu_{\max})}
													\\
											-\!\frac{E_2 \!\cdot\! O(\mu_{\max})}
											{\lambda_L \!- O(\mu_{\max})}
												&	E_2
										\end{bmatrix}\!\!
										\begin{bmatrix}
											b_{v,c}	\\
											b_{v,e} \!\cdot\! \one
										\end{bmatrix}
										\nn\\
										&\quad
										+\!
										\mu_{\max}^4 \!\cdot\!
										\begin{bmatrix}
											\frac{1}{\mu_{\max} \lambda_L \!- O(\mu_{\max}^2)}
											\!+ \!
											O(\mu_{\max})
												&	
												- \frac{O(1) \cdot E_4}
												{\lambda_L - O(\mu_{\max})}
													\\
											-\! \frac{E_4 \cdot O(\mu_{\max})}
											{\lambda_L \!- O(\mu_{\max})}
												&	E_4
										\end{bmatrix}\!\!
										\begin{bmatrix}
											b_{v_4,c}	\\
											b_{v_4,e} \!\cdot\! \one
										\end{bmatrix}
										\nn\\
							&=			
										\begin{bmatrix}
											O(\mu_{\max}^2)		\\
											O(\mu_{\max}^4)	
										\end{bmatrix}	
		\label{Equ:Appendix:limsup_mW_4i_prime_check_ub_final}															
	\end{align}
where the last step follows from basic matrix algebra. Recalling the definition of $\check{\mW}_{4,i}'$ in \eqref{Equ:StabBound4thOrder:mW_4i_prime_def}, we conclude \eqref{Equ:Lemma:AsympBound_4thOrderMoments_wc}--\eqref{Equ:Lemma:AsympBound_4thOrderMoments_we} from \eqref{Equ:Appendix:limsup_mW_4i_prime_check_ub_final}.

\section{Proof of Lemma \ref{Lemma:Recursion4thOrderMoments}}
\label{Appendix:Proof_Lemma_IneqRec4thOrderMoments}

\subsection{Perturbation Bounds}

Before pursuing the proof of Lemma \ref{Lemma:Recursion4thOrderMoments}, we first state a result that bounds the fourth-order moments of the perturbation terms that appear in \eqref{Equ:Appendix:wci_check_recursion_Part2}, in a manner similar to the bounds we already have for the second-order moments in \eqref{Equ:Lemma:BoundsPerturbation:P_z}--\eqref{Equ:Lemma:BoundsPerturbation:P_v}.
	\begin{lemma}[Fourth-order bounds on the perturbation terms]
		\label{Lemma:BoundsPerturbation_4thMoment}
		Referring to \eqref{Equ:Appendix:wci_check_recursion_Part2}, the following bounds hold for any $i \ge 0$.
			\begin{align}
				\label{Equ:Lemma:BoundsPerturbation4th:P_z}
				\Pm[\bm{z}_{i\!-\!1}]	
							&\preceq			
											\lambda_U^4
											\!\cdot\!
											\left\|
												\bP[\mc{A}_1^T \mc{U}_L]
											\right\|_{\infty}^4
											\!\cdot\!
											\mathds{1}\mathds{1}^T
											\!\cdot\!
											\Pm[\bm{w}_{e,i-1}]
											\\
				\label{Equ:Lemma:BoundsPerturbaton4th:P_s}
				\Pm[s(\mathds{1}\otimes& \bm{w}_{c,i\!-\!1})]
											\nn\\
								&\preceq		
											27\lambda_U^4 \!\cdot\!
											\|\check{\bm{w}}_{c,i-1}\|^4
											\!\cdot\! \mathds{1}
											\!+\!
											27\lambda_U^4 \|\tilde{w}_{c,0}\|^4
											\!\cdot\!
											\mathds{1}
											\!+\!
											27 \!\cdot \!								
											g_4^o
											\\
				\Expt \big\{ \Pm[\bm{v}_i&]	\big| \mc{F}_{i-1} \big\}
											\nn\\
							&\preceq			
											216 \alpha_4 \cdot \mathds{1}
											\cdot
											\Pm[\check{\bm{w}}_{c,i-1}]
											\nn\\
											&\quad
											+
											216
											\alpha_4
											\cdot
											\left\|
												\bP[\mc{A}_1^T\mc{U}_L]
											\right\|_{\infty}^4
											\!\cdot\!
											\mathds{1}\mathds{1}^T
											\!\cdot\!
											\Pm[\bm{w}_{e,i\!-\!1}]
											\nonumber\\		
											&\quad
											+\!
											27
											\alpha_4
											\!\cdot\!
											(
												\|\tilde{w}_{c,0}\|^4
												\!+\!
												\|w^o\|^4
											) 
											\!\cdot\!
											\one
											\!+\!
											\sigma_{v4}^4
											\!\cdot\!
											\mathds{1}		
				\label{Equ:Lemma:BoundsPerturbation4th:P_v}
			\end{align}
		where $g_4^o		\defeq	\Pm[s(\mathds{1}\otimes w^o)]$.
	\end{lemma}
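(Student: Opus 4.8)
The plan is to prove Lemma~\ref{Lemma:BoundsPerturbation_4thMoment} as the fourth-order counterpart of the second-order perturbation bounds \eqref{Equ:Lemma:BoundsPerturbation:P_z}--\eqref{Equ:Lemma:BoundsPerturbation:P_v} established in Part I, by reproducing those arguments with the energy operator $P[\cdot]$ replaced by $\Pm[\cdot]$ and the elementary inequalities $\|x+y\|^2 \le 2\|x\|^2 + 2\|y\|^2$ and $\|x_1+\cdots+x_n\|^2 \le n\sum_j \|x_j\|^2$ replaced by their fourth-power analogues $\|x+y\|^4 \le 8\|x\|^4 + 8\|y\|^4$ and $\|x_1+\cdots+x_n\|^4 \le n^3\sum_j \|x_j\|^4$. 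Two ingredients are used: (i) the algebraic identities from the network transformation in Part I, namely the global decomposition \eqref{Equ:DistProc:wi_tilde_decomposition_globalform_final} together with $\bm{\phi}_{i-1}-\one\otimes\w_{c,i-1}=\mc{A}_1^T\mc{U}_L\,\w_{e,i-1}$ (the same identity that underlies \eqref{Equ:Lemma:BoundsPerturbation:P_z}, obtained here by substituting \eqref{Equ:DistProc:wi_tilde_decomposition_globalform_final} into $\bm{\phi}_{i-1}=\mc{A}_1^T\w_{i-1}$ and using $\mc{A}_1^T(\one\otimes v)=\one\otimes v$); and (ii) the properties of $\Pm[\cdot]$ collected in Lemma~\ref{Lemma:VarMomentPropt} --- scaling, convexity \eqref{Equ:VarMomentPropt:Convexity}, super-additivity \eqref{Equ:VarMomentPropt:SuperAdd}, the linear-transformation bound \eqref{Equ:VarMomentPropt:Linear_ub}, and the update bound \eqref{Equ:VarMomentPropt:Update}.

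First, for $\Pm[\bz_{i-1}]$: since $\bz_{i-1}=s(\bm{\phi}_{i-1})-s(\one\otimes\w_{c,i-1})$, apply \eqref{Equ:VarMomentPropt:Update} to get $\Pm[\bz_{i-1}]\preceq\lambda_U^4\,\Pm[\bm{\phi}_{i-1}-\one\otimes\w_{c,i-1}]=\lambda_U^4\,\Pm[\mc{A}_1^T\mc{U}_L\,\w_{e,i-1}]$, then invoke \eqref{Equ:VarMomentPropt:Linear_ub} with $Q=\mc{A}_1^T\mc{U}_L$; this yields \eqref{Equ:Lemma:BoundsPerturbation4th:P_z}. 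Next, for $\Pm[s(\one\otimes\w_{c,i-1})]$: write $s(\one\otimes\w_{c,i-1})=[s(\one\otimes\w_{c,i-1})-s(\one\otimes\bar{w}_{c,i-1})]+[s(\one\otimes\bar{w}_{c,i-1})-s(\one\otimes w^o)]+s(\one\otimes w^o)$; use the three-term form of convexity ($\Pm[y_1+y_2+y_3]\preceq 27\sum_j\Pm[y_j]$, from \eqref{Equ:VarMomentPropt:Convexity} with scaling), apply \eqref{Equ:VarMomentPropt:Update} to the first two terms together with the identities $\w_{c,i-1}-\bar{w}_{c,i-1}=\check{\w}_{c,i-1}$ and $\bar{w}_{c,i-1}-w^o=-\tilde{w}_{c,i-1}$, and bound $\|\tilde{w}_{c,i-1}\|\le\|\tilde{w}_{c,0}\|$ as in Part I (the reference recursion is non-expansive since $T_c$ is a contraction when $\gamma_c<1$); this gives \eqref{Equ:Lemma:BoundsPerturbaton4th:P_s} with $g_4^o=\Pm[s(\one\otimes w^o)]$. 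Finally, for $\E\{\Pm[\bv_i]\,|\,\mc{F}_{i-1}\}$: since each $\bm{\phi}_{k,i-1}$ is $\mc{F}_{i-1}$-measurable, Assumption~\ref{Assumption:GradientNoise4thOrderMoment} applied blockwise gives $\E\{\Pm[\bv_i]\,|\,\mc{F}_{i-1}\}\preceq\alpha_4\,\Pm[\bm{\phi}_{i-1}]+\sigma_{v4}^4\one$; it then remains to bound $\Pm[\bm{\phi}_{i-1}]$ by substituting $\bm{\phi}_{i-1}=\one\otimes w^o-\one\otimes\tilde{w}_{c,i-1}+\big(\one\otimes\check{\w}_{c,i-1}+\mc{A}_1^T\mc{U}_L\,\w_{e,i-1}\big)$, applying the three-term convexity with the bracketed pair treated as a single term (factor $27$), then super-additivity \eqref{Equ:VarMomentPropt:SuperAdd} on that pair (extra factor $8$, hence $216$), then \eqref{Equ:VarMomentPropt:Linear_ub} on $\mc{A}_1^T\mc{U}_L\,\w_{e,i-1}$, and finally $\|\tilde{w}_{c,i-1}\|\le\|\tilde{w}_{c,0}\|$; this produces \eqref{Equ:Lemma:BoundsPerturbation4th:P_v}.

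The main obstacle is purely the bookkeeping of constants: to land exactly on the coefficients $27$ and $216$ in \eqref{Equ:Lemma:BoundsPerturbation4th:P_v}, one must peel off the $\one\otimes w^o$ and $\one\otimes\tilde{w}_{c,i-1}$ terms \emph{first} with the three-term Jensen inequality ($3^3=27$) and only \emph{afterwards} split the residual $\check{\w}_{c,i-1}$/$\w_{e,i-1}$ pair by super-additivity ($8\times 27=216$); grouping in the opposite order would inflate the coefficients multiplying $\|w^o\|^4$ and $\|\tilde{w}_{c,0}\|^4$. A secondary subtlety is that Assumption~\ref{Assumption:GradientNoise4thOrderMoment} must be invoked at the \emph{random} argument $\bm{\phi}_{k,i-1}$, which is legitimate precisely because $\bm{\phi}_{k,i-1}\in\mc{F}_{i-1}$, and the residual $s(\one\otimes w^o)$ must be carried along as the constant $g_4^o$ rather than dropped, since the individual $s_k(w^o)$ need not vanish (only the weighted sum $\sum_k p_k s_k(w^o)$ does). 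Apart from these, every estimate transcribes the corresponding second-order step in Appendix~\ref{P1-Appendix:Proof_BoundsPerturbation} of Part I with exponent $2$ replaced by exponent $4$.
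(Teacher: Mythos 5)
Your proposal is correct and follows essentially the same route as the paper's proof in Appendix~\ref{Appendix:Proof_BoundsPerturbation4th}: the update property \eqref{Equ:VarMomentPropt:Update} plus the linear-transformation bound \eqref{Equ:VarMomentPropt:Linear_ub} for $\Pm[\bz_{i-1}]$; the three-term split through $\bar{w}_{c,i-1}$ and $w^o$ with the $3^3=27$ Jensen factor and the non-expansiveness bound $\|\tilde{w}_{c,i-1}\|\le\|\tilde{w}_{c,0}\|$ for $\Pm[s(\one\otimes\w_{c,i-1})]$; and Assumption~\ref{Assumption:GradientNoise4thOrderMoment} applied blockwise at $\bm{\phi}_{i-1}\in\mF_{i-1}$ followed by exactly the same grouping (three-term convexity first, then super-additivity on the $\check{\w}_{c,i-1}$/$\w_{e,i-1}$ pair, yielding $27$ and $216$) for $\E\{\Pm[\bv_i]\,|\,\mF_{i-1}\}$. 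Your remarks about the order of the splittings and about retaining $g_4^o$ match the paper's derivation precisely.
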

	\begin{proof}
		See Appendix \ref{Appendix:Proof_BoundsPerturbation4th}.
	\end{proof}


\subsection{Recursion for the 4th order moment of $\check{\w}_{c,i}$}

\label{Appendix:RecursionEPwcicheck}

To begin with, note that by evaluating the squared Euclidean norm of both sides of \eqref{Equ:Appendix:wci_check_recursion_Part2} we obtain:
	\begin{align}
		&\|\check{\w}_{c,i}\|^2	
									\nn\\
								&=	
									\big\|
										T_c(\w_{c,i-1}) \!-\! T_c(\bar{w}_{c,i-1})
										\!-\!
										\mu_{\max}
										\!\cdot\!
										(p^T \!\otimes\! I_M)
										(\z_{i-1} \!+\! \bv_i)
									\big\|^2	
									\nn\\
								&=	\big\|
										T_c(\w_{c,i-1}) - T_c(\bar{w}_{c,i-1})
										-
										\mu_{\max}
										\cdot
										(p^T \otimes I_M)
										\z_{i-1}
									\big\|^2
									\nn\\
									&\quad
									+
									\mu_{\max}^2
									\cdot
									\big\|
										(p^T \otimes I_M) 
										\bv_i
									\big\|^2
									\nn\\
									&\quad
									-\!
									2\mu_{\max}
									\!\cdot\!
									\big[
										T_c(\w_{c,i-1}) \!-\! T_c(\bar{w}_{c,i-1})
										\!-\!
										\mu_{\max}
										\!\cdot\!
										(p^T \!\otimes \!I_M)
										\z_{i-1}
									\big]^T
									\nn\\
									&\qquad
									\cdot
									(p^T \otimes I_M)
									\bv_i
									\nn
	\end{align}	 
By further squaring both sides of the above expression, we get
	\begin{align}
		&\| \check{\w}_{c,i} \|^4
								\nn\\
					&=			\big\|
									T_c(\w_{c,i-1}) \!- \!T_c(\bar{w}_{c,i-1})
									\!-\!
									\mu_{\max}
									\!\cdot\!
									(p^T \!\otimes\! I_M)
									\z_{i-1}
								\big\|^4
								\nn\\
								&\quad+\!
								\big\{
									\mu_{\max}^2
									\!\cdot\!
									\| (p^T \!\otimes\! I_M) \bv_i \|^2
									\!-\!
									2\mu_{\max}
									\!\cdot\!
									\big[
										T_c(\w_{c,i-\!1}) 
										\!-\! 
										T_c(\bar{w}_{c,i-\!1})
										\nn\\
										&\qquad
										-\!
										\mu_{\max}
										\!\cdot\!
										(p^T \!\otimes\! I_M)
										\z_{i-\! 1}
									\big]
									(p^T \!\otimes\! I_M) \bv_i
								\big\}^2
								\nn\\
								&\quad-\!
								4\mu_{\max}
								\!\cdot\!
								\left\|
									T_c(\w_{c,i-1}) \!-\! T_c(\bar{w}_{c,i-1})
									\!-\!
									\mu_{\max}
									\!\cdot\!
									(p^T \!\otimes\! I_M)
									\z_{i-1}
								\right\|^2
								\nn\\
								&\qquad
								\cdot
								\left[
									T_c(\w_{c,i-1}) - T_c(\bar{w}_{c,i-1})
									-
									\mu_{\max}
									\cdot
									(p^T \otimes I_M)
									\z_{i-1}
								\right]^T
								\nn\\
								&\qquad
								\cdot
								(p^T \otimes I_M)\bv_i
								\nn\\
								&\quad+\!
								2\mu_{\max}^2
								\!\!\cdot\!
								\left\|
									T_c(\w_{c,i-1}) \!-\! T_c(\bar{w}_{c,i-1})
									\!-\!
									\mu_{\max}
									\!\cdot\!
									(p^T \otimes I_M)
									\z_{i-1}
								\right\|^2
								\nn\\
								&\qquad
								\cdot
								\|(p^T \otimes I_M) \bv_i\|^2
								\nn
	\end{align}
Taking the conditional expectation of both sides of the above expression given $\mF_{i-1}$ and recalling that $\Expt[\bv_i | \mF_{i-1}] = 0$ based on \eqref{Equ:Assumption:Randomness:MDS}, we get
	\begin{align}
		&\Expt[ \|\check{\w}_{c,i}\|^4 | \mc{F}_{i-1} ] 
								\nn\\
				&=				\Expt\Big\{\!
									\big\|
										T_c(\w_{c,i-1}) \!-\! T_c(\bar{w}_{c,i-1})
										\!-\!
										\mu_{\max}
										\!\cdot\!
										(p^T \!\!\otimes\! I_M)
										\z_{i-1}
									\big\|^4
								\Big| \mc{F}_{i-1}\!
								\Big\}
								\nn\\
								&\quad+\!
								\Expt\Big(
									\big\{
										\mu_{\max}^2
										\| (p^T \! \otimes \! I_M) \bv_i \|^2
										\!-\!
										2\mu_{\max}
										\big[
											T_c(\w_{c,i- 1}) 
											\!-\! 
											T_c(\bar{w}_{c,i-1})
											\nn\\
											&\qquad\quad
											\!-\!
											\mu_{\max}
											(p^T \!\otimes\! I_M)
											\z_{i-1}
										\big]
										(p^T \!\otimes\! I_M) \bv_i
									\big\}^2 
									\big|
									\mc{F}_{i-1}
								\Big)
								\nn\\
								&\quad+\!
								2\mu_{\max}^2
								\!\cdot\!
								\left\|
									T_c(\w_{c,i-1}) \!-\! T_c(\bar{w}_{c,i-1})
									\!-\!
									\mu_{\max}
									\!\cdot\!
									(p^T \!\otimes\! I_M)
									\z_{i-1}
								\right\|^2
								\nn\\
								&\qquad\quad
								\cdot
								\Expt\big[
									\|(p^T \otimes I_M) \bv_i\|^2	
									\big|
									\mc{F}_{i-1}
								\big]
								\nn\\
			&\overset{(a)}{\le}
								\big\|
									T_c(\w_{c,i-1}) - T_c(\bar{w}_{c,i-1})
									-
									\mu_{\max}
									\cdot
									(p^T \otimes I_M)
									\z_{i-1}
								\big\|^4
								\nn\\
								&\quad
								+
								2\mu_{\max}^4
								\cdot
								\Expt \| (p^T \otimes I_M) \bv_i \|^4
								\nn\\
								&\quad
								+\!
								8\mu_{\max}^2
								\!\cdot\!
								\big\|
									T_c(\w_{c,i-1}) \!-\! T_c(\bar{w}_{c,i-1})
									\!-\!
									\mu_{\max}
									\!\cdot\!
									(p^T \!\otimes\! I_M)
									\z_{i-1}
								\big\|^2
								\nn\\
								&\qquad\quad
								\cdot
								\Expt\big[
									\|(p^T \otimes I_M) \bv_i\|^2
									\big|
									\mc{F}_{i-1}
								\big]
								\nn\\
								&\quad+\!
								2\mu_{\max}^2
								\!\cdot\!
								\left\|
									T_c(\w_{c,i-1}) \!-\! T_c(\bar{w}_{c,i-1})
									\!-\!
									\mu_{\max}
									\!\cdot\!
									(p^T \!\otimes\! I_M)
									\z_{i-1}
								\right\|^2
								\nn\\
								&\qquad\quad
								\cdot
								\Expt\big[
									\|(p^T \otimes I_M) \bv_i\|^2	
									\big|
									\mc{F}_{i-1}
								\big]
								\nn\\
			&=					\big\|
									T_c(\w_{c,i-1}) - T_c(\bar{w}_{c,i-1})
									-
									\mu_{\max}
									\cdot
									(p^T \otimes I_M)
									\z_{i-1}
								\big\|^4
								\nn\\
								&\quad
								+
								2\mu_{\max}^4
								\cdot
								\Expt \big[
									\| (p^T \otimes I_M) \bv_i \|^4
									\big|
									\mc{F}_{i-1}
								\big]
								\nn\\
								&\quad
								\!+\!
								10\mu_{\max}^2
								\!\cdot\!
								\big\|
									T_c(\w_{c,i-1}) \!-\! T_c(\bar{w}_{c,i-1})
									\!-\!
									\mu_{\max}
									\!\cdot\!
									(p^T \!\otimes\! I_M)
									\z_{i-1}
								\big\|^2
								\nn\\
								&\qquad\quad
								\cdot\!
								\Expt\big[
									\|(p^T \!\otimes\! I_M) \bv_i\|^2
									\big|
									\mc{F}_{i-1}
								\big]
			\label{Equ:EP_wcicheck_bound_interm1}
	\end{align}
where step (a) uses the inequality $(x+y)^2 \le 2 x^2 + 2 y^2$. To proceed, we call upon the following bounds.
	\begin{lemma}[Useful bounds]
	\label{Lemma:UsefulBounds}
	The following bounds hold for arbitrary $i$:
		\begin{align}
			\big\|&
				T_c(\w_{c,i-1}) - T_c(\bar{w}_{c,i-1})
				-
				\mu_{\max}
				\cdot
				(p^T \otimes I_M)
				\z_{i-1}
			\big\|^4
									\nn\\
					&\le 			\gamma_c \cdot \| \check{\w}_{c,i-1} \|^4
									\nn\\
									&\quad
									+
									\frac{\mu_{\max}}
									{(\lambda_L - \frac{1}{2}\mu_{\max} \|p\|_1^2 \lambda_U^2)^3}	
									\cdot
									\|p\|_1^4 \cdot \lambda_U^4
									\cdot
									\big\|
										\bP[\mA_1^T \mU_L ]
									\big\|_{\infty}^4
									\nn\\
									&\qquad
									\cdot
									\one^T 
									\Pm[ \w_{e,i-1} ]
			\label{Equ:Lemma:UsefulBounds:Tc_z_4th}
									\\
			\Expt \big[&
				\left\|
					(p^T \otimes I_M) \bv_i
				\right\|^4
				\big|
				\mc{F}_{i-1}
			\big]					\nn\\
					&\le 			
									216\alpha_4 \|p\|_1^4
								\cdot
								\| \check{\w}_{c,i-1} \|^4
								\nn\\
								&\quad
								+
								216\alpha_4 \|p\|_1^4 
								\cdot
								\big\|
									\bP[ \mA_1^T \mU_L ]
								\big\|_{\infty}^4
								\cdot
								\one^T
								\cdot
								\Pm[ \w_{e,i-1} ]
								\nn\\
								&\quad+
								27\alpha_4\|p\|_1^4
								\cdot
								\|\tilde{w}_{c,0}\|^4
								+
								27\alpha_4 \cdot
								\|p\|_1^4
								\cdot
								\|w^o\|^4
								\nn\\
								&\quad
								+
								\sigma_{v4}^4
								\cdot
								\|p\|_1^4
			\label{Equ:Lemma:UsefulBounds:vi_4th}
									\\
			\big\|&
				T_c(\w_{c,i-1}) - T_c(\bar{w}_{c,i-1})
				-
				\mu_{\max}
				\cdot
				(p^T \otimes I_M)
				\z_{i-1}
			\big\|^2
									\nn\\
					&\le 			\gamma_c \cdot \| \check{\w}_{c,i-1} \|^2
									\nn\\
									&\quad
									+
									\frac{\mu_{\max}}
									{\lambda_L - \frac{1}{2}\mu_{\max} \|p\|_1^2 \lambda_U^2}	
									\cdot
									\|p\|_1^2 \cdot \lambda_U^2
									\cdot
									\big\|
										\bP[\mA_1^T \mU_L ]
									\big\|_{\infty}^2
									\nn\\
									&\qquad
									\cdot
									\one^T 
									P[ \w_{e,i-1} ]
			\label{Equ:Lemma:UsefulBounds:Tc_z_2nd}
									\\
			\Expt \big[&
				\left\|
					(p^T \otimes I_M) \bv_i
				\right\|^2
				\big|
				\mc{F}_{i-1}
			\big]					\nn\\
					&\le 			4 \alpha \|p\|_1^2
								\cdot
								P[ \check{\w}_{c,i-1} ]
								\nn\\
								&\quad
								+
								4\alpha
								\cdot
								\| \bP[ \mA_1^T \mU_L ] \|_{\infty}^2
								\cdot
								\|p\|_1^2
								\cdot
								\one^T
								P[\w_{e,i-1} ]
								\nn\\
								&\quad
								+\!
								4 \alpha \|\tilde{w}_{c,0}\|^2
								\!\cdot\!
								\|p\|_1^2
								\!+\!
								4\alpha \|p\|_1^2 \!\cdot\! \|w^o\|^2
								\!+\!
								\sigma_v^2 \!\cdot\! \|p\|_1^2
			\label{Equ:Lemma:UsefulBounds:vi_2nd}
		\end{align}
	\end{lemma}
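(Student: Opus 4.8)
The plan is to establish the four bounds \eqref{Equ:Lemma:UsefulBounds:Tc_z_4th}--\eqref{Equ:Lemma:UsefulBounds:vi_2nd} by two complementary techniques: a convexity-splitting argument organized around the contraction factor $\gamma_c$ for the two bounds built from $T_c$ and $\z_{i-1}$, and a Jensen-type weighting argument organized around the identity $\sum_{k=1}^N p_k = \|p\|_1$ for the two bounds built from $\bv_i$.

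\textbf{Bounds \eqref{Equ:Lemma:UsefulBounds:Tc_z_2nd} and \eqref{Equ:Lemma:UsefulBounds:Tc_z_4th}.} First I would record the contraction property of the centralized map $T_c(w)=w-\mu_{\max}\sum_k p_k s_k(w)$. Expanding $\|T_c(x)-T_c(y)\|^2$ and invoking the strong-monotonicity bound \eqref{Equ:Assumption:StrongMonotone} and the Lipschitz bound \eqref{Equ:Assumption:Lipschitz} gives $\|T_c(x)-T_c(y)\|^2\le\big(1-2\mu_{\max}\lambda_L+\mu_{\max}^2\|p\|_1^2\lambda_U^2\big)\|x-y\|^2\le\gamma_c^2\|x-y\|^2$, where the last step uses $1-2a+b\le(1-a+\tfrac12 b)^2$ with $a=\mu_{\max}\lambda_L$, $b=\mu_{\max}^2\|p\|_1^2\lambda_U^2$; hence also $\|T_c(x)-T_c(y)\|^4\le\gamma_c^4\|x-y\|^4$ (this is the scalar content of property \eqref{Equ:VarMomentPropt:Centralized}). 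Since $\gamma_c\in(0,1)$ under the stability condition, I would then write the quantity in question as the convex combination
\[
T_c(\w_{c,i-1})-T_c(\bar w_{c,i-1})-\mu_{\max}(p^T\otimes I_M)\z_{i-1}=\gamma_c\cdot\tfrac{1}{\gamma_c}\big[T_c(\w_{c,i-1})-T_c(\bar w_{c,i-1})\big]+(1-\gamma_c)\cdot\tfrac{-\mu_{\max}}{1-\gamma_c}(p^T\otimes I_M)\z_{i-1}.
\]
Applying convexity of $\|\cdot\|^2$ (resp.\ $\|\cdot\|^4$) and using $\w_{c,i-1}-\bar w_{c,i-1}=\check\w_{c,i-1}$, the first term contributes $\tfrac{1}{\gamma_c}\gamma_c^2\|\check\w_{c,i-1}\|^2=\gamma_c\|\check\w_{c,i-1}\|^2$ (resp.\ $\tfrac{1}{\gamma_c^3}\gamma_c^4\|\check\w_{c,i-1}\|^4=\gamma_c\|\check\w_{c,i-1}\|^4$), and the second term contributes $\tfrac{\mu_{\max}^2}{1-\gamma_c}\|(p^T\otimes I_M)\z_{i-1}\|^2$ (resp.\ $\tfrac{\mu_{\max}^4}{(1-\gamma_c)^3}\|(p^T\otimes I_M)\z_{i-1}\|^4$). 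Since $1-\gamma_c=\mu_{\max}\big(\lambda_L-\tfrac12\mu_{\max}\|p\|_1^2\lambda_U^2\big)$ by \eqref{Equ:VarPropt:gamma_c}, these coefficients simplify to $\tfrac{\mu_{\max}}{\lambda_L-\frac12\mu_{\max}\|p\|_1^2\lambda_U^2}$ and $\tfrac{\mu_{\max}}{(\lambda_L-\frac12\mu_{\max}\|p\|_1^2\lambda_U^2)^3}$. Finally I bound $\|(p^T\otimes I_M)\z_{i-1}\|\le\|p\|_1\max_k\|\z_{i-1}^{(k)}\|$; since every entry of the right-hand side of \eqref{Equ:Lemma:BoundsPerturbation:P_z} equals $\lambda_U^2\|\bP[\mA_1^T\mU_L]\|_\infty^2\,\one^T P[\w_{e,i-1}]$ (and every entry of the right-hand side of \eqref{Equ:Lemma:BoundsPerturbation4th:P_z} equals $\lambda_U^4\|\bP[\mA_1^T\mU_L]\|_\infty^4\,\one^T\Pm[\w_{e,i-1}]$), each block of $\z_{i-1}$ obeys the same bound, and collecting terms gives exactly \eqref{Equ:Lemma:UsefulBounds:Tc_z_2nd} and \eqref{Equ:Lemma:UsefulBounds:Tc_z_4th}.

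\textbf{Bounds \eqref{Equ:Lemma:UsefulBounds:vi_2nd} and \eqref{Equ:Lemma:UsefulBounds:vi_4th}.} Here I would write $(p^T\otimes I_M)\bv_i=\sum_{k=1}^N p_k\bv_{k,i}$ and treat $\{p_k/\|p\|_1\}$ as a probability vector. Convexity of $\|\cdot\|^2$ gives $\big\|\sum_k p_k\bv_{k,i}\big\|^2=\|p\|_1^2\big\|\sum_k\tfrac{p_k}{\|p\|_1}\bv_{k,i}\big\|^2\le\|p\|_1\sum_k p_k\|\bv_{k,i}\|^2$, and likewise $\big\|\sum_k p_k\bv_{k,i}\big\|^4\le\|p\|_1^3\sum_k p_k\|\bv_{k,i}\|^4$. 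Taking $\Expt[\cdot\,|\mF_{i-1}]$ and substituting the entrywise content of \eqref{Equ:Lemma:BoundsPerturbation:P_v_E_Fiminus1} (resp.\ \eqref{Equ:Lemma:BoundsPerturbation4th:P_v}), whose right-hand sides are $k$-independent, the bound for the $k$th entry — call it $B$ (resp.\ $B_4$) — factors out, so that $\sum_k p_k\Expt[\|\bv_{k,i}\|^2|\mF_{i-1}]\le\big(\sum_k p_k\big)B=\|p\|_1 B$, yielding $\Expt[\|(p^T\otimes I_M)\bv_i\|^2|\mF_{i-1}]\le\|p\|_1^2 B$ (resp.\ $\le\|p\|_1^4 B_4$). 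Expanding $B$ with $\one^T P[\w_{e,i-1}]$ for the cross-agent term, and $B_4$ with $\one^T\Pm[\w_{e,i-1}]$, produces \eqref{Equ:Lemma:UsefulBounds:vi_2nd} and \eqref{Equ:Lemma:UsefulBounds:vi_4th}.

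\textbf{Main obstacle.} All four arguments are routine once set up; the only place requiring care is keeping the powers of $\|p\|_1$ and the $1/(1-\gamma_c)$ factors exact so that the coefficients in \eqref{Equ:VarPropt:gamma_c} and \eqref{Equ:Appendix:wci_check_recursion_Part2} match the stated bounds. The key point that avoids a spurious factor of $N$ in the $\bv_i$ estimates is to renormalize by $\|p\|_1$ before applying Jensen's inequality, so that the $k$-independence of the right-hand sides of \eqref{Equ:Lemma:BoundsPerturbation:P_v_E_Fiminus1} and \eqref{Equ:Lemma:BoundsPerturbation4th:P_v} collapses $\sum_k p_k$ back to $\|p\|_1$ rather than forcing a crude $\sum_k(\cdot)\le N\max_k(\cdot)$ bound.
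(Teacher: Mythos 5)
Your proposal is correct and follows essentially the same route as the paper's proof in Appendix E: the convex split with weights $\gamma_c$ and $1-\gamma_c$ combined with the contraction $\|T_c(x)-T_c(y)\|\le\gamma_c\|x-y\|$ and the identity $1-\gamma_c=\mu_{\max}(\lambda_L-\tfrac12\mu_{\max}\|p\|_1^2\lambda_U^2)$ for the two $T_c$/$\z_{i-1}$ bounds, and Jensen's inequality with the probability weights $p_k/\|p\|_1$ followed by substitution of the perturbation bounds \eqref{Equ:Lemma:BoundsPerturbation:P_v_E_Fiminus1} and \eqref{Equ:Lemma:BoundsPerturbation4th:P_v} for the two $\bv_i$ bounds. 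The only (immaterial) deviation is that you bound $\|(p^T\otimes I_M)\z_{i-1}\|$ via the triangle inequality and a maximum over blocks rather than via Jensen as the paper does; since the entries of the right-hand sides of \eqref{Equ:Lemma:BoundsPerturbation:P_z} and \eqref{Equ:Lemma:BoundsPerturbation4th:P_z} are identical across $k$, both yield the same constants.
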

	\begin{proof}
		See Appendix \ref{Appendix:Proof_UsefulBound}.
	\end{proof}
Substituting \eqref{Equ:Lemma:UsefulBounds:Tc_z_4th}--\eqref{Equ:Lemma:UsefulBounds:vi_2nd} into \eqref{Equ:EP_wcicheck_bound_interm1}, we obtain
	\begin{align}
		\Expt&[ \|\check{\w}_{c,i}\|^4 | \mc{F}_{i-1} ] 
									\nn\\
				&\preceq				\gamma_c \cdot \| \check{\w}_{c,i-1} \|^4
									\nn\\
									&\quad
									+
									\frac{\mu_{\max}}
									{(\lambda_L - \frac{1}{2}\mu_{\max} \|p\|_1^2 \lambda_U^2)^3}	
									\cdot
									\|p\|_1^4 \cdot \lambda_U^4
									\cdot
									\big\|
										\bP[\mA_1^T \mU_L ]
									\big\|_{\infty}^4
									\nn\\
									&\qquad
									\cdot
									\one^T 
									\Pm[ \w_{e,i-1} ]
									\nn\\
									&\quad
									+
									2\mu_{\max}^4 \cdot	
									\Big\{
										216 \alpha_4 \|p\|_1^4
										\cdot
										\| \check{\w}_{c,i-1} \|^4
										\nn\\
										&\qquad
										+
										216 \alpha_4 \|p\|_1^4
										\cdot
										\big\|
											\bP[\mA_1^T \mU_L]
										\big\|_{\infty}^4
										\cdot
										\one^T \cdot
										\Pm [ \w_{e,i-1} ]
										\nonumber\\
										&\qquad
										+ \!
										27\alpha_4 \|p\|_1^4 
										\!\cdot\!
										\|\tilde{w}_{c,0}\|^4
										\!+\!
										27 \alpha_4 \|p\|_1^4
										\!\cdot\!
										\| w^o \|^4
										\!+\!
										\sigma_{v4}^4
										\!\cdot\!
										\|p\|_1^4
									\Big\}
									\nn\\
									&\quad
									+
									10\mu_{\max}^2
									\cdot
									\Big\{
										\gamma_c \cdot \| \check{\w}_{c,i-1} \|^2
										\nn\\
										&\quad
										+
										\frac{\mu_{\max}}
										{\lambda_L - \frac{1}{2}\mu_{\max} \|p\|_1^2 \lambda_U^2}	
										\cdot
										\|p\|_1^2 \cdot \lambda_U^2
										\cdot
										\big\|
											\bP[\mA_1^T \mU_L ]
										\big\|_{\infty}^2
										\nn\\
										&\qquad
										\cdot
										\one^T 
										P[ \w_{e,i-1} ]
									\Big\}
									\nn\\
									&\quad
									\!\cdot\!
									\Big\{
										4 \alpha \|p\|_1^2
										\!\cdot\!
										P[ \check{\w}_{c,i-1} ]
										\nn\\
										&\quad
										+\!
										4\alpha
										\!\cdot\!
										\| \bP[ \mA_1^T \mU_L ] \|_{\infty}^2
										\!\cdot\!
										\|p\|_1^2
										\!\cdot\!
										\one^T
										P[\w_{e,i-1} ]
										\nn\\
										&\quad
										+\!
										4 \alpha \|p\|_1^2 
										(
											\|\tilde{w}_{c,0}\|^2
											\!+\!
											\|w^o\|^2
										)
										\!+\!
										\sigma_v^2 
										\!\cdot\! \|p\|_1^2
									\Big\}
			\label{Equ:EP_wcicheck_bound_interm2}
	\end{align}
We further call upon the following inequality to bound the last term in \eqref{Equ:EP_wcicheck_bound_interm2}:
	\begin{align}
		(a &\cdot x + b \cdot y ) ( c \cdot x + d \cdot y + e )
						\nn\\
				&=		ac \cdot x^2 + bd \cdot y^2 + (ad + bc) xy + ae \cdot x + be \cdot y
						\nn\\
				&\le 	
						ac \!\cdot\! x^2 \!+\! bd \!\cdot\! y^2 + (ad + bc) \frac{1}{2} (x^2 + y^2) 
						+ ae \cdot x + be \cdot y
						\nn\\
				&=		\left(
							ac \!+\! \frac{ad \!+\! bc}{2}
						\right)
						x^2
						\!+\!
						\left(
							bd \!+\! \frac{ad \!+\! bc}{2}
						\right)
						y^2
						\!+\!
						ae \!\cdot\! x
						\!+\!
						be \!\cdot \! y
						\nn
	\end{align}
Applying the above inequality to the last term in \eqref{Equ:EP_wcicheck_bound_interm2} with 
	\begin{align}
		a		&=		\gamma_c		
						\nn\\
		b		&=		\frac{\mu_{\max}}
						{\lambda_L - \frac{1}{2}\mu_{\max} \|p\|_1^2 \lambda_U^2}	
						\cdot
						\|p\|_1^2 \cdot \lambda_U^2
						\cdot
						\big\|
							\bP[\mA_1^T \mU_L ]
						\big\|_{\infty}^2
						\nn\\
		c		&=		4 \alpha \|p\|_1^2
						\nn\\
		d		&=		4 \alpha \|p\|_1^2
						\cdot
						\big\|
							\bP[\mA_1^T \mU_L]
						\big\|_{\infty}^2
						\nn\\
		e		&=		4 \alpha \|p\|_1^2 
						\cdot
						\|\tilde{w}_{c,0}\|^2
						+
						4 \alpha \|p\|_1^2
						\cdot
						\|w^o\|^2
						+
						\sigma_{v}^2
						\cdot
						\|p\|_1^2
						\nn\\
		x		&=		\| \check{\w}_{c,i-1} \|^2
						\nn\\
		y		&=		\one^T \cdot \Pm[ \w_{e,i-1} ]
				=		\| \w_{e,i-1} \|^4
						\nn
	\end{align}
we get
	\begin{align}
		&\Big\{
			\gamma_c \cdot \| \check{\w}_{c,i-1} \|^2
			\nn\\
			&\quad
			+
			\frac{\mu_{\max}}
			{\lambda_L - \frac{1}{2}\mu_{\max} \|p\|_1^2 \lambda_U^2}	
			\cdot
			\|p\|_1^2 \cdot \lambda_U^2
			\cdot
			\big\|
				\bP[\mA_1^T \mU_L ]
			\big\|_{\infty}^2
			\nn\\
			&\qquad
			\cdot
			\one^T 
			P[ \w_{e,i-1} ]
		\Big\}
		\nn\\
		&
		\!\times\!
		\Big\{
			4 \alpha \|p\|_1^2
			\!\cdot\!
			P[ \check{\w}_{c,i-1} ]
			\!+\!
			4\alpha
			\!\cdot\!
			\| \bP[ \mA_1^T \mU_L ] \|_{\infty}^2
			\!\cdot\!
			\|p\|_1^2
			\!\cdot\!
			\one^T
			P[\w_{e,i-1} ]
			\nn\\
			&\qquad
			+\!
			4 \alpha \|p\|_1^2 
			(
				\|\tilde{w}_{c,0}\|^2
				\!+\!
				\|w^o\|^2
			)
			\!+\!
			\sigma_v^2 
			\!\cdot\! \|p\|_1^2
		\Big\}
							\nn\\
				&\le 		\left(
								ac + \frac{ad + bc}{2}
							\right)
							\!\cdot\!
							\| \check{\w}_{c,i-1} \|^4
							\!+\!
							\left(
								bd \!+\! \frac{ad + bc}{2}
							\right)
							\!\cdot\!
							\| \w_{e,i-1} \|^4
							\nn\\
							&\qquad
							+
							ae \cdot \| \check{\w}_{c,i-1} \|^2
							+
							be \cdot \| \w_{e,i-1} \|^2
							\nn\\
				&\overset{(a)}{\le} 		
							\left(
								c + \frac{d + bc}{2}
							\right)
							\cdot
							\| \check{\w}_{c,i-1} \|^4
							+
							\left(
								bd + \frac{d + bc}{2}
							\right)
							\cdot
							\| \w_{e,i-1} \|^4
							\nn\\
							&\quad
							+
							e \cdot \| \check{\w}_{c,i-1} \|^2
							+
							be \cdot \| \w_{e,i-1} \|^2
							\nn\\
				&=			\Big(
								4 \alpha \|p\|_1^2
								+
								2
								\alpha \|p\|_1^2
								\cdot
								\| \bP[ \mA_1^T \mU_L ] \|_{\infty}^2
								\nn\\
								&\quad
								+\!								
								\frac{
									2\alpha \|p\|_1^2\cdot\mu_{\max}
								}
								{\lambda_L \!-\! \frac{1}{2}\mu_{\max} \|p\|_1^2 \lambda_U^2}
								\|p\|_1^2 \!\cdot\! \lambda_U^2 
								\!\cdot\!
								\| \bP[ \mA_1^T \mU_L ] \|_{\infty}^2
							\Big) \!\cdot\! \| \check{\w}_{c,i-1} \|^4	
							\nn\\
							&\quad
							+\!\!
							\Big(								
								\frac{
									4\alpha \|p\|_1^4  \lambda_U^2 
									\mu_{\max}
								}
								{\lambda_L \!\!-\!\! \frac{1}{2}\mu_{\max} \|p\|_1^2 \lambda_U^2}
								\!\cdot\!
								\| \bP[ \mA_1^T \mU_L ] \|_{\infty}^4
								\!+\!
								2
								\alpha \|p\|_1^2
								\!\cdot\!
								\| \bP[ \mA_1^T \mU_L ] \|_{\infty}^2
								\nn\\
								&\qquad
								+								
								\frac{ 2\alpha \|p\|_1^4 \lambda_U^2  \cdot \mu_{\max}}
								{\lambda_L-\frac{1}{2}\mu_{\max} \|p\|_1^2 \lambda_U^2}
								\cdot
								\| \bP[ \mA_1^T \mU_L ] \|_{\infty}^2
							\Big)
							\cdot
							\|\w_{e,i-1} \|^4		
							\nn\\
							&\quad
							+\!	
							\Big(
								4 \alpha \|p\|_1^2 
								\!\cdot\!
								\|\tilde{w}_{c,0}\|^2
								\!+\!
								4 \alpha \|p\|_1^2
								\!\cdot\!
								\|w^o\|^2
								\!+\!
								\sigma_{v}^2
								\!\cdot\!
								\|p\|_1^2
							\Big)
							\!\cdot\!
							\|\check{\w}_{c,i-1}\|^2					
							\nn\\
							&\quad
							+\!
							\frac{\|p\|_1^4 \lambda_U^2 \cdot \mu_{\max}}
							{\lambda_L \!-\! \frac{1}{2}\mu_{\max} \|p\|_1^2 \lambda_U^2}	
							\!\cdot\!
							\big\|
								\bP[\mA_1^T \mU_L ]
							\big\|_{\infty}^2
							\nn\\
							&\qquad
							\cdot\!
							\Big(\!
								4 \alpha
								\!\cdot\!
								\|\tilde{w}_{c,0}\|^2
								\!+\!
								4 \alpha
								\!\cdot\!
								\|w^o\|^2
								\!+\!
								\sigma_{v}^2
								\!\cdot\!
							\Big)
							\!\cdot\!
							\| \w_{e,i-1} \|^2
		\label{Equ:EP_wcicheck_bound_LastTerm}
	\end{align}
where inequality (a) is using $a = \gamma_c < 1$, which is guaranteed for sufficiently small step-sizes. Substituting \eqref{Equ:EP_wcicheck_bound_LastTerm} into \eqref{Equ:EP_wcicheck_bound_interm2}, we get
	\begin{align}
		\Expt&[ \|\check{\w}_{c,i}\|^4 | \mc{F}_{i-1} ] 
									\nn\\
				&\preceq				
									\gamma_c \cdot \| \check{\w}_{c,i-1} \|^4
									\nn\\
									&\quad
									+
									\frac{\mu_{\max}}
									{(\lambda_L - \frac{1}{2}\mu_{\max} \|p\|_1^2 \lambda_U^2)^3}	
									\cdot
									\|p\|_1^4 \cdot \lambda_U^4
									\cdot
									\big\|
										\bP[\mA_1^T \mU_L ]
									\big\|_{\infty}^4
									\nn\\
									&\qquad
									\cdot
									\one^T 
									\Pm[ \w_{e,i-1} ]
									\nn\\
									&\quad
									+
									2\mu_{\max}^4 \|p\|_1^4 \cdot	
									\Big\{
										216 \alpha_4 
										\cdot
										\| \check{\w}_{c,i-1} \|^4
										\nn\\
										&\qquad
										+\!
										216 \alpha_4
										\!\cdot\!
										\big\|
											\bP[\mA_1^T \mU_L]
										\big\|_{\infty}^4
										\!\cdot\!
										\one^T \!
										\Pm [ \w_{e,i-1} ]
										\!+ \!
										27\alpha_4
										\!\cdot\!
										\|\tilde{w}_{c,0}\|^4
										\nn\\
										&\qquad
										+
										27 \alpha_4
										\cdot
										\|w^o\|^4
										+
										\sigma_{v4}^4
									\Big\}
									\nn\\
									&\quad
									+
									10\mu_{\max}^2
									\cdot
									\Big\{
										\Big(
											4 \alpha \|p\|_1^2
											\!+\!
											2
											\alpha \|p\|_1^2
											\cdot
											\| \bP[ \mA_1^T \mU_L ] \|_{\infty}^2
											\nn\\
											&\qquad
											+\!								
											\frac{
												2\alpha \|p\|_1^4 \lambda_U^2\cdot\mu_{\max}
											}
											{\lambda_L-\frac{1}{2}\mu_{\max} \|p\|_1^2 \lambda_U^2} 
											\!\cdot\!
											\| \bP[ \mA_1^T \mU_L ] \|_{\infty}^2
										\Big) 
										\!\cdot\! 
										\| \check{\w}_{c,i-1} \|^4	
										\nn\\
										&\quad
										+
										\Big(								
											\frac{
												4\alpha \|p\|_1^4 \cdot \lambda_U^2 
												\cdot\mu_{\max}
											}
											{\lambda_L-\frac{1}{2}\mu_{\max} \|p\|_1^2 \lambda_U^2}
											\cdot
											\| \bP[ \mA_1^T \mU_L ] \|_{\infty}^4
											\nn\\
											&\qquad
											+
											2
											\alpha \|p\|_1^2
											\cdot
											\| \bP[ \mA_1^T \mU_L ] \|_{\infty}^2
											\nn\\
											&\qquad
											+								
											\frac{ 2\alpha \|p\|_1^4 \lambda_U^2  \cdot \mu_{\max}}
											{\lambda_L-\frac{1}{2}\mu_{\max} \|p\|_1^2 \lambda_U^2}
											\cdot
											\| \bP[ \mA_1^T \mU_L ] \|_{\infty}^2
										\Big)
										\cdot
										\|\w_{e,i-1} \|^4		
										\nn\\
										&\quad
										+	
										\Big(
											4 \alpha \|p\|_1^2 
											\cdot
											\|\tilde{w}_{c,0}\|^2
											+
											4 \alpha \|p\|_1^2
											\cdot
											\|w^o\|^2
											+
											\sigma_{v}^2
											\cdot
											\|p\|_1^2
										\Big)
										\nn\\
										&\qquad
										\cdot
										\|\check{\w}_{c,i-1}\|^2					
										\nn\\
										&\quad
										+\!
										\frac{\|p\|_1^4 \lambda_U^2 \cdot \mu_{\max}}
										{\lambda_L \!-\! \frac{1}{2}\mu_{\max} \|p\|_1^2 \lambda_U^2}	
										\!\cdot\!
										\big\|
											\bP[\mA_1^T \mU_L ]
										\big\|_{\infty}^2
										\nn\\
										&\qquad
										\cdot\!
										\Big(\!
											4 \alpha
											\!\cdot\!
											\|\tilde{w}_{c,0}\|^2
											\!+\!
											4 \alpha
											\!\cdot\!
											\|w^o\|^2
											\!+\!
											\sigma_{v}^2
											\!\cdot\!
										\Big)
										\!\cdot\!
										\| \w_{e,i-1} \|^2
									\Big\}
									\nn\\
			&\overset{(a)}{=}		
									\gamma_c 
									\cdot 
									\Pm[ \check{\w}_{c,i-1} ]
									\nn\\
									&\quad
									+
									\frac{\mu_{\max}}
									{(\lambda_L - \frac{1}{2}\mu_{\max} \|p\|_1^2 \lambda_U^2)^3}	
									\cdot
									\|p\|_1^4 \cdot \lambda_U^4
									\cdot
									\big\|
										\bP[\mA_1^T \mU_L ]
									\big\|_{\infty}^4
									\nn\\
									&\qquad
									\cdot
									\one^T 
									\Pm[ \w_{e,i-1} ]
									\nn\\
									&\quad
									+
									2\mu_{\max}^4 \|p\|_1^4 \cdot	
									\Big\{
										216 \alpha_4 
										\cdot
										\Pm[ \check{\w}_{c,i-1} ]
										\nn\\
										&\qquad
										+
										216 \alpha_4
										\cdot
										\big\|
											\bP[\mA_1^T \mU_L]
										\big\|_{\infty}^4
										\cdot
										\one^T
										\Pm [ \w_{e,i-1} ]
										\nonumber\\
										&\qquad
										+ 
										27\alpha_4
										\cdot
										\|\tilde{w}_{c,0}\|^4
										+
										27 \alpha_4
										\cdot
										\|w^o\|^4
										+
										\sigma_{v4}^4
									\Big\}
									\nn\\
									&\quad
									+
									10\mu_{\max}^2
									\!\cdot\!
									\Big\{\!
										\Big(\!
											4 \alpha \|p\|_1^2
											\!+\!
											2
											\alpha \|p\|_1^2
											\cdot
											\| \bP[ \mA_1^T \mU_L ] \|_{\infty}^2
											\nn\\
											&\qquad
											+\!								
											\frac{
												2\alpha \|p\|_1^4 \lambda_U^2\cdot\mu_{\max}
											}
											{\lambda_L-\frac{1}{2}\mu_{\max} \|p\|_1^2 \lambda_U^2} 
											\!\cdot\!
											\| \bP[ \mA_1^T \mU_L ] \|_{\infty}^2
											\!
										\Big) 
										\!\cdot\! 
										\Pm[ \check{\w}_{c,i-1} ]
										\nn\\
										&\quad
										+
										\Big(								
											\frac{
												4\alpha \|p\|_1^4 \cdot \lambda_U^2 
												\cdot\mu_{\max}
											}
											{\lambda_L-\frac{1}{2}\mu_{\max} \|p\|_1^2 \lambda_U^2}
											\cdot
											\| \bP[ \mA_1^T \mU_L ] \|_{\infty}^4
											\nn\\
											&\qquad
											+
											2
											\alpha \|p\|_1^2
											\cdot
											\| \bP[ \mA_1^T \mU_L ] \|_{\infty}^2
											\nn\\
											&\qquad
											+								
											\frac{ 2\alpha \|p\|_1^4 \lambda_U^2  \cdot \mu_{\max}}
											{\lambda_L-\frac{1}{2}\mu_{\max} \|p\|_1^2 \lambda_U^2}
											\cdot
											\| \bP[ \mA_1^T \mU_L ] \|_{\infty}^2
										\Big)
										\nn\\
										&\qquad
										\cdot
										\one^T \Pm[ \w_{e,i-1} ]
										\nn\\
										&\quad
										+	
										\Big(
											4 \alpha \|p\|_1^2 
											\cdot
											\|\tilde{w}_{c,0}\|^2
											+
											4 \alpha \|p\|_1^2
											\cdot
											\|w^o\|^2
											+
											\sigma_{v}^2
											\cdot
											\|p\|_1^2
										\Big)
										\nn\\
										&\qquad
										\cdot
										P[ \check{\w}_{c,i-1} ]
										\nn\\
										&\quad
										+\!
										\frac{\|p\|_1^4 \lambda_U^2 \cdot \mu_{\max}}
										{\lambda_L \!-\! \frac{1}{2}\mu_{\max} \|p\|_1^2 \lambda_U^2}	
										\!\cdot\!
										\big\|
											\bP[\mA_1^T \mU_L ]
										\big\|_{\infty}^2
										\nn\\
										&\qquad
										\cdot\!
										\Big(\!
											4 \alpha
											\!\cdot\!
											\|\tilde{w}_{c,0}\|^2
											\!+\!
											4 \alpha
											\!\cdot\!
											\|w^o\|^2
											\!+\!
											\sigma_{v}^2
											\!\cdot\!
										\Big)
										\!\cdot\!
										\one^T P[ \w_{e,i-1} ]
									\Big\}
									\nn\\
			&=						\bigg\{
										\gamma_c
										+
										432\mu_{\max}^4
										\alpha_4 \|p\|_1^4
										\nn\\
										&\qquad
										+
										10\mu_{\max}^2
										\cdot
										\Big(\!
											4 \alpha \|p\|_1^2
											\!+\!
											2
											\alpha \|p\|_1^2
											\cdot
											\| \bP[ \mA_1^T \mU_L ] \|_{\infty}^2
											\nn\\
											&\qquad
											+\!								
											\frac{
												2\alpha \|p\|_1^4 \lambda_U^2\cdot\mu_{\max}
											}
											{\lambda_L-\frac{1}{2}\mu_{\max} \|p\|_1^2 \lambda_U^2} 
											\!\cdot\!
											\| \bP[ \mA_1^T \mU_L ] \|_{\infty}^2
											\!
										\Big)
									\bigg\} \!\cdot\!
									\Pm[ \check{\w}_{c,i-1} ]
									\nn\\
									&\quad
									+
									\bigg\{
										\mu_{\max} \cdot
										\frac{
											\|p\|_1^4 \cdot \lambda_U^4
											\cdot
											\left\|
												\bP[\mA_1^T \mU_L ]
											\right\|_{\infty}^4
										}
										{
											(\lambda_L - \frac{1}{2}\mu_{\max} \|p\|_1^2 \lambda_U^2)^3
										}
										\nn\\
										&\qquad	
										+
										432\mu_{\max}^4
										 \alpha_4 \|p\|_1^4
										\cdot
										\| \bP[ \mA_1^T \mU_L ] \|_{\infty}^4
										\nn\\
										&\qquad
										+
										10\mu_{\max}^2
										\cdot
										\Big(								
											\frac{
												4\alpha \|p\|_1^4 \cdot \lambda_U^2 
												\cdot\mu_{\max}
											}
											{\lambda_L-\frac{1}{2}\mu_{\max} \|p\|_1^2 \lambda_U^2}
											\cdot
											\| \bP[ \mA_1^T \mU_L ] \|_{\infty}^4
											\nn\\
											&\qquad\quad
											+
											2
											\alpha \|p\|_1^2
											\cdot
											\| \bP[ \mA_1^T \mU_L ] \|_{\infty}^2
											\nn\\
											&\qquad\quad
											+								
											\frac{ 2\alpha \|p\|_1^4 \lambda_U^2  \cdot \mu_{\max}}
											{\lambda_L-\frac{1}{2}\mu_{\max} 
											\|p\|_1^2 \lambda_U^2}
											\cdot
											\| \bP[ \mA_1^T \mU_L ] \|_{\infty}^2
										\Big)
									\bigg\}
									\nn\\
									&\qquad
									\cdot
									\one^T
									\cdot
									\Pm[\w_{e,i-1}]
									\nn\\
									&\quad
									+\!	
									10\mu_{\max}^2
									\cdot
									\Big(
											4 \alpha \|p\|_1^2 
											\!\cdot\!
											\|\tilde{w}_{c,0}\|^2
											\!+\!
											4 \alpha \|p\|_1^2
											\!\cdot\!
											\|w^o\|^2
											\!+\!
											\sigma_{v}^2
											\!\cdot\!
											\|p\|_1^2
									\Big)
									\nn\\
									&\qquad
									\cdot
									P[\check{\w}_{c,i-1}]					
									\nn\\
									&\quad
									+
									\frac{10\|p\|_1^4 \lambda_U^2 \cdot \mu_{\max}^3}
									{\lambda_L \!-\! \frac{1}{2}\mu_{\max} \|p\|_1^2 \lambda_U^2}	
									\!\cdot\!
									\big\|
										\bP[\mA_1^T \mU_L ]
									\big\|_{\infty}^2
									\nn\\
									&\qquad
									\cdot\!
									\Big(\!
										4 \alpha
										\!\cdot\!
										\|\tilde{w}_{c,0}\|^2
										\!+\!
										4 \alpha
										\!\cdot\!
										\|w^o\|^2
										\!+\!
										\sigma_{v}^2
										\!\cdot\!
									\Big)
									\cdot
									 \one^T \cdot P[\w_{e,i-1}]
									\nn\\
									&\quad
									+
									2\mu_{\max}^4
									\cdot
									\|p\|_1^4
									\cdot
									\big(
										27 \alpha_4 
										\cdot 
										(
											\|\tilde{w}_{c,0}\|^4
											+
											\|w^o\|^4
										)
										+
										\sigma_{v4}^4
									\big)
									\nn\\
				&=					\bigg\{
										\gamma_c
										+
										\mu_{\max}^4
										\cdot
										432\alpha_4 \|p\|_1^4
										\nn\\
										&\qquad
										+
										\mu_{\max}^2
										\cdot
										20
										\alpha \|p\|_1^2
										\nn\\
										&\qquad\quad
										\cdot
										\Big(
											2
											+
											\| \bP[ \mA_1^T \mU_L ] \|_{\infty}^2
											\cdot
											\frac{
												\lambda_L+\frac{1}{2}\mu_{\max} \|p\|_1^2 \lambda_U^2
											}
											{
												\lambda_L-\frac{1}{2}\mu_{\max} \|p\|_1^2 \lambda_U^2
											}
										\Big) 
									\bigg\}
									\nn\\
									&\qquad
									\cdot
									\Pm[ \check{\w}_{c,i-1} ]
									\nn\\
									&\quad
									+
									\bigg\{
										\mu_{\max} \cdot
										\frac{
											\|p\|_1^4 \cdot \lambda_U^4
											\cdot
											\left\|
												\bP[\mA_1^T \mU_L ]
											\right\|_{\infty}^4
										}
										{
											(\lambda_L - \frac{1}{2}\mu_{\max} \|p\|_1^2 \lambda_U^2)^3
										}	
										\nn\\
										&\qquad
										+
										432 \mu_{\max}^4
										\alpha_4 \|p\|_1^4
										\cdot
										\| \bP[ \mA_1^T \mU_L ] \|_{\infty}^4
										\nn\\
										&\qquad
										+
										20\mu_{\max}^2
										\alpha \|p\|_1^2
										\cdot
										\| \bP[ \mA_1^T \mU_L ] \|_{\infty}^2
										\nn\\
										&\qquad\quad
										\cdot
										\Big(
											\mu_{\max} \cdot
											\frac{
													2 \|p\|_1^2 \cdot \lambda_U^2 
													\cdot
													\| \bP[ \mA_1^T \mU_L ] \|_{\infty}^2
											}
											{
												\lambda_L
												-
												\frac{1}{2}\mu_{\max} \|p\|_1^2 \lambda_U^2
											}	
											\nn\\
											&\qquad\qquad
											+\!
											\frac{
												\lambda_L 
												\!+\! 
												\frac{1}{2}\mu_{\max} \|p\|_1^2 \lambda_U^2
											}
											{
												\lambda_L \!-\! \frac{1}{2}\mu_{\max} 
												\|p\|_1^2 \lambda_U^2
											}
										\Big)
									\bigg\}
									\!\cdot\!
									\one^T
									\!\cdot\!
									\Pm[\w_{e,i-1}]
									\nn\\
									&\quad
									+	
									10\mu_{\max}^2
									\cdot
									\Big(
											4 \alpha \|p\|_1^2 
											\cdot
											\|\tilde{w}_{c,0}\|^2
											+
											4 \alpha \|p\|_1^2
											\cdot
											\|w^o\|^2
											\nn\\
											&\qquad
											+
											\sigma_{v}^2
											\cdot
											\|p\|_1^2
									\Big)
									\cdot
									P[\check{\w}_{c,i-1}]					
									\nn\\
									&\quad
									+
									\frac{10\|p\|_1^4 \lambda_U^2 \cdot \mu_{\max}^3}
									{\lambda_L \!-\! \frac{1}{2}\mu_{\max} \|p\|_1^2 \lambda_U^2}	
									\!\cdot\!
									\big\|
										\bP[\mA_1^T \mU_L ]
									\big\|_{\infty}^2
									\nn\\
									&\qquad
									\cdot\!
									\Big(\!
										4 \alpha
										\!\cdot\!
										\|\tilde{w}_{c,0}\|^2
										\!+\!
										4 \alpha
										\!\cdot\!
										\|w^o\|^2
										\!+\!
										\sigma_{v}^2
										\!\cdot\!
									\Big)
									\cdot
									 \one^T \cdot P[\w_{e,i-1}]
									\nn\\
									&\quad
									+
									2\mu_{\max}^4
									\!\cdot\!
									\|p\|_1^4
									\!\cdot\!
									\big(
										27 \alpha_4 
										\!\cdot\!
										(
											\|\tilde{w}_{c,0}\|^4
											\!+\!
											\|w^o\|^4
										)
										\!+\!
										\sigma_{v4}^4
									\big)
			\label{Equ:EP_wcicheck_bound_interm3}
	\end{align}
where step (a) is using the following relations:
	\begin{align}
		\|\w_{e,i-1}\|^4 	&= 	\one^T \cdot \Pm[\w_{e,i-1}]
								\nn\\
		\|\w_{e,i-1}\|^2 	&= 	\one^T \cdot P[\w_{e,i-1}]
								\nn\\
		\|\check{\w}_{c,i-1}\|^4
							&=	\Pm[\check{\w}_{c,i-1}]
								\nn\\
		\|\check{\w}_{c,i-1}\|^2
							&=	P[\check{\w}_{c,i-1}]
								\nn
	\end{align}
Using the notation defined in \eqref{Equ:StabBound4thOrder:fcc_def}--\eqref{Equ:StabBound4thOrder:xic_def} and taking expectations of both sides of \eqref{Equ:EP_wcicheck_bound_interm3} with respect to $\mF_{i-1}$, we obtain
	\begin{align}
		\Expt \Pm[ \check{\w}_{c,i} ]
						&\preceq			f_{cc}(\mu_{\max}) 
										\cdot 
										\Expt
										\Pm[ \check{\w}_{c,i-1} ]
										\nn\\
										&\quad
										+
										f_{ce}(\mu_{\max})
										\cdot
										\one^T
										\cdot
										\Expt
										\Pm[ \w_{e,i-1} ]
										\nn\\
										&\quad
										+
										h_{cc}(\mu_{\max})
										\cdot
										\Expt 
										P[ \check{\w}_{c,i-1} ]
										\nn\\
										&\quad
										+
										h_{ce}(\mu_{\max})
										\cdot
										\one^T
										\cdot
										\Expt
										P[ \w_{e,i-1} ]
										\nn\\
										&\quad
										+
										\mu_{\max}^4
										\cdot
										b_{v_4,c}
		\label{Equ:EP_wcicheck_bound_final}
	\end{align}

\subsection{Recursion for the 4th order moment of $\w_{e,i}$}
\label{Appendix:RecursionEPwei}

We now derive an inequality recursion for $\Expt\|\w_{e,i}\|^4$. First, applying $\Pm[\cdot]$ operator to both sides of \eqref{Equ:Appendix:wei_check_recursion_Part2}, we get
	\begin{align}
		&\Pm[ \w_{e,i} ]
								\nn\\
						&=		\Pm\Big[
									\mD_{N-1} \w_{e,i-1} 
									\!-\!
									\mU_R \mA_2^T \mM
									\big(
										s(\one \!\otimes\! \w_{e,i-1})
										\!+\!
										\z_{i-1}
										\!+\!
										\bv_{i}
									\big)
								\Big]
								\nn\\
						&\overset{(a)}{\preceq}
								\Gamma_{e,4} \cdot
								\Pm[ \w_{e,i-1} ]
								\nn\\
								&\quad
								+\!
								\frac{8}{(1\!\!-\!\!|\lambda_2(A)|)^3}
								\!\cdot\!
								\Pm
								\big[
									\mU_R \mA_2^T \mM
									\big(
										s( \one \!\otimes\! \w_{c,i-1} )
										\!+\!
										\z_{i-1}
										\!+\!
										\bv_i
									\big)
								\big]
								\nn\\
						&\overset{(b)}{\preceq}
								\Gamma_{e,4} \!\cdot\!
								\Pm[ \w_{e,i-1} ]
								\nn\\
								&\quad
								+
								\frac{8}{(1\!\!-\!\! |\lambda_2(A)|)^3}
								\!\cdot\!
								\left\|
									\bP [ \mU_R \mA_2^T \mM ]
								\right\|_{\infty}^4
								\!\cdot\!
								\one 
								\one^T
								\nn\\
								&\qquad
								\cdot\!
								\Pm
								\big[
										s( \one \otimes \w_{c,i-1} )
										\!+\!
										\z_{i-1}
										\!+\!
										\bv_i
								\big]
								\nn\\
						&\overset{(c)}{\preceq}
								\Gamma_{e,4} \!\cdot\!
								\Pm[ \w_{e,i-1} ]
								\nn\\
								&\quad
								+\!
								\mu_{\max}^4
								\!\cdot\!
								\frac{8}{(1\!-\!|\lambda_2(A)|)^3}
								\!\cdot\!
								\left\|
									\bP[
										\mU_R \mA_2^T
									]
								\right\|_{\infty}^4
								\!\cdot\!
								\one 
								\one^T
								\nn\\
								&\qquad
								\cdot\!
								\Pm
								\big[
										s( \one \otimes \w_{c,i-1} )
										\!+\!
										\z_{i-1}
										\!+\!
										\bv_i
								\big]
								\nn\\
						&=
								\Gamma_{e,4} \cdot
								\Pm[ \w_{e,i-1} ]
								\nn\\
								&\quad
								+
								\frac{
									8\mu_{\max}^4
									\left\|
										\bP[ \mU_R \mA_2^T ]
									\right\|_{\infty}^4
								}
								{(1-|\lambda_2(A)|)^3}								
								\!\cdot\!
								\one 
								\one^T
								\nn\\
								&\qquad
								\cdot
								\Pm
								\Big[
										\frac{1}{3}
										\!\cdot\!
										3
										s( \one \otimes \w_{c,i-1} )
										\!+\!
										\frac{1}{3}
										\!\cdot\!
										3
										\z_{i-1}
										\!+\!
										\frac{1}{3}
										\cdot
										3
										\bv_i
								\Big]		
								\nn\\
						&\overset{(d)}{\preceq}
								\Gamma_{e,4} \cdot
								\Pm[ \w_{e,i-1} ]
								\nn\\
								&\quad
								+
								\frac{
									8\mu_{\max}^4 
									\left|
										\bP[ \mU_R \mA_2^T ]
									\right\|_{\infty}^4
								}{(1-|\lambda_2(A)|)^3}
								\cdot								
								\one 
								\one^T
								\nn\\
								&\qquad
								\times
								\Big\{
								\frac{1}{3}
								\cdot
								\Pm
								\big[			
										3
										s( \one \otimes \w_{c,i-1} )
								\big]
										+
								\frac{1}{3}
								\cdot
								\Pm
								\big[
										3
										\z_{i-1}
								\big]
								\nn\\
								&\qquad\quad
										+
								\frac{1}{3}
								\cdot
								\Pm
								\big[
										3
										\bv_i
								\big]
								\Big\}	
								\nn\\
						&\overset{(e)}{=}
								\Gamma_{e,4} \cdot
								\Pm[ \w_{e,i-1} ]
								\nn\\
								&\quad
								+
								\mu_{\max}^4
								\cdot
								\frac{8}{(1-|\lambda_2(A)|)^3}
								\cdot
								\big\|
									\bP[ \mU_R \mA_2^T ]
								\big\|_{\infty}^4
								\cdot
								\one 
								\one^T
								\nn\\
								&\qquad
								\times
								\big\{
								27
								\cdot
								\Pm
								[		
										s( \one \otimes \w_{c,i-1} )
								]
										+
								27
								\cdot
								\Pm
								[
										\z_{i-1}
								]
								\nn\\
								&\qquad\quad
										+
								27
								\cdot
								\Pm
								[
										\bv_i
								]
								\big\}		
								\nn
	\end{align}
where step (a) uses \eqref{Equ:VarMomentPropert:StableKronJordanOperator}, step (b) uses \eqref{Equ:VarMomentPropt:Linear_ub}, step (c) uses  the sub-multiplicative property \eqref{P1-Equ:Properties:PX_bar_SubMult} from Part I\cite{chen2013learningPart1} and the sub-multiplicative property of matrix norms:
	\begin{align}
		&\bP[\mU_R \mA_2^T \mM]		
					\preceq
								\bP[ \mU_R ] \cdot \bP[ \mA_2^T ] \cdot \bP[ \mM ]
								\nn\\
		&\Rightarrow 
		\left\| \bP[\mU_R \mA_2^T \mM] \right\|_{\infty}
					\le 
								\left\| \bP[ \mU_R ] \right\|_{\infty}
								\!\cdot \!
								\left\| \bP[ \mA_2^T ] \right\|_{\infty}
								\!\cdot \!
								\left\| \bP[ \mM ] \right\|_{\infty}
								\nn\\
		&\Rightarrow 
		\left\| \bP[\mU_R \mA_2^T \mM] \right\|_{\infty}
					\le 
								\mu_{\max}
								\!\cdot\!
								\left\| \bP[ \mU_R ] \right\|_{\infty}
								\!\!\cdot\! 
								\left\| \bP[ \mA_2^T ] \right\|_{\infty}  \!\!
								\nn
	\end{align}
step (d) uses the convex property \eqref{Equ:VarMomentPropt:Convexity}, and step (e) uses the scaling property in Lemma \ref{Lemma:VarMomentPropt}. Applying the expectation operator to both sides of the above inequality conditioned on $\mF_{i-1}$, we obtain
	\begin{align}
		\Expt&\big[
			\Pm[ \w_{e,i} ]
			\big|
			\mF_{i-1}
		\big]					\nn\\
					&\preceq
								\Gamma_{e,4} \cdot
								\Pm[ \w_{e,i-1} ]
								\nn\\
								&\quad+
								\frac{
									8\mu_{\max}^4 \left\|
										\bP[ \mU_R \mA_2^T ]
									\right\|_{\infty}^4
								}
								{(1-|\lambda_2(A)|)^3}
								\!\cdot\!
								\one 
								\one^T
								\!\cdot\!
								\Big\{
								27
								\!\cdot\!
								\Pm
								[		
										s( \one \!\otimes\! \w_{c,i-1} )
								]
								\nn\\
								&\quad
								\!+\!
								27
								\!\cdot\!
								\Pm
								[
										\z_{i-1}
								]
								\!+\!
								27
								\!\cdot\!
								\Expt\{
								\Pm
								[
										\bv_i
								]
								|
								\mF_{i-1}
								\}
								\Big\}
								\nn
	\end{align}
In the above expression, we are using the fact that $\w_{c,i-1}$ and $\z_{i-1}$ are determined by the history up to time $i-1$. Therefore, given $\mF_{i-1}$, these two quantities are deterministic and known so that
	\begin{align}
		\Expt \big\{ \Pm [ s( \one \!\otimes\! \w_{c,i-1} )]  \big| \mF_{i-1} \big\}
				&=		\Pm [ s( \one \! \otimes \! \w_{c,i-1} )]
						\nn\\
		\Expt \big\{ \Pm [ \z_{i-1} ] \big| \mF_{i-1} \big\}
				&=		\Pm [ \z_{i-1} ]		
						\nn
	\end{align}
Substituting \eqref{Equ:Lemma:BoundsPerturbation4th:P_z}--\eqref{Equ:Lemma:BoundsPerturbation4th:P_v} into the right-hand side of the above inequality, we get
	\begin{align}
		\Expt&\big[
			\Pm[ \w_{e,i} ]
			\big|
			\mF_{i-1}
		\big]					\nn\\
					&\preceq
								\Gamma_{e,4} \cdot
								\Pm[ \w_{e,i-1} ]
								\nn\\
								&\quad+
								\mu_{\max}^4
								\cdot
								\frac{8}{(1-|\lambda_2(A)|)^3}
								\cdot
								\big\|
									\bP[ \mU_R \mA_2^T ]
								\big\|_{\infty}^4
								\cdot
								\one 
								\one^T
								\nn\\
								&\qquad
								\cdot
								\bigg\{
									27
									\!\cdot\!
									\Big[
										27\lambda_U^4 \!\cdot\!
										\|\check{\bm{w}}_{c,i-1}\|^4
										\!\cdot\! \mathds{1}
										\!+\!
										27\lambda_U^4 \|\tilde{w}_{c,0}\|^4
										\!\cdot\!
										\mathds{1}
										\!+\!
										27 \!\cdot \!
										g_4^o
									\Big]
									\nn\\
									&\qquad\quad
									+
									27
									\cdot
									\Big[
										\lambda_U^4
											\cdot
											\left\|
												\bP[\mc{A}_1^T \mc{U}_L]
											\right\|_{\infty}^4
											\cdot
											\mathds{1}\mathds{1}^T
											\cdot
											\Pm[\bm{w}_{e,i-1}]
									\Big]
									\nn\\
									&\qquad\quad
									+
									27
									\cdot
									\Big[
										216 \alpha_4 \cdot \mathds{1}
										\cdot
										\Pm[\check{\bm{w}}_{c,i-1}]
										\nn\\
										&\qquad\qquad\qquad
										+
										216
										\alpha_4
										\cdot
										\left\|
											\bP[\mc{A}_1^T\mc{U}_L]
										\right\|_{\infty}^4
										\!\cdot\!
										\mathds{1}\mathds{1}^T
										\!\cdot\!
										\Pm[\bm{w}_{e,i\!-\!1}]
										\nonumber\\		
										&\qquad\qquad\qquad
										+
										27
										\alpha_4
										\cdot
										(
											\|\tilde{w}_{c,0}\|^4
											+
											\|w^o\|^4
										) 
										\cdot
										\one
										+
										\sigma_{v4}^4
										\cdot
										\mathds{1}	
									\Big]
								\bigg\}
								\nn\\
					&=			\Big[
									\Gamma_{e,4}
									+
									\mu_{\max}^4
									\cdot
									\frac{216N \cdot (\lambda_U^4 + 216 \alpha_4)}
									{(1-|\lambda_2(A)|)^3}
									\cdot
									\| \bP[\mA_1^T \mU_L] \|_{\infty}^4
									\nn\\
									&\qquad\qquad
									\cdot
									\| \bP[ \mU_R \mA_2^T ] \|_{\infty}^4
									\cdot
									\one \one^T
								\Big]
								\cdot
								\Pm[ \w_{e,i-1} ]
								\nn\\
								&\quad+\!
								\mu_{\max}^4
								\!\cdot\!
								\frac{5832N \cdot (\lambda_U^4 \!\!+\!\! 8\alpha_4) }
								{(1 \!\!-\! \!|\lambda_2(A)|)^3}
								\| \bP[ \mU_R \mA_2^T ] \|_{\infty}^4
								\!\cdot\!
								\Pm[ \w_{c,i-1} ]
								\!\cdot \! \one
								\nn\\
								&\quad
								+\!
								\mu_{\max}^4
								\cdot
								\frac{
									216 \!\cdot \!
									\| \bP[ \mU_R \mA_2^T ] \|_{\infty}^4
								}{(1\!-\!|\lambda_2(A)|)^3}
								\cdot
								\Big\{
									27
									\big[
										(\lambda_U^4 \!+\! \alpha_4)
										\!\cdot\!
										\|\tilde{w}_{c,0}\|^4
										\!\cdot\!
										N
										\nn\\
										&\qquad
										+\!
										\one^T g_4^o
										\!+\!
										\alpha_4
										\!\cdot\!
										N \| w^o \|^4
									\big]
									\!+\!
									\sigma_{v4}^4 \cdot N
								\Big\}
								\cdot
								\one
								\nn\\
				&\preceq			\Big[
									\Gamma_{e,4}
									+
									\mu_{\max}^4
									\cdot
									\frac{216N \cdot (\lambda_U^4 + 216 \alpha_4)}
									{(1-|\lambda_2(A)|)^3}
									\cdot
									\| \bP[\mA_1^T \mU_L] \|_{\infty}^4
									\nn\\
									&\qquad\qquad
									\cdot
									\| \bP[ \mU_R \mA_2^T ] \|_{\infty}^4
									\cdot
									\one \one^T
								\Big]
								\cdot
								\Pm[ \w_{e,i-1} ]
								\nn\\
								&\quad+
								\mu_{\max}^4
								\cdot
								\frac{5832N \cdot (\lambda_U^4+8\alpha_4) }
								{(1 \!-\! |\lambda_2(A)|)^3}
								\| \bP[ \mU_R \mA_2^T ] \|_{\infty}^4
								\nn\\
								&\qquad
								\cdot
								\Pm[ \w_{c,i-1} ]
								\cdot
								\one
								\nn\\
								&\quad
								+
								\mu_{\max}^4
								\cdot
								\frac{216N \cdot
								\| \bP[ \mU_R \mA_2^T ] \|_{\infty}^4}
								{(1\!-\!|\lambda_2(A)|)^3}
								\!\cdot\!
								\Big\{
									27
									\big[
										(\lambda_U^4 + \alpha_4)
										\!\cdot\!
										\|\tilde{w}_{c,0}\|^4
										\nn\\
										&\qquad
										+\!
										\|g_4^o\|_{\infty}
										\!+\!
										\alpha_4
										\!\cdot\!
										\|w^o\|^4
									\big]
									+
									\sigma_{v4}^4
								\Big\}
								\cdot
								\one
		\label{Equ:EP_wei_bound_interm1}
	\end{align}
where the last step uses $\one^T g_4^o \le | \one^T g_4^o | \le N \| g_4^o\|_{\infty}$. Using the notation defined in \eqref{Equ:StabBound4thOrder:Fee_def}--\eqref{Equ:StabBound4thOrder:fec_def} and applying the expectation operator to both sides of \eqref{Equ:EP_wei_bound_interm1} with respect to $\mF_{i-1}$, we arrive at
	\begin{align}
		\Expt \Pm[ \w_{e,i} ]
					&\preceq
									F_{ee}(\mu_{\max}) \cdot \Expt \Pm[\w_{e,i-1}]
									\nn\\
									&\quad
									+
									f_{ec}(\mu_{\max}) \cdot \one \cdot \Expt \Pm[\check{\w}_{c,i-1}]
									\nn\\
									&\quad
									+
									\mu_{\max}^4
									\cdot
									b_{v_4,e}
									\cdot
									\one
		\label{Equ:EP_wei_bound_final}
	\end{align}

\section{Proof of Lemma \ref{Lemma:BoundsPerturbation_4thMoment}}
\label{Appendix:Proof_BoundsPerturbation4th}

First, we establish the bound for $P[\bm{z}_{i-1}]$ in 
\eqref{Equ:Lemma:BoundsPerturbation4th:P_z}. To begin with, recall the following two relations from \eqref{P1-Equ:DistProc:relation_phi_w_wprime}
and \eqref{P1-Equ:DistProc:invTF_wi_wi_prime} in Part I\cite{chen2013learningPart1}:  
	\begin{align}
		\bm{\phi}_{i} 	&= 	
							\mA_1^T \w_{i}
		\label{Equ:Appendix:phi_A1_w}
							\\
		\w_i			&=
							\one \otimes \w_{c,,i} + (U_L \otimes I_M) \w_{e,i}
		\label{Equ:Appendix:wi_wci_we}
	\end{align}
By the definition of $\bm{z}_{i-1}$ in \eqref{Equ:Def:z_i1}, we get:
	\begin{align}
		\Pm[\bm{z}_{i-1}]	&=
									\Pm[ s(\bm{\phi}_{i-1}) - s(\one \otimes \w_{c,i-1}) ]
									\nn\\
						&\overset{(a)}{=}
									\Pm[ s( \mA_1^T \w_{i-1} ) - s(\one \otimes \w_{c,i-1}) ]
									\nn\\
						&\overset{(b)}{=}
									\Pm\big[
										s\left(
											\mathds{1} \otimes \bm{w}_{c,i-1}
											\!+\!
											(A_1^T U_L \otimes I_M)
											\bm{w}_{e,i-1}
										\right)
										\nn\\
										&\qquad\qquad
										-
										s(\mathds{1}\otimes \bm{w}_{c,i-1})
									\big]
									\nonumber\\
						&\overset{(c)}{\preceq}
									\lambda_U^4
									\cdot
									\Pm\left[
										(A_1^T U_L \otimes I_M) 
										\bm{w}_{e,i-1}
									\right]
									\nonumber\\
						&\overset{(d)}{\preceq}
									\lambda_U^4
									\!\cdot\!
									\left\|
										\bP[\mc{A}_1^T \mc{U}_L]
									\right\|_{\infty}^4
									\!\cdot\!
									\mathds{1}\mathds{1}^T
									\!\cdot\!
									\Pm[\bm{w}_{e,i-1}]
									\nn
	\end{align}
where step (a) substitutes \eqref{Equ:Appendix:phi_A1_w}, step (b) substitutes \eqref{Equ:Appendix:wi_wci_we},
step (c) uses the variance relation \eqref{Equ:VarMomentPropt:Update}, and step (d) uses property \eqref{Equ:VarMomentPropt:Linear_ub}.

Next, we prove the bound for $\Pm[s(\mathds{1}\otimes \bm{w}_{c,i-1})]$.
It holds that
	\begin{align}
		\Pm&[s(\mathds{1}\otimes \bm{w}_{c,i-1})]	\nonumber\\
			&=		\Pm\Big[
						\frac{1}{3} \!\cdot\! 3
						\big(
							s(\mathds{1} \otimes \bm{w}_{c,i-1})
							\!-\!
							s(\mathds{1}\otimes \bar{w}_{c,i-1})
						\big)
						\nn\\
						&\quad
						+\!
						\frac{1}{3} \!\cdot\! 3
						\big(
							s(\mathds{1}\otimes \bar{w}_{c,i-1})
							\!-\!
							s(\mathds{1}\otimes w^o)
						\big)
						\!+\!
						\frac{1}{3} \!\cdot\! 3 \!\cdot\!
						s(\mathds{1}\otimes w^o)
					\Big]
					\nonumber\\
			&\overset{(a)}{\preceq}
					\frac{1}{3} \cdot \Pm\big[
						 3
						\big(
							s(\mathds{1} \otimes \bm{w}_{c,i-1})
							-
							s(\mathds{1}\otimes \bar{w}_{c,i-1})
						\big)
					\big]
					\nn\\
					&\quad
					+
					\frac{1}{3}\cdot \Pm
					\big[
						 3
						\big(
							s(\mathds{1}\otimes \bar{w}_{c,i-1})
							-
							s(\mathds{1}\otimes w^o)
						\big)
					\big]
					\nonumber\\
					&\quad
					+
					\frac{1}{3}\cdot \Pm
					\big[
						3 \cdot
						s(\mathds{1}\otimes w^o)
					\big]
					\nonumber\\
			&\overset{(b)}{=}
					3^3 \cdot \Pm\big[
						s(\mathds{1} \otimes \bm{w}_{c,i-1})
						\!-\!
						s(\mathds{1}\otimes \bar{w}_{c,i-1})
					\big]
					\nn\\
					&\quad
					+
					3^3 \cdot \Pm
					\big[
						s(\mathds{1}\otimes \bar{w}_{c,i-1})
						\!-\!
						s(\mathds{1}\otimes w^o)
					\big]
					\nn\\
					&\quad
					+
					3^3 \cdot \Pm
					\big[
						s(\mathds{1}\otimes w^o)
					\big]
					\nonumber\\
			&\overset{(c)}{\preceq}
					27\lambda_U^4 \cdot
					\Pm\big[
						\mathds{1} \otimes (\bm{w}_{c,i-1} - \bar{w}_{c,i-1})
					\big]
					\nn\\
					&\quad
					+\!
					27\lambda_U^4 \!\cdot\!
					\Pm[\mathds{1}\!\otimes\!(\bar{w}_{c,i-1} \!-\! w^o)]
					\!+\!
					27 \cdot \Pm[s(\mathds{1}\otimes w^o)]
					\nonumber\\
			&\overset{(d)}{=}
					27\lambda_U^4 \cdot
					\|\check{\bm{w}}_{c,i-1}\|^4
					\cdot \mathds{1}
					+
					27\lambda_U^4 \cdot
					\|\bar{w}_{c,i-1} - w^o\|^4
					\cdot
					\mathds{1}
					\nn\\
					&\quad
					+
					27 \cdot \Pm[s(\mathds{1}\otimes w^o)]
					\nonumber\\
			&=
					27\lambda_U^4 \cdot
					\|\check{\bm{w}}_{c,i-1}\|^4
					\cdot \mathds{1}
					+
					27\lambda_U^4 \cdot
					\|\tilde{w}_{c,i-1}\|^4
					\cdot
					\mathds{1}
					\nn\\
					&\quad
					+
					27 \cdot \Pm[s(\mathds{1}\otimes w^o)]
					\nonumber\\
			&\overset{(e)}{\preceq}
					27\lambda_U^4 \cdot
					\|\check{\bm{w}}_{c,i-1}\|^4
					\!\cdot\! \mathds{1}
					\!+\!
					27\lambda_U^4 \|\tilde{w}_{c,0}\|^4
					\!\cdot\!
					\mathds{1}
					\nn\\
					&\quad
					+\!
					27 \cdot
					\Pm[s(\mathds{1}\otimes w^o)]
					\nn
	\end{align}
where step (a) uses the convexity property \eqref{Equ:VarMomentPropt:Convexity},
step (b) uses the scaling property in Lemma \ref{Lemma:VarMomentPropt},
step (c) uses the variance relation \eqref{Equ:VarMomentPropt:Update},
step (d) uses the definition of the operator $\Pm[\cdot]$,
and step (e) uses the bound $\|\tilde{w}_{c,i}\|^2 \le \gamma_c^{2i} \cdot \|\tilde{w}_{c,0}\|^2$ from \eqref{P1-Equ:Thm:ConvergenceRefRec:NonAsympBound} of Part I\cite{chen2013learningPart1}
and the fact that $\gamma_c<1$.

Finally, we establish the bound on $\Pm[\bm{v}_i]$ in
\eqref{Equ:Lemma:BoundsPerturbation4th:P_v}. Introduce the $MN \times 1$
vector $\bm{x}$ according to \eqref{Equ:Appendix:phi_A1_w}--\eqref{Equ:Appendix:wi_wci_we}:
	\begin{align}
		\bm{x}	\defeq	\mathds{1} \otimes \bm{w}_{c,i-1}
						+
						\mc{A}_1^T \mc{U}_L \bm{w}_{e,i-1}
				\equiv	\mA_1^T \bm{w}_{i-1}
				=		\bm{\phi}_{i-1}
						\nn
	\end{align}
We partition $\bm{x}$ into block form as
$\bm{x} = \col\{\bm{x}_1,\ldots,\bm{x}_N\}$, where each $\bm{x}_k$ is
$M\times 1$. Then, by the definition of $\bm{v}_i$ from 
\eqref{Equ:Def:v_i}, we have
	\begin{align}
		\Expt &\left\{ \Pm[\bm{v}_i]	\big| \mc{F}_{i-1} \right\}
								\nn\\
						&=		\Expt \left\{ 
									\Pm\big[\hat{\bm{s}}_i(\bm{x}) - s(\bm{x})\big]
									\big|
									\mc{F}_{i-1}
								\right\}
								\nonumber\\
						&=		\col\big\{
									\Expt\big[\|
										\hat{\bm{s}}_{1,i}(\bm{x}_1)
										-
										s_1(\bm{x}_1)
									\|^4 | \mc{F}_{i-1} 
									\big],
									\ldots, 
									\nn\\
									&\qquad\quad
									\Expt\big[ \|
										\hat{\bm{s}}_{N,i}(\bm{x}_N)
										-
										s_N(\bm{x}_N)
									\|^4 | \mc{F}_{i-1}
									\big]
								\big\}
								\nonumber\\
						&\overset{(a)}{\preceq}
								\begin{bmatrix}
									\alpha_4 
									\cdot 
									\|\bm{x}_1\|^4 
									+ \sigma_{v4}^4
									\\
									\vdots\\
									\alpha_4
									\cdot 
									\|\bm{x}_N\|^4 
									+ \sigma_{v4}^4
								\end{bmatrix}
								\nn\\
						&=		\alpha_4 \cdot \Pm[\bm{x}] 
								+ \sigma_{v4}^4 \cdot \mathds{1}
		\label{Equ:Appendix:P4_v_bound1}
	\end{align}
where step (a) uses \eqref{Equ:Assumption:GradientNoise4thOrderMoment}. Now we bound $\Expt \Pm[\bm{x}]$ to complete the proof:
	\begin{align}
		\Pm[\bm{x}]			
						&=		\Pm\left[
									\mathds{1} \otimes \bm{w}_{c,i-1}
									+
									\mc{A}_1^T \mc{U}_L \bm{w}_{e,i-1}
								\right]
								\nonumber\\
						&=		\Pm\Big[
									\frac{1}{3} \!\cdot\! 3
									\big(
											\mathds{1} \!\otimes\! \bm{w}_{c,i-1}
											\!+\!
											\mc{A}_1^T \mc{U}_L \bm{w}_{e,i-1}
											\!-\!
											\mathds{1} \!\otimes\! \bar{w}_{c,i-1}
									\big)
									\nn\\
									&\quad
									+\!
									\frac{1}{3} \!\cdot\! 3
									\big(									
										\mathds{1} \!\otimes\! \bar{w}_{c,i-1}
										\!-\! 
										\mathds{1} \!\otimes\! w^o
									\big)
									\!+\!									
									\frac{1}{3} \!\cdot\! 3 \!\cdot\!
									\mathds{1} \!\otimes\! w^o
								\Big]
								\nonumber\\
						&\overset{(a)}{\preceq}
								27 \cdot \Pm\big[
										\mathds{1} \otimes \bm{w}_{c,i-1}
										\!+\!
										\mc{A}_1^T \mc{U}_L \bm{w}_{e,i-1}
									\!-\!
									\mathds{1} \!\otimes\! \bar{w}_{c,i-1}
								\big]
								\nn\\
								&\quad
									\!+\!
								27 \cdot \Pm \big[
									\mathds{1} \otimes \bar{w}_{c,i-1}
									\!-\! 
									\mathds{1}\otimes w^o
								\big]
								\nn\\
								&\quad
									+ 
								27 \cdot \Pm \big[
									\mathds{1}\otimes w^o
								\big]
								\nonumber\\
						&\overset{(b)}{=}
								27 \cdot 
								\Pm\big[
										\mathds{1} \otimes 
										\check{\bm{w}}_{c,i-1}
										+
										\mc{A}_1^T \mc{U}_L \bm{w}_{e,i-1}
								\big]
								\nn\\
								&\quad
									+
								27 \cdot
								\Pm\big[
									\mathds{1} \otimes \tilde{w}_{c,i-1}
								\big]
									+ 
								27 \cdot \Pm\big[
									\mathds{1}\otimes w^o
								\big]
								\nonumber\\
						&\overset{(c)}{\preceq}
								27 \!\cdot\!
								\left(
								8 \!\cdot\!
								\Pm\big[
										\mathds{1} \!\otimes \!
										\check{\bm{w}}_{c,i-1}
								\big]
								\!+\!
								8 \!\cdot\!
								\Pm\big[
										\mc{A}_1^T \mc{U}_L \bm{w}_{e,i-1}
								\big]
								\right)
								\nn\\
								&\quad
									+\!
								27 \!\cdot\!
								\Pm\big[
									\mathds{1} \!\otimes\! \tilde{w}_{c,i-1}
								\big]
									\!+ \!
								27 \!\cdot\! \Pm\big[
									\mathds{1} \!\otimes\! w^o
								\big]
								\nonumber\\
						&\overset{(d)}{=}
								216 \!\cdot\!
								\|
										\check{\bm{w}}_{c,i-1}
								\|^4
								\!\cdot\!
								\mathds{1}
								\!+\!
								216 \!\cdot\!
								\Pm\big[
										\mc{A}_1^T \mc{U}_L \bm{w}_{e,i-1}
								\big]
								\nn\\
								&\quad
									+
								27 \cdot
								\|
									\tilde{w}_{c,i-1}
								\|^4
								\cdot
								\mathds{1}
									+ 
								27 \cdot 
								\Pm\big[
									\mathds{1}\otimes w^o
								\big]
								\nonumber\\
						&\overset{(e)}{\preceq}
								216 \!\cdot\!
								\|
										\check{\bm{w}}_{c,i-1}
								\|^4
								\!\!\cdot\!
								\mathds{1}
								\nn\\
								&\quad
								+\!
								216 \!\cdot\!
								\big\|
									\bP\big[\mc{A}_1^T \mc{U}_L\big]
								\big\|_{\infty}^4
								\!\!\cdot\!
								\one\one^T
								\!\cdot\!
								\Pm[
										\bm{w}_{e,i-1}
								]
								\nn\\
								&\quad
								+\!
								27 \!\cdot\!
								\|
									\tilde{w}_{c,i-1}
								\|^4
								\!\cdot\!
								\mathds{1}
									\!+ \!
								27 \!\cdot\!
								\Pm\big[
									\mathds{1} \!\otimes\! w^o
								\big]
								\nonumber\\
						&\overset{(f)}{\preceq}
								216 \!\cdot\!
								\|
										\check{\bm{w}}_{c,i-1}
								\|^4
								\!\cdot\!
								\mathds{1}
								\nn\\
								&\quad
								+\!
								216 \!\cdot\!
								\big\|
									\bP\big[\mc{A}_1^T \mc{U}_L\big]
								\big\|_{\infty}^4
								\!\cdot\!
								\one\one^T
								\!\cdot\!
								\Pm[
										\bm{w}_{e,i-1}
								]
								\nn\\
								&\quad
									+\!
								27 \cdot
								\|
									\tilde{w}_{c,0}
								\|^4
								\!\cdot\!
								\mathds{1}
									\!+ \!
								27 \!\cdot\!
								\Pm\big[
									\mathds{1} \!\otimes\! w^o
								\big]
								\nn\\
						&=
								216 \!\cdot\!
								\|
										\check{\bm{w}}_{c,i-1}
								\|^4
								\!\cdot\!
								\mathds{1}
								\nn\\
								&\quad
								+\!
								216 \!\cdot\!
								\big\|
									\bP\big[\mc{A}_1^T \mc{U}_L\big]
								\big\|_{\infty}^4
								\!\cdot\!
								\one\one^T
								\!\cdot\!
								\Pm[
										\bm{w}_{e,i-1}
								]
								\nn\\
								&\quad
								+\!
								27 \cdot
								(
									\|
										\tilde{w}_{c,0}
									\|^4
										\!+ \!
									\|w^o\|^4
								)
								\cdot \one
		\label{Equ:Appendix:P4_v_bound2}
	\end{align}
where step (a) uses the convexity property \eqref{Equ:VarMomentPropt:Convexity}
and the scaling property in Lemma \ref{Lemma:VarMomentPropt},
step (b) uses the variance relation \eqref{Equ:VarMomentPropt:Update},
step (c) uses the convexity property \eqref{Equ:VarMomentPropt:Convexity},
step (d) uses the definition of the operator $\Pm[\cdot]$,
step (e) uses the variance relation \eqref{Equ:VarMomentPropt:Linear}, and
step (f) uses the bound $\|\tilde{w}_{c,i}\|^2 \le \gamma_c^{2i} \cdot \|\tilde{w}_{c,0}\|^2$ from \eqref{P1-Equ:Thm:ConvergenceRefRec:NonAsympBound} of Part I\cite{chen2013learningPart1} and $\gamma_c<1$. Substituting \eqref{Equ:Appendix:P4_v_bound2} into \eqref{Equ:Appendix:P4_v_bound1},
we obtain \eqref{Equ:Lemma:BoundsPerturbation4th:P_v}.

\section{Proof of Lemma \ref{Lemma:UsefulBounds}}
\label{Appendix:Proof_UsefulBound}

First, we prove \eqref{Equ:Lemma:UsefulBounds:Tc_z_4th}. It holds that
	\begin{align}
		\big\|&
			T_c(\w_{c,i-1}) - T_c(\bar{w}_{c,i-1})
			-
			\mu_{\max}
			\cdot
			(p^T \otimes I_M)
			\z_{i-1}
		\big\|^4
								\nn\\
				&=				\Pm\big[
									T_c(\w_{c,i-1}) - T_c(\bar{w}_{c,i-1})
									-
									\mu_{\max}
									\cdot
									(p^T \otimes I_M)
									\z_{i-1}
								\big]
								\nn\\
				&=				\Pm\Big[
									\gamma_c \cdot
									\frac{1}{\gamma_c}
									\big(
										T_c(\w_{c,i-1}) - T_c(\bar{w}_{c,i-1})
									\big)											
									\nn\\
									&\qquad
									+
									(1-\gamma_c)\cdot
									\frac{-\mu_{\max}}{1-\gamma_c}
									\cdot
									(p^T \otimes I_M)
									\z_{i-1}
								\Big]
								\nn\\
				&\overset{(a)}{\preceq}
								\gamma_c \cdot
								\Pm\big[											
									\frac{1}{\gamma_c}
									\big(
										T_c(\w_{c,i-1}) - T_c(\bar{w}_{c,i-1})
									\big)
								\big]	
								\nn\\
								&\quad										
								+
								(1-\gamma_c)\cdot
								\Pm\big[
									\frac{-\mu_{\max}}{1-\gamma_c}
									\cdot
									(p^T \otimes I_M)
									\z_{i-1}
								\big]
								\nn\\
				&\overset{(b)}{=}
								\gamma_c \cdot
								\frac{1}{\gamma_c^4}
								\cdot
								\Pm\big[		
									\big(
										T_c(\w_{c,i-1}) - T_c(\bar{w}_{c,i-1})
									\big)
								\big]	
								\nn\\
								&\quad										
								+
								(1-\gamma_c)\cdot
								\frac{\mu_{\max}^4}{(1-\gamma_c)^4}
								\cdot
								\Pm\big[
									(p^T \otimes I_M)
									\z_{i-1}
								\big]
								\nn\\
				&\overset{(c)}{\preceq}
								\gamma_c
								\cdot
								\Pm[ \w_{c,i-1} - \bar{w}_{c,i-1} ]
								\nn\\
								&\quad
								+
								\frac{\mu_{\max}^4}{(1-\gamma_c)^3}
								\cdot
								\Pm\left[
									(p^T \otimes I_M)
									\z_{i-1}
								\right]
								\nn\\
				&\overset{(d)}{=}
								\gamma_c
								\cdot
								\Pm[ \tilde{\w}_{c,i-1} ]
								\nn\\
								&\quad
								+
								\frac{\mu_{\max}^4}{(1-\gamma_c)^3}
								\cdot
								\left\|
									\sum_{k=1}^N p_{k} \z_{k,i-1}
								\right\|^4
								\nn\\
				&=
								\gamma_c
								\cdot
								\Pm[ \tilde{\w}_{c,i-1} ]
								\nn\\
								&\quad
								+
								\frac{\mu_{\max}^4}{(1-\gamma_c)^3}
								\cdot
								\left(
									\sum_{l=1}^N p_l
								\right)^4
								\cdot
								\left\|
									\sum_{k=1}^N \frac{p_{k}}{\sum_{l=1}^N p_l} \z_{k,i-1}
								\right\|^4
								\nn\\
				&\overset{(e)}{\preceq}
								\gamma_c
								\cdot
								\Pm[ \tilde{\w}_{c,i-1} ]
								\nn\\
								&\quad
								+
								\frac{\mu_{\max}^4}{(1-\gamma_c)^3}
								\cdot
								\left(
									\sum_{l=1}^N p_l
								\right)^4
								\cdot
								\sum_{k=1}^N 
								\frac{p_{k}}{\sum_{l=1}^N p_l} 
								\|
									\z_{k,i-1}
								\|^4
								\nn\\
				&=
								\gamma_c
								\!\cdot\!
								\Pm[ \tilde{\w}_{c,i-1} ]
								\!+\!
								\frac{\mu_{\max}^4}{(1-\gamma_c)^3}
								\!\cdot\!
								\| p \|_1^3
								\!\cdot\!
								\sum_{k=1}^N p_{k} 
								\|
									\z_{k,i-1}
								\|^4
								\nn\\
				&\overset{(f)}{=}
								\gamma_c \!\cdot\!
								\Pm[ \tilde{\w}_{c,i-1} ]
								\!+\!
								\frac{\mu_{\max} \|p\|_1^3}
								{(\lambda_L \!\! - \!\! \frac{1}{2}\mu_{\max}\|p\|_1^2 \lambda_U^2)^3}
								\!\cdot\!
								p^T \!\cdot \!\Pm[ \z_{i-1} ]
								\nn\\
				&\overset{(g)}{\preceq}
								\gamma_c \cdot
								\Pm[ \tilde{\w}_{c,i-1} ]
								\nn\\
								&\quad
								+
								\frac{\mu_{\max} \|p\|_1^3}
								{(\lambda_L - \frac{1}{2}\mu_{\max}\|p\|_1^2 \lambda_U^2)^3}
								\cdot
								p^T \cdot \lambda_U^4
									\cdot
									\left\|
										\bP[\mc{A}_1^T \mc{U}_L]
									\right\|_{\infty}^4
								\nn\\
								&\qquad
									\cdot
									\mathds{1}\mathds{1}^T
									\Pm[\bm{w}_{e,i-1}]
								\nn\\
				&=				\gamma_c \cdot
								\| \tilde{\w}_{c,i-1} \|^4
								\nn\\
								&\quad
								+\!
								\frac{
									\mu_{\max} 
									\lambda_U^4 
									\!\cdot\!
									\|p\|_1^4
									\!\cdot\!
									\left\|
										\bP[\mc{A}_1^T \mc{U}_L]
									\right\|_{\infty}^4
								}
								{(\lambda_L \!\!-\!\! \frac{1}{2}\mu_{\max}\|p\|_1^2 \lambda_U^2)^3}
								\!\cdot\!
									\mathds{1}^T
									\Pm[\bm{w}_{e,i-1}]
								\nn
	\end{align}
where step (a) uses property \eqref{Equ:VarMomentPropt:Convexity}, step (b) uses the scaling property in Lemma \ref{Lemma:VarMomentPropt}, step (c) uses property \eqref{Equ:VarMomentPropt:Centralized}, step (d) introduces $\z_{k,i-1}$ as the $k$th $M\times 1$ sub-vector of $\z_{1-1} = \col\{ \z_{1,i-1}, \ldots, \z_{N,i-1}\}$, step (e) applies Jensen's inequality to the convex function $\|\cdot \|^4$, step (f) uses the definition of the operator $\Pm[\cdot]$, and step (g) uses bound \eqref{Equ:Lemma:BoundsPerturbation4th:P_z}.

Second, we prove \eqref{Equ:Lemma:UsefulBounds:vi_4th}. Let $\bv_{k,i}$ denote the $k$th $M \times 1$ sub-vector of $\bv_i = \col\{ \bv_{1,i}, \ldots,  \bv_{N,i} \}$. Then,
	\begin{align}
		\Expt \big[&
			\left\|
				(p^T \otimes I_M) \bv_i
			\right\|^4
			\big|
			\mc{F}_{i-1}
		\big]					\nn\\
				&=	 			
							\Expt \Big[
									\Big\|
										\sum_{k=1}^N 
										p_k \bv_{k, i}
									\Big\|^4
									\Big|
									\mc{F}_{i-1}
								\Big]			
								\nn\\
				&=
							\left(
								\sum_{l=1}^N p_l
							\right)^4
							\cdot
							\Expt \Big[
									\left\|
										\sum_{k=1}^N 
										\frac{p_k}{\sum_{l=1}^N p_l} 
										\bv_{k, i}
									\right\|^4
									\Big|
									\mc{F}_{i-1}
								\Big]			
								\nn\\
				&\overset{(a)}{\le}
							\left(
								\sum_{l=1}^N p_l
							\right)^4
							\cdot
							\sum_{k=1}^N 
							\frac{p_k}{\sum_{l=1}^N p_l} 
							\Expt \left[
									\left\|										
										\bv_{k, i}
									\right\|^4
									|
									\mc{F}_{i-1}
								\right]	
							\nn\\
				&\overset{(b)}{=}
								\| p \|_1^3
								\cdot
								p^T
								\cdot
								\Expt \big\{
									\Pm[ \bv_i ]
									\big|
									\mc{F}_{i-1}
								\big\}	
								\nn\\
				&\overset{(c)}{\preceq}
							\| p \|_1^3
								\cdot
								p^T
								\cdot
								\Big\{
									216 \alpha_4 \cdot \mathds{1}
									\cdot
									\Pm[\check{\bm{w}}_{c,i-1}]
									\nn\\
									&\qquad
									+
									216
									\alpha_4
									\cdot
									\left\|
										\bP[\mc{A}_1^T\mc{U}_L]
									\right\|_{\infty}^4
									\!\cdot\!
									\mathds{1}\mathds{1}^T
									\!\cdot\!
									\Pm[\bm{w}_{e,i\!-\!1}]
									\nn\\
									&\qquad
									+
									27
									\alpha_4
									\cdot
									(
										\|\tilde{w}_{c,0}\|^4
										+
										\|w^o\|^4
									)
									\cdot
									\one
									+
									\sigma_{v4}^4
									\cdot
									\mathds{1}	
								\Big\}	
								\nn\\
					&=			216\alpha_4 \|p\|_1^4
								\cdot
								\| \check{\w}_{c,i-1} \|^4
								\nn\\
								&\quad
								+
								216\alpha_4 \|p\|_1^4 
								\cdot
								\big\|
									\bP[ \mA_1^T \mU_L ]
								\big\|_{\infty}^4
								\cdot
								\one^T
								\cdot
								\Pm[ \w_{e,i-1} ]
								\nn\\
								&\quad+
								27\alpha_4\|p\|_1^4
								\cdot
								\|\tilde{w}_{c,0}\|^4
								+
								27\alpha_4 \cdot
								\|p\|_1^4
								\cdot
								\|w^o\|^4
								\nn\\
								&\quad
								+
								\sigma_{v4}^4
								\cdot
								\|p\|_1^4
								\nn
	\end{align}
where step (a) applies Jensen's inequality to the convex function $\|\cdot \|^4$, step (b) uses the definition of the operator $\Pm[\cdot]$, and step (c) substitutes \eqref{Equ:Lemma:BoundsPerturbation4th:P_v}.

Third, we prove \eqref{Equ:Lemma:UsefulBounds:Tc_z_2nd}:
	\begin{align}
		\big\|&
			T_c(\w_{c,i-1}) - T_c(\bar{w}_{c,i-1})
			-
			\mu_{\max}
			\cdot
			(p^T \otimes I_M)
			\z_{i-1}
		\big\|^2
								\nn\\
				&=				\Big\|
									\gamma_c
									\cdot
									\frac{1}{\gamma_c}
									\big(
										T_c(\w_{c,i-1}) - T_c(\bar{w}_{c,i-1})
									\big)
									\nn\\
									&\qquad
									+
									(1-\gamma_c)
									\cdot
									\frac{-\mu_{\max}}{1-\gamma_c}
									\cdot
									(p^T \otimes I_M)
									\z_{i-1}
								\Big\|^2
								\nn\\
				&\overset{(a)}{\le}
								\gamma_c
								\cdot
								\left\|									
									\frac{1}{\gamma_c}
									\big(
										T_c(\w_{c,i-1}) - T_c(\bar{w}_{c,i-1})
									\big)
								\right\|^2
								\nn\\
								&\quad
								+
								(1-\gamma_c)
								\cdot
								\left\|
									\frac{-\mu_{\max}}{1-\gamma_c}
									\cdot
									(p^T \otimes I_M)
									\z_{i-1}
								\right\|^2
								\nn\\
				&=
								\gamma_c
								\cdot
								\frac{1}{\gamma_c^2}
								\cdot
								\left\|		
										T_c(\w_{c,i-1}) - T_c(\bar{w}_{c,i-1})
								\right\|^2
								\nn\\
								&\quad
								+
								(1-\gamma_c)
								\cdot
								\frac{\mu_{\max}^2}{(1-\gamma_c)^2}
								\cdot
								\left\|										
									(p^T \otimes I_M)
									\z_{i-1}
								\right\|^2
								\nn\\
				&=
								\gamma_c
								\cdot
								\frac{1}{\gamma_c^2}
								\cdot
								P\left[	
										T_c(\w_{c,i-1}) - T_c(\bar{w}_{c,i-1})
								\right]
								\nn\\
								&\quad
								+
								(1-\gamma_c)
								\cdot
								\frac{\mu_{\max}^2}{(1-\gamma_c)^2}
								\cdot
								\left\|										
									(p^T \otimes I_M)
									\z_{i-1}
								\right\|^2
								\nn\\
				&\overset{(b)}{\preceq}
								\gamma_c
								\cdot
								P\left[	
										\w_{c,i-1} - \bar{w}_{c,i-1}
								\right]
								+
								\frac{\mu_{\max}^2}{1-\gamma_c}
								\cdot
								\left\|											
									(p^T \otimes I_M)
									\z_{i-1}
								\right\|^2
								\nn\\
				&=				\gamma_c
								\cdot
								\left\|		
										\check{\w}_{c,i-1}
								\right\|^2
								+
								\frac{\mu_{\max}}
								{\lambda_L - \frac{1}{2}\|p\|_1^2 \lambda_U^2}
								\cdot
								\big\|											
									(p^T \otimes I_M)
									\z_{i-1}
								\big\|^2
								\nn\\
				&=
								\gamma_c
								\cdot
								\left\|		
										\check{\w}_{c,i-1}
								\right\|^2
								+
								\frac{\mu_{\max}}
								{\lambda_L - \frac{1}{2}\|p\|_1^2 \lambda_U^2}
								\cdot
								\left\|											
									\sum_{k=1}^N
									p_k
									\z_{k,i-1}
								\right\|^2
								\nn\\
				&=
								\gamma_c
								\cdot
								\left\|		
										\check{\w}_{c,i-1}
								\right\|^2
								\nn\\
								&\quad
								+
								\frac{\mu_{\max}}
								{\lambda_L - \frac{1}{2}\|p\|_1^2 \lambda_U^2}
								\cdot
								\Big(
									\sum_{l=1}^N p_l
								\Big)^2
								\cdot
								\left\|											
									\sum_{k=1}^N
									\frac{p_k}{\sum_{l=1}^N p_l}
									\z_{k,i-1}
								\right\|^2
								\nn\\
				&\overset{(c)}{\le}
								\gamma_c
								\cdot
								\left\|		
										\check{\w}_{c,i-1}
								\right\|^2
								\nn\\
								&\quad
								+
								\frac{\mu_{\max}}
								{\lambda_L - \frac{1}{2}\|p\|_1^2 \lambda_U^2}
								\cdot
								\Big(
									\sum_{l=1}^N p_l
								\Big)^2
								\cdot
								\sum_{k=1}^N
								\frac{p_k}{\sum_{l=1}^N p_l}
								\left\|	
									\z_{k,i-1}
								\right\|^2
								\nn\\
				&=				\gamma_c
								\cdot
								\left\|		
										\check{\w}_{c,i-1}
								\right\|^2
								+
								\frac{\mu_{\max}}
								{\lambda_L - \frac{1}{2}\|p\|_1^2 \lambda_U^2}
								\cdot
								\| p \|_1
								\cdot
								p^T
								\cdot
								P[					
									\z_{i-1}
								] 
								\nn\\
				&\overset{(d)}{\le}
								\gamma_c
								\cdot
								\left\|		
										\check{\w}_{c,i-1}
								\right\|^2
								\nn\\
								&\quad
								+
								\frac{\mu_{\max}}
								{\lambda_L - \frac{1}{2}\|p\|_1^2 \lambda_U^2}
								\cdot
								\| p \|_1
								\cdot
								p^T
								\cdot
									\lambda_U^2 \cdot
									\| \bP[ \mA_1^T \mU_L ] \|_{\infty}^2
									\nn\\
									&\qquad
									\cdot
									\one \one^T
									\cdot
									P[ \w_{e,i-1} ]
								\nn\\
				&=				\gamma_c
								\cdot
								\|\check{\w}_{c,i-1}\|^2
								\nn\\
								&\quad
								+\!
								\frac{\mu_{\max}}
								{\lambda_L \! - \! \frac{1}{2}\|p\|_1^2 \lambda_U^2}
								\!\cdot\!
								\| p \|_1^2
								\!\cdot\!
									\lambda_U^2 \!\cdot\!
									\| \bP[ \mA_1^T \mU_L ] \|_{\infty}^2
									\!\cdot\!
									\one^T
									P[ \w_{e,i-1} ]
								\nn
	\end{align}
where steps (a) and (c) apply Jensen's inequality to the convex function $\|\cdot\|^2$, step (b) uses property $P[T_c(x) - T_c(y)] \preceq \gamma_c^2 \cdot P[x-y]$ from \eqref{P1-Equ:VarPropt:Tc} in Part I\cite{chen2013learningPart1}, and step (d) substitutes the bound in \eqref{Equ:Lemma:BoundsPerturbation:P_z}.

Finally, we prove \eqref{Equ:Lemma:UsefulBounds:vi_2nd}. With the block structure $\bv_i = \col\{ \bv_{1,i}, \ldots,  \bv_{N,i} \}$ defined previously, we have
	\begin{align}
		\Expt& \big[
			\left\|
				(p^T \otimes I_M) \bv_i
			\right\|^2
			\big|
			\mc{F}_{i-1}
		\big]					
								\nn\\
				&=			
								\Expt \Big[
									\Big\|
										\sum_{k=1}^N
										p_k
										\bv_{k,i}
									\Big\|^2
									\Big|
									\mc{F}_{i-1}
								\Big]
								\nn\\
				&=
								\Big(
									\sum_{l=1}^N p_l
								\Big)^2
								\cdot
								\Expt \Big[
									\Big\|
										\sum_{k=1}^N
										\frac{p_k}{\sum_{l=1}^N p_l}
										\bv_{k,i}
									\Big\|^2
									\Big|
									\mc{F}_{i-1}
								\Big]
								\nn\\
				&\overset{(a)}{\le}
								\Big(
									\sum_{l=1}^N p_l
								\Big)^2
								\cdot
								\sum_{k=1}^N
								\frac{p_k}{\sum_{l=1}^N p_l}
								\Expt \Big[
									\Big\|										
										\bv_{k,i}
									\Big\|^2
									|
									\mc{F}_{i-1}
								\Big]
								\nn\\
				&=
								\Big(
									\sum_{l=1}^N p_l
								\Big)
								\cdot
								\sum_{k=1}^N
								p_k
								\Expt \Big[
									\Big\|										
										\bv_{k,i}
									\Big\|^2
									|
									\mc{F}_{i-1}
								\Big]
								\nn\\
				&=				
								\|p\|_1 \cdot p^T
								\cdot
								\Expt\big\{
									P[ \bv_i ]
									\big|
									\mF_{i-1}
								\big\}
								\nn\\
				&\overset{(b)}{\le}
								\|p\|_1 \cdot p^T
								\cdot
								\Big\{
									4\alpha \cdot \one
									\cdot
									P[ \check{\w}_{c,i-1} ]
									\nn\\
									&\quad
									+
									4 \alpha
									\cdot
									\| \bP[ \mA_1^T \mU_L ] \|_{\infty}^2
									\cdot
									\one \one^T
									P[ \w_{e,i-1} ]
									\nn\\
									&\quad
									+
									\left[
										4\alpha
										\cdot
										( \|\tilde{w}_{c,0}\|^2 + \|w^o\|^2 )
										+
										\sigma_v^2
									\right]
									\cdot
									\one
								\Big\}
								\nn\\
				&=		
								4 \alpha \|p\|_1^2
								\cdot
								P[ \check{\w}_{c,i-1} ]
								\nn\\
								&\quad
								+
								4\alpha
								\cdot
								\| \bP[ \mA_1^T \mU_L ] \|_{\infty}^2
								\cdot
								\|p\|_1^2
								\cdot
								\one^T
								P[\w_{e,i-1} ]
								\nn\\
								&\quad
								\!+\!
								4 \alpha \|\tilde{w}_{c,0}\|^2
								\!\cdot\!
								\|p\|_1^2
								\!+\!
								4\alpha \|p\|_1^2\! \cdot \!\|w^o\|^2
								\!+\!
								\sigma_v^2 \!\cdot \!\|p\|_1^2
								\nn
	\end{align}
where step (a) applies Jensen's inequality to the convex function $\|\cdot\|^2$, and step (b) substituting \eqref{Equ:Lemma:BoundsPerturbation:P_v_E_Fiminus1}.


\bibliographystyle{IEEEbib}
\bibliography{DistOpt,IT}

\end{document}